\def\matrixgap{\hspace{5pt}}
\def\beginmatrix{\left(\!\!\begin{array}{c@{\matrixgap}c@{\matrixgap}c@{\matrixgap}c}}
\def\endmatrix{\end{array}\!\!\right)}
\def\vs{\vphantom{\frac 2 {\sqrt{5}}}}
\newenvironment{pic}[1][]
{\begin{aligned}\begin{tikzpicture}[#1]}
{\end{tikzpicture}\end{aligned}}
\newcommand{\edges}[1][]%
{
}
\theoremstyle{plain}
\newtheorem{theorem}{Theorem}
\newtheorem{lemma}[theorem]{Lemma}
\newtheorem{proposition}[theorem]{Proposition}
\newtheorem{corollary}[theorem]{Corollary}
\theoremstyle{definition} 
\newtheorem{definition}[theorem]{Definition} 
\newtheorem{example}[theorem]{Example} 
\newtheorem*{conj*}{Conjecture}
\newtheorem{remark}[theorem]{Remark}
\newcommand{\tr}{\text{Tr}}
\newcommand{\ob}{\text{Ob}}
\def\calign@preamble{%
   &\hfil\strut@
    \setboxz@h{\@lign$\m@th\displaystyle{##}$}%
    \ifmeasuring@\savefieldlength@\fi
    \set@field
    \hfil
    \tabskip\alignsep@
}
\let\cmeasure@\measure@
\patchcmd\cmeasure@{\divide\@tempcntb\tw@}{}{}{}
\patchcmd\cmeasure@{\divide\@tempcntb\tw@}{}{}{}
\patchcmd\cmeasure@{\ifodd\maxfields@
  \global\advance\maxfields@\@ne
  \fi}{}{}{}    
\renewcommand\matrix[1]
\newcommand\fmatrix[1]
\newcommand\tinymatrix[1]
\renewcommand\thickspace{\kern2pt} \scriptstyle\begin{smallmatrix} #1 \end{smallmatrix} \hspace{-2pt} \right)}
\newcommand\foplus{\!\oplus\!}
\newcommand\ignore[1]{}
\newcommand\superequals[1]{\ensuremath{\stackrel {\makebox[0pt]{\ensuremath{\scriptstyle #1}}}{=}}}
\newcommand\equalref[1]{\superequals{\eqref{#1}}}
\newcommand\equalreftwo[2]{\ensuremath{\stackrel{\makebox[0pt]{\ensuremath{\stackrel{\scriptstyle\eqref{#1}}{\scriptstyle\eqref{#2}}}}}{=}}}
\newcommand{\name}[1]{\ensuremath{\ulcorner #1 \urcorner}}
\newcommand{\coname}[1]{\ensuremath{\llcorner #1 \lrcorner}}
\newcommand{\op}{\ensuremath{\mathrm{op}}}
\newcommand{\tuple}[2]{\ensuremath{\left(\begin{smallmatrix} #1\\#2 \end{smallmatrix}\right)}}
\DeclareMathOperator{\Tr}{Tr}
\DeclareMathOperator{\dom}{dom}
\DeclareMathOperator{\cod}{cod}
\DeclareMathOperator{\rank}{rank}
\newcommand\cat[1]{\ensuremath{\mathbf{#1}}}
\renewcommand\dag{\ensuremath{\dagger}}
\newcommand\lecture[2]{\section{#1}}
\newcommand\id[1][]{\ensuremath{\mathrm{id}_{#1}}}
\def\totimes{{\textstyle \bigotimes}}
\newcommand\sxleftarrow[1]{\xleftarrow{\smash{#1}}}
\newcommand\sxrightarrow[1]{\xrightarrow{\smash{#1}}}
\newcommand\xto[1]{\xrightarrow{#1}}
\newcommand\sxto[1]{\sxrightarrow{#1}}
\renewcommand{\-}[0]{\nobreakdash-\hspace{0pt}}
\DeclareMathOperator{\Ob}{Ob}
\newcommand\C{\ensuremath{\mathbb{C}}}
\newcommand{\inprod}[2]{\ensuremath{\langle #1\hspace{0.5pt}|\hspace{0.5pt}#2 \rangle}}
\newcommand{\defined}{\ensuremath{\mathop{\downarrow}}}
\newcommand\Hom{\ensuremath{\textrm{Hom}}}
\newcommand\pdag{{\phantom{\dagger}}}
\DeclareMathOperator{\CP}{CP}
\DeclareMathOperator{\CPq}{CP_q}
\DeclareMathOperator{\CPc}{CP_c}
\DeclareMathOperator{\CPstar}{CP^*}
\newcommand{\proj}{\ensuremath{p}}
\newcommand{\inj}{\ensuremath{i}}
\DeclareMathOperator{\Prob}{Prob}
\newcommand{\ie}{\textit{i.e.}\xspace}
\newcommand{\eg}{\textit{e.g.}\xspace}
\newcommand\ud{\ensuremath{\mathrm{d}}}
\newcommand{\todo}[1]{{\color{red}#1}}
\newcommand{\labpos}[1]{label={[label distance=0.05cm]330:#1}}
\def\arraystretch{1.0}
\newcommand\grid[1]{\ensuremath{\def\arraystretch{1.4}\begin{array}{|c|c|c|c|c|c|c|c|c|}\hline#1\\\hline\end{array}}}
\newcommand\diag{\mathrm{diag}}
\newcommand\inv{{-1}}
\newcommand\I{\ensuremath{\mathbb I}}
\newcommand\M{\ensuremath{\mathcal M}}
\newcommand\F{\ensuremath{\mathcal F}}
\newcommand\super[2]{\stackrel{\makebox[0pt]{\smash{\tiny #1}}}{#2}}
\newcommand\T{\ensuremath{\mathrm T}}
\newcommand\bra[1]{\langle #1|}
\newcommand\ket[1]{{|} #1 \rangle}
\newcommand\braket[2]{\langle #1 | #2 \rangle}
\newcommand\ketbra[2]{|#1 \rangle \hspace{-1pt} \langle #2 |}
\newcommand\range[1]{\ensuremath{ \{ #1 ,...,d-1 \} }}
\newcommand\dfrob{\ensuremath{\dagger\text{-SCFA}}}
\newcommand\tinylab[1]{label={[label distance=-0.05cm]330:\tiny{$#1$}}}
\newcommand\projs{\ensuremath{\scalebox{1.5}{$\boldsymbol{*}$}}}
\newcounter{jamiecomment}
\newcommand\JVcomm[1]{\ensuremath{{}^{\color{red}\thejamiecomment}}\marginpar{\color{red}\tiny\raggedright \thejamiecomment: #1}\stepcounter{jamiecomment}}
\newcommand\UEBM{\text{partitioned UEB}}
\tikzset{proofdiagram/.style={scale=0.65}}
\tikzset{projdot/.style={scale=0.2}}
\def\minus{\text{-}}
\colorlet{lgray}{gray!20}
\tikzstyle{dc}   = [circle, minimum width=8pt, draw, inner sep=0pt, path picture={\draw (path picture bounding box.south east) -- (path picture bounding box.north west) (path picture bounding box.south west) -- (path picture bounding box.north east);}]
\tikzstyle{dp}   = [circle, minimum width=8pt, draw, inner sep=0pt, path picture={\draw (path picture bounding box.west) -- (path picture bounding box.north) (path picture bounding box.south west) -- (path picture bounding box.north) (path picture bounding box.south west) -- (path picture bounding box.north east) (path picture bounding box.north east) -- (path picture bounding box.south) (path picture bounding box.south) -- (path picture bounding box.east);}]
\tikzstyle{ls} = [dc,scale=0.65]
\tikzstyle{ls'} = [dp,scale=1.3]
\tikzstyle{dpt}  = [circle, minimum width=8pt, draw, inner sep=0pt, path picture={\draw (path picture bounding box.south) -- (path picture bounding box.north) (path picture bounding box.west) -- (path picture bounding box.east);}]
\tikzstyle{agg} = [dpt,scale=0.65]
\renewenvironment{pic}[1][]
{\begin{aligned}\begin{tikzpicture}[#1]}
{\end{tikzpicture}\end{aligned}}
\def\calign@preamble{%
   &\hfil\strut@
    \setboxz@h{\@lign$\m@th\displaystyle{##}$}%
    \ifmeasuring@\savefieldlength@\fi
    \set@field
    \hfil
    \tabskip\alignsep@
}
\let\cmeasure@\measure@
\patchcmd\cmeasure@{\divide\@tempcntb\tw@}{}{}{}
\patchcmd\cmeasure@{\divide\@tempcntb\tw@}{}{}{}
\patchcmd\cmeasure@{\ifodd\maxfields@
  \global\advance\maxfields@\@ne
  \fi}{}{}{}    
    \gdef\node@@on@layer{%
      \setbox\tikz@tempbox=\hbox\bgroup\pgfonlayer{#1}\unhbox\tikz@tempbox\endpgfonlayer\egroup}
\def\node@on@layer{\aftergroup\node@@on@layer}
\def\thickness{0.7pt}
\tikzstyle{oldmorphism}=[minimum width=30pt, minimum height=16pt, draw, font=\small, inner sep=0pt, fill=white, line width=\thickness]
\tikzstyle{cross}=[preaction={draw=white, -, line width=10pt}]
\tikzstyle{braid}=[double=black, line width=3*\thickness, double distance=\thickness, white]
\tikzstyle{string}=[line width=\thickness]
\tikzstyle{scalar}=[circle, inner sep=0pt, minimum width=15pt, draw, line width=\thickness]
\tikzstyle{dot}=[circle, draw=black, fill=black!25, inner sep=.5ex, line width=\thickness, node on layer=foreground]
\tikzstyle{blackdot}=[circle, draw=black, fill=black, inner sep=.5ex, line width=\thickness, node on layer=foreground]
\tikzstyle{whitedot}=[circle, draw=black, fill=white, inner sep=.5ex, line width=\thickness, node on layer=foreground]
\tikzstyle{pinkdot}=[circle, draw=black, fill=red!20, inner sep=.5ex, line width=\thickness, node on layer=foreground]
\tikzstyle{reddot}=[circle, draw=black, fill=red, inner sep=.5ex, line width=\thickness, node on layer=foreground]
\tikzstyle{bluedot}=[circle, draw=black, fill=blue, inner sep=.5ex, line width=\thickness, node on layer=foreground]
\tikzstyle{yellowdot}=[circle, draw=black, fill=yellow, inner sep=.5ex, line width=\thickness, node on layer=foreground]
\tikzstyle{greendot}=[circle, draw=black, fill=green, inner sep=.5ex, line width=\thickness, node on layer=foreground]
\tikzstyle{lsdot}=[circle, draw=black, fill=white, inner sep=.5ex, line width=\thickness, node on layer=foreground, path picture={\draw (path picture bounding box.south east) -- (path picture bounding box.north west) (path picture bounding box.south west) -- (path picture bounding box.north east);}]
\tikzstyle{lssdot}=[circle, draw=black, fill=white, inner sep=.5ex, line width=\thickness, node on layer=foreground, path picture={\draw (path picture bounding box.south) -- (path picture bounding box.north) (path picture bounding box.west) -- (path picture bounding box.east);}]
\tikzstyle{mixedmorphism}=[morphism, minimum width=30pt, minimum height=16pt, draw, font=\small, inner sep=0pt, fill=white, line width=\thickness,rounded corners=1ex]
\tikzstyle{thick}=[line width=\thickness]
\tikzstyle{tiny}=[font=\tiny]
\tikzset{arrow/.style={decoration={
    markings,
    mark=at position #1 with \arrow{thickarrow}},
    postaction=decorate}
}
\tikzset{reverse arrow/.style={decoration={
    markings,
    mark=at position #1 with \arrow{reversethickarrow}},
    postaction=decorate}
}
\newif\ifblack\pgfkeys{/tikz/black/.is if=black}
\newif\ifwedge\pgfkeys{/tikz/wedge/.is if=wedge}
\newif\ifvflip\pgfkeys{/tikz/vflip/.is if=vflip}
\newif\ifhflip\pgfkeys{/tikz/hflip/.is if=hflip}
\newif\ifhvflip\pgfkeys{/tikz/hvflip/.is if=hvflip}
\newif\ifconnectnw\pgfkeys{/tikz/connect nw/.is if=connectnw}
\newif\ifconnectne\pgfkeys{/tikz/connect ne/.is if=connectne}
\newif\ifconnectsw\pgfkeys{/tikz/connect sw/.is if=connectsw}
\newif\ifconnectse\pgfkeys{/tikz/connect se/.is if=connectse}
\newif\ifconnectn\pgfkeys{/tikz/connect n/.is if=connectn}
\newif\ifconnects\pgfkeys{/tikz/connect s/.is if=connects}
\newif\ifconnectnwf\pgfkeys{/tikz/connect nw >/.is if=connectnwf}
\newif\ifconnectnef\pgfkeys{/tikz/connect ne >/.is if=connectnef}
\newif\ifconnectswf\pgfkeys{/tikz/connect sw >/.is if=connectswf}
\newif\ifconnectsef\pgfkeys{/tikz/connect se >/.is if=connectsef}
\newif\ifconnectnf\pgfkeys{/tikz/connect n >/.is if=connectnf}
\newif\ifconnectsf\pgfkeys{/tikz/connect s >/.is if=connectsf}
\newif\ifconnectnwr\pgfkeys{/tikz/connect nw </.is if=connectnwr}
\newif\ifconnectner\pgfkeys{/tikz/connect ne </.is if=connectner}
\newif\ifconnectswr\pgfkeys{/tikz/connect sw </.is if=connectswr}
\newif\ifconnectser\pgfkeys{/tikz/connect se </.is if=connectser}
\newif\ifconnectnr\pgfkeys{/tikz/connect n </.is if=connectnr}
\newif\ifconnectsr\pgfkeys{/tikz/connect s </.is if=connectsr}
\tikzset{keylengthnw/.initial=\connectheight}
\tikzset{keylengthn/.initial =\connectheight}
\tikzset{keylengthne/.initial=\connectheight}
\tikzset{keylengthsw/.initial=\connectheight}
\tikzset{keylengths/.initial =\connectheight}
\tikzset{keylengthse/.initial=\connectheight}
\tikzset{connect nw length/.style={connect nw=true, keylengthnw={#1}}}
\tikzset{connect n length/.style ={connect n =true, keylengthn ={#1}}}
\tikzset{connect ne length/.style={connect ne=true, keylengthne={#1}}}
\tikzset{connect sw length/.style={connect sw=true, keylengthsw={#1}}}
\tikzset{connect s length/.style ={connect s =true, keylengths ={#1}}}
\tikzset{connect se length/.style={connect se=true, keylengthse={#1}}}
\tikzset{connect nw < length/.style={connect nw <=true, keylengthnw={#1}}}
\tikzset{connect n < length/.style ={connect n <=true,  keylengthn ={#1}}}
\tikzset{connect ne < length/.style={connect ne <=true, keylengthne={#1}}}
\tikzset{connect sw < length/.style={connect sw <=true, keylengthnw={#1}}}
\tikzset{connect s < length/.style ={connect s <=true,  keylengths ={#1}}}
\tikzset{connect se < length/.style={connect se <=true, keylengthse={#1}}}
\tikzset{connect nw > length/.style={connect nw >=true, keylengthnw={#1}}}
\tikzset{connect n > length/.style ={connect n >=true,  keylengthn ={#1}}}
\tikzset{connect ne > length/.style={connect ne >=true, keylengthne={#1}}}
\tikzset{connect sw > length/.style={connect sw >=true, keylengthsw={#1}}}
\tikzset{connect s > length/.style ={connect s >=true,  keylengths ={#1}}}
\tikzset{connect se > length/.style={connect se >=true, keylengthse={#1}}}
\newlength\morphismheight
\newlength\minimummorphismwidth
\newlength\stateheight
\newlength\minimumstatewidth
\newlength\connectheight
\tikzset{width/.initial=\minimummorphismwidth}
  \let\thickness=\pgfmathresult
\tikzset{forward arrow style/.style={every to/.style, decoration={
    markings,
    mark=at position 0.5 with \arrow{thickarrow}},
    postaction=decorate}}
\tikzset{reverse arrow style/.style={every to/.style, decoration={
    markings,
    mark=at position 0.5 with \arrow{reversethickarrow}},
    postaction=decorate}}
\newcommand{\tinycomult}[1][blackdot]{
\smash{\raisebox{-2pt}{\hspace{-5pt}\ensuremath{\begin{pic}[scale=0.4,string]
    \node (0) at (0,0) {};
    \node[#1, inner sep=1.5pt] (1) at (0,0.55) {};
    \node (2) at (-0.5,1) {};
    \node (3) at (0.5,1) {};
    \draw (0.center) to (1.center);
    \draw (1.center) to [out=left, in=down, out looseness=1.5] (2.center);
    \draw (1.center) to [out=right, in=down, out looseness=1.5] (3.center);
\end{pic}
}\hspace{-3pt}}}}
\newcommand{\tinydot}[1][blackdot]{
\smash{\raisebox{-2pt}{\hspace{-5pt}\ensuremath{\begin{pic}[scale=0.4,string]
    \node (0) at (0,0) {};
    \node[#1, inner sep=1.5pt] (1) at (0,0.45) {};
    \node (2) at (-0.5,1) {};
    \node (3) at (0.5,1) {};  
\end{pic}
}\hspace{-3pt}}}}
\newcommand{\tinycounit}[1][blackdot]{
\smash{\raisebox{-2pt}{\ensuremath{\hspace{-3pt}\begin{pic}[scale=0.4,string]
        \node (0) at (0,0) {};
        \node (1) at (0,1) {};
        \node[#1, inner sep=1.5pt] (d) at (0,0.55) {};
        \draw (0.center) to (d.center);
    \end{pic}
    \hspace{-1pt}}}}}
\newcommand{\tinymult}[1][blackdot]{
\smash{\raisebox{-2pt}{\hspace{-5pt}\ensuremath{\begin{pic}[scale=0.4,string,yscale=-1]
    \node (0) at (0,0) {};
    \node[#1, inner sep=1.5pt] (1) at (0,0.55) {};
    \node (2) at (-0.5,1) {};
    \node (3) at (0.5,1) {};
    \draw (0.center) to (1.center);
    \draw (1.center) to [out=left, in=down, out looseness=1.5] (2.center);
    \draw (1.center) to [out=right, in=down, out looseness=1.5] (3.center);
\end{pic}
}\hspace{-3pt}}}}
\newcommand{\tinymultgr}[1][greendot]{
\smash{\raisebox{-2pt}{\hspace{-5pt}\ensuremath{\begin{pic}[scale=0.4,string,yscale=-1]
    \node (0) at (0,0) {};
    \node[#1, inner sep=1.5pt] (1) at (0,0.55) {};
    \node (2) at (-0.5,1) {};
    \node (3) at (0.5,1) {};
    \draw[green] (0.center) to (1.center);
    \draw[green] (1.center) to [out=left, in=down, out looseness=1.5] (2.center);
    \draw[green] (1.center) to [out=right, in=down, out looseness=1.5] (3.center);
\end{pic}
}\hspace{-3pt}}}}
\newcommand{\tinymultcl}[1][dot]{
\smash{\raisebox{-2pt}{\hspace{-5pt}\ensuremath{\begin{pic}[scale=0.4,string,yscale=-1]
    \node (0) at (0,0) {};
    \node[#1dot, inner sep=1.5pt] (1) at (0,0.55) {};
    \node (2) at (-0.5,1) {};
    \node (3) at (0.5,1) {};
    \draw[#1] (0.center) to (1.center);
    \draw[#1] (1.center) to [out=left, in=down, out looseness=1.5] (2.center);
    \draw[#1] (1.center) to [out=right, in=down, out looseness=1.5] (3.center);
\end{pic}
}\hspace{-3pt}}}}
\newcommand{\tinyunit}[1][blackdot]{
\smash{\raisebox{-2pt}{\ensuremath{\hspace{-3pt}\begin{pic}[scale=0.4,string,yscale=-1]
        \node (0) at (0,0) {};
        \node (1) at (0,1) {};
        \node[#1, inner sep=1.5pt] (d) at (0,0.55) {};
        \draw (0.center) to (d.center);
    \end{pic}
    \hspace{-1pt}}}}}
\newcommand{\tinyhandle}[1][dot]{\raisebox{-2pt}{\ensuremath{\hspace{-3pt}\begin{pic}[scale=0.4,string]
        \node (0) at (0,0) {};
        \node[dot, inner sep=1.0pt] (1) at (0,0.3) {};
        \node[dot, inner sep=1.0pt] (2) at (0,0.7) {};
        \node (3) at (0,1) {};
        \draw (0.center) to (1.center);
        \draw (2.center) to (3.center);
        \draw[in=180, out=180, looseness=2] (1.center) to (2.center);
        \draw[in=0, out=0, looseness=2] (1.center) to (2.center);
\end{pic}\hspace{-1pt}}}}
\newcommand{\tinystate}[1][]{
\smash{\raisebox{-2pt}{\ensuremath{\hspace{-3pt}\begin{pic}[scale=0.4,string,yscale=-1]
        \node (0) at (0,0) {};
        \node (1) at (0,1) {};
        \node[#1, scale=0.25] (d) at (0,0.55) {};
        \draw (0.center) to (d.center);
    \end{pic}
    \hspace{-1pt}}}}}
\newcommand{\tinyeffect}[1][]{
\smash{\raisebox{-2pt}{\ensuremath{\hspace{-3pt}\begin{pic}[scale=-0.4,string,yscale=-1]
        \node (0) at (0,0) {};
        \node (1) at (0,1) {};
        \node[#1, scale=0.25] (d) at (0,0.55) {};
        \draw (0.center) to (d.center);
    \end{pic}
    \hspace{-1pt}}}}}   
\renewcommand\dag{\ensuremath{\dagger}}
\def\totimes{{\textstyle \bigotimes}}
\newcommand{\tinystatelab}[1][]{
\smash{\raisebox{-2pt}{\ensuremath{\hspace{-3pt}\begin{pic}[scale=0.4,string,yscale=-1]
        \node (0) at (0,0) {};
        \node (1) at (0,1) {};
        \node[black,state,label={[label distance=-0.08cm]330:\tiny{$#1$}}, scale=0.25] (d) at (0,0.55) {};
        \draw (0.center) to (d.center);
    \end{pic}
    \hspace{-1pt}}}}}
\newcommand{\tinyeffectlab}[1][]{
\smash{\raisebox{-2pt}{\ensuremath{\hspace{-3pt}\begin{pic}[scale=-0.4,string,yscale=-1]
        \node (0) at (0,0) {};
        \node (1) at (0,1) {};
        \node[black,hflip,state,label={[label distance=-0.08cm]30:\tiny{$#1$}}, scale=0.25] (d) at (0,0.55) {};
        \draw (0.center) to (d.center);
    \end{pic}
    \hspace{-1pt}}}}}
\title{Characterizing maximal families of  \\ mutually unbiased bases}
\author{
Benjamin Musto
\\
\texttt{benjamin.musto@cs.ox.ac.uk}
\\[20pt]
Department of Computer Science, University of Oxford}
\date{\today}
\def\titlerunning{Finite fields}
\def\authorrunning{B. Musto}
\begin{document}

\maketitle

\begin{abstract}
We show that maximal families of mutually unbiased bases  are characterized in all dimensions by partitioned unitary error bases, up to a choice of a family of Hadamards. Furthermore, we give a new construction of  partitioned unitary error bases, and thus  maximal families of mutually unbiased bases, from a finite field, which is simpler and more direct than previous proposals.
We introduce new tensor diagrammatic characterizations of
maximal families of mutually unbiased bases, partitioned unitary error
bases, and finite fields as
algebraic structures defined over Hilbert spaces.
\end{abstract}
\section{Introduction}

In this paper we present the following results:
\begin{itemize}
\item  an equivalence between \textit{partitioned unitary error bases} ($\UEBM$s) and maximal families of MUBs equipped with families of \textit{Hadamards};
\item a construction of
maximal families of MUBs from finite fields that is simpler than those proposed previously;
\item  a new tensor diagrammatic axiomatisation of maximal families of MUBs.
\end{itemize}
It has been shown that the largest family of $d$-dimensional mutually unbiased bases that can exist is $d+1$~\cite{Bandyopadhyay}. In light of this result we will refer to a family of $d+1$ MUBs as a \textit{maximal family of MUBs}.
 Maximal families of MUBs represent $d+1$ measurements that are, in some sense `as far apart as possible', and can perfectly distinguish any density operator on a  $d$-dimensional Hilbert space~\cite{woott}. Maximal families of MUBs are fundamental to areas such as quantum tomography~\cite{ivonovic} and  quantum key distribution~\cite{cerf} and are as such of great importance to quantum information. In spite of this much is still to be discovered about maximal families of MUBs. In general dimension, it is not even known whether maximal families of MUBs exist although the existence of maximal families of MUBs in prime power dimension has however long been \mbox{established~\cite{Bandyopadhyay,godsil,boykin}}.  The results in this paper build on previous work establishing a new diagrammatic framework to tackle a much researched area. We have utilised this approach to clarify the exact nature of the relationship between UEBs and maximal families of MUBs. This is just the beginning for this framework and we expect further progress to follow.
We briefly give definitions of the key structures which our
theorems relate.
\begin{definition}[Mutually unbiased bases~\cite{woott}]\label{def:mub}
A pair of orthonormal bases $\ket{a_i}$ and $\ket{b_i}$ with $i \in \{ 0,...,d-1 \}$ are \textit{mutually unbiased} if $|\braket{a_i}{b_j}|^2 =1/d$ for all $i,j \in \{ 0,...,d-1 \}$.  
\end{definition}
\begin{definition}[Family of mutually unbiased bases]\label{def:FMUB} A family of bases of a $d$-dimensional Hilbert space are a \textit{mutually unbiased family} if  they are pairwise mutually unbiased. 
 
\end{definition} 
\begin{definition}[Unitary error basis~\cite{knill}]\label{def:ueb}
A \textit{unitary error basis} on a $d$-dimensional Hilbert space is a family of $d^2$ unitary operators $U_{ij}$ where $i,j \in \{ 0,...,d-1 \}$ such that:
\begin{equation}\label{eq:trueb}
 \tr (U^{\dag}_{ij} \circ U_{mn})=\delta_{im}\delta_{jn}d 
\end{equation}
\end{definition}

Throughout this paper we make use of Penrose tensor diagrams closely related to those, which are used by the tensor networks community. We give a very brief introduction here in order to introduce one of the main results of the paper, a characterization of maximal families of MUBs using tensor diagrams. Refer to Section~\ref{section:graphical aa} for an in depth introduction to the tensor diagrams and other necessary background material. 

We use  wires to represent Hilbert spaces and boxes and nodes to represent linear maps. Tensor products are given by horizontal composition with composition of linear maps represented by connecting wires vertically. Direct sums are given by summation. We use the convention that our diagrams are read from bottom to top. Bra-ket notation can be translated into tensor diagrams as follows: $\ket i:=\tinystatelab[i]$ and $\bra i:=\tinyeffectlab[i]$ with the inner product given by connecting the wires.   
\begin{definition}[Basis tensors]
Given an orthonormal basis for a Hilbert space, $\ket{i},0 \leq i <n-1,$ we canonically define the following four linear maps: 
     
\begin{align*}
\begin{pic}[scale=1]
\node[pinkdot] (a) at (0,-0.25){};
\node (1) at (0.25,-1){};
\node (2) at (-0.25,-1){};
\node (3) at (0,0.5){};
\draw[string,in=right,out=90] (1.center) to (a.center);
\draw[string,in=left,out=90] (2.center) to (a.center);
\draw[string,in=90,out=270] (3.center) to (a.center);
\end{pic}
&:=\sum^{n-1}_{i=0}
\begin{pic}
\node(3)[state,scale=0.5] at (0,0){$i$};
\node(1)[state,scale=0.5,hflip] at (0.25,-0.5){$i$};
\node(2)[state,scale=0.5,hflip] at (-0.25,-0.5){$i$};
\draw[string] (1) to +(0,-0.5);
\draw[string] (2) to +(0,-0.5);
\draw[string] (3) to +(0,0.5);
\end{pic} & 
\begin{pic}[scale=1]
\node[pinkdot] (a) at (0,-0.25){};
\node (3) at (0,0.5){};
\draw[string,in=90,out=270] (3.center) to (a.center);
\end{pic}
&:=\sum^{n-1}_{i=0}
\begin{pic}
\node(3)[state,scale=0.5] at (0,0){$i$};
\draw[string] (3) to +(0,0.5);
\end{pic}
\\
\begin{pic}[yscale=1]
\node[pinkdot] (a) at (0,0){};
\node (1) at (0.25,0.75){};
\node (2) at (-0.25,0.75){};
\node (3) at (0,-0.75){};
\draw[string,in=right,out=270] (1.center) to (a.center);
\draw[string,in=left,out=270] (2.center) to (a.center);
\draw[string,in=270,out=90] (3.center) to (a.center);
\end{pic}
&:=\sum^{n-1}_{i=0}
\begin{pic}[yscale=-1]
\node(3)[state,scale=0.5,hflip] at (0,0){$i$};
\node(1)[state,scale=0.5] at (0.25,-0.5){$i$};
\node(2)[state,scale=0.5] at (-0.25,-0.5){$i$};
\draw[string] (1) to +(0,-0.5);
\draw[string] (2) to +(0,-0.5);
\draw[string] (3) to +(0,0.5);
\end{pic}&
\begin{pic}[scale=-1]
\node[pinkdot] (a) at (0,-0.25){};
\node (3) at (0,0.5){};
\draw[string,in=90,out=270] (3.center) to (a.center);
\end{pic}
&:=\sum^{n-1}_{i=0}
\begin{pic}[yscale=-1]
\node(3)[state,hflip,scale=0.5] at (0,0){$i$};
\draw[string] (3) to +(0,0.5);
\end{pic}
\end{align*}
Connected diagrams made up of these four linear maps are \textit{basis tensors}, and are uniquely determined
by the number of input and output wires. \end{definition}
We make extensive use of the basis tensor corresponding to the computational orthonormal basis which we will denote with black dots $\tinymult$. In Section~\ref{section:MMUB UEBM} we start by proving that the following diagrammatic equation completely characterizes maximal families of MUBs using the computational basis tensor, where $M$ is a linear map of type $\mathcal H \otimes \mathcal H \rightarrow \mathcal H$:
\begin{equation}\label{eq:MMUB}
\begin{aligned}
\begin{tikzpicture}[scale=0.75]
\node (m1) [morphism,wedge,hflip,scale=0.75] at (-1.25,0.5) {$M$};
\node(m2)[morphism,wedge,scale=0.75]  at (-1.25,-0.5) {$M$};
\node  at (m1.connect n) {};
\node at (m2.connect s){};
\draw (m1.south) to (m2.north)[];

\node (m3) [morphism,wedge,hflip,vflip,scale=0.75] at (1.25,0.5) {$M$};
\node(m4)[morphism,wedge,vflip,scale=0.75]  at (1.25,-0.5) {$M$};
\node  at (m3.connect n) {};
\node at (m4.connect s){};
\draw (m3.south) to (m4.north)[];

\node(b2)[blackdot] at (0.15,1.25){};
\draw[string,out=90,in=left] (m1.north east) to (b2.center);
\draw[string,out=right,in=90] (b2.center) to (m3.north west);
\draw (b2.center) to +(0,0.75);

\node(b4)[blackdot] at (0.15,-1.25){};
\draw[string,out=270,in=left] (m2.south east) to (b4.center);
\draw[string,out=right,in=270] (b4.center) to (m4.south west);
\draw (b4.center) to +(0,-0.75);

\node(b1)[blackdot] at (-0.15,1.65){};
\draw[string,out=90,in=left] (m1.north west) to (b1.center);
\draw[string,out=right,in=90] (b1.center) to (m3.north east);
\draw (b1.center) to +(0,0.35);

\node(b3)[blackdot] at (-0.15,-1.65){};
\draw[string,out=270,in=left] (m2.south west) to (b3.center);
\draw[string,out=right,in=270] (b3.center) to (m4.south east);
\draw (b3.center) to +(0,-0.35);
\end{tikzpicture}
\end{aligned}
=
\frac{1}{d}\left(
\begin{aligned}
\begin{tikzpicture}
\node(1) at (-0.25,2){};
\node(2) at (0.25,2){};
\node(3) at (-0.25,-2){};
\node(4) at (0.25,-2){};
\node(b1)[blackdot] at (-0.25,1){};
\node(b2)[blackdot] at (0.25,1){};
\node(b3)[blackdot] at (-0.25,-1){};
\node(b4)[blackdot] at (0.25,-1){};
\draw (b1.center) to (1);
\draw (b2.center) to (2);
\draw (b3.center) to (3);
\draw (b4.center) to (4);
\end{tikzpicture}
\end{aligned}
\quad -\quad
\begin{aligned}
\begin{tikzpicture}
\node(1) at (-0.25,2){};
\node(2) at (0.25,2){};
\node(3) at (-0.25,-2){};
\node(4) at (0.25,-2){};
\node(b2)[blackdot] at (0.25,1){};
\node(b4)[blackdot] at (0.25,-1){};
\draw (3) to (1);
\draw (b2.center) to (2);
\draw (b4.center) to (4);
\end{tikzpicture}
\end{aligned}\right)
\quad + \quad
\begin{aligned}
\begin{tikzpicture}
\node(1) at (-0.25,2){};
\node(2) at (0.25,2){};
\node(3) at (-0.25,-2){};
\node(4) at (0.25,-2){};
\draw (3) to (1);
\draw (4) to (2);
\end{tikzpicture}
\end{aligned}
\end{equation}
   
\textit{Unitary error bases} (UEBs) are fundamental to protocols such as teleportation and dense coding as well as finding application in quantum error correction~\cite{werner2001all,klapp,mypaper1,steane}.

Partitioned UEBs~\cite{Bandyopadhyay} are unitary error bases containing the identity operator equipped with a
partition into the identity and $d+1$ disjoint classes each containing $d-1$ commuting operators. From a partitioned UEB we can obtain a maximal family of MUBs by taking the common eigenbases of each of the commuting classes of operators~\cite{Bandyopadhyay}. We denote the map taking a partitioned UEB to its eigenbases by $\theta$.
  
Later in Section~\ref{section:MMUB UEBM}, making use of our diagrammatic maximal families of MUBs we introduce the following converse map $\phi_{H}:\text{Maximal family of MUBs} \to \text{Partitioned UEB}$ given extra data in the form of a family of Hadamards $G$ and $H$ (here $*$ is a projector defined using $\tinymult$):
\begin{equation}
\begin{aligned}
\begin{tikzpicture}[scale=0.5]
\node (f) [morphism,wedge, connect s length=1.5cm, connect se length=1.5cm, width=1cm, connect n length=1.25cm,connect sw length=1.5cm] at (0,0) {$\phi_{H}(M)$};
\node  at (f.connect s) {};
\node  at (f.connect se) {};
\node  at (f.connect n) {};
\end{tikzpicture}
\end{aligned}
:=\quad
\begin{aligned}
\begin{tikzpicture}[scale=0.75]
\node (m1) [morphism,wedge,scale=0.5] at (0,0.75) {M};
\node(m2)[morphism,wedge,hflip,scale=0.5]  at (0,-0.75) {M};
\draw (m1.south east) to (m2.north east)[string];
\draw (m1.south west) to (m2.north west)[string];
\node(b1)[blackdot] at (-0.25,0.1){};
\node(b2)[blackdot] at (0.25,0.2){};
\node(h)[wedge,morphism,scale=0.5] at (1.75,-1){$H$};
\node(b4)[blackdot] at (1.25,-1.5){};
\node(m) at (1,-0.25){};
\node(s)[whitedot,inner sep=0.1pt] at (1.75,-2){$\scalebox{1.5}{$\boldsymbol{*}$}$};
\draw[string,out=left,in=right,looseness=1] (m.center) to (b1.center);
\draw[string,out=left,in=right,looseness=2] (b4.center) to (m.center);
\draw[string,out=right,in=south west] (b4.center) to (h.south west);
\draw[string,out=90,in=right] (h.north) to (b2.center);
\node(i1) at (0,-2.5){};
\node(i2) at (1.25,-2.5){};
\node(i3) at (1.75,-2.5){};
\node(o) at (0,2){};
\draw[string] (i1.center) to (m2.south);
\draw[string] (i2.center) to (b4.center);
\draw[string] (i3.center) to (h.south);
\draw[string] (o.center) to (m1.north);
\end{tikzpicture}
\end{aligned}
\quad + \quad
\begin{aligned}
\begin{tikzpicture}[scale=0.75]
\node(b2)[blackdot] at (0.425,0.2){};
\node(h)[morphism,wedge,scale=0.5] at (1.25,-1.25){$G$};
\node(m) at (1,-0.25){};
\node(s)[state,black,hflip,scale=0.25,label={[label distance=-0.08cm]30:\tiny{0}}] at (1.75,-2){};
\draw[string,out=90,in=right] (h.north) to (b2.center);
\node(i1) at (0,-2.5){};
\node(i2) at (1.25,-2.5){};
\node(i3) at (1.75,-2.5){};
\node(o) at (0.425,2){};
\draw[string,out=90,in=left,looseness=0.5] (i1.center) to (b2.center);
\draw[string] (i2.center) to (h.south);
\draw[string] (i3.center) to (s.center);
\draw[string] (o.center) to (b2.center);
\end{tikzpicture}
\end{aligned}
\end{equation}

We show that given family of Hadamards $H$ the the composition $\theta\circ\phi_{H}$ is the identity and thus conclude that all maximal families of MUBs can be obtained in this way up to a choice of $H$ which we identify with a choice of eigenvalues. Each maximal MUB is associated with an infinite family of partitioned UEBs which are not necessarily equivalent.

In Section~\ref{section:gff} we introduce a diagrammatic axiomatisation for finite fields as algebras over Hilbert spaces. We show that, given a finite field, the following unitary operators form a partitioned UEB:\begin{equation}\label{eq:mainconst}
\begin{aligned}
\begin{tikzpicture}
\node (f) [morphism,wedge, connect s length=1.5cm width=0.5mm, connect se length=1.5cm, width=1cm, connect n length=1.25cm,connect sw length=1.5cm] at (0,0) {$U_{FF}$};
\node  at (f.connect s) {};
\node  at (f.connect se) {};
\node  at (f.connect n) {};
\end{tikzpicture}
\end{aligned}
:=\quad
\begin{aligned}
\begin{tikzpicture}[scale=0.75]
\node (chi) [morphism,wedge,scale=0.75] at (0,-0.1) {\large$\chi$};
\node(b1)[blackdot]  at (-0.25,0.5) {};
\node(add)[reddot] at (0.25,1){};
\node(x)[yellowdot] at (1.25,-0.5){};
\node(b2)[blackdot] at (0.5,-1){};
\node(*) at (0.5,-1.5){};
\node(m) at (1.25,-1){};
\node(i1) at (-1,-2.5){};
\node(i2) at (0.5,-2.5){};
\node(i3) at (1.25,-2.5){};
\node(i2a) at (0.5,-1.75){};
\node(i3a) at (1.25,-1.75){};
\node(o) at (0.25,2){};
\node(g1)[whitedot,projdot] at (1.25,-2.25){};
\node(g2)[whitedot,inner sep=0.1pt] at (1.25,-2){$\projs$};
\draw[green][string, line width=1pt](g1.center) to (g2.center);
\draw[string,in=left,out=90,looseness=0.6] (i1.center) to (b1.center);
\draw[string,in=90,out=right] (b1.center) to (chi.north);
\draw[string,in=left,out=90] (b1.center) to (add.center);
\draw[string,in=90,out=right] (add.center) to (x.center);
\draw[string,in=270,out=90] (add.center) to (o.center);
\draw[string,in=270,out=90,looseness=1] (i2.center) to (i2a.center);
\draw[string,in=270,out=90,looseness=1] (i2a.center) to (m.center);
\draw[string,in=right,out=90,looseness=1] (m.center) to (x.center);
\draw[string,in=270,out=90] (i3.center) to (g1.center);
\draw[string,in=270,out=90] (g2.center) to (i3a.center);
\draw[string,in=270,out=90] (i3a.center) to (b2.center);
\draw[string,in=270,out=left] (b2.center) to (chi.south);
\draw[string,in=left,out=right] (b2.center) to (x.center);
\end{tikzpicture}
\end{aligned}
\quad + \quad
\begin{aligned}
\begin{tikzpicture}[scale=0.75]
\node(b2)[reddot] at (0.6,0.2){};
\node(m) at (1,-0.25){};
\node(s)[reddot] at (1.75,-2){};
\draw[string,out=90,in=right,looseness=0.5] (i3.center) to (b2.center);
\node(i1) at (0,-2.5){};
\node(i2) at (1.2,-2.5){};
\node(i3) at (1.75,-2.5){};
\node(o) at (0.6,2){};
\draw[string,out=90,in=left,looseness=0.5] (i1.center) to (b2.center);
\draw[string] (i3.center) to (s.center);
\draw[string] (o.center) to (b2.center);
\end{tikzpicture}
\end{aligned}
\end{equation}
Here $\tinymult$ represents the computational basis, $\tinymult[reddot]$ and $\tinymult[yellowdot]$ are the linear extension of the addition and multiplication respectively of the finite field and $\chi$ is the Fourier transform for the additive group. We finish by giving an example of our construction in dimension $d=4$.

\paragraph{Related work.}In their 2002 paper, Bandyopadhyay et al~\cite{Bandyopadhyay} introduced partitioned UEBs and showed how to obtain maximal families of MUBs from them.
 Gogioso and Zeng gave a diagrammatic axiomatisation of complex group algebras in their 2015 paper~\cite{stefwill}. We have extended this to finite fields in order to give our diagrammatic construction of partitioned UEBs and thus, maximal families of MUBs.    

\section{Background}\label{section:graphical aa}
 We begin with the definitions of mutually unbiased bases and unitary error bases (UEBs), we will then review the necessary categorical quantum mechanics material  using Penrose tensor diagrams to present our main results diagrammatically in Section~\ref{section:MMUB UEBM} and Section~\ref{section:gff}. \paragraph{Basic defininitions.}
The following result was given a particularly simple and elegant proof by Bandyopahyay et al~\cite{Bandyopadhyay}.
\begin{theorem}
The largest family of MUBs that can exist on a $d$-dimensional Hilbert space is a family of $d+1$ MUBs. 
\end{theorem}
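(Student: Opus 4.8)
The plan is to reduce the statement to a dimension count in the real vector space of self-adjoint operators. Write $\mathcal H = \C^d$ and let $V$ be the real vector space of self-adjoint operators on $\mathcal H$, equipped with the Hilbert--Schmidt inner product $\langle A,B\rangle := \tr(A^\dagger B)$; then $\dim_{\mathbb R} V = d^2$, and the traceless self-adjoint operators form a subspace $V_0 \subseteq V$ with $\dim_{\mathbb R} V_0 = d^2 - 1$. Given a family of $m$ pairwise mutually unbiased bases $\{\ket{e^k_0},\dots,\ket{e^k_{d-1}}\}$, $k = 1,\dots,m$, set $P^k_i := \ketbra{e^k_i}{e^k_i}$ and $\hat P^k_i := P^k_i - \tfrac1d \I$. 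Since $\tr(P^k_i) = 1$, each $\hat P^k_i$ lies in $V_0$.

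Next I would compute the inner products among the $\hat P^k_i$. Using $\tr(P^k_i P^l_j) = |\braket{e^k_i}{e^l_j}|^2$, orthonormality within a basis gives $\tr(P^k_i P^k_j) = \delta_{ij}$, while Definition~\ref{def:mub} gives $\tr(P^k_i P^l_j) = 1/d$ for $k \neq l$. Expanding $\langle \hat P^k_i, \hat P^l_j\rangle = \tr(P^k_i P^l_j) - \tfrac1d\tr(P^k_i) - \tfrac1d\tr(P^l_j) + \tfrac1{d^2}\tr(\I)$ then yields $\langle \hat P^k_i, \hat P^k_j\rangle = \delta_{ij} - 1/d$ and $\langle \hat P^k_i, \hat P^l_j\rangle = 0$ whenever $k \neq l$. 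Hence the spans $W_k := \mathrm{span}\{\hat P^k_0,\dots,\hat P^k_{d-1}\}$ are pairwise orthogonal subspaces of $V_0$.

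It remains to pin down $\dim_{\mathbb R} W_k$. From $\sum_i P^k_i = \I$ we get $\sum_i \hat P^k_i = 0$, so these $d$ vectors are dependent and $\dim W_k \leq d-1$; conversely their Gram matrix is $\I_d - \tfrac1d J$ with $J$ the all-ones matrix, which has eigenvalues $1$ (multiplicity $d-1$) and $0$ (multiplicity $1$), so it has rank $d-1$ and $\dim W_k = d-1$ exactly. Since the $W_k$ are mutually orthogonal subspaces of $V_0$, their dimensions add up to at most $\dim V_0$, giving $m(d-1) \leq d^2-1 = (d-1)(d+1)$, i.e. $m \leq d+1$; combined with the known existence of $d+1$ MUBs in prime power dimension this shows $d+1$ is the largest attainable family size. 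The only point requiring care is the last dimension count: one must establish $\dim W_k = d-1$ precisely (not merely $\leq d-1$), which is why both the linear relation $\sum_i \hat P^k_i = 0$ and the explicit rank of $\I_d - \tfrac1d J$ are needed; everything else is a routine expansion of traces.
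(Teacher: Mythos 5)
Your dimension-counting argument is correct: the traceless parts $\hat P^k_i = P^k_i - \tfrac1d\I$ do satisfy $\langle \hat P^k_i,\hat P^k_j\rangle = \delta_{ij}-\tfrac1d$ and $\langle \hat P^k_i,\hat P^l_j\rangle = 0$ for $k\neq l$, the Gram-matrix rank argument pins down $\dim W_k = d-1$ (and you are right that the lower bound on this dimension is the load-bearing part), and orthogonality of the $W_k$ inside the $(d^2-1)$-dimensional space of traceless self-adjoint operators gives $m(d-1)\le d^2-1$, hence $m\le d+1$. Note, however, that the paper does not prove this theorem at all: it simply cites Bandyopadhyay et al., whose ``simple and elegant proof'' runs through a slightly different structure, namely that each basis in a mutually unbiased family yields a class of $d-1$ pairwise commuting, traceless, pairwise orthogonal unitary operators (the germ of the partitioned-UEB picture used throughout the paper), and the same $d^2-1$ dimension count is then applied to these unitary classes. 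Your version works directly with the rank-one projectors and real self-adjoint operators, which is more elementary and self-contained, while the cited route has the advantage of exhibiting exactly the commuting-class structure the rest of the paper exploits. One small caution: the final sentence of your proof overreaches slightly. Existence of $d+1$ MUBs is known only in prime power dimensions, and in general dimension attainability of the bound is open (as the paper itself remarks), so the theorem should be read as the upper bound $m\le d+1$; your appeal to prime-power existence does not show the bound is attained for arbitrary $d$, but this does not affect the validity of the bound you proved.
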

In Section~\ref{section:MMUB UEBM} we prove the equivalent tensor diagrammatic characterisation of maximal families of MUBs given in the introduction.

We now define  the equivalence of pairs of UEBs.

\begin{definition}[Equivalent UEBs~\cite{werner2001all}]
Given UEBs $X_{ij}$ and $Y_{ij}$ they are \textit{equivalent} if there exist
unitary operators $U$ and $V$, complex numbers with unit absolute value $c_{ij}$ and permutation $p$ such that:
\begin{equation}\label{eq:uebeq}
X_{ij}=c_{ij}UY_{p(i,j)}V
\end{equation}
\end{definition}
 Later, we will formally define partitioned UEBs in
Definition~\ref{def:pueb}, and give a tensor diagrammatic characterization of UEBs~\cite{mustothesis} in
Proposition~\ref{prop:grueb}, with an additional tensor diagrammatic axiom for $\UEBM$s given in Lemma~\ref{lem:guebp}. 
\begin{definition}[Hadamard]\label{def:had}
A \textit{Hadamard} matrix of order $d$ is a $d\times d$ matrix $H$, such that $|H_{ij}|=1$ and $H  H^{\dag}=H^\dag H=d \mathbb{I}_d$~\cite{Hadamard}. \end{definition}
Hadamards with the addition of a normalization constant, are precisely change of basis matrices between  pairs of mutually unbiased bases~\cite{mubs}. To see how mutually unbiased bases can be recovered from Definition~\ref{def:had}, consider the following. Given a Hadamard $H$, the matrix $\frac{1}{\sqrt{d}}H$ is unitary, since $H  H^{\dag}=H^\dag H=d \mathbb{I}_d$. So $\frac{1}{\sqrt{d}}H$ is a change of basis matrix between two orthonormal bases. Let $H':=\frac{1}{\sqrt{d}}H$, and represent the computational basis states by $\ket{a_i}$ and define $H'\ket{a_i}:=\ket{b_i}.$ The other Hadamard condition gives us that for all $i,j$ we have  $|H_{ij}|=1$ thus $|H_{ij}|^2=1$ and so: 
\begin{equation*}|\braket{a_j}{b_i}|^2=|\bra{a_j} H' \ket{a_i}|^2=|\bra{a_j} \frac{1}{\sqrt{d}}H\ket{a_i}|^2=\frac{1}{d}|\bra{a_j} H\ket{a_i}|^2=\frac{1}{d}|H_{ij}|^2=\frac{1}{d}\end{equation*}
This gives us back Definition~\ref{def:mub}.

After we have established the necessary tensor diagrammatic notation below we introduce a tensor diagrammatic axiomatisation of Hadamards in Lemma~\ref{lem:had} and controlled Hadamards which are indexed families of Hadamard in Definition~\ref{def:ch}.
\paragraph{Penrose tensor diagrams.}   We now introduce some results from categorical quantum mechanics and its graphical calculus for tensors~\cite{bob-book, abramskycoecke2004, surveycategoricalquantummechanics} which is different although closely related to the diagrams used in the tensor network community~\cite{TN1,TN2,TN3}. For those with knowledge of category theory we will be  working in  $\cat{FHilb}$, the category of linear maps and finite dimensional Hilbert spaces which is a $\dagger$-symmetric monoidal category. Much of the following could be interpreted in a general $\dagger$-symmetric monoidal category, however the main results of this paper use structure particular to $\cat{FHilb}$.   

In our formalism wires represent Hilbert spaces and boxes and nodes represent linear maps between Hilbert spaces. Different Hilbert spaces are represented by different coloured wires. We take the convention that diagrams are read from bottom to top. Composite linear maps are represented by vertical composition along the wires. Tensor products are represented by horizontal composition. Let $\mathcal{A,\, B,\, C\, \text{ and }D}$ be Hilbert spaces represented by black, red, blue and green wires respectively. The following diagram therefore represents the linear map $F\otimes G$ for $F:\mathcal A \rightarrow \mathcal B$ and $G:\mathcal C \rightarrow \mathcal D$. 
\begin{equation*}
\begin{aligned}
\begin{tikzpicture}
\begin{pgfonlayer}{background}
\node (f) [morphism,wedge] at (0,0) {$F$};
\node (g) [morphism,wedge] at (1.5,0) {$G$};
\end{pgfonlayer}
\begin{pgfonlayer}{foreground}
\draw[string,line width=1.2pt,red] (f.north) to node [auto] {} +(0,0.75);
\draw[string,line width=1.2pt] (f.south) to [out=down, in=up] node [auto, swap] {} +(0,-0.75);
\draw[string,line width=1.2pt,green] (g.north) to node [auto] {} +(0,0.75);
\draw[string,line width=1.2pt,blue] (g.south) to [out=down, in=up] node [auto, swap] {} +(0,-0.75);
\end{pgfonlayer}
\end{tikzpicture}
\end{aligned}
\end{equation*}
We will occasionally have to make use of different coloured wires but most of the time we will just use black wires to represent a single Hilbert space. We will use the convention that reflection in a horizontal axis represents adjunction or `taking the $\dag$' and reflection in a vertical axis represents complex conjugation. Since linear algebraic equations remain true under taking the adjoint of both sides all diagrammatic equations will remain true under  reflection about a horizontal axis. We will use the asymmetry of boxes representing linear maps to make it clear when we have taken an adjoint or complex conjugate of a given linear map. This works as follows:\begin{align*}
\left(
\begin{aligned}
\begin{tikzpicture}
  \begin{pgfonlayer}{background}
    \node (f) [morphism,wedge] at (0,0) {$F$};
  \end{pgfonlayer}
  \begin{pgfonlayer}{foreground}
    \draw[string] (f.north) to
        +(0,0.75) node [above] {};
    \draw[string] (f.south) to
        +(0,-0.75) node [below] {};
  \end{pgfonlayer}
\end{tikzpicture}
\end{aligned}
\right)^{{\Huge \dag}}
\quad=\quad
\begin{aligned}
\begin{tikzpicture}
  \begin{pgfonlayer}{background}
    \node (f) [morphism,hflip,wedge] at (0,0) {$F$};
  \end{pgfonlayer}
  \begin{pgfonlayer}{foreground}
    \draw[string] (f.north) to
        +(0,0.75) node [above] {};
    \draw[string] (f.south) to
        +(0,-0.75) node [below] {};
  \end{pgfonlayer}
\end{tikzpicture}
\end{aligned}\qquad \quad & \qquad \quad\left(
\begin{aligned}
\begin{tikzpicture}
  \begin{pgfonlayer}{background}
    \node (f) [morphism,wedge] at (0,0) {$F$};
  \end{pgfonlayer}
  \begin{pgfonlayer}{foreground}
    \draw[string] (f.north) to
        +(0,0.75) node [above] {};
    \draw[string] (f.south) to
        +(0,-0.75) node [below] {};
  \end{pgfonlayer}
\end{tikzpicture}
\end{aligned}
\right)^{{\Huge *}}
\quad=\quad
\begin{aligned}
\begin{tikzpicture}
  \begin{pgfonlayer}{background}
    \node (f) [morphism,vflip,wedge] at (0,0) {$F$};
  \end{pgfonlayer}
  \begin{pgfonlayer}{foreground}
    \draw[string] (f.north) to
        +(0,0.75) node [above] {};
    \draw[string] (f.south) to
        +(0,-0.75) node [below] {};
  \end{pgfonlayer}
\end{tikzpicture}
\end{aligned}
\end{align*}
We represent states as boxes with wires going out but no wires coming in, and effects as boxes with wires coming in but none going out. Boxes with no wires in or out are complex scalars.  The correspondence with bra ket notation is therefore as follows:
\begin{align*}\ket j :=
\begin{pic}
   \node[black,state,scale=0.75,label={[label distance=0.15cm]330:$j$}] (2) at (0,0) {};
   \draw[string](2.center) to +(0,1);
\end{pic}\qquad & \qquad \quad
\bra i:=
\begin{pic}[yscale=-1]
   \node[black,state,hflip,scale=0.75,label={[label distance=0.15cm]10:$i$}] (2) at (0,0) {};
   \draw[string](2.center) to +(0,1);
\end{pic} & \quad \braket{i}{j}:=
\begin{pic}[yscale=1]
   \node[black,state,scale=0.75,label={[label distance=0.15cm]330:$j$}] (2) at (0,0) {};
   \draw[string](2.center) to +(0,1);
   \node[black,state,hflip,scale=0.75,label={[label distance=0.15cm]10:$i$}] (2) at (0,1) {};
\end{pic}
\end{align*}
In order to capture the structure of maximal families of MUBs and $\UEBM$s using tensor diagrams in the next section, we use the computational basis tensor as an indexing set by making use of certain algebras on Hilbert space. We now introduce those algebras known as   \mbox{$\dagger$-special} commutative Frobenius algebras ($\dfrob$s) on Hilbert spaces that are equivalent to orthonormal bases as we will show. First we use tensor diagrams to define the properties that make up the  $\dfrob$ axioms. 
\begin{definition}[Associativity]\label{def:ass}
The linear map $\tinymult$ is \textit{associative} if the following equation holds:
\begin{equation}\label{eq:as}
\begin{aligned}\begin{tikzpicture}[scale=0.65]
          \node (0a) at (-1,0.25) {};
          \node (0b) at (-0.5,0.25) {};
          \node (0c) at (0.5,0.25) {};

          \node[blackdot] (1) at (0,1) {};
          \node[blackdot] (2) at (-0.5,1.5) {};

          \node (5a) at (0-0.5,2) {};

          \draw[string, out=90, in =180] (0a.center) to (2.center);
          \draw[string, out=0, in=90] (2.center) to (1.center);
          \draw[string, out=0, in=90] (1.center) to (0c.center);
          \draw[string, out=180, in=90] (1.center) to (0b.center);

          \draw[string] (2.center) to (5a.center);
         
          \end{tikzpicture}\end{aligned}   
  \quad   = \quad
\begin{aligned}\begin{tikzpicture}[scale=0.65]
          \node (0a) at (0.75,0.25) {};
          \node (0b) at (0.25,0.25) {};
          \node (0c) at (-0.75,0.25) {};

          \node[blackdot] (1) at (-0.25,1) {};
          \node[blackdot] (2) at (0.25,1.5) {};

          \node (5a) at (0.25,2) {};

          \draw[string, out=90, in =00] (0a.center) to (2.center);
          \draw[string, out=180, in=90] (2.center) to (1.center);
          \draw[string, out=180, in=90] (1.center) to (0c.center);
          \draw[string, out=0, in=90] (1.center) to (0b.center);

          \draw[string] (2.center) to (5a.center);
\end{tikzpicture}\end{aligned} 
\end{equation}

\end{definition}
\begin{definition}[Unitality]\label{def:uni}
The linear map $\tinymult[blackdot]$ is \textit{unital} if there exists a state  $\tinyunit[blackdot]$ such that:
\begin{equation}\label{eq:un}
\begin{aligned}\begin{tikzpicture}[scale=0.65]
          
          \node (0a) at (-1,-0.75) {};
          \node (0b) at (-0.5,-0.5) {};
          \node[blackdot] (0c) at (0,-0.5) {};
          \node[blackdot] (7) at (-0.5,0) {};       
          \node (10a) at (-0.5,0.5) {};       
          \draw[string, out=90, in =180] (0a.center) to (7.center);
          \draw[string, out=0, in=90] (7.center) to (0c.center);
          \draw[string] (7.center) to (10a.center);
         
          \end{tikzpicture}\end{aligned}   
  \quad   = \quad
\begin{aligned}\begin{tikzpicture}[scale=0.65]
           \node (10a) at (-0.5,0) {}; 
           \node (5a) at (-0.5,1.25) {}; 
           \draw[string] (5a.center) to (10a.center);
\end{tikzpicture}\end{aligned}
\quad=\quad
\begin{aligned}\begin{tikzpicture}[scale=0.65]

      \node[blackdot] (0a) at (-1,-0.5) {};
          \node (0b) at (-0.5,-0.5) {};
          \node (0c) at (0,-0.75) {};
          \node[blackdot] (7) at (-0.5,0) {};       
          \node (10a) at (-0.5,0.5) {};       
          \draw[string, out=90, in =180] (0a.center) to (7.center);
          \draw[string, out=0, in=90] (7.center) to (0c.center);
          \draw[string] (7.center) to (10a.center);

\end{tikzpicture}\end{aligned}  
  \end{equation}
The state $\tinyunit[blackdot]$ is called the \textit{unit}.
\end{definition}
\begin{definition}[Commutativity]\label{def:comm}
The linear map $\tinymult$ is \textit{commutative} if the following equation holds:
\begin{equation}\label{eq:comm}
\begin{aligned}\begin{tikzpicture}[scale=0.6]
          \node (0a) at (-1,0.25) {};
          \node (0b) at (-0.5,0.25) {};
          \node (0c) at (0,0.25) {};

          \node[blackdot] (2) at (-0.5,1.5) {};

          \node (5a) at (-0.5,2) {};

          \draw[string, out=90, in =180] (0a.center) to (2.center);
          \draw[string, out=0, in=90] (2.center) to (0c.center);

          \draw[string] (2.center) to (5a.center);
         
          \end{tikzpicture}\end{aligned}   
  \quad   = \quad
\begin{aligned}\begin{tikzpicture}[scale=0.6]
           \node (0a) at (-1,0.25) {};
          \node (0b) at (-0.5,0.25) {};
          \node (0c) at (0,0.25) {};

          \node[blackdot] (2) at (-0.5,1.5) {};

          \node (5a) at (-0.5,2) {};
          \node(m1) at (-0.1,1.1){};
          \node(m2) at (-0.9,1.1){};
       
          \draw[string, out=up, in =down] (0a.center) to (m1.center);
          \draw[string, out=up, in=right,looseness=1.4] (m1.center) to (2.center);
          \draw[string, out=down, in=90] (m2.center) to (0c.center);
          \draw[string, out=180, in=up,looseness=1.4] (2.center) to (m2.center);
          \draw[string] (2.center) to (5a.center);
\end{tikzpicture}\end{aligned}\end{equation}   \end{definition}
If we take the adjoint of both sides of equations~\eqref{eq:as},\eqref{eq:un} and~\eqref{eq:comm} we obtain the definitions of coassociativity, counitality and cocommutativity respectively.
\begin{definition}[(Co)monoid]
The linear map $\tinymult$ together with $\tinyunit$ is a \textit{monoid} if it is associative and unital. The linear map $\tinycomult$ together with $\tinycounit$ is a \textit{comonoid} if it is coassociative and counital.
\end{definition}
\begin{definition}[Comonoid homomorphism]\label{def:cohom}
A linear map F is a \textit{comonoid homomorphism} for the comonoid $\tinycomult$ if the following equations hold:
\begin{align}
\begin{pic}
   \node[morphism,wedge,scale=0.5](f) at (0,0){$F$};
   \node[blackdot](b) at (0,0.5){};
   \node(i) at (0,-0.5){};
   \node(o1) at (-0.35,1){};
   \node(o2) at (0.35,1){};
   \draw[string] (i) to (f.south);
   \draw[string] (b.center) to (f.north);
   \draw[string,out=left,in=270] (b.center) to (o1);
   \draw[string,out=right,in=270] (b.center) to (o2);
\end{pic}\quad = \quad
\begin{pic}
  \begin{pgfonlayer}{foreground}
   \node[morphism,wedge,scale=0.5](f1) at (-0.35,0.5){$F$};
   \node[morphism,wedge,scale=0.5](f2) at (0.35,0.5){$F$};
  \end{pgfonlayer}
  \begin{pgfonlayer}{background}
   \node[blackdot](b) at (0,0){};
   \node(i) at (0,-0.5){};
   \node(o1) at (-0.35,1){};
   \node(o2) at (0.35,1){};
   \draw[string] (i) to (f.south);
   \draw[string] (b.center) to (i);
   \draw[string,out=left,in=270] (b.center) to (f1.south);
   \draw[string] (f1.north) to (o1);
   \draw[string,out=right,in=270] (b.center) to (f2.south);
   \draw[string] (f2.north) to (o2);
  \end{pgfonlayer}
\end{pic}\qquad & \qquad
 \begin{pic}
   \node[morphism,wedge,scale=0.5](f) at (0,0){$F$};
   \node[blackdot](b) at (0,0.5){};
   \node(i) at (0,-0.5){};
   \node(o1) at (-0.35,1){};
   \node(o2) at (0.35,1){};
   \draw[string] (i) to (f.south);
   \draw[string] (b.center) to (f.north);
\end{pic}\quad = \quad
\begin{pic}
   \node(f) at (0,0){};
   \node[blackdot](b) at (0,0.5){};
   \node(i) at (0,-0.5){};
   \node(o1) at (-0.35,1){};
   \node(o2) at (0.35,1){};
   \draw[string] (i) to (f.center);
   \draw[string] (b.center) to (f.center);
  \end{pic}
  \end{align}
\end{definition}
\begin{definition}[Special, quasi-special]The linear maps $\tinymult$ and $\tinycomult[blackdot]$ are \textit{special} if the left hand side  equation holds and \textit{quasi-special} if the right hand side equation holds with $d$ the dimension of the Hilbert space:
\begin{align}\label{eq:sp}
\begin{aligned}\begin{tikzpicture}[scale=0.65]
          \node (0a) at (-1,0) {};
          \node (0b) at (-0.5,0) {};
          \node (0c) at (0,0) {};
          \node[blackdot] (2) at (-0.5,0.5) {};
          \node (5a) at (-0.5,1) {};
          \node[blackdot] (7) at (-0.5,-0.5) {};       
          \node (10a) at (-0.5,-1) {};       
          \draw[string, out=90, in =180] (0a.center) to (2.center);
          \draw[string, out=0, in=90] (2.center) to (0c.center);
          \draw[string] (2.center) to (5a.center);
          \draw[string, out=270, in =180] (0a.center) to (7.center);
          \draw[string, out=0, in=270] (7.center) to (0c.center);
          \draw[string] (7.center) to (10a.center);         
          \end{tikzpicture}\end{aligned}   
  \quad   = \quad
\begin{aligned}\begin{tikzpicture}[scale=0.65]
           \node (10a) at (-0.5,-1) {}; 
           \node (5a) at (-0.5,1) {}; 
           \draw[string] (5a.center) to (10a.center);
\end{tikzpicture}\end{aligned}  
\qquad & \qquad
\begin{aligned}\begin{tikzpicture}[scale=0.65]
          \node (0a) at (-1,0) {};
          \node (0b) at (-0.5,0) {};
          \node (0c) at (0,0) {};
          \node[blackdot] (2) at (-0.5,0.5) {};
          \node (5a) at (-0.5,1) {};
          \node[blackdot] (7) at (-0.5,-0.5) {};       
          \node (10a) at (-0.5,-1) {};       
          \draw[string, out=90, in =180] (0a.center) to (2.center);
          \draw[string, out=0, in=90] (2.center) to (0c.center);
          \draw[string] (2.center) to (5a.center);
          \draw[string, out=270, in =180] (0a.center) to (7.center);
          \draw[string, out=0, in=270] (7.center) to (0c.center);
          \draw[string] (7.center) to (10a.center);         
          \end{tikzpicture}\end{aligned}   
  \quad   = \quad d
\begin{aligned}\begin{tikzpicture}[scale=0.65]
           \node (10a) at (-0.5,-1) {}; 
           \node (5a) at (-0.5,1) {}; 
           \draw[string] (5a.center) to (10a.center);
\end{tikzpicture}\end{aligned}  
\end{align}
\end{definition}
\begin{definition}[$\dagger$-Frobenius law]
The linear map $\tinymult$ and its adjoint $\tinycomult$ obey the
\emph{ \mbox{$\dagger$-Frobenius} law} if the following equations hold:\begin{equation}
    \begin{aligned}\begin{tikzpicture}[yscale=0.5,xscale=0.6]
          \node (0) at (0,0) {};
          \node (0a) at (0,1) {};
          \node[blackdot] (1) at (0.5,2) {};
          \node[blackdot] (2) at (1.5,1) {};
          \node (3) at (1.5,0) {};
          \node (4) at (2,3) {};
          \node (4a) at (2,2) {};
          \node (5) at (0.5,3) {};
          \draw[string] (0) to (0a.center);
          \draw[string,out=90,in=180] (0a.center) to (1.center);
          \draw[string,out=0,in=180] (1.center) to (2.center);
          \draw[string,out=0,in=270] (2.center) to (4a.center);
          \draw[string] (4a.center) to (4);
          \draw[string] (2.center) to (3);
          \draw[string] (1.center) to (5);
      \end{tikzpicture}\end{aligned}
    \quad = \quad
    \begin{aligned}\begin{tikzpicture}[yscale=0.5,xscale=0.6]
          \node (0a) at (-0.5,0) {};
          \node (0b) at (0.5,0) {};
          \node[blackdot] (1) at (0,1) {};
          \node[blackdot] (2) at (0,2) {};
          \node (3a) at (-0.5,3) {};
          \node (3b) at (0.5,3) {};
          \draw[string,out=90,in=180] (0a) to (1.center);
          \draw[string,out=90,in=0] (0b) to (1.center);
          \draw[string] (1.center) to (2.center);
          \draw[string,out=180,in=270] (2.center) to (3a);
          \draw[string,out=0,in=270] (2.center) to (3b);
      \end{tikzpicture}\end{aligned}
    \quad = \quad
    \begin{aligned}\begin{tikzpicture}[yscale=0.5,xscale=-0.6]
          \node (0) at (0,0) {}; 
          \node (0a) at (0,1) {};
          \node[blackdot] (1) at (0.5,2) {};
          \node[blackdot] (2) at (1.5,1) {};
          \node (3) at (1.5,0) {};
          \node (4) at (2,3) {};
          \node (4a) at (2,2) {};
          \node (5) at (0.5,3) {};
          \draw[string] (0) to (0a.center);
          \draw[string,out=90,in=180] (0a.center) to (1.center);
          \draw[string,out=0,in=180] (1.center) to (2);
          \draw[string,out=0,in=270] (2.center) to (4a.center);
          \draw[string] (4a.center) to (4);
          \draw[string] (2.center) to (3);
          \draw[string] (1.center) to (5);
      \end{tikzpicture}\end{aligned}
  \end{equation} 
\end{definition}
\begin{definition}[$\dagger$-special commutative Frobenius algebra] The linear map $\tinymult$ and state  $\tinyunit[blackdot]$ together with their adjoints are  a  $\dagger$-\textit{special commutative Frobenius algebra} ($\dagger$-SCFA) if they are a commutative monoid, special and obey the $\dagger$-Frobenius law. 
\end{definition}
The following result due to Coecke, Pavlovi{\'c}, and Vicary will prove important throughout this paper.
\begin{proposition}~\cite{spiderONB} In $\cat{FHilb}$ $\dagger$-SCFAs are in one to one correspondence with orthonormal bases. All $\dfrob$s can be written in terms of the corresponding orthonormal bases as follows:
\begin{equation}~\label{classtr}
\begin{pic}[xscale=0.6,yscale=0.4]
          \node (0a) at (-1,0) {};
          \node (0b) at (-0.5,0) {};
          \node (0c) at (0,0) {};      
          \node[blackdot] (2) at (-0.5,1.5) {};
          
          \node (5a) at (-0.5,3) {};
       
          \draw[string, out=90, in =180] (0a.center) to (2.center);
          \draw[string, out=0, in=90] (2.center) to (0c.center);
     
          \draw[string] (2.center) to (5a.center);
\end{pic}\quad =\quad \sum_i
\begin{pic}
[xscale=0.6,yscale=0.4]
          \node (0a) at (-0.9,0) {};
          \node (0b) at (-0.5,0) {};
          \node (0c) at (-0.1,0) {};      
          \node[black,state,scale=0.25,label={[label distance=-0.05cm]330:\tiny{$i$}}] (2) at (-0.5,2) {};
          \node[black,state,scale=0.25,hflip,label={[label distance=-0.05cm]30:\tiny{$i$}}] (3) at (-0.9,1) {};     
          \node[black,state,scale=0.25,hflip,label={[label distance=-0.05cm]30:\tiny{$i$}}] (4) at (-0.1,1) {};
          \node (5a) at (-0.5,3) {};
       
          \draw[string] (5a.center) to (2.center);
          \draw[string] (4.center) to (0c.center);
     
          \draw[string] (3.center) to (0a.center);
\end{pic}
\end{equation}

\ignore{ \begin{equation}
    \overbrace{\underbrace{
    \begin{pic}[scale=0.5]
     \node[blackdot,scale=1] (A){};
     \draw[string,out=190,in=90] (A) to (-2,-1);
     \draw[string,out=350,in=90] (A) to (2,-1);
     \draw[string,out=190,in=90] (A) to (-1.7,-1);
     \draw[string,out=350,in=90] (A) to (1.7,-1);  
     \draw[string,out=170,in=270] (A) to (-2,1);
     \draw[string,out=10,in=270] (A) to (2,1);
     \draw[string,out=170,in=270] (A) to (-1.7,1);
     \draw[string,out=10,in=270] (A) to (1.7,1); 
     \draw[loosely dotted] (-1.65,0.9) to (1.65,0.9); 
     \draw[loosely dotted] (-1.65,-0.9) to (1.65,-0.9);          
    \end{pic}}_{m}}^{n}
    \quad=\quad
    \sum^{d-1}_{i=0}
    \overbrace{\underbrace{
    \begin{pic}[scale=0.5]
     \node[state,black,scale=0.25,label={[label distance=-0.05cm]330:\tiny{$i$}}] (A)at (-1.75,0.5){};
     \node[state,hflip,black,scale=0.25,label={[label distance=-0.05cm]30:\tiny{$i$}}] (B)at (-2,-0.5){};
     \node[state,black,scale=0.25,label={[label distance=-0.05cm]330:\tiny{$i$}}] (C)at (1.75,0.5){};
     \node[state,hflip,black,scale=0.25,label={[label distance=-0.05cm]30:\tiny{$i$}}] (D)at (2,-0.5){};
     \node[state,black,scale=0.25,label={[label distance=-0.05cm]330:\tiny{$i$}}] (A')at (-1.25,0.5){};
     \node[state,hflip,black,scale=0.25,label={[label distance=-0.05cm]30:\tiny{$i$}}] (B')at (-1.5,-0.5){};
     \node[state,black,scale=0.25,label={[label distance=-0.05cm]330:\tiny{$i$}}] (C')at (1.25,0.5){};
     \node[state,hflip,black,scale=0.25,label={[label distance=-0.05cm]30:\tiny{$i$}}] (D')at (1.5,-0.5){};
     \draw[string] (A) to (-1.75,1);
     \draw[string] (B) to (-2,-1);
     \draw[string] (C) to (1.75,1);
     \draw[string] (D) to (2,-1);  
     \draw[string] (A') to (-1.25,1);
     \draw[string] (B') to (-1.5,-1);
     \draw[string] (C') to (1.25,1);
     \draw[string] (D') to (1.5,-1); 
     \draw[loosely dotted] (-1,0.1) to (1,0.1); 
     \draw[loosely dotted] (-1.25,-0.1) to (1.25,-0.1);          
    \end{pic}}_{m}}^{n}
  \end{equation}}
  
\end{proposition}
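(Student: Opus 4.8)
The plan is to prove the two directions of the correspondence separately and then check that they are mutually inverse; the displayed formula~\eqref{classtr} will fall out of the harder direction. For the easy direction, given an orthonormal basis $\{\ket i\}_{i=0}^{n-1}$ I would simply \emph{define} the multiplication by the right-hand side of~\eqref{classtr}, so that $m(\ket i\otimes\ket j)=\delta_{ij}\ket i$ with unit $u=\sum_i\ket i$, and then verify each axiom by evaluating both sides on basis vectors: associativity, commutativity and unitality are immediate, specialness is $m\circ m^\dagger\ket i=m(\ket i\otimes\ket i)=\ket i$, and each leg of the $\dagger$-Frobenius law sends $\ket i$ to $\ket i\otimes\ket i$. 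This part is routine.

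For the converse, let $(\mathcal H,m,u)$ be a $\dfrob$ and write $x\cdot y:=m(x\otimes y)$. I would call a nonzero state $\ket z$ \emph{copyable} if $m^\dagger\ket z=\ket z\otimes\ket z$, and aim to show: (i) copyable points are orthonormal; (ii) there are exactly $\dim\mathcal H$ of them; (iii) $m$ is given by~\eqref{classtr} relative to the basis they form. For (i), specialness $m\circ m^\dagger=\id$ makes every copyable point idempotent, $z\cdot z=m(m^\dagger\ket z)=\ket z$, while combining the $\dagger$-Frobenius law with specialness gives $\braket z w=\braket z w^2$ for copyable $\ket z,\ket w$; hence each such inner product lies in $\{0,1\}$, equalling $1$ only when $\ket z=\ket w$, so copyable points are unit vectors and pairwise orthogonal. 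For (ii) I would pass to the left-multiplication operators $L_x:\ket y\mapsto x\cdot y$: unitality makes $x\mapsto L_x$ injective, commutativity makes its image $\mathcal A\subseteq B(\mathcal H)$ a commutative subalgebra, and the $\dagger$-Frobenius law identifies $L_x^{\,\dag}$ with left multiplication by the Frobenius adjoint of $x$, so $\mathcal A$ is closed under adjoints. Then $\mathcal A$ is a commutative, adjoint-closed subalgebra of $B(\mathcal H)$ with $\dim\mathcal A=\dim\mathcal H$, so its minimal projections are pairwise orthogonal, sum to $\id$, and number $\dim\mathcal H$, forcing each to have rank one; hence $\mathcal H$ splits as an orthogonal direct sum of $\dim\mathcal H$ lines, the joint eigenspaces of $\mathcal A$. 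A short computation with the bimodule (Frobenius) property of $m^\dagger$ should then show that the normalized generator of each line is copyable and, conversely, that every copyable point arises this way, giving exactly $\dim\mathcal H$ copyable points, which therefore form an orthonormal basis $\{\ket i\}$.

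To recover the formula and close the argument, note that each $\ket j$ is a joint eigenvector of every $L_x$, so $\ket i\cdot\ket j=L_{\ket i}\ket j$ is a scalar multiple of $\ket j$; idempotency gives $\ket i\cdot\ket i=\ket i$, and comparing $\ket i\cdot\ket j$ with $\ket j\cdot\ket i$ via commutativity forces that scalar to vanish for $i\neq j$, so $\ket i\cdot\ket j=\delta_{ij}\ket i$, which is precisely~\eqref{classtr}, and unitality then pins the unit down as $\sum_i\ket i$. Finally I would check that the two assignments are mutually inverse: the copyable points of the algebra built from a basis are exactly that basis, and feeding the basis of copyable points of a given $\dfrob$ back into the first construction returns its multiplication and unit by the formula just derived. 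This yields the claimed one-to-one correspondence.

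The axiom checks on basis vectors are pure bookkeeping; the real work is step (ii) --- showing that $\mathcal A=\{L_x\}$ is an adjoint-closed commutative algebra of the full dimension $\dim\mathcal H$, so that its joint eigenspaces are one-dimensional and the copyable points span. This is where specialness and the $\dagger$-Frobenius law are genuinely needed, and I expect it to be the main obstacle; it is the heart of the Coecke--Pavlovi\'c--Vicary theorem, with everything else amounting to organising that conclusion.
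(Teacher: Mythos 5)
The paper itself offers no proof of this proposition---it is quoted directly from the cited work of Coecke, Pavlovi\'c and Vicary~\cite{spiderONB}---so the only meaningful comparison is with that original argument, and your sketch is essentially it: pass to the left-multiplication operators $L_x$, use unitality, associativity and commutativity to obtain a unital commutative subalgebra $\mathcal A\subseteq B(\mathcal H)$ with $\dim\mathcal A=\dim\mathcal H$, use the $\dagger$-Frobenius law to close it under adjoints, and read the basis off the minimal projections. The two steps you defer as ``short computations'' do go through. Adjoint-closure: from the Frobenius law one gets $m^\dagger(x\cdot y)=(\id\otimes L_y)\,m^\dagger(x)$, whence $L_x^\dagger=(\bra{x}\otimes\id)\circ m^\dagger=L_{x^*}$ with $x^*=(\bra{x}\otimes\id)\,m^\dagger(u)$, using commutativity at the last step. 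Copyability of the minimal idempotents: writing the minimal projections as $p_i=L_{e_i}$ with $e_i\cdot e_j=\delta_{ij}e_i$ and $\sum_i e_i=u$, the same bimodule identities give $(L_{e_j}\otimes L_{e_k})\,m^\dagger(e_i)=\delta_{ij}\delta_{ik}\,m^\dagger(e_i)$, so $m^\dagger(e_i)\in\C e_i\otimes\C e_i$; specialness then fixes the scalar, $m^\dagger(e_i)=e_i\otimes e_i$, and your step (i) forces $\braket{e_i}{e_i}=1$. Conversely, expanding an arbitrary copyable $z=\sum_i\braket{e_i}{z}\,e_i$ and comparing with $m^\dagger z=z\otimes z$ shows $z$ is one of the $e_i$, which pins down the multiplication as $e_i\cdot e_j=\delta_{ij}e_i$ and the unit as $\sum_i e_i$, i.e.\ formula~\eqref{classtr}, and makes the two constructions mutually inverse. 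So your proposal is a correct sketch of the cited theorem, with the genuinely hard point (your step (ii)) correctly identified and completable exactly as you indicate.
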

Equation~\eqref{classtr} can be used to show that the following very useful result holds:   
\begin{corollary}~\cite{spiderONB}
For $\tinymult$ a $\dfrob$ any two connected diagrams of black dots with the same numbers of inputs and outputs are equal. For example:
\begin{equation}\label{eq:sm}
    \overbrace{\underbrace{
    \begin{pic}[scale=0.5]
     \node[blackdot,scale=1] (A) at (-0.5,0.25){};
     \node[blackdot,scale=1] (B) at (0.5,-0.25){};
     \draw[string,out=190,in=90] (B.center) to (-2,-1);
     \draw[string,out=350,in=90] (B.center) to (2,-1);
     \draw[string,out=190,in=90] (B.center) to (-1.7,-1);
     \draw[string,out=350,in=90] (B.center) to (1.7,-1);  
     \draw[string,out=170,in=270] (A.center) to (-2,1);
     \draw[string,out=10,in=270] (A.center) to (2,1);
     \draw[string,out=170,in=270] (A.center) to (-1.7,1);
     \draw[string,out=10,in=270] (A.center) to (1.7,1); 
     \draw[loosely dotted] (-1.65,0.9) to (1.65,0.9); 
     \draw[loosely dotted] (-1.65,-0.9) to (1.65,-0.9); 
     \draw[dotted] (-0.1,-0.1) to (0.15,0.1);   
     \draw[string,in=170,out=300] (A.center) to (B.center);  
     \draw[string,in=100,out=340] (A.center) to (B.center);   
    \end{pic}}_{r}}^{s}
    \quad = \quad
    \overbrace{\underbrace{
    \begin{pic}[scale=0.5]
     \node[blackdot,scale=1] (A){};
     \draw[string,out=190,in=90] (A.center) to (-2,-1);
     \draw[string,out=350,in=90] (A.center) to (2,-1);
     \draw[string,out=190,in=90] (A.center) to (-1.7,-1);
     \draw[string,out=350,in=90] (A.center) to (1.7,-1);  
     \draw[string,out=170,in=270] (A.center) to (-2,1);
     \draw[string,out=10,in=270] (A.center) to (2,1);
     \draw[string,out=170,in=270] (A.center) to (-1.7,1);
     \draw[string,out=10,in=270] (A.center) to (1.7,1); 
     \draw[loosely dotted] (-1.65,0.9) to (1.65,0.9); 
     \draw[loosely dotted] (-1.65,-0.9) to (1.65,-0.9);       
    \end{pic}}_{r}}^{s}
\end{equation}

\end{corollary}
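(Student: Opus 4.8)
The plan is to compute an arbitrary connected black-dot diagram directly: replace each of its nodes with the explicit basis expansion coming from equation~\eqref{classtr}, and then observe that connectedness collapses all of the resulting summation indices to a single one, so that the diagram reduces to a fixed tensor depending only on how many input and output wires it has.

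Concretely, I would first note that every connected diagram of black dots is built, using composition, tensor product, identity wires and symmetry maps, from four generators: the multiplication $\tinymult$, the unit $\tinyunit$, the comultiplication $\tinycomult$, and the counit $\tinycounit$. By the classical structure theorem~\eqref{classtr} these act on the associated orthonormal basis $\ket i$ by $\tinymult:\ket j\otimes\ket k\mapsto\delta_{jk}\ket j$ and, dually, $\tinycomult:\ket j\mapsto\ket j\otimes\ket j$; unitality~\eqref{eq:un} then forces the unit to be $\tinyunit=\sum_i\ket i$, and the counit is its adjoint $\tinycounit=\sum_i\bra i$. Specialness~\eqref{eq:sp} is exactly what makes this description possible, and is already encoded in~\eqref{classtr}.

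The core step is to substitute these expressions into the given diagram. Each generator contributes its own summation index, and each internal wire contracts a $\ket i$ against a $\bra j$, producing a factor $\braket{j}{i}=\delta_{ij}$ that identifies the indices of the two nodes it joins. Since the diagram is connected, chaining these identifications through the underlying graph forces all node-indices to equal one common index $i$ --- a cycle in the graph only imposes a redundant constraint, never a new free sum, which is why no spurious factors of $d$ appear --- and each external half-edge then carries the index $i$. Hence the diagram equals $\sum_i\ket i^{\otimes m}\otimes\bra i^{\otimes n}$, where $n$ and $m$ are its numbers of input and output wires (read as $\sum_i 1=d$ in the degenerate case $n=m=0$). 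As this expression is manifestly independent of the internal structure, any two connected black-dot diagrams with the same numbers of inputs and outputs coincide; equation~\eqref{eq:sm} is one instance.

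I expect the main obstacle to be carrying out the index-collapsing argument rigorously rather than pictorially: one must treat unit and counit nodes with care (each introduces a fresh summation that connectedness must tie to the common index) and argue carefully that connectedness of the underlying graph really does force a single index. A clean way to organise this is an induction on the number of nodes, merging two adjacent nodes at a time using associativity~\eqref{eq:as}, coassociativity, commutativity~\eqref{eq:comm}, cocommutativity, the $\dagger$-Frobenius law, and specialness~\eqref{eq:sp}; this also yields the alternative purely diagrammatic proof, in which the Frobenius law brings the diagram to the normal form ``multiply all inputs together, then comultiply out to all outputs'', (co)associativity fuses chains of (co)multiplications, specialness deletes internal loops, and (co)commutativity permutes the legs --- there the subtlety is ensuring that the rewriting terminates and preserves connectedness at each step.
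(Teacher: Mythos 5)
Your argument is correct and is exactly the route the paper intends: the paper gives no proof of its own, merely citing \cite{spiderONB} and remarking that equation~\eqref{classtr} yields the result, and your basis-expansion argument (one index per node, internal wires giving Kronecker deltas, connectedness collapsing everything to a single summed index, with cycles only adding redundant constraints) is the standard way of fleshing that remark out. The degenerate closed-diagram case and the inductive normal-form alternative you mention are sensible additions but not departures from the paper's approach.
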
 
Given a $\dfrob$ $\tinymult$ we will refer to the corresponding basis as the \emph{black basis}, the basis states $\tinystate[black,state]$ as \emph{black states} and their adjoints $\tinyeffect[black,state,hflip]$ as \emph{black effects} (we also extend this terminology to other colours). We can also see by combining equations~\eqref{classtr} and ~\eqref{eq:sm} that, for a  $\dfrob$ $\tinymult$ any black state or effect composed with a connected diagram of black dots will be \textit{copied} in the following sense:
 \begin{equation}\label{eq:spider}
    \overbrace{\underbrace{
    \begin{pic}[scale=0.5]
     \node[blackdot,scale=1] (A){};
     \draw[string,out=190,in=90] (A) to (-2,-1);
     \draw[string,out=350,in=90] (A) to (2,-1);
     \draw[string,out=190,in=90] (A) to (-1.7,-1);
     \draw[string,out=350,in=90] (A) to (1.7,-1);  
     \draw[string,out=170,in=270] (A) to (-2,1);
     \draw[string,out=10,in=270] (A) to (2,1);
     \draw[string,out=170,in=270] (A) to (-1.7,1);
     \draw[string,out=10,in=270] (A) to (1.7,1); 
     \draw[loosely dotted] (-1.65,0.9) to (1.65,0.9); 
     \draw[loosely dotted] (-1.65,-0.9) to (1.65,-0.9); 
     \node[state,black,scale=0.25,label={[label distance=-0.05cm]330:\tiny{$k$}}] (s)at (-1,-1){}; 
     \draw[string,out=260,in=up] (A.center) to (s.center); 
            \end{pic}}_{m}}^{n}
    \quad=\quad
    \overbrace{\underbrace{
    \begin{pic}[scale=0.5]
     \node[state,black,scale=0.25,label={[label distance=-0.05cm]330:\tiny{$k$}}] (A)at (-1.75,0.5){};
     \node[state,hflip,black,scale=0.25,label={[label distance=-0.05cm]30:\tiny{$k$}}] (B)at (-2,-0.5){};
     \node[state,black,scale=0.25,label={[label distance=-0.05cm]330:\tiny{$k$}}] (C)at (1.75,0.5){};
     \node[state,hflip,black,scale=0.25,label={[label distance=-0.05cm]30:\tiny{$k$}}] (D)at (2,-0.5){};
     \node[state,black,scale=0.25,label={[label distance=-0.05cm]330:\tiny{$k$}}] (A')at (-1.25,0.5){};
     \node[state,hflip,black,scale=0.25,label={[label distance=-0.05cm]30:\tiny{$k$}}] (B')at (-1.5,-0.5){};
     \node[state,black,scale=0.25,label={[label distance=-0.05cm]330:\tiny{$k$}}] (C')at (1.25,0.5){};
     \node[state,hflip,black,scale=0.25,label={[label distance=-0.05cm]30:\tiny{$k$}}] (D')at (1.5,-0.5){};
     \draw[string] (A) to (-1.75,1);
     \draw[string] (B) to (-2,-1);
     \draw[string] (C) to (1.75,1);
     \draw[string] (D) to (2,-1);  
     \draw[string] (A') to (-1.25,1);
     \draw[string] (B') to (-1.5,-1);
     \draw[string] (C') to (1.25,1);
     \draw[string] (D') to (1.5,-1); 
     \draw[loosely dotted] (-1,0.1) to (1,0.1); 
     \draw[loosely dotted] (-1.25,-0.1) to (1.25,-0.1);          
    \end{pic}}_{m-1}}^{n}
  \end{equation} 

This allows us to use the basis states of a $\dfrob$ as an indexing set. For the rest of this paper black wires will represent the $d$ dimensional Hilbert space $\mathcal H\cong\C^d$. We will take the black $\dfrob$ $\tinymult$ to be the $\dfrob$ corresponding to the computational basis of $\mathcal H$. As a first example, we now define
 \textit{Hadamards} and \textit{controlled Hadamards} using tensor diagrams.
\begin{lemma}[Hadamard]\label{lem:had}
Given a  $\dfrob$ $\tinymult$ on a  $d$-dimensional Hilbert space $\mathcal H$, a linear map of type  $H:\mathcal H    \rightarrow \mathcal H$  is a \textit{ Hadamard} if and only if the following equations hold:
\begin{align}\label{eq:had}
\begin{aligned}\begin{tikzpicture}
           \node (2) at (0,-1.5) {}; 
           \node (3) at (0,1.5) {};
           \node[morphism,wedge,scale=0.5] (h1) at (0,-0.5) {$H$}; 
           \node[morphism,wedge,scale=0.5,hflip] (h11) at (0,0.5) {$H$};            \draw[string] (2.center) to (h1.south);
           \draw[string] (h11.north) to (3.center);
           \draw[string] (h11.south) to (h1.north);   
\end{tikzpicture}\end{aligned} 
\quad = \quad
\begin{aligned}\begin{tikzpicture}
           \node (2) at (0,-1.5) {}; 
           \node (3) at (0,1.5) {};
           \node[morphism,wedge,scale=0.5,hflip] (h1) at (0,-0.5) {$H$};            \node[morphism,wedge,scale=0.5] (h11) at (0,0.5) {$H$};            \draw[string] (2.center) to (h1.south);
           \draw[string] (h11.north) to (3.center);
           \draw[string] (h11.south) to (h1.north);   
\end{tikzpicture}\end{aligned} 
\quad = \quad
 d
\begin{aligned}\begin{tikzpicture}
\draw[string](0.75,0) to (0.75,3);
\end{tikzpicture}\end{aligned}
\qquad & \qquad 
\begin{aligned}\begin{tikzpicture}[scale=0.75]
           \node (2) at (0,-2) {}; 
           \node (3) at (0.35,0) {};
           \node[morphism,wedge,scale=0.5] (h1) at (0,-1) {$H$}; 
           \node[blackdot] (r1) at (0.35,-0.5) {};
           \node[blackdot] (r2) at (-0.35,-1.5) {}; 
           \node(a) at (-1,0){};
           \draw[string,out=left,in=270] (r2.center) to (a.center);
           \node(a2) at (1,-2){};
           \node(a3) at (-0.35,-2){};
           \draw[string,out=right,in=90] (r1.center) to (a2.center);
           \draw[string] (a3.center) to (r2.center) ;
           \draw[string] (r1.center) to (3.center);
           \draw[string,out=left,in=90] (r1.center) to (h1.north);           \draw[string,out=right,in=270] (r2.center) to (h1.south);
           \node (21) at (0,2) {}; 
           \node (31) at (0.35,0) {};
           \node[morphism,wedge,scale=0.5,hflip] (h11) at (0,1) {$H$}; 
           \node[blackdot] (r11) at (0.35,0.5) {};
           \node[blackdot] (r21) at (-0.35,1.5) {}; 
           \node(a1) at (-1,0){};
           \draw[string,out=left,in=90] (r21.center) to (a1.center);
           \node(a21) at (1,2){};
           \node(a31) at (-0.35,2){};
           \draw[string,out=right,in=270] (r11.center) to (a21.center);
           \draw[string] (a31.center) to (r21.center) ;
           \draw[string] (r11.center) to (31.center);
           \draw[string,out=left,in=270] (r11.center) to (h11.south);           \draw[string,out=right,in=90] (r21.center) to (h11.north);
\end{tikzpicture}\end{aligned}
\quad = \quad
\begin{aligned}\begin{tikzpicture}[xscale=-0.75,yscale=0.75]
           \node (2) at (0,-2) {}; 
           \node (3) at (0.35,0) {};
           \node[morphism,wedge,scale=0.5,hflip] (h1) at (0,-1) {$H$}; 
           \node[blackdot] (r1) at (0.35,-0.5) {};
           \node[blackdot] (r2) at (-0.35,-1.5) {}; 
           \node(a) at (-1,0){};
           \draw[string,out=left,in=270] (r2.center) to (a.center);
           \node(a2) at (1,-2){};
           \node(a3) at (-0.35,-2){};
           \draw[string,out=right,in=90] (r1.center) to (a2.center);
           \draw[string] (a3.center) to (r2.center) ;
           \draw[string] (r1.center) to (3.center);
           \draw[string,out=left,in=90] (r1.center) to (h1.north);           \draw[string,out=right,in=270] (r2.center) to (h1.south);
           \node (21) at (0,2) {}; 
           \node (31) at (0.35,0) {};
           \node[morphism,wedge,scale=0.5] (h11) at (0,1) {$H$}; 
           \node[blackdot] (r11) at (0.35,0.5) {};
           \node[blackdot] (r21) at (-0.35,1.5) {}; 
           \node(a1) at (-1,0){};
           \draw[string,out=left,in=90] (r21.center) to (a1.center);
           \node(a21) at (1,2){};
           \node(a31) at (-0.35,2){};
           \draw[string,out=right,in=270] (r11.center) to (a21.center);
           \draw[string] (a31.center) to (r21.center) ;
           \draw[string] (r11.center) to (31.center);
           \draw[string,out=left,in=270] (r11.center) to (h11.south);           \draw[string,out=right,in=90] (r21.center) to (h11.north);
\end{tikzpicture}\end{aligned}
\quad = \quad
\begin{aligned}\begin{tikzpicture}
\draw[string](0,0) to (0,3);
\draw[string](0.75,0) to (0.75,3);
\end{tikzpicture}\end{aligned}
  \end{align} \end{lemma}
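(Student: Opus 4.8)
This is a biconditional, and by Definition~\ref{def:had} the statement ``$H$ is a Hadamard'' unpacks into two conditions on the matrix entries $H_{ij}$ of $H$ in the black basis: that $|H_{ij}|=1$ for all $i,j$, and that $HH^\dagger=H^\dagger H=d\mathbb I_d$. The plan is to translate each of the four displayed equations, one at a time, into an equivalent condition on the $H_{ij}$, using the Coecke--Pavlovi{\'c}--Vicary normal form~\eqref{classtr}; concretely, I would use that the black comultiplication copies a black basis state, $\ket i\mapsto\ket i\otimes\ket i$, that the black multiplication acts as the pointwise (Schur) product $\ket i\otimes\ket j\mapsto\delta_{ij}\ket i$, and that connected diagrams of black dots fuse as in~\eqref{eq:spider}. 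Then I would observe that the conjunction of the four resulting conditions is exactly the Hadamard property.

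For the first two equations there is almost nothing to do. Read from bottom to top, the left-hand diagram is $H$ followed by $H^\dagger$ and the right-hand one is $H^\dagger$ followed by $H$, while the right-hand side is $d$ times a bare wire; expanding in the black basis, these equations say precisely $H^\dagger H=d\mathbb I_d$ and $HH^\dagger=d\mathbb I_d$, so I would simply record this.

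The real content is in the second pair. Feeding black basis states $\ket a\otimes\ket b$ into the left-hand side of the third equation, I would expand from the bottom up: the lower black dot copies $\ket a$; the box $H$ turns one copy into $\sum_i H_{ia}\ket i$; the next black dot forms its pointwise product with $\ket b$, which collapses the sum to the scalar $H_{ba}$ times $\ket b$; and the mirrored upper half, which is the same picture built from $H^\dagger$ in place of $H$, contributes in the same way a further factor $\overline{H_{ba}}$. Hence the left-hand side is the diagonal map $\ket a\otimes\ket b\mapsto|H_{ba}|^2\,\ket a\otimes\ket b$, and it equals the identity on $\mathcal H\otimes\mathcal H$ if and only if $|H_{ij}|=1$ for all $i,j$. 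The fourth equation is the left--right reflection of the third with the roles of $H$ and $H^\dagger$ interchanged, and an identical expansion shows it reduces to the same condition $|H_{ij}|=1$ for all $i,j$.

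Putting the pieces together, the four equations hold simultaneously if and only if $HH^\dagger=H^\dagger H=d\mathbb I_d$ and $|H_{ij}|=1$ for all $i,j$, which by Definition~\ref{def:had} is exactly the assertion that $H$ is a Hadamard. I do not expect a genuine obstacle. The only delicate point is bookkeeping: keeping straight which strand carries $H$ and which carries $H^\dagger$ (that is, handling the $\dagger$/hflip convention correctly) while expanding the ``controlled-$H$'' bubble in the third and fourth diagrams. But each individual rewrite is a direct application of the copy rule, the pointwise-product rule, or spider fusion, so the computation, while slightly fiddly, is routine.
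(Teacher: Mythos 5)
Your proposal is correct and follows essentially the same route as the paper: both arguments translate the first pair of equations directly into $HH^\dagger=H^\dagger H=d\mathbb I_d$, and both handle the second pair by composing with black basis states and using the copy/spider rules~\eqref{classtr},~\eqref{eq:spider} to reduce the diagram to the scalar $|H_{ij}|^2$ acting diagonally, so that equality with the identity is equivalent to $|H_{ij}|=1$ for all $i,j$. The only cosmetic difference is direction — the paper starts from $|H_{ij}|=1$ and rewrites it into the diagrammatic equation, while you evaluate the diagram on $\ket a\otimes\ket b$ and read off the condition — which amounts to the same computation.
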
 
\begin{proof}
The left hand side of equation~\eqref{eq:had} is simply a tensor diagrammatic translation of \mbox{$H  H^{\dag}=H^\dag H=d \mathbb{I}_d$}. We now show the equivalence of the right hand side of equation~\eqref{eq:had} and the other condition of Definition~\ref{def:had}, that for all $i$, $j$, $|H_{ij}|=1$.  We have for all $i,j$:
\begin{align*}
|H_{ij}|&=1\\
\Leftrightarrow | \bra i H \ket j | &= 1\\
\Leftrightarrow  \bra i H \ket j \bra j H^\dag \ket i &= 1\\
\Leftrightarrow  \bra i H \ket j \bra j H^\dag \ket i &= \braket{i}{i}\braket{j}{j}\\
\end{align*}
We now translate this final equation into the graphical calculus, for all $i,j$:
\begin{equation*}
\begin{pic}[xscale=0.75]
\node[morphism,wedge,scale=0.5,hflip](hdag) at (0.5,0) {$H$};
\node[morphism,wedge,scale=0.5](h) at (-0.5,0) {$H$};
\node[state,black,scale=0.25,label={[label distance=-0.05cm]330:\tiny{$i$}}](i1) at (-0.5,-0.25){};
\node[state,hflip,black,scale=0.25,label={[label distance=-0.05cm]30:\tiny{$j$}}](j1) at (-0.5,0.25){};
\node[state,black,scale=0.25,label={[label distance=-0.05cm]330:\tiny{$j$}}](j2) at (0.5,-0.25){};
\node[state,hflip,black,scale=0.25,label={[label distance=-0.05cm]30:\tiny{$i$}}](i2) at (0.5,0.25){};
\draw[string] (i1) to (h.south);
\draw[string] (j1) to (h.north);
\draw[string] (i2) to (hdag.north);
\draw[string] (j2) to (hdag.south);
\end{pic}\quad = \quad
\begin{pic}[xscale=0.75]
\node[state,black,scale=0.25,label={[label distance=-0.05cm]330:\tiny{$i$}}](i1) at (-0.5,-0.25){};
\node[state,hflip,black,scale=0.25,label={[label distance=-0.05cm]30:\tiny{$i$}}](j1) at (-0.5,0.25){};
\node[state,black,scale=0.25,label={[label distance=-0.05cm]330:\tiny{$j$}}](j2) at (0.5,-0.25){};
\node[state,hflip,black,scale=0.25,label={[label distance=-0.05cm]30:\tiny{$j$}}](i2) at (0.5,0.25){};
\draw[string] (j1) to (i1);
\draw[string] (i2) to (j2);
\end{pic}
\end{equation*} 
Rearranging the left hand side we have for all $i,j$:
\begin{align*}
\begin{pic}[scale=0.75]
\node[morphism,wedge,scale=0.5,hflip](hdag) at (0,1) {$H$};
\node[morphism,wedge,scale=0.5](h) at (0,-1) {$H$};
\node[state,black,scale=0.25,label={[label distance=-0.05cm]330:\tiny{$i$}}](i1) at (0,-1.85){};
\node[state,hflip,black,scale=0.25,label={[label distance=-0.05cm]30:\tiny{$j$}}](j1) at (0,-0.5){};
\node[state,black,scale=0.25,label={[label distance=-0.05cm]330:\tiny{$j$}}](j2) at (0,0.5){};
\node[state,hflip,black,scale=0.25,label={[label distance=-0.05cm]30:\tiny{$i$}}](i2) at (0,1.85){};
\draw[string] (i1) to (h.south);
\draw[string] (j1) to (h.north);
\draw[string] (i2) to (hdag.north);
\draw[string] (j2) to (hdag.south);
\end{pic}\quad = \quad
\begin{pic}[scale=0.75]
\node[morphism,wedge,scale=0.5,hflip](hdag) at (0,1) {$H$};
\node[morphism,wedge,scale=0.5](h) at (0,-1) {$H$};
\node[state,black,scale=0.25,label={[label distance=-0.05cm]330:\tiny{$i$}}](i1) at (0,-1.85){};
\node[state,hflip,black,scale=0.25,label={[label distance=-0.05cm]30:\tiny{$j$}}](j1) at (0,-0.5){};
\node[state,black,scale=0.25,label={[label distance=-0.05cm]330:\tiny{$j$}}](j2) at (0,0.5){};
\node[state,hflip,black,scale=0.25,label={[label distance=-0.05cm]30:\tiny{$i$}}](i2) at (0,1.85){};
\draw[string] (i1) to (h.south);
\draw[string] (j1) to (h.north);
\draw[string] (i2) to (hdag.north);
\draw[string] (j2) to (hdag.south);
\node[state,black,scale=0.25,label={[label distance=-0.05cm]330:\tiny{$i$}}](i1) at (-0.75,-1.85){};
\node[state,hflip,black,scale=0.25,label={[label distance=-0.05cm]30:\tiny{$i$}}](j1) at (-0.75,1.85){};
\node[state,black,scale=0.25,label={[label distance=-0.05cm]330:\tiny{$j$}}](j2) at (0.5,-0.25){};
\node[state,hflip,black,scale=0.25,label={[label distance=-0.05cm]30:\tiny{$j$}}](i2) at (0.5,0.25){};
\draw[string] (j1) to (i1);
\draw[string] (i2) to (j2);
\end{pic}
\quad \super{\eqref{eq:spider}}= \quad
\begin{pic}[scale=0.75]
           \node (2) at (0,-2) {}; 
           \node (3) at (0.35,0) {};
           \node[morphism,wedge,scale=0.5] (h1) at (0,-1) {$H$}; 
           \node[blackdot] (r1) at (0.35,-0.5) {};
           \node[blackdot] (r2) at (-0.35,-1.5) {}; 
           \node(a) at (-1,0){};
           \draw[string,out=left,in=270] (r2.center) to (a.center);
           \node[state,black,scale=0.25,label={[label distance=-0.05cm]330:\tiny{$j$}}](a2) at (1,-1.85){};
           \node[state,black,scale=0.25,label={[label distance=-0.05cm]330:\tiny{$i$}}](a3) at (-0.35,-1.85){};
           \draw[string,out=right,in=90] (r1.center) to (a2.center);
           \draw[string] (a3.center) to (r2.center) ;
           \draw[string] (r1.center) to (3.center);
           \draw[string,out=left,in=90] (r1.center) to (h1.north);                  \draw[string,out=right,in=270] (r2.center) to (h1.south);
           \node (21) at (0,2) {}; 
           \node (31) at (0.35,0) {};
           \node[morphism,wedge,scale=0.5,hflip] (h11) at (0,1) {$H$}; 
           \node[blackdot] (r11) at (0.35,0.5) {};
           \node[blackdot] (r21) at (-0.35,1.5) {}; 
           \node(a1) at (-1,0){};
           \draw[string,out=left,in=90] (r21.center) to (a1.center);
           \node[state,hflip,black,scale=0.25,label={[label distance=-0.05cm]30:\tiny{$j$}}](a21) at (1,1.85){};
           \node[state,hflip,black,scale=0.25,label={[label distance=-0.05cm]30:\tiny{$i$}}](a31) at (-0.35,1.85){};
           \draw[string,out=right,in=270] (r11.center) to (a21.center);
           \draw[string] (a31.center) to (r21.center) ;
           \draw[string] (r11.center) to (31.center);
           \draw[string,out=left,in=270] (r11.center) to (h11.south);           \draw[string,out=right,in=90] (r21.center) to (h11.north);
\end{pic}
\end{align*} 
Thus we have that for all $i,j$:
\begin{equation*}
\begin{pic}[scale=0.75]
           \node (2) at (0,-2) {}; 
           \node (3) at (0.35,0) {};
           \node[morphism,wedge,scale=0.5] (h1) at (0,-1) {$H$}; 
           \node[blackdot] (r1) at (0.35,-0.5) {};
           \node[blackdot] (r2) at (-0.35,-1.5) {}; 
           \node(a) at (-1,0){};
           \draw[string,out=left,in=270] (r2.center) to (a.center);
           \node[state,black,scale=0.25,label={[label distance=-0.05cm]330:\tiny{$j$}}](a2) at (1,-1.85){};
           \node[state,black,scale=0.25,label={[label distance=-0.05cm]330:\tiny{$i$}}](a3) at (-0.35,-1.85){};
           \draw[string,out=right,in=90] (r1.center) to (a2.center);
           \draw[string] (a3.center) to (r2.center) ;
           \draw[string] (r1.center) to (3.center);
           \draw[string,out=left,in=90] (r1.center) to (h1.north);                  \draw[string,out=right,in=270] (r2.center) to (h1.south);
           \node (21) at (0,2) {}; 
           \node (31) at (0.35,0) {};
           \node[morphism,wedge,scale=0.5,hflip] (h11) at (0,1) {$H$}; 
           \node[blackdot] (r11) at (0.35,0.5) {};
           \node[blackdot] (r21) at (-0.35,1.5) {}; 
           \node(a1) at (-1,0){};
           \draw[string,out=left,in=90] (r21.center) to (a1.center);
           \node[state,hflip,black,scale=0.25,label={[label distance=-0.05cm]30:\tiny{$j$}}](a21) at (1,1.85){};
           \node[state,hflip,black,scale=0.25,label={[label distance=-0.05cm]30:\tiny{$i$}}](a31) at (-0.35,1.85){};
           \draw[string,out=right,in=270] (r11.center) to (a21.center);
           \draw[string] (a31.center) to (r21.center) ;
           \draw[string] (r11.center) to (31.center);
           \draw[string,out=left,in=270] (r11.center) to (h11.south);           \draw[string,out=right,in=90] (r21.center) to (h11.north);
\end{pic}
\quad = \quad
\begin{pic}[scale=0.75]
\node[state,black,scale=0.25,label={[label distance=-0.05cm]330:\tiny{$i$}}](i1) at (-0.5,-1.85){};
\node[state,hflip,black,scale=0.25,label={[label distance=-0.05cm]30:\tiny{$i$}}](j1) at (-0.5,1.85){};
\node[state,black,scale=0.25,label={[label distance=-0.05cm]330:\tiny{$j$}}](j2) at (0.5,-1.85){};
\node[state,hflip,black,scale=0.25,label={[label distance=-0.05cm]30:\tiny{$j$}}](i2) at (0.5,1.85){};
\draw[string] (j1) to (i1);
\draw[string] (i2) to (j2);
\end{pic}
\qquad \Leftrightarrow \qquad
\begin{aligned}\begin{tikzpicture}[xscale=0.75,yscale=0.75]
           \node (2) at (0,-2) {}; 
           \node (3) at (0.35,0) {};
           \node[morphism,wedge,scale=0.5] (h1) at (0,-1) {$H$}; 
           \node[blackdot] (r1) at (0.35,-0.5) {};
           \node[blackdot] (r2) at (-0.35,-1.5) {}; 
           \node(a) at (-1,0){};
           \draw[string,out=left,in=270] (r2.center) to (a.center);
           \node(a2) at (1,-2){};
           \node(a3) at (-0.35,-2){};
           \draw[string,out=right,in=90] (r1.center) to (a2.center);
           \draw[string] (a3.center) to (r2.center) ;
           \draw[string] (r1.center) to (3.center);
           \draw[string,out=left,in=90] (r1.center) to (h1.north);           \draw[string,out=right,in=270] (r2.center) to (h1.south);

           \node (21) at (0,2) {}; 
           \node (31) at (0.35,0) {};
           \node[morphism,wedge,scale=0.5,hflip] (h11) at (0,1) {$H$}; 

           \node[blackdot] (r11) at (0.35,0.5) {};
           \node[blackdot] (r21) at (-0.35,1.5) {}; 
           \node(a1) at (-1,0){};
           \draw[string,out=left,in=90] (r21.center) to (a1.center);
           \node(a21) at (1,2){};
           \node(a31) at (-0.35,2){};
           \draw[string,out=right,in=270] (r11.center) to (a21.center);
           \draw[string] (a31.center) to (r21.center) ;
           \draw[string] (r11.center) to (31.center);
           \draw[string,out=left,in=270] (r11.center) to (h11.south);           \draw[string,out=right,in=90] (r21.center) to (h11.north);
\end{tikzpicture}\end{aligned}
\quad = \quad
\begin{aligned}\begin{tikzpicture}
\draw[string](0,0) to (0,3);
\draw[string](0.75,0) to (0.75,3);
\end{tikzpicture}\end{aligned}
\end{equation*}
Since $\bra i H \ket j \bra j H^\dag \ket i = \braket{i}{i}\braket{j}{j}\Leftrightarrow\bra j H^\dag \ket i \bra i H \ket j = \braket{i}{i}\braket{j}{j}$ the other part of the right hand side of equation~\eqref{eq:had} follows similarly.
\end{proof}
We now introduce a mathematical object which captures the idea of an indexed family of Hadamards. We introduce another Hilbert space which we will represent with a red wire, equipped with a $\dfrob$. The states copyable by this $\dfrob$ will index the Hadamards in the family.   
\begin{definition}[Controlled Hadamard]\label{def:ch}
Given a  $\dfrob$ $\tinymult$ on a  $d$-dimensional Hilbert space $\mathcal H$ and another $\dfrob$, $\tinymultcl[red]$ on a, possibly different, Hilbert space $\mathcal G$, a linear map  $H:\mathcal G \otimes \mathcal H   \rightarrow \mathcal G$  is a \textit{controlled Hadamard} if the following equations hold.
\begin{align}\label{eq:ch}
\begin{aligned}\begin{tikzpicture}[xscale=0.65]
           \node (2) at (0,-1.5) {}; 
           \node (3) at (0,1.5) {};
           \node[morphism,wedge,scale=0.5] (h1) at (0,-0.65) {$H$}; 
           \node[reddot] (b1) at (-0.75,-1) {};
           \draw[red][string,in=right,out=225] (h1.south west) to (b1.center);
           \draw[red][string] (-0.75,-1.5) to (b1.center);
           \node[morphism,wedge,hflip,scale=0.5] (h11) at (0,0.65) {$H$};            \node[reddot] (b11) at (-0.75,1) {};
           \draw[red][string,in=right,out=135] (h11.north west) to (b11.center);
           \draw[red][string] (-0.75,1.5) to (b11.center);                          \draw[red][string,out=left,in=left] (b11.center) to (b1.center);
           \draw[string] (2.center) to (h1.south);
           \draw[string] (h1.north) to (h11.south);
           \draw[string] (h11.north) to (3.center);
\end{tikzpicture}\end{aligned} 
 = 
\begin{aligned}\begin{tikzpicture}[xscale=0.65]
           \node (2) at (0,-1.5) {}; 
           \node (3) at (0,1.5) {};
           \node[morphism,wedge,hflip,scale=0.5] (h1) at (0,-0.65) {$H$};            \node[reddot] (b1) at (-0.35,-0.2) {};
           \draw[red][string,in=right,out=90] (h1.north west) to (b1.center);
           \draw[red][string,out=90,in=left] (-0.75,-1.5) to (b1.center);
           \node[morphism,wedge,scale=0.5] (h11) at (0,0.65) {$H$}; 
           \node[reddot] (b11) at (-0.35,0.2) {};
           \draw[red][string,in=right,out=270] (h11.south west) to (b11.center);
           \draw[red][string,out=270,in=left] (-0.75,1.5) to (b11.center);            \draw[red][string] (b11.center) to (b1.center);
           \draw[string] (2.center) to (h1.south);
           \draw[string] (h1.north) to (h11.south);
           \draw[string] (h11.north) to (3.center);
\end{tikzpicture}\end{aligned} 
=  d
\begin{aligned}\begin{tikzpicture}[xscale=0.65]
\draw[red][string](0,0) to (0,3);
\draw[string](0.75,0) to (0.75,3);
\end{tikzpicture}\end{aligned}
\qquad & \qquad 
\begin{aligned}\begin{tikzpicture}[scale=0.75]
           \node (2) at (0,-2) {}; 
           \node (3) at (0.35,0) {};
           \node[morphism,wedge,scale=0.5] (h1) at (0,-0.85) {$H$}; 
           \node[reddot] (b1) at (-1.25,-1.5) {};
           \node[blackdot] (r1) at (0.35,-0.35) {};
           \node[blackdot] (r2) at (-0.35,-1.5) {}; 
           \node(a) at (-1,0){};
           \draw[string,out=left,in=270] (r2.center) to (a.center);
           \node(a2) at (1,-2){};
           \node(a3) at (-0.35,-2){};
           \draw[string,out=right,in=90] (r1.center) to (a2.center);
           \draw[string] (a3.center) to (r2.center) ;
           \draw[string] (r1.center) to (3.center);
           \draw[string,out=left,in=90] (r1.center) to (h1.north);                  \draw[string,out=right,in=240] (r2.center) to (h1.south);
           \draw[red][string,in=right,out=225] (h1.south west) to (b1.center);
           \draw[red][string] (-1.25,-2) to (b1.center);                
           \node (21) at (0,2) {}; 
           \node (31) at (0.35,0) {};
           \node[morphism,wedge,hflip,scale=0.5] (h11) at (0,0.85) {$H$};            \node[reddot] (b11) at (-1.25,1.5) {};
           \node[blackdot] (r11) at (0.35,0.35) {};
           \node[blackdot] (r21) at (-0.35,1.5) {}; 
           \node(a1) at (-1,0){};
           \draw[string,out=left,in=90] (r21.center) to (a1.center);
           \node(a21) at (1,2){};
           \node(a31) at (-0.35,2){};
           \draw[string,out=right,in=270] (r11.center) to (a21.center);
           \draw[string] (a31.center) to (r21.center) ;
           \draw[string] (r11.center) to (31.center);
           \draw[string,out=left,in=270] (r11.center) to (h11.south);               \draw[string,out=right,in=90] (r21.center) to (h11.north);
           \draw[red][string,in=right,out=135] (h11.north west) to (b11.center);
           \draw[red][string,out=left,in=left] (b1.center) to (b11.center);
           \draw[red][string] (-1.25,2) to (b11.center);    
\end{tikzpicture}\end{aligned}
 = 
\begin{aligned}\begin{tikzpicture}[scale=0.75]
           \node (2) at (0,0) {}; 
           \node (3) at (0.35,2) {};
           \node[morphism,wedge,scale=0.5] (h1) at (0,1.15) {$H$}; 
           \node[reddot] (b1) at (-1.25,0.5) {};
           \node[blackdot] (r1) at (0.35,1.65) {};
           \node[blackdot] (r2) at (-0.35,0.5) {}; 
           \node(a) at (-1,2){};
           \draw[string,out=left,in=270] (r2.center) to (a.center);
           \node(a2) at (1,0){};
           \node(a3) at (-0.35,0){};
           \draw[string,out=right,in=90] (r1.center) to (a2.center);
           \draw[string] (a3.center) to (r2.center) ;
           \draw[string] (r1.center) to (3.center);
           \draw[string,out=left,in=90] (r1.center) to (h1.north);                  \draw[string,out=right,in=270] (r2.center) to (h1.south);
           \draw[red][string,in=right,out=225] (h1.south west) to (b1.center);
           \draw[red][string] (-1.25,0) to (b1.center);
           \draw[red][string,out=left,in=270] (b1.center) to (-2,2);
           \node (21) at (0,0) {}; 
           \node (31) at (0.35,-2) {};
           \node[morphism,wedge,scale=0.5,hflip] (h11) at (0,-1.15) {$H$};
           \node[reddot] (b11) at (-1.25,-0.5) {};
           \node[blackdot] (r11) at (0.35,-1.65) {};
           \node[blackdot] (r21) at (-0.35,-0.5) {}; 
           \node(a1) at (-1,-2){};
           \draw[string,out=left,in=90] (r21.center) to (a1.center);
           \node(a21) at (1,0){};
           \node(a31) at (-0.35,0){};
           \draw[string,out=right,in=90] (r11.center) to (a21.center);
           \draw[string] (a31.center) to (r21.center) ;
           \draw[string] (r11.center) to (31.center);
           \draw[string,out=left,in=270] (r11.center) to (h11.south);               \draw[string,out=right,in=90] (r21.center) to (h11.north);
           \draw[red][string,in=right,out=135] (h11.north west) to (b11.center);
           \draw[red][string] (-1.25,0) to (b11.center);
           \draw[red][string,out=left,in=90] (b11.center) to (-2,-2);    \end{tikzpicture}\end{aligned}
\quad = \quad
\begin{aligned}\begin{tikzpicture}[xscale=0.75]
\draw[string](0,0) to (0,3);
\draw[string](0.75,0) to (0.75,3);
\draw[red][string](-0.75,0) to (-0.75,3);
\end{tikzpicture}\end{aligned}
  \end{align}  
\end{definition}
Controlled Hadamards are indexed families of Hadamards in the following sense.   \begin{lemma} Given a controlled Hadamard $H$ and  some red state $i$, define $H_i$ as follows:
\begin{equation}
\begin{pic}
\node[morphism,wedge,scale=0.5](h){$H_i$};
\draw[string] (h.south) to +(0,-0.5);
\draw[string] (h.north) to +(0,0.35);
\end{pic}:= 
\begin{pic}
\node[morphism,wedge,scale=0.5](h){$H$};
\node(a) at (h.south west){};
\node[state,red,scale=0.25,label={[label distance=-0.08cm]210:\tiny{$i$}}](r) at (-0.15,-0.3){};
\draw[string] (h.south) to +(0,-0.5);
\draw[string] (h.north) to +(0,0.35);
\draw[red][string] (h.south west) to (r);
\end{pic}
\end{equation}
For all red states $i$, $H_i$ as defined above is a Hadamard.
\end{lemma}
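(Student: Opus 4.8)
The plan is to verify, directly from the defining equations~\eqref{eq:ch} of the controlled Hadamard, that $H_i$ satisfies both conditions of Lemma~\ref{lem:had}, and then to invoke that lemma. The mechanism is the same in every case: feed the red state $i$ into each red input leg of an equation in~\eqref{eq:ch} and the red effect $i$ onto each red output leg. Since the red $\dfrob$ $\tinymultcl[red]$ corresponds to an orthonormal basis, the red state $i$ is copied by any connected diagram of red dots (equations~\eqref{classtr} and~\eqref{eq:spider}, stated for the red basis), and any leftover red wire carrying first the state $i$ and then the effect $i$ contributes the scalar $\braket{i}{i}=1$.

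First I would treat the two equations forming the left-hand block of~\eqref{eq:ch}. On each side the red-coloured part is a single connected diagram of red dots with one red input leg, one red output leg, and one red control leg into each copy of $H$ and $H^{\dag}$. Feeding the red state $i$ into the red input and the red effect $i$ onto the red output, the copying rule~\eqref{eq:spider} deposits the red state $i$ on the control of every $H$ and (after reflection in the horizontal axis, which is how $\dag$ acts) the red effect $i$ on the control of every $H^{\dag}$, while the surplus red legs collapse to $\braket{i}{i}=1$. By the definition of $H_i$ and of $H_i^{\dag}$, the left-hand sides become $H_i^{\dag}\circ H_i$ and $H_i\circ H_i^{\dag}$ respectively, while the right-hand side --- the scalar $d$ times a red wire and a black wire --- becomes $d\,\braket{i}{i}\,\id[\mathcal H]=d\,\id[\mathcal H]$ once the red wire is closed off by $i$ and $i$. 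Thus $H_i^{\dag}H_i=H_iH_i^{\dag}=d\,\id[\mathcal H]$, which is exactly the left-hand block of~\eqref{eq:had}.

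Next I would apply the same substitution to the two equations in the right-hand block of~\eqref{eq:ch}. Each of these is the corresponding diagram from the right-hand block of~\eqref{eq:had} augmented by one connected red-dot network whose open legs are a global red input, a global red output, and one control leg per occurrence of $H$; feeding in the red state $i$ and the red effect $i$ sets the control of every $H$ to $i$ and removes the residual red wire via $\braket{i}{i}=1$, leaving precisely the right-hand block of~\eqref{eq:had} with $H$ replaced by $H_i$. Having established both blocks of~\eqref{eq:had} for $H_i$, Lemma~\ref{lem:had} yields that $H_i$ is a Hadamard.

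The only delicate step is the bookkeeping: one must check, in each of the four equations in~\eqref{eq:ch}, that the red subdiagram is indeed a single connected diagram of red $\dfrob$ dots whose open legs are exactly one global red input, one global red output, and one control leg for each $H$ or $H^{\dag}$ appearing, so that the spider rule~\eqref{eq:spider} collapses it uniformly and no stray red wire survives the substitution. Once that structural fact is verified, the computation is routine.
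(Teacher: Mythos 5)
Your proposal is correct and follows essentially the same route as the paper: compose the controlled-Hadamard equations~\eqref{eq:ch} with the red state $i$ and its adjoint, use the red spider/copying rule~\eqref{eq:spider} to push $i$ onto every control leg (leaving only $\braket{i}{i}=1$ factors), and conclude via Lemma~\ref{lem:had}. No substantive differences from the paper's argument.
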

\begin{proof}If we compose equations~\eqref{eq:ch} with $i$ we have:
\begin{align*}
\begin{aligned}\begin{tikzpicture}[xscale=0.65]
           \node[state,red,hflip,scale=0.25,label={[label distance=-0.05cm]30:\tiny{$i$}}](rs1) at (-0.75,1.25){};
           \node[state,red,scale=0.25,label={[label distance=-0.05cm]330:\tiny{$i$}}](rs2) at (-0.75,-1.25){};
           \node (2) at (0,-1.5) {}; 
           \node (3) at (0,1.5) {};
           \node[morphism,wedge,scale=0.5] (h1) at (0,-0.65) {$H$}; 
           \node[reddot] (b1) at (-0.75,-1) {};
           \draw[red][string,in=right,out=225] (h1.south west) to (b1.center);
           \draw[red][string] (rs2) to (b1.center);
           \node[morphism,wedge,hflip,scale=0.5] (h11) at (0,0.65) {$H$};            \node[reddot] (b11) at (-0.75,1) {};
           \draw[red][string,in=right,out=135] (h11.north west) to (b11.center);
           \draw[red][string] (rs1) to (b11.center);                          \draw[red][string,out=left,in=left] (b11.center) to (b1.center);
           \draw[string] (2.center) to (h1.south);
           \draw[string] (h1.north) to (h11.south);
           \draw[string] (h11.north) to (3.center);
\end{tikzpicture}\end{aligned} 
 &= 
\begin{aligned}\begin{tikzpicture}[xscale=0.65]
           \node[state,red,hflip,scale=0.25,label={[label distance=-0.05cm]30:\tiny{$i$}}](rs1) at (-0.75,1.25){};
           \node[state,red,scale=0.25,label={[label distance=-0.05cm]330:\tiny{$i$}}](rs2) at (-0.75,-1.25){};                 
           \node (2) at (0,-1.5) {}; 
           \node (3) at (0,1.5) {};
           \node[morphism,wedge,hflip,scale=0.5] (h1) at (0,-0.65) {$H$}; 
           \node[reddot] (b1) at (-0.35,-0.2) {};
           \draw[red][string,in=right,out=90] (h1.north west) to (b1.center);
           \draw[red][string,out=90,in=left] (rs2) to (b1.center);            \node[morphism,wedge,scale=0.5] (h11) at (0,0.65) {$H$};
           \node[reddot] (b11) at (-0.35,0.2) {};
           \draw[red][string,in=right,out=270] (h11.south west) to (b11.center);
           \draw[red][string,out=270,in=left] (rs1) to (b11.center);            \draw[red][string] (b11.center) to (b1.center);
           \draw[string] (2.center) to (h1.south);
           \draw[string] (h1.north) to (h11.south);
           \draw[string] (h11.north) to (3.center);
\end{tikzpicture}\end{aligned} 
=  d
\begin{aligned}\begin{tikzpicture}[xscale=0.65]
           \node[state,red,hflip,scale=0.25,label={[label distance=-0.05cm]30:\tiny{$i$}}](rs1) at (0,2.75){};
           \node[state,red,scale=0.25,label={[label distance=-0.05cm]330:\tiny{$i$}}](rs2) at (0,0.25){};
           \draw[red][string](0,0.25) to (0,2.75);
           \draw[string](0.75,0) to (0.75,3);
\end{tikzpicture}\end{aligned}
\qquad  \qquad 
\begin{aligned}\begin{tikzpicture}[scale=0.75]
           \node[state,red,hflip,scale=0.25,label={[label distance=-0.05cm]30:\tiny{$i$}}](rs1) at (-1.25,1.75){};
           \node[state,red,scale=0.25,label={[label distance=-0.05cm]330:\tiny{$i$}}](rs2) at (-1.25,-1.75){};
           \node (2) at (0,-2) {}; 
           \node (3) at (0.35,0) {};
           \node[morphism,wedge,scale=0.5] (h1) at (0,-0.85) {$H$}; 
           \node[reddot] (b1) at (-1.25,-1.5) {};
           \node[blackdot] (r1) at (0.35,-0.35) {};
           \node[blackdot] (r2) at (-0.35,-1.5) {}; 
           \node(a) at (-1,0){};
           \draw[string,out=left,in=270] (r2.center) to (a.center);
           \node(a2) at (1,-2){};
           \node(a3) at (-0.35,-2){};
           \draw[string,out=right,in=90] (r1.center) to (a2.center);
           \draw[string] (a3.center) to (r2.center) ;
           \draw[string] (r1.center) to (3.center);
           \draw[string,out=left,in=90] (r1.center) to (h1.north);                  \draw[string,out=right,in=240] (r2.center) to (h1.south);
           \draw[red][string,in=right,out=225] (h1.south west) to (b1.center);
           \draw[red][string] (rs2) to (b1.center);                
           \node (21) at (0,2) {}; 
           \node (31) at (0.35,0) {};
           \node[morphism,wedge,hflip,scale=0.5] (h11) at (0,0.85) {$H$};
           \node[reddot] (b11) at (-1.25,1.5) {};
           \node[blackdot] (r11) at (0.35,0.35) {};
           \node[blackdot] (r21) at (-0.35,1.5) {}; 
           \node(a1) at (-1,0){};
           \draw[string,out=left,in=90] (r21.center) to (a1.center);
           \node(a21) at (1,2){};
           \node(a31) at (-0.35,2){};
           \draw[string,out=right,in=270] (r11.center) to (a21.center);
           \draw[string] (a31.center) to (r21.center) ;
           \draw[string] (r11.center) to (31.center);
           \draw[string,out=left,in=270] (r11.center) to (h11.south);               \draw[string,out=right,in=90] (r21.center) to (h11.north);
           \draw[red][string,in=right,out=135] (h11.north west) to (b11.center);
           \draw[red][string,out=left,in=left] (b1.center) to (b11.center);
           \draw[red][string] (rs1) to (b11.center);    
\end{tikzpicture}\end{aligned}
 = 
\begin{aligned}\begin{tikzpicture}[scale=0.75]
           \node[state,red,hflip,scale=0.25,label={[label distance=-0.05cm]30:\tiny{$i$}}](rs1) at (-2,1.75){};
           \node[state,red,scale=0.25,label={[label distance=-0.05cm]330:\tiny{$i$}}](rs2) at (-2,-1.75){};
           \node (2) at (0,0) {}; 
           \node (3) at (0.35,2) {};
           \node[morphism,wedge,scale=0.5] (h1) at (0,1.15) {$H$}; 
           \node[reddot] (b1) at (-1.25,0.5) {};
           \node[blackdot] (r1) at (0.35,1.65) {};
           \node[blackdot] (r2) at (-0.35,0.5) {}; 
           \node(a) at (-1,2){};
           \draw[string,out=left,in=270] (r2.center) to (a.center);
           \node(a2) at (1,0){};
           \node(a3) at (-0.35,0){};
           \draw[string,out=right,in=90] (r1.center) to (a2.center);
           \draw[string] (a3.center) to (r2.center) ;
           \draw[string] (r1.center) to (3.center);
           \draw[string,out=left,in=90] (r1.center) to (h1.north);                  \draw[string,out=right,in=270] (r2.center) to (h1.south);
           \draw[red][string,in=right,out=225] (h1.south west) to (b1.center);
           \draw[red][string] (-1.25,0) to (b1.center);
           \draw[red][string,out=left,in=270] (b1.center) to (-2,1.75);
           \node (21) at (0,0) {}; 
           \node (31) at (0.35,-2) {};
           \node[morphism,wedge,scale=0.5,hflip] (h11) at (0,-1.15) {$H$};
           \node[reddot] (b11) at (-1.25,-0.5) {};
           \node[blackdot] (r11) at (0.35,-1.65) {};
           \node[blackdot] (r21) at (-0.35,-0.5) {}; 
           \node(a1) at (-1,-2){};
           \draw[string,out=left,in=90] (r21.center) to (a1.center);
           \node(a21) at (1,0){};
           \node(a31) at (-0.35,0){};
           \draw[string,out=right,in=90] (r11.center) to (a21.center);
           \draw[string] (a31.center) to (r21.center) ;
           \draw[string] (r11.center) to (31.center);
           \draw[string,out=left,in=270] (r11.center) to (h11.south);                  \draw[string,out=right,in=90] (r21.center) to (h11.north);
           \draw[red][string,in=right,out=135] (h11.north west) to (b11.center);
           \draw[red][string] (-1.25,0) to (b11.center);
           \draw[red][string,out=left,in=90] (b11.center) to (-2,-1.75);    \end{tikzpicture}\end{aligned}
\quad = \quad
\begin{aligned}\begin{tikzpicture}[xscale=0.75]
\draw[string](0,0) to (0,3);
\draw[string](0.75,0) to (0.75,3);
           \node[state,red,hflip,scale=0.25,label={[label distance=-0.05cm]30:\tiny{$i$}}](rs1) at (-0.75,2.75){};
           \node[state,red,scale=0.25,label={[label distance=-0.05cm]330:\tiny{$i$}}](rs2) at (-0.75,0.25){};
           \draw[red][string](-0.75,0.25) to (-0.75,2.75);
\end{tikzpicture}\end{aligned}\\ \Leftrightarrow
\begin{aligned}\begin{tikzpicture}[xscale=0.65]
           \node (2) at (0,-1.5) {}; 
           \node (3) at (0,1.5) {};
           \node[morphism,wedge,hflip,scale=0.5] (h11) at (0,0.65) {$H$};
           \node[morphism,wedge,scale=0.5] (h1) at (0,-0.65) {$H$};
           \node[state,red,hflip,scale=0.25,label={[label distance=-0.05cm]150:\tiny{$i$}}](rs1) at (-0.28,0.95){};
           \node[state,red,scale=0.25,label={[label distance=-0.05cm]210:\tiny{$i$}}](rs2) at (-0.28,-0.95){};
           \node[state,red,hflip,scale=0.25,label={[label distance=-0.05cm]30:\tiny{$i$}}](rs3) at (-1.3,0.25){};
           \node[state,red,scale=0.25,label={[label distance=-0.05cm]330:\tiny{$i$}}](rs4) at (-1.3,-0.25){}; 
           \draw[red][string,in=90,out=270] (h1.south west) to (rs2);
           \draw[red][string] (h11.north west) to (rs1);
           \draw[red][string] (rs3) to (rs4);
            \draw[string] (2.center) to (h1.south);
           \draw[string] (h1.north) to (h11.south);
           \draw[string] (h11.north) to (3.center);
\end{tikzpicture}\end{aligned} 
 &= 
\begin{aligned}\begin{tikzpicture}[xscale=0.65]
\node[state,red,hflip,scale=0.25,label={[label distance=-0.05cm]150:\tiny{$i$}}](rs1) at (-0.28,-0.35){};
           \node[state,red,scale=0.25,label={[label distance=-0.05cm]210:\tiny{$i$}}](rs2) at (-0.28,0.35){};
           \node[state,red,hflip,scale=0.25,label={[label distance=-0.05cm]30:\tiny{$i$}}](rs3) at (-1.3,0.25){};
           \node[state,red,scale=0.25,label={[label distance=-0.05cm]330:\tiny{$i$}}](rs4) at (-1.3,-0.25){};
          \draw[red][string] (rs3) to (rs4);
           \draw[red][string] (h11.south west) to (rs2);
           \draw[red][string] (h1.north west) to (rs1);
           \node (2) at (0,-1.5) {}; 
           \node (3) at (0,1.5) {};
           \node[morphism,wedge,hflip,scale=0.5] (h1) at (0,-0.65) {$H$};
          \node[morphism,wedge,scale=0.5] (h11) at (0,0.65) {$H$};
           \draw[string] (2.center) to (h1.south);
           \draw[string] (h1.north) to (h11.south);
           \draw[string] (h11.north) to (3.center);
\end{tikzpicture}\end{aligned} 
=  d
\begin{aligned}\begin{tikzpicture}[xscale=0.65]
           \node[state,red,hflip,scale=0.25,label={[label distance=-0.05cm]30:\tiny{$i$}}](rs3) at (0,1.75){};
           \node[state,red,scale=0.25,label={[label distance=-0.05cm]330:\tiny{$i$}}](rs4) at (0,1.25){};
          \draw[red][string] (rs3) to (rs4);
\draw[string](0.75,0) to (0.75,3);
\end{tikzpicture}\end{aligned}
\qquad  \qquad
\begin{aligned}\begin{tikzpicture}[scale=0.75]
           \node[state,red,hflip,scale=0.25,label={[label distance=-0.05cm]150:\tiny{$i$}}](rs1) at (-0.28,0.95){};
           \node[state,red,scale=0.25,label={[label distance=-0.05cm]210:\tiny{$i$}}](rs2) at (-0.28,-0.95){};
           \node[state,red,hflip,scale=0.25,label={[label distance=-0.05cm]30:\tiny{$i$}}](rs3) at (-1.3,0.25){};
           \node[state,red,scale=0.25,label={[label distance=-0.05cm]330:\tiny{$i$}}](rs4) at (-1.3,-0.25){}; 
           \node (2) at (0,-2) {}; 
           \node (3) at (0.35,0) {};
           \node[morphism,wedge,scale=0.5] (h1) at (0,-0.65) {$H$}; 
           \node[blackdot,scale=0.9] (r1) at (0.35,-0.2) {};
           \node[blackdot] (r2) at (-0.35,-1.5) {}; 
           \node(a) at (-1,0){};
           \draw[string,out=left,in=270] (r2.center) to (a.center);
           \node(a2) at (1,-2){};
           \node(a3) at (-0.35,-2){};
           \draw[string,out=right,in=90] (r1.center) to (a2.center);
           \draw[string] (a3.center) to (r2.center) ;
           \draw[string] (r1.center) to (3.center);
           \draw[string,out=left,in=90] (r1.center) to (h1.north);                  \draw[string,out=right,in=240] (r2.center) to (h1.south);
           \node (21) at (0,2) {}; 
           \node (31) at (0.35,0) {};
           \node[morphism,wedge,hflip,scale=0.5] (h11) at (0,0.65) {$H$};
           \node[blackdot,scale=0.9] (r11) at (0.35,0.2) {};
           \node[blackdot] (r21) at (-0.35,1.5) {}; 
           \node(a1) at (-1,0){};
           \draw[string,out=left,in=90] (r21.center) to (a1.center);
           \node(a21) at (1,2){};
           \node(a31) at (-0.35,2){};
           \draw[string,out=right,in=270] (r11.center) to (a21.center);
           \draw[string] (a31.center) to (r21.center) ;
           \draw[string] (r11.center) to (31.center);
           \draw[string,out=left,in=270] (r11.center) to (h11.south);           \draw[string,out=right,in=90] (r21.center) to (h11.north);
                  \draw[red][string,in=90,out=270] (h1.south west) to (rs2);
           \draw[red][string] (h11.north west) to (rs1);
           \draw[red][string] (rs3) to (rs4);
\end{tikzpicture}\end{aligned}
 = 
\begin{aligned}\begin{tikzpicture}[scale=0.75]
           \node (2) at (0,0) {}; 
           \node (3) at (0.35,2) {};
           \node[morphism,wedge,scale=0.5] (h1) at (0,1.25) {$H$}; 
           \node[blackdot] (r1) at (0.35,1.65) {};
           \node[blackdot] (r2) at (-0.35,0.5) {}; 
           \node(a) at (-1,2){};
           \draw[string,out=left,in=270] (r2.center) to (a.center);
           \node(a2) at (1,0){};
           \node(a3) at (-0.35,0){};
           \draw[string,out=right,in=90] (r1.center) to (a2.center);
           \draw[string] (a3.center) to (r2.center) ;
           \draw[string] (r1.center) to (3.center);
           \draw[string,out=left,in=90] (r1.center) to (h1.north);                  \draw[string,out=right,in=270] (r2.center) to (h1.south);
           \node (21) at (0,0) {}; 
           \node (31) at (0.35,-2) {};
           \node[morphism,wedge,scale=0.5,hflip] (h11) at (0,-1.25) {$H$};
           \node[blackdot] (r11) at (0.35,-1.65) {};
           \node[blackdot] (r21) at (-0.35,-0.5) {}; 
           \node(a1) at (-1,-2){};
           \draw[string,out=left,in=90] (r21.center) to (a1.center);
           \node(a21) at (1,0){};
           \node(a31) at (-0.35,0){};
           \draw[string,out=right,in=90] (r11.center) to (a21.center);
           \draw[string] (a31.center) to (r21.center) ;
           \draw[string] (r11.center) to (31.center);
           \draw[string,out=left,in=270] (r11.center) to (h11.south);                  \draw[string,out=right,in=90] (r21.center) to (h11.north);
           \node[state,red,hflip,scale=0.25,label={[label distance=0cm]180:\tiny{$i$}}](rs1) at (-0.28,-0.9){};
           \node[state,red,scale=0.25,label={[label distance=0cm]180:\tiny{$i$}}](rs2) at (-0.28,0.9){};
           \node[state,red,hflip,scale=0.25,label={[label distance=-0.05cm]30:\tiny{$i$}}](rs3) at (-1.3,0.25){};
           \node[state,red,scale=0.25,label={[label distance=-0.05cm]330:\tiny{$i$}}](rs4) at (-1.3,-0.25){};
          \draw[red][string] (rs3) to (rs4);
           \draw[red][string] (h1.south west) to (rs2);
           \draw[red][string] (h11.north west) to (rs1);   \end{tikzpicture}\end{aligned}
\quad = \quad
\begin{aligned}\begin{tikzpicture}[xscale=0.75]
           \node[state,red,hflip,scale=0.25,label={[label distance=-0.05cm]30:\tiny{$i$}}](rs3) at (-0.75,1.75){};
           \node[state,red,scale=0.25,label={[label distance=-0.05cm]330:\tiny{$i$}}](rs4) at (-0.75,1.25){};
          \draw[red][string] (rs3) to (rs4);
\draw[string](0,0) to (0,3);
\draw[string](0.75,0) to (0.75,3);
\end{tikzpicture}\end{aligned}\\ \Leftrightarrow \quad
\begin{aligned}\begin{tikzpicture}
           \node (2) at (0,-1.5) {}; 
           \node (3) at (0,1.5) {};
           \node[morphism,wedge,scale=0.5] (h1) at (0,-0.5) {$H_i$}; 
           \node[morphism,wedge,scale=0.5,hflip] (h11) at (0,0.5) {$H_i$};            \draw[string] (2.center) to (h1.south);
           \draw[string] (h11.north) to (3.center);
           \draw[string] (h11.south) to (h1.north);   
\end{tikzpicture}\end{aligned} 
\quad &= \quad
\begin{aligned}\begin{tikzpicture}
           \node (2) at (0,-1.5) {}; 
           \node (3) at (0,1.5) {};
           \node[morphism,wedge,scale=0.5,hflip] (h1) at (0,-0.5) {$H_i$};            \node[morphism,wedge,scale=0.5] (h11) at (0,0.5) {$H_i$};            \draw[string] (2.center) to (h1.south);
           \draw[string] (h11.north) to (3.center);
           \draw[string] (h11.south) to (h1.north);   
\end{tikzpicture}\end{aligned} 
\quad = \quad
 d
\begin{aligned}\begin{tikzpicture}
\draw[string](0.75,0) to (0.75,3);
\end{tikzpicture}\end{aligned}
\qquad \qquad 
\begin{aligned}\begin{tikzpicture}[scale=0.75]
           \node (2) at (0,-2) {}; 
           \node (3) at (0.35,0) {};
           \node[morphism,wedge,scale=0.5] (h1) at (0,-1) {$H_i$}; 
           \node[blackdot] (r1) at (0.35,-0.5) {};
           \node[blackdot] (r2) at (-0.35,-1.5) {}; 
           \node(a) at (-1,0){};
           \draw[string,out=left,in=270] (r2.center) to (a.center);
           \node(a2) at (1,-2){};
           \node(a3) at (-0.35,-2){};
           \draw[string,out=right,in=90] (r1.center) to (a2.center);
           \draw[string] (a3.center) to (r2.center) ;
           \draw[string] (r1.center) to (3.center);
           \draw[string,out=left,in=90] (r1.center) to (h1.north);           \draw[string,out=right,in=270] (r2.center) to (h1.south);
           \node (21) at (0,2) {}; 
           \node (31) at (0.35,0) {};
           \node[morphism,wedge,scale=0.5,hflip] (h11) at (0,1) {$H_i$}; 
           \node[blackdot] (r11) at (0.35,0.5) {};
           \node[blackdot] (r21) at (-0.35,1.5) {}; 
           \node(a1) at (-1,0){};
           \draw[string,out=left,in=90] (r21.center) to (a1.center);
           \node(a21) at (1,2){};
           \node(a31) at (-0.35,2){};
           \draw[string,out=right,in=270] (r11.center) to (a21.center);
           \draw[string] (a31.center) to (r21.center) ;
           \draw[string] (r11.center) to (31.center);
           \draw[string,out=left,in=270] (r11.center) to (h11.south);           \draw[string,out=right,in=90] (r21.center) to (h11.north);
\end{tikzpicture}\end{aligned}
\quad = \quad
\begin{aligned}\begin{tikzpicture}[xscale=-0.75,yscale=0.75]
           \node (2) at (0,-2) {}; 
           \node (3) at (0.35,0) {};
           \node[morphism,wedge,scale=0.5,hflip] (h1) at (0,-1) {$H_i$}; 
           \node[blackdot] (r1) at (0.35,-0.5) {};
           \node[blackdot] (r2) at (-0.35,-1.5) {}; 
           \node(a) at (-1,0){};
           \draw[string,out=left,in=270] (r2.center) to (a.center);
           \node(a2) at (1,-2){};
           \node(a3) at (-0.35,-2){};
           \draw[string,out=right,in=90] (r1.center) to (a2.center);
           \draw[string] (a3.center) to (r2.center) ;
           \draw[string] (r1.center) to (3.center);
           \draw[string,out=left,in=90] (r1.center) to (h1.north);           \draw[string,out=right,in=270] (r2.center) to (h1.south);
           \node (21) at (0,2) {}; 
           \node (31) at (0.35,0) {};
           \node[morphism,wedge,scale=0.5] (h11) at (0,1) {$H_i$}; 
           \node[blackdot] (r11) at (0.35,0.5) {};
           \node[blackdot] (r21) at (-0.35,1.5) {}; 
           \node(a1) at (-1,0){};
           \draw[string,out=left,in=90] (r21.center) to (a1.center);
           \node(a21) at (1,2){};
           \node(a31) at (-0.35,2){};
           \draw[string,out=right,in=270] (r11.center) to (a21.center);
           \draw[string] (a31.center) to (r21.center) ;
           \draw[string] (r11.center) to (31.center);
           \draw[string,out=left,in=270] (r11.center) to (h11.south);           \draw[string,out=right,in=90] (r21.center) to (h11.north);
\end{tikzpicture}\end{aligned}
\quad = \quad
\begin{aligned}\begin{tikzpicture}
\draw[string](0,0) to (0,3);
\draw[string](0.75,0) to (0.75,3);
\end{tikzpicture}\end{aligned}
\end{align*}  
So by Lemma~\ref{lem:had}, for all $i$, $H_i$ is a Hadamard. 
\end{proof}
Thus given a controlled Hadamard the  number of Hadamards in the family is  equal to the dimension of the red Hilbert space which in practice, for our purposes is often the same as the black Hilbert space. Considering the above lemma and the discussion below Definition~\ref{def:had} we have the following corollary.
\begin{corollary}\label{cor:chadmub}
Given a controlled Hadamard $H$ with black $\dfrob$ $\tinymult$ and red \mbox{$\dfrob$} $\tinymult[reddot]$ define the following bases $\mathcal B^i:=\ket{b^i_j}$ ; for each red state $i$, and black state $j$:
\begin{equation}
\ket{b^i_j}:=\frac{1}{\sqrt{d}}
\begin{pic}
\node (f) [morphism,wedge,  connect se length=0.5cm, width=1cm, connect n] at (0,0) {$H$};
\node(rs)[state,scale=0.375,red,label={[label distance=0.05cm]330:$i$}]  at (f.connect sw) {};
\node[state,scale=0.375,black,label={[label distance=0.05cm]330:$j$}]  at (f.connect se) {};
\draw[red][string](f.south west) to (rs);
\node  at (f.connect n) {};
\end{pic} \end{equation}
Then each basis $\mathcal B^i$ is mutually unbiased to the black basis. 
\end{corollary}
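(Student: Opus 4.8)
The plan is to deduce the corollary directly from the preceding lemma together with the elementary correspondence between Hadamards and mutually unbiased bases recorded just below Definition~\ref{def:had}. There is essentially nothing new to prove: the corollary is a repackaging of that lemma.

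First I would identify the state appearing in the statement. Reading off the tensor diagram, $\ket{b^i_j}$ is obtained by feeding the red state $i$ into the red input wire of $H$, the black state $j$ into the black input wire, and scaling by $\tfrac1{\sqrt d}$. Comparing with the definition of $H_i$ in the preceding lemma, this is exactly $\ket{b^i_j}=\tfrac1{\sqrt d}\,H_i\ket j$, where $H_i:\mathcal H\to\mathcal H$ is the Hadamard-type map built from the controlled Hadamard $H$ and the red state $i$. This step is pure bookkeeping in the graphical calculus.

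Next I would invoke the preceding lemma, which guarantees that $H_i$ is a Hadamard in the sense of Definition~\ref{def:had} for every red state $i$. Then the one-line argument recorded immediately after Definition~\ref{def:had} applies verbatim with $H$ replaced by $H_i$: since $H_i H_i^{\dagger}=H_i^{\dagger}H_i=d\,\mathbb{I}_d$, the map $\tfrac1{\sqrt d}H_i$ is unitary, so $\mathcal B^i=\{\ket{b^i_j}\}_j$ is an orthonormal basis obtained from the black basis by a change of basis; and since $|(H_i)_{kj}|=1$ for all black states $k,j$, one computes $|\braket{k}{b^i_j}|^2=\tfrac1d\,|(H_i)_{kj}|^2=\tfrac1d$. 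By Definition~\ref{def:mub} this says precisely that $\mathcal B^i$ is mutually unbiased to the black basis, for each $i$.

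I do not expect any genuine obstacle. The only content that required work — verifying the Hadamard identities for $H_i$ in the presence of the extra ``spectator'' red wire, which uses the copy law~\eqref{eq:spider} — has already been discharged in the preceding lemma, so all that remains is the identification of $\ket{b^i_j}$ with $\tfrac1{\sqrt d}H_i\ket j$ and the elementary computation relating $|H_{kj}|=1$ to unbiasedness. If one preferred to keep the whole argument diagrammatic, both steps can equally be phrased in the graphical calculus, using the tensor-diagram form of the Hadamard axioms from Lemma~\ref{lem:had} in place of the matrix-entry form.
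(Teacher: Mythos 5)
Your proposal is correct and matches the paper's intent exactly: the paper presents this as an immediate corollary of the preceding lemma (each $H_i$ is a Hadamard) combined with the discussion below Definition~\ref{def:had} relating Hadamards to mutually unbiased bases, which is precisely the two-step argument you give. No gaps; nothing further is needed.
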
 
\begin{definition}[Permutation]\label{def:perm}
A \textit{permutation} with respect to a $\dfrob$ is a comonoid homomorphism (of the comonoid part of the $\dfrob$) which is  unitary.  
\end{definition}
\begin{remark}
In \cat{FHilb} Definition~\ref{def:perm} gives the usual notion of a permutation matrix where the $\dfrob$ is a choice of basis.
\end{remark}
We will later require a tensor diagrammatic characterisation for a permutation $P$ of type $\mathcal H \otimes \mathcal H \rightarrow \mathcal H \otimes \mathcal H$ with respect to the tensor product of the standard black $\dfrob$, $\tinymult$ with itself. The condition that $P$ must be unitary becomes:
\begin{equation}\label{eq:puni}
\begin{pic}[scale=0.75]
\node(i2) at (-0.32,-1.8){};
\node(i3) at (0.32,-1.8){};
\node(p)[morphism,wedge] at (0,-0.6){$P$};
\draw[string] (i2.center) to (p.south west);
\draw[string] (i3.center) to (p.south east);
\node(i2a) at (-0.32,1.8){};
\node(i3a) at (0.32,1.8){};
\node(pa)[morphism,wedge,hflip] at (0,0.6){$P$};
\draw[string] (i2a.center) to (pa.north west);
\draw[string] (i3a.center) to (pa.north east);
\draw[string] (p.north east) to (pa.south east);
\draw[string] (p.north west) to (pa.south west);
\end{pic}
\quad = \quad
\begin{pic}[scale=0.75]
\node(i2) at (-0.32,-1.8){};
\node(i3) at (0.32,-1.8){};
\node(p)[morphism,wedge,hflip] at (0,-0.6){$P$};
\draw[string] (i2.center) to (p.south west);
\draw[string] (i3.center) to (p.south east);
\node(i2a) at (-0.32,1.8){};
\node(i3a) at (0.32,1.8){};
\node(pa)[morphism,wedge] at (0,0.6){$P$};
\draw[string] (i2a.center) to (pa.north west);
\draw[string] (i3a.center) to (pa.north east);
\draw[string] (p.north east) to (pa.south east);
\draw[string] (p.north west) to (pa.south west);
\end{pic}
\quad = \quad
\begin{pic}[scale=0.75]
\node(i1) at (-0.32,-1.8){};
\node(i2) at (0.32,-1.8){};
\node(o1) at (-0.32,1.8){};
\node(o2) at (0.32,1.8){};
\draw[string](i1.center) to (o1.center);
\draw[string](i2.center) to (o2.center);
\end{pic}
\end{equation}
Referring to Definition~\ref{def:cohom} we require the following equations:

\begin{align}
\label{eq:pfunc}
\begin{pic}[scale=0.75]
\node(o2) at (-0.36,3.15){};
\node(o3) at (0.36,3.15){};
\node(o4) at (1,3.15){};
\node(i2) at (-0.32,0){};
\node(i3) at (0.32,0){};
\node(o1) at (-1,3.15){};
\node(p)[morphism,wedge] at (0,1.2){$P$};
\node(b1)[blackdot] at (-0.32,2){};
\node(b2)[blackdot] at (0.32,2){};
\draw[string] (i2.center) to (p.south west);
\draw[string] (i3.center) to (p.south east);
\draw[string] (b2.center) to (p.north east);
\draw[string] (b1.center) to (p.north west);
\draw[string,out=left,in=270] (b1.center) to (o1.center);
\draw[string,out=right,in=270] (b1.center) to (o3.center);
\draw[string,out=left,in=270] (b2.center) to (o2.center);
\draw[string,out=right,in=270] (b2.center) to (o4.center);
\end{pic}\quad = \quad
\begin{pic}[scale=0.75]
\node(o1) at (-0.32,2.4){};
\node(m) at (0.32,2.4){};
\node(i2) at (-0.32,0){};
\node(i3) at (0.32,0){};
\node(p)[morphism,wedge] at (0,1.2){$P$};
\draw[string] (m.center) to (p.north east);
\draw[string] (o1.center) to (p.north west);
\node(o1a) at (1.68,2.4){};
\node(ma) at (2.32,2.4){};
\node(i2a) at (1.68,0){};
\node(i3a) at (2.32,0){};
\node(pa)[morphism,wedge] at (2,1.2){$P$};
\node(b1)[blackdot] at (0.68,-0.25){};
\node(b2)[blackdot] at (1.32,-0.25){};
\draw[string] (ma.center) to (pa.north east);
\draw[string] (o1a.center) to (pa.north west);
\draw[string,out=left,in=270] (b1.center) to (p.south west);
\draw[string,out=left,in=270,looseness=0.6] (b2.center) to (p.south east);
\draw[string,out=right,in=270,looseness=0.6] (b1.center) to (pa.south west);
\draw[string,out=right,in=270] (b2.center) to (pa.south east);
\node(a1) at (0.68,-0.75){};
\node(a2) at (1.32,-0.75){};
\draw[string] (a1.center) to (b1.center);
\draw[string] (a2.center) to (b2.center);
\end{pic}\qquad & \qquad
\begin{pic}[scale=0.75]
\node(o2) at (-0.36,3.15){};
\node(o3) at (0.36,3.15){};
\node(o4) at (1,3.15){};
\node(i2) at (-0.32,0){};
\node(i3) at (0.32,0){};
\node(o1) at (-1,3.15){};
\node(p)[morphism,wedge] at (0,1.2){$P$};
\node(b1)[blackdot] at (-0.32,2){};
\node(b2)[blackdot] at (0.32,2){};
\draw[string] (i2.center) to (p.south west);
\draw[string] (i3.center) to (p.south east);
\draw[string] (b2.center) to (p.north east);
\draw[string] (b1.center) to (p.north west);
\draw[string,out=left,in=270] (b1.center) to (o1.center);
\draw[string,out=right,in=270] (b1.center) to (o3.center);
\draw[string,out=left,in=270] (b2.center) to (o2.center);
\draw[string,out=right,in=270] (b2.center) to (o4.center);
\end{pic}\quad = \quad
\begin{pic}[scale=0.75]
\node(o2) at (-0.36,3.15){};
\node(o3) at (0.36,3.15){};
\node(o4) at (1,3.15){};
\node(i2) at (-0.32,0){};
\node(i3) at (0.32,0){};
\node(o1) at (-1,3.15){};
\node(p)[morphism,wedge] at (0,1.2){$P$};
\node(b1)[blackdot] at (-0.32,2){};
\node(b2)[blackdot] at (0.32,2){};
\draw[string] (i2.center) to (p.south west);
\draw[string] (i3.center) to (p.south east);
\draw[string] (b2.center) to (p.north east);
\draw[string] (b1.center) to (p.north west);
\draw[string,out=left,in=270] (b1.center) to (o1.center);
\draw[string,out=right,in=270] (b1.center) to (o3.center);
\draw[string,out=left,in=270] (b2.center) to (o2.center);
\draw[string,out=right,in=270] (b2.center) to (o4.center);
\end{pic}
\end{align}
This ensures that given black basis states $\ket i$ and $\ket j$ we have  $P(\ket i \otimes \ket j)= \ket m \otimes \ket n$ for some $n,m$ also black basis states.

A well known result which is not difficult to prove and will be useful is that isometric operators on finite dimensional Hilbert spaces are always unitary (\cite{hilbtbook},page 130). Since we will mainly be working with finite dimensional Hilbert spaces we will make use of this to shorten proofs of unitarity. 
  \section{Maximal families of MUBs and \UEBM s}\label{section:MMUB UEBM}
We now move on to the discussion of maximal MUBs. For this section we will use only black wires and  all wires will represent the Hilbert space $\mathcal H \cong\mathbb C^d$ as usual. We consider the black basis states of the $\dagger$-SCFA $\tinymult$ as the computational basis denoted by $\tinystate[black,state,label={[label distance=-0.08cm]330:\tiny{$i$}}], i \in \range{0}$ and use them as an indexing set in the way described in the previous section.
\paragraph{tensor diagrammatic characterisations.}
We now give a tensor diagrammatic characterisation of a maximal MUB, which we will show to be equivalent to Definition~\ref{def:mub}.
We characterise a maximal family of MUBs\ as a linear map $M$ of type  $\mathcal H \otimes \mathcal H \rightarrow \mathcal H$ together with the computational basis $\dfrob$ $\tinymult$. Let the $d+1$ bases of a maximal MUB be denoted $\mathcal B^i, i \in \{*,0,1,...d-1\}$, where the $k$th basis state of $\mathcal B^x$ is denoted $\ket{b_x^k}$. 

We take  the states of the basis $\mathcal B^*$ to be those copyable by $\tinymult$, so $\tinystate[black,state,label={[label distance=-0.08cm]330:\tiny{$i$}}]:=\ket{b_i^*}$. The linear map $M$ encodes the $d^2$ basis states of  the remaining $d$ bases in the following way.
Given black basis states $\tinystate[black,state,label={[label distance=-0.08cm]330:\tiny{$k$}}]$ and $\tinystate[black,state,label={[label distance=-0.08cm]330:\tiny{$x$}}]$:
\begin{equation}
\ket{b_x^k}=
\begin{aligned}
\begin{tikzpicture}
\node (f) [morphism,wedge, connect sw, connect se length=0.5cm, width=1cm, connect n] at (0,0) {$M$};
\node[state,scale=0.375,black,label={[label distance=0.05cm]330:$k$}]  at (f.connect sw) {};
\node[state,scale=0.375,black,label={[label distance=0.05cm]330:$x$}]  at (f.connect se) {};
\node  at (f.connect n) {};
\end{tikzpicture}
\end{aligned}
\end{equation}
\begin{theorem}[Tensor diagrammatic maximal MUBs]
Given a $\dfrob$ $\tinymult$ on a $d$-dimensional Hilbert space $\mathcal H$, a linear map $M$ of type $\mathcal H \otimes \mathcal H \rightarrow \mathcal H$ is a maximal family of MUBs iff $\sqrt{d}M$ is a controlled Hadamard and the following equation holds. 
\begin{equation}\label{eq:MMUB}
\begin{aligned}
\begin{tikzpicture}
\node (m1) [morphism,wedge,hflip] at (-1.25,0.5) {$M$};
\node(m2)[morphism,wedge]  at (-1.25,-0.5) {$M$};
\node  at (m1.connect n) {};
\node at (m2.connect s){};
\draw (m1.south) to (m2.north)[];

\node (m3) [morphism,wedge,hflip,vflip] at (1.25,0.5) {$M$};
\node(m4)[morphism,wedge,vflip]  at (1.25,-0.5) {$M$};
\node  at (m3.connect n) {};
\node at (m4.connect s){};
\draw (m3.south) to (m4.north)[];

\node(b2)[blackdot] at (0.15,1.25){};
\draw[string,out=90,in=left] (m1.north east) to (b2.center);
\draw[string,out=right,in=90] (b2.center) to (m3.north west);
\draw (b2.center) to +(0,0.75);

\node(b4)[blackdot] at (0.15,-1.25){};
\draw[string,out=270,in=left] (m2.south east) to (b4.center);
\draw[string,out=right,in=270] (b4.center) to (m4.south west);
\draw (b4.center) to +(0,-0.75);

\node(b1)[blackdot] at (-0.15,1.65){};
\draw[string,out=90,in=left] (m1.north west) to (b1.center);
\draw[string,out=right,in=90] (b1.center) to (m3.north east);
\draw (b1.center) to +(0,0.35);

\node(b3)[blackdot] at (-0.15,-1.65){};
\draw[string,out=270,in=left] (m2.south west) to (b3.center);
\draw[string,out=right,in=270] (b3.center) to (m4.south east);
\draw (b3.center) to +(0,-0.35);
\end{tikzpicture}
\end{aligned}
=
\frac{1}{d}\left(
\begin{aligned}
\begin{tikzpicture}
\node(1) at (-0.25,2){};
\node(2) at (0.25,2){};
\node(3) at (-0.25,-2){};
\node(4) at (0.25,-2){};
\node(b1)[blackdot] at (-0.25,1){};
\node(b2)[blackdot] at (0.25,1){};
\node(b3)[blackdot] at (-0.25,-1){};
\node(b4)[blackdot] at (0.25,-1){};
\draw (b1.center) to (1);
\draw (b2.center) to (2);
\draw (b3.center) to (3);
\draw (b4.center) to (4);
\end{tikzpicture}
\end{aligned}
\quad -\quad
\begin{aligned}
\begin{tikzpicture}
\node(1) at (-0.25,2){};
\node(2) at (0.25,2){};
\node(3) at (-0.25,-2){};
\node(4) at (0.25,-2){};
\node(b2)[blackdot] at (0.25,1){};
\node(b4)[blackdot] at (0.25,-1){};
\draw (3) to (1);
\draw (b2.center) to (2);
\draw (b4.center) to (4);
\end{tikzpicture}
\end{aligned}\right)
\quad + \quad
\begin{aligned}
\begin{tikzpicture}
\node(1) at (-0.25,2){};
\node(2) at (0.25,2){};
\node(3) at (-0.25,-2){};
\node(4) at (0.25,-2){};
\draw (3) to (1);
\draw (4) to (2);
\end{tikzpicture}
\end{aligned}
\end{equation}
\end{theorem}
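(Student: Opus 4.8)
The plan is to show that the two stated conditions capture complementary parts of the three requirements defining a maximal family of MUBs in this setup: writing $\ket{b_x^k}=M(\ket k\otimes\ket x)$ and $\mathcal B^*$ for the computational basis, a maximal family amounts to (i) each $\mathcal B^x$, $x\in\{0,\dots,d-1\}$, is an orthonormal basis; (ii) each $\mathcal B^x$ is mutually unbiased with $\mathcal B^*$; and (iii) distinct $\mathcal B^x,\mathcal B^y$ are mutually unbiased. I would prove ``$\sqrt d\,M$ is a controlled Hadamard'' $\Leftrightarrow$ (i)$\wedge$(ii), and ``\eqref{eq:MMUB} holds'' $\Leftrightarrow$ (i)$\wedge$(iii); since (i)$\wedge$(ii)$\wedge$(iii) is exactly the definition of a maximal family of MUBs, the conjunction of the two clauses is the theorem (the redundant overlap being the orthonormality of each $\mathcal B^x$, which both clauses force).

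For the controlled-Hadamard clause, the forward implication is essentially Corollary~\ref{cor:chadmub} read with both the black and the red \dfrob\ taken to be the computational one: a controlled Hadamard makes each $H_x:=\sqrt d\,M(-\otimes\ket x)$ a Hadamard, so $M(-\otimes\ket x)$ is unitary, giving (i), and makes each $\mathcal B^x$ unbiased with $\mathcal B^*$, giving (ii). For the converse, given (i) and (ii) Lemma~\ref{lem:had} shows each $H_x$ is a Hadamard --- unitarity of $M(-\otimes\ket x)$ yields $H_xH_x^\dagger=d\,\mathbb I$, and (ii) yields $|(H_x)_{ij}|=1$ --- and then the controlled-Hadamard equations \eqref{eq:ch} are equalities of linear maps which may be checked on the basis vectors $\ket i\otimes\ket j$, where copyability of the red state $\ket i$ collapses \eqref{eq:ch} to the Hadamard equations \eqref{eq:had} for $H_i$.

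For \eqref{eq:MMUB}, both sides are linear maps $\mathcal H\otimes\mathcal H\to\mathcal H\otimes\mathcal H$, so it suffices to test on computational basis states $\ket p\otimes\ket q$. Using the spider/copying laws \eqref{classtr}, \eqref{eq:sm}, \eqref{eq:spider} of the black \dfrob, the lower two black dots of the left-hand side copy $\ket p$ and $\ket q$ into the lower copies of $M$ and $\overline M$, producing $M(\ket p\otimes\ket q)$ and its conjugate; passing these through $M^\dagger$ and $M^{T}$ and contracting with the upper black dots (which act as comparison maps on the computational basis) collapses the diagram to $\sum_{k,x}\lvert\langle b_x^k\,\vert\,M(\ket p\otimes\ket q)\rangle\rvert^2$ times the matching basis vector. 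The right-hand side, with each one-legged black dot read as the ``all-ones'' state $\sum_i\ket i$ or effect $\sum_i\bra i$, evaluates to $\tfrac1d\bigl(\sum_i\ket i\otimes\sum_j\ket j-\ket p\otimes\sum_j\ket j\bigr)+\ket p\otimes\ket q$. Matching coefficients shows \eqref{eq:MMUB} is equivalent to: $\lvert\langle b_x^k\vert b_y^l\rangle\rvert^2=\delta_{kl}$ when $x=y$ and $=\tfrac1d$ when $x\neq y$, i.e.\ orthonormality of each $\mathcal B^x$ together with mutual unbiasedness of distinct $\mathcal B^x,\mathcal B^y$, which is (i)$\wedge$(iii).

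I expect the main obstacle to be the faithful bookkeeping in the left-hand side of \eqref{eq:MMUB}: keeping straight which legs of the four decorated copies of $M$ --- adjoint via the horizontal flip, conjugate via the vertical flip --- carry the ``index'' wire and the ``basis-label'' wire, so that the contraction genuinely produces the squared overlaps and that these are paired against the correct entries of the explicit right-hand side. Once that evaluation is carried out carefully, the equivalence with (i)$\wedge$(iii) is immediate, and assembling it with the controlled-Hadamard clause and the definition of a maximal family of MUBs completes the proof.
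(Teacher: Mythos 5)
Your strategy is essentially the paper's own: evaluate both sides of \eqref{eq:MMUB} on computational basis elements, reduce the diagrammatic equation to the scalar overlap conditions $\tfrac1d(1-\delta)+\delta\delta$, and let the controlled-Hadamard clause supply unbiasedness with the black basis via Corollary~\ref{cor:chadmub}. Your plan is even somewhat more complete than the printed proof, since you also sketch the converse of the controlled-Hadamard clause (reducing \eqref{eq:ch} to \eqref{eq:had} for each $H_i$ by copyability of the control basis states), which the paper leaves implicit although the theorem is stated as an ``iff''. Testing only on input states and comparing output vectors, rather than composing with states and effects on all four legs as the paper does, is an equivalent and fine variant.

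There is, however, a concrete bookkeeping error at exactly the spot you flag as the likely obstacle, and as written it breaks the final assembly. You place the control on the second argument, $H_x:=\sqrt d\,M(-\otimes\ket x)$, and you assert that the coefficient matching yields $|\braket{b^k_x}{b^l_y}|^2=\delta_{kl}$ when $x=y$ and $1/d$ when $x\neq y$, i.e.\ bases indexed by the second argument. But your own evaluation of the right-hand side, $\tfrac1d\bigl(\sum_i\ket i\otimes\sum_j\ket j-\ket p\otimes\sum_j\ket j\bigr)+\ket p\otimes\ket q$, has the identity acting on the \emph{first} tensor factor (that is where the bare wire sits in the middle term of \eqref{eq:MMUB}), so matching the coefficient of $\ket k\otimes\ket x$ gives $|\langle M(\ket k\otimes\ket x)\,|\,M(\ket p\otimes\ket q)\rangle|^2=\tfrac1d(1-\delta_{kp})+\delta_{kp}\delta_{xq}$: the delta separating the two regimes sits on the first index. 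Hence \eqref{eq:MMUB} states that the families $\{M(\ket k\otimes\ket x)\}_x$ indexed by the \emph{first} argument are orthonormal and pairwise unbiased, which is also what matches Definition~\ref{def:ch} and Corollary~\ref{cor:chadmub}, where the control (red) wire is the first input. As proposed, your two clauses therefore concern two different groupings of the $d^2$ vectors (second-argument bases for the Hadamard clause, first-argument bases for \eqref{eq:MMUB}), and their conjunction does not assemble into a maximal family of MUBs. The repair is notational but necessary: take $H_i:=\sqrt d\,M(\ket i\otimes-)$ and let the first argument label the bases throughout (the paper's setup paragraph, which hangs the basis label on the second leg, is itself at odds with Definition~\ref{def:ch} and with the structure of \eqref{eq:MMUB}); with that consistent choice your decomposition into (i)$\wedge$(ii) and (i)$\wedge$(iii) and the rest of your argument go through.
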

\begin{proof}Consider composition by arbitrary black basis states  $\tinystate[black,state,label={[label distance=-0.08cm]330:\tiny{$i$}}]\tinystate[black,state,label={[label distance=-0.08cm]330:\tiny{$j$}}]$ and  effects $\tinyeffect[black,state,hflip,label={[label distance=-0.15cm]330:\tiny{$m$}}]\tinyeffect[black,state,hflip,label={[label distance=-0.15cm]330:\tiny{$n$}}]$ on both sides of the equation~\eqref{eq:MMUB}. 
\begin{align*}
\begin{aligned}
\begin{tikzpicture}[scale=0.5]
\node (m1) [morphism,wedge,hflip,scale=0.5] at (-1.25,0.5) {$M$};
\node(m2)[morphism,wedge,scale=0.5]  at (-1.25,-0.5) {$M$};
\node  at (m1.connect n) {};
\node at (m2.connect s){};
\draw (m1.south) to (m2.north)[];
\node (m3) [morphism,wedge,hflip,vflip,scale=0.5] at (1.25,0.5) {$M$};
\node(m4)[morphism,wedge,vflip,scale=0.5]  at (1.25,-0.5) {$M$};
\node  at (m3.connect n) {};
\node at (m4.connect s){};
\draw (m3.south) to (m4.north)[];
\node(b2)[blackdot,scale=0.75] at (0.25,1.25){};
\draw[string,out=90,in=left] (m1.north east) to (b2.center);
\draw[string,out=right,in=90] (b2.center) to (m3.north west);
\draw (b2.center) to +(0,0.75);
\node(b4)[blackdot,scale=0.75] at (0.25,-1.25){};
\draw[string,out=270,in=left] (m2.south east) to (b4.center);
\draw[string,out=right,in=270] (b4.center) to (m4.south west);
\draw (b4.center) to +(0,-0.75);
\node(b1)[blackdot,scale=0.75] at (-0.25,1.65){};
\draw[string,out=90,in=left] (m1.north west) to (b1.center);
\draw[string,out=right,in=90] (b1.center) to (m3.north east);
\draw (b1.center) to +(0,0.35);
\node(b3)[blackdot,scale=0.75] at (-0.25,-1.65){};
\draw[string,out=270,in=left] (m2.south west) to (b3.center);
\draw[string,out=right,in=270] (b3.center) to (m4.south east);
\draw (b3.center) to +(0,-0.35);
\node(i)[state,black,scale=0.25,label={[label distance=-0.08cm]330:\tiny{$i$}}] at (-0.25,-2){};
\node(j)[state,black,scale=0.25,label={[label distance=-0.08cm]330:\tiny{$j$}}] at (0.25,-2){};
\node(m)[state,black,hflip,scale=0.25,label={[label distance=-0.07cm]30:\tiny{$m$}}] at (-0.25,2){};
\node(n)[state,black,hflip,scale=0.25,label={[label distance=-0.07cm]30:\tiny{$n$}}] at (0.25,2){};
\end{tikzpicture}
\end{aligned}
&=
\frac{1}{d}\left(
\begin{aligned}
\begin{tikzpicture}[scale=0.5]
\node(1) at (-0.25,2){};
\node(2) at (0.25,2){};
\node(3) at (-0.25,-2){};
\node(4) at (0.25,-2){};
\node(b1)[blackdot,scale=0.75] at (-0.25,1){};
\node(b2)[blackdot,scale=0.75] at (0.25,1){};
\node(b3)[blackdot,scale=0.75] at (-0.25,-1){};
\node(b4)[blackdot,scale=0.75] at (0.25,-1){};
\draw (b1.center) to (1.center);
\draw (b2.center) to (2.center);
\draw (b3.center) to (3.center);
\draw (b4.center) to (4.center);
\node(i)[state,black,scale=0.25,label={[label distance=-0.08cm]330:\tiny{$i$}}] at (-0.25,-2){};
\node(j)[state,black,scale=0.25,label={[label distance=-0.08cm]330:\tiny{$j$}}] at (0.25,-2){};
\node(m)[state,black,hflip,scale=0.25,label={[label distance=-0.07cm]30:\tiny{$m$}}] at (-0.25,2){};
\node(n)[state,black,hflip,scale=0.25,label={[label distance=-0.07cm]30:\tiny{$n$}}] at (0.25,2){};
\end{tikzpicture}
\end{aligned}
\quad -\quad
\begin{aligned}
\begin{tikzpicture}[scale=0.5]
\node(1) at (-0.25,2){};
\node(2) at (0.25,2){};
\node(3) at (-0.25,-2){};
\node(4) at (0.25,-2){};
\node(b2)[blackdot,scale=0.75] at (0.25,1){};
\node(b4)[blackdot,scale=0.75] at (0.25,-1){};
\draw (3.center) to (1.center);
\draw (b2.center) to (2.center);
\draw (b4.center) to (4.center);
\node(i)[state,black,scale=0.25,label={[label distance=-0.08cm]330:\tiny{$i$}}] at (-0.25,-2){};
\node(j)[state,black,scale=0.25,label={[label distance=-0.08cm]330:\tiny{$j$}}] at (0.25,-2){};
\node(m)[state,black,hflip,scale=0.25,label={[label distance=-0.07cm]30:\tiny{$m$}}] at (-0.25,2){};
\node(n)[state,black,hflip,scale=0.25,label={[label distance=-0.07cm]30:\tiny{$n$}}] at (0.25,2){};
\end{tikzpicture}
\end{aligned}\right)
\quad + \quad
\begin{aligned}
\begin{tikzpicture}[scale=0.5]
\node(1) at (-0.25,2){};
\node(2) at (0.25,2){};
\node(3) at (-0.25,-2){};
\node(4) at (0.25,-2){};
\draw (3.center) to (1.center);
\draw (4.center) to (2.center);
\node(i)[state,black,scale=0.25,label={[label distance=-0.08cm]330:\tiny{$i$}}] at (-0.25,-2){};
\node(j)[state,black,scale=0.25,label={[label distance=-0.08cm]330:\tiny{$j$}}] at (0.25,-2){};
\node(m)[state,black,hflip,scale=0.25,label={[label distance=-0.07cm]30:\tiny{$m$}}] at (-0.25,2){};
\node(n)[state,black,hflip,scale=0.25,label={[label distance=-0.07cm]30:\tiny{$n$}}] at (0.25,2){};
\end{tikzpicture}
\end{aligned}\\
\Leftrightarrow \quad \left |
\begin{aligned}\begin{tikzpicture}[scale=0.5]
\node (m3) [morphism,wedge,hflip,scale=0.5] at (0,0.5) {$M$};
\node(m4)[morphism,wedge,scale=0.5]  at (0,-0.5) {$M$};
\node  at (m3.connect n) {};
\node at (m4.connect s){};
\draw (m3.south) to (m4.north)[];
\node(i)[state,black,scale=0.25,label={[label distance=-0.08cm]330:\tiny{$i$}}] at (-0.25,-2){};
\node(j)[state,black,scale=0.25,label={[label distance=-0.08cm]330:\tiny{$j$}}] at (0.25,-2){};
\node(m)[state,black,hflip,scale=0.25,label={[label distance=-0.07cm]30:\tiny{$m$}}] at (-0.25,2){};
\node(n)[state,black,hflip,scale=0.25,label={[label distance=-0.07cm]30:\tiny{$n$}}] at (0.25,2){};
\draw[string] (i.center) to (m4.south west);
\draw[string] (j.center) to (m4.south east);
\draw[string] (n.center) to (m3.north east);
\draw[string] (m.center) to (m3.north west);
\end{tikzpicture}
\end{aligned}\right |^2&=\frac{1}{d}(1-\delta_{im})+\delta_{im}\delta_{jn}
\end{align*}
Since  $i,j,m$ and $n$ were chosen arbitrarily this holds for all values of $i,j,m$ and $n$. So our tensor diagrammatic axiom is equivalent to the following; for all $i,j,m,n$:
\begin{equation}
|\braket{b_j^i}{b_n^m}|^2=\frac{1}{d}(1-\delta_{im})+\delta_{im}\delta_{jn}
\end{equation}
 For $i=m$ we have $|\braket{b_j^i}{b_n^i}|^2=\delta_{jn}$, which indicates that for all $i$, $\mathcal B^i$ is an orthonormal basis. For $i \neq m$ we have $|\braket{b_j^i}{b_n^m}|^2=1/d$, in other words $\mathcal B^i$ and $\mathcal B^m$ are mutually unbiased.  

The requirement that $\sqrt{d}M$ is a controlled Hadamard ensures that each basis is mutually unbiased to black basis by Corollary~\ref{cor:chadmub}.
\end{proof} 
We now give a tensor diagrammatic characterisation of unitary error bases which first appeared in the author's masters thesis~\cite{mustothesis} and is equivalent to Definition~\ref{def:ueb} as we show. 
\begin{proposition}[Tensor diagrammatic unitary error bases]\label{prop:grueb}
Given a $d$-dimensional Hilbert space $\mathcal H$ with a $\dfrob$ $\tinymult$, and linear map $U:\mathcal H \otimes \mathcal H \otimes \mathcal H\rightarrow \mathcal H$, define the following family of linear maps $U_{ij}|i,j \in \range{0} $:\begin{equation}
U_{ij}:=
\begin{aligned}
\begin{tikzpicture}
\node (f) [morphism,wedge, connect s, connect se length=0.5cm, width=1cm, connect n,connect sw length = 1cm] at (0,0) {$U$};
\node[state,scale=0.375,black,label={[label distance=0.05cm]330:$i$}]  at (f.connect s) {};
\node[state,scale=0.375,black,label={[label distance=0.05cm]330:$j$}]  at (f.connect se) {};
\node  at (f.connect n) {};
\end{tikzpicture}
\end{aligned}
\end{equation}\label{eq:ueb}
The linear maps $U_{ij}$ are a unitary error basis iff the following equations hold: 
\begin{align}\label{eq:ueb}\begin{aligned}
\begin{tikzpicture}[scale=0.75]
\node (u2) [morphism,wedge, connect s length=0cm, connect se length=0cm, scale=0.75,connect sw length=1cm] at (0,-0.5) {$U$};
\node (u1) [morphism,hflip,wedge, connect ne length=0cm, scale=0.75, connect n length=0cm,connect nw length=1cm] at (0,0.5) {$U$};
\node(o2)  at (0.75,1.8) {};
\node(o3)  at (1.5,1.8) {};
\node(i2)  at (0.75,-1.8) {};
\node(i3)  at (1.5,-1.8) {};
\node(b3)[blackdot, scale=0.75]  at (0.75,-1.4) {};
\node(b4)[blackdot, scale=0.75]  at (1.5,-1.4) {};
\node(b1)[blackdot, scale=0.75]  at (0.75,1.4) {};
\node(b2)[blackdot, scale=0.75]  at (1.5,1.4) {};
\draw[string] (u1.south) to (u2.north);
\draw[string,in=270,out=left] (b3.center) to (u2.connect s);
\draw[string,in=270,out=left,looseness=0.6] (b4.center) to (u2.connect se);
\draw[string] (i2.center) to (b3.center);
\draw[string] (i3.center) to (b4.center);
\draw[string,in=90,out=left] (b1.center) to (u1.connect n);
\draw[string,in=90,out=left,looseness=0.6] (b2.center) to (u1.connect ne);
\draw[string] (o2.center) to (b1.center);
\draw[string] (o3.center) to (b2.center);
\draw[string,out=right,in=right] (b1.center) to (b3.center);
\draw[string,out=right,in=right] (b2.center) to (b4.center);
\end{tikzpicture}
\end{aligned}
\quad = \quad
\begin{aligned}
\begin{tikzpicture}[scale=0.75]
\draw (-0.4,1.8) to (-0.4,-1.8);
\draw (0.75,1.8) to (0.75,-1.8);
\draw (1.5,1.8) to (1.5,-1.8);
\end{tikzpicture}
\end{aligned}
\qquad & \qquad
\begin{aligned}
\begin{tikzpicture}[scale=0.75]
\node (u2) [morphism,wedge, connect s length=1cm, connect se length=1cm, scale=0.75,connect sw length=0cm] at (0,-0.5) {$U$};
\node (u1) [morphism,hflip,wedge, connect ne length=1cm, scale=0.75, connect n length=1cm,connect nw length=0cm] at (0,0.5) {$U$};
\node(o2)  at (-0.75,1.8) {};
\node(i2)  at (-0.75,-1.8) {};
\node(b3)[blackdot, scale=0.75]  at (-0.75,-1.4) {};
\node(b1)[blackdot, scale=0.75]  at (-0.75,1.4) {};
\draw[string] (u1.south) to (u2.north);
\draw[string,in=270,out=right] (b3.center) to (u2.connect sw);
\draw[string,in=90,out=right] (b1.center) to (u1.connect nw);
\draw[string,out=left,in=left] (b1.center) to (b3.center);
\end{tikzpicture}
\end{aligned}
\quad = \quad d
\begin{aligned}
\begin{tikzpicture}[scale=0.75]
\draw (0.75,1.8) to (0.75,-1.8);
\draw (1,1.8) to (1,-1.8);
\end{tikzpicture}
\end{aligned}
\end{align}
\end{proposition}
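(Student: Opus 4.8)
The plan is to establish separately that the first of the two displayed equations is equivalent to every $U_{ij}$ being unitary, and that the second is equivalent to the Hilbert--Schmidt relation~\eqref{eq:trueb}. By Definition~\ref{def:ueb} a unitary error basis is exactly a family of unitary operators indexed by $\range{0}\times\range{0}$ that satisfies~\eqref{eq:trueb}, and the present index set already has the required cardinality $d^2$, so the conjunction of the two diagrammatic equations will be equivalent to the $U_{ij}$ forming a unitary error basis. (Orthogonality via~\eqref{eq:trueb} moreover forces the $U_{ij}$ to be linearly independent, hence genuinely $d^2$ of them.)

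For the first equation I would read its left-hand side as a linear map $\mathcal H^{\otimes 3}\to\mathcal H^{\otimes 3}$: the lower box is $U$, the upper box is $U^\dagger$, they are composed along the central wire, the two ``index'' input legs of $U$ are tied in pairs with the two ``index'' output legs of $U^\dagger$ through black spiders, and each spider also carries one external input and one external output wire; the third input of $U$ and third output of $U^\dagger$ form the remaining pair of external wires. Both sides being linear, it suffices to test them on the spanning family $v\otimes\ket i\otimes\ket j$ with $v\in\mathcal H$ arbitrary and $\ket i,\ket j$ black basis states. Feeding $\ket i$ and $\ket j$ into the spiders and using the copying law~\eqref{eq:spider}, each spider places a $\ket i$ (resp.\ $\ket j$) on the corresponding index input of $U$, the effect $\bra i$ (resp.\ $\bra j$) on the corresponding index output of $U^\dagger$, and a copy $\ket i$ (resp.\ $\ket j$) on its external output wire. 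Since $U$ with $\ket i,\ket j$ in its index slots is $U_{ij}$ and $U^\dagger$ with $\bra i,\bra j$ on its index outputs is $U_{ij}^\dagger$, the left-hand side sends $v\otimes\ket i\otimes\ket j$ to $(U_{ij}^\dagger U_{ij}v)\otimes\ket i\otimes\ket j$, while the right-hand side (the identity) sends it to itself. Hence the first equation holds iff $U_{ij}^\dagger U_{ij}=\id$ for all $i,j$, i.e.\ iff each $U_{ij}$ is an isometry; and since an isometry of a finite-dimensional Hilbert space is unitary, this is precisely ``every $U_{ij}$ is unitary''.

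For the second equation I would read its left-hand side as a map $\mathcal H^{\otimes 2}\to\mathcal H^{\otimes 2}$: compose $U$ with $U^\dagger$ along the central wire and, in addition, close the first input leg of $U$ onto the first output leg of $U^\dagger$ (the loop through the two black dots), leaving free the two remaining inputs of $U$ and the two remaining outputs of $U^\dagger$. Testing against $\ket m\otimes\ket n$ on the inputs and $\bra i\otimes\bra j$ on the outputs, which suffices since these range over spanning sets, the loop becomes the partial trace $\sum_a\bra a(\,\cdot\,)\ket a$ over the corresponding factor, so the resulting scalar is $\sum_a(\bra a\otimes\bra i\otimes\bra j)\,U^\dagger U\,(\ket a\otimes\ket m\otimes\ket n)=\Tr(U_{ij}^\dagger U_{mn})$, whereas the right-hand side gives $d\,\braket{i}{m}\braket{j}{n}=\delta_{im}\delta_{jn}d$. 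Thus the second equation is equivalent to $\Tr(U_{ij}^\dagger U_{mn})=\delta_{im}\delta_{jn}d$ for all $i,j,m,n$, which is exactly~\eqref{eq:trueb}. Combining the two halves, $U$ satisfies both equations iff every $U_{ij}$ is unitary and the family satisfies~\eqref{eq:trueb}, i.e.\ iff the $U_{ij}$ form a unitary error basis. The argument is essentially just these two diagram-to-formula translations; the only genuine subtlety, and the place where an orientation error could creep in, is tracking the copying law~\eqref{eq:spider} through the adjoint box, so that its index legs absorb the \emph{effects} $\bra i,\bra j$ (producing $U_{ij}^\dagger$) rather than the states $\ket i,\ket j$.
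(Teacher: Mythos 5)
Your proposal is correct and follows essentially the same route as the paper: compose each diagrammatic equation with black basis states/effects, invoke the spider copying law~\eqref{eq:spider} to turn the index legs of $U$ and $U^\dagger$ into $U_{ij}$ and $U_{ij}^\dagger$, and reduce the first equation to $U_{ij}^\dagger U_{ij}=\mathrm{id}$ (isometry, hence unitary in finite dimension) and the second to $\Tr(U_{ij}^\dagger U_{mn})=\delta_{im}\delta_{jn}d$, i.e.\ equation~\eqref{eq:trueb}. The only cosmetic differences are that you test the first equation on $v\otimes\ket i\otimes\ket j$ with $v$ arbitrary rather than on full matrix elements, and you add the (harmless) remark that orthogonality guarantees the family really consists of $d^2$ distinct operators.
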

\begin{proof}
We first show that the left hand equation of equation~\eqref{eq:ueb} is equivalent to each $U_{ij}$ being unitary. 

We compose the left hand equation with the black states $\tinystate[black,state,label={[label distance=-0.08cm]330:\tiny{$i$}}]\tinystate[black,state,label={[label distance=-0.08cm]330:\tiny{$j$}}]$ and  effects $\tinyeffect[black,state,hflip,label={[label distance=-0.15cm]330:\tiny{$m$}}]\tinyeffect[black,state,hflip,label={[label distance=-0.15cm]330:\tiny{$n$}}]$ as follows; for all $i,j,m,n$:
\begin{align*}
\begin{pic}[scale=0.75]
\node (u2) [morphism,wedge, connect s length=0cm, connect se length=0cm, scale=0.75,connect sw length=1cm] at (0,-0.5) {$U$};
\node (u1) [morphism,hflip,wedge, connect ne length=0cm, scale=0.75, connect n length=0cm,connect nw length=1cm] at (0,0.5) {$U$};
\node[state,black,scale=0.25,hflip,label={[label distance=-0.08cm]30:\tiny{$m$}}](o2)  at (0.75,1.65) {};
\node[state,black,scale=0.25,hflip,label={[label distance=-0.08cm]30:\tiny{$n$}}](o3)  at (1.5,1.65) {};
\node[state,black,scale=0.25,label={[label distance=-0.08cm]330:\tiny{$i$}}](i2)  at (0.75,-1.65) {};
\node[state,black,scale=0.25,label={[label distance=-0.08cm]330:\tiny{$j$}}](i3)  at (1.5,-1.65) {};
\node(b3)[blackdot, scale=0.75]  at (0.75,-1.4) {};
\node(b4)[blackdot, scale=0.75]  at (1.5,-1.4) {};
\node(b1)[blackdot, scale=0.75]  at (0.75,1.4) {};
\node(b2)[blackdot, scale=0.75]  at (1.5,1.4) {};
\draw[string] (u1.south) to (u2.north);
\draw[string,in=270,out=left] (b3.center) to (u2.connect s);
\draw[string,in=270,out=left,looseness=0.6] (b4.center) to (u2.connect se);
\draw[string] (i2.center) to (b3.center);
\draw[string] (i3.center) to (b4.center);
\draw[string,in=90,out=left] (b1.center) to (u1.connect n);
\draw[string,in=90,out=left,looseness=0.6] (b2.center) to (u1.connect ne);
\draw[string] (o2.center) to (b1.center);
\draw[string] (o3.center) to (b2.center);
\draw[string,out=right,in=right] (b1.center) to (b3.center);
\draw[string,out=right,in=right] (b2.center) to (b4.center);
\end{pic}=
\begin{pic}[scale=0.75]
\node[state,black,scale=0.25,label={[label distance=-0.08cm]330:\tiny{$i$}}](i) at (0.75,-1.65){};
\node[state,black,scale=0.25,label={[label distance=-0.08cm]330:\tiny{$j$}}](j) at (1.5,-1.65){};
\node[state,black,scale=0.25,hflip,label={[label distance=-0.08cm]30:\tiny{$m$}}](m) at (0.75,1.65){};
\node[state,black,scale=0.25,hflip,label={[label distance=-0.08cm]30:\tiny{$n$}}](n) at (1.5,1.65){};
\draw (-0.4,1.8) to (-0.4,-1.8);
\draw (m) to (i);
\draw (n) to (j);
\end{pic}
& \qquad \Leftrightarrow \qquad
\begin{pic}[scale=0.75]
\node (u2) [morphism,wedge, connect s length=0.25cm, connect se length=0.25cm, scale=0.75,connect sw length=1cm] at (0,-0.5) {$U$};
\node (u1) [morphism,hflip,wedge, connect ne length=0.25cm, scale=0.75, connect n length=0.25cm,connect nw length=1cm] at (0,0.5) {$U$};
\node[state,black,scale=0.25,hflip,label={[label distance=-0.08cm]30:\tiny{$m$}}](m)  at (0.75,0.125) {};
\node[state,black,scale=0.25,hflip,label={[label distance=-0.08cm]30:\tiny{$n$}}](n)  at (1.5,0.125) {};
\node[state,black,scale=0.25,label={[label distance=-0.08cm]330:\tiny{$i$}}](i)  at (0.75,-0.125) {};
\node[state,black,scale=0.25,label={[label distance=-0.08cm]330:\tiny{$j$}}](j)  at (1.5,-0.125) {};
\node(b3)[state,black,scale=0.2,label={[label distance=0cm]270:\tiny{$i$}}]  at (u2.connect s) {};
\node(b4)[state,black,scale=0.2,label={[label distance=-0.08cm]330:\tiny{$j$}}]  at (u2.connect se) {};
\node(b1)[state,black,scale=0.2,hflip,label={[label distance=0cm]90:\tiny{$m$}}]  at (u1.connect n) {};
\node(b2)[state,black,scale=0.2,hflip,label={[label distance=-0.08cm]30:\tiny{$n$}}]  at (u1.connect ne) {};
\draw[string] (u1.south) to (u2.north);
\draw (m) to (i);
\draw (n) to (j);
\end{pic}=
\begin{pic}[scale=0.75]
\node[state,black,scale=0.25,label={[label distance=-0.08cm]330:\tiny{$i$}}](i) at (0.75,-0.125){};
\node[state,black,scale=0.25,label={[label distance=-0.08cm]330:\tiny{$j$}}](j) at (1.5,-0.125){};
\node[state,black,scale=0.25,hflip,label={[label distance=-0.08cm]30:\tiny{$m$}}](m) at (0.75,0.125){};
\node[state,black,scale=0.25,hflip,label={[label distance=-0.08cm]30:\tiny{$n$}}](n) at (1.5,0.125){};
\draw (-0.4,1.8) to (-0.4,-1.8);
\draw (m) to (i);
\draw (n) to (j);
\end{pic}
& \qquad \Leftrightarrow \qquad 
\begin{pic}[scale=0.75]
\node (u2) [morphism,wedge, scale=0.75,connect sw length=1cm] at (0,-0.5) {$U_{ij}$};
\node (u1) [morphism,hflip,wedge, scale=0.75,connect nw length=1cm] at (0,0.5) {$U_{ij}$};
\draw[string] (u1.south) to (u2.north);
\end{pic}=
\begin{pic}[scale=0.75]
\draw (-0.4,1.8) to (-0.4,-1.8);
\end{pic}
\end{align*}
So it is equivalent to all $U_{ij}$ being isometric operators and thus unitary operators. We now show that the right hand side equation of~\eqref{eq:ueb} is equivalent to equation~\eqref{eq:trueb}. We again compose by black states and effects to obtain; for all $i,j,m,n$:
\begin{align*}\begin{aligned}
\begin{tikzpicture}[scale=0.75]
\node (u2) [morphism,wedge, connect s length=0.25cm, connect se length=0.25cm, scale=0.75,connect sw length=0cm] at (0,-0.5) {$U$};
\node (u1) [morphism,hflip,wedge, connect ne length=0.25cm, scale=0.75, connect n length=0.25cm,connect nw length=0cm] at (0,0.5) {$U$};
\node(o2)  at (-0.75,1.8) {};
\node(i2)  at (-0.75,-1.8) {};
\node(b3)[blackdot, scale=0.75]  at (-0.75,-1.4) {};
\node(b1)[blackdot, scale=0.75]  at (-0.75,1.4) {};
\draw[string] (u1.south) to (u2.north);
\draw[string,in=270,out=right] (b3.center) to (u2.connect sw);
\draw[string,in=90,out=right] (b1.center) to (u1.connect nw);
\draw[string,out=left,in=left] (b1.center) to (b3.center);
\node[state,black,scale=0.2,label={[label distance=0cm]270:\tiny{$i$}}](i) at (u2.connect s){};
\node[state,black,scale=0.2,label={[label distance=-0.08cm]330:\tiny{$j$}}](j) at (u2.connect se){};
\node[state,black,scale=0.2,hflip,label={[label distance=0cm]90:\tiny{$m$}}](m) at (u1.connect n){};
\node[state,black,scale=0.2,hflip,label={[label distance=-0.08cm]30:\tiny{$n$}}](n) at (u1.connect ne){};
\end{tikzpicture}
\end{aligned}
\quad = \quad d
\begin{aligned}
\begin{tikzpicture}[scale=0.75]
\node[state,black,scale=0.2,label={[label distance=0cm]270:\tiny{$i$}}](i) at (u2.connect s){};
\node[state,black,scale=0.2,label={[label distance=-0.08cm]330:\tiny{$j$}}](j) at (u2.connect se){};
\node[state,black,scale=0.2,hflip,label={[label distance=0cm]90:\tiny{$m$}}](m) at (u1.connect n){};
\node[state,black,scale=0.2,hflip,label={[label distance=-0.08cm]30:\tiny{$n$}}](n) at (u1.connect ne){};
\draw (i) to (m);
\draw (j) to (n);
\end{tikzpicture}
\end{aligned}
& \qquad
\Leftrightarrow \qquad \tr (U^{\dag}_{ij} \circ U_{mn})=\delta_{im}\delta_{jn}d
\end{align*} 
This completes the proof.   
\end{proof}
 We now introduce notation for a projector which we will require in our description of $\UEBM$s:
\begin{equation}\label{eq:proj}
\begin{aligned}
\begin{tikzpicture}
\node(1) at (0,0.75){};
\node(3) at (0,-0.75){};
\node(b1)[whitedot,scale=1,inner sep=0.25pt] at (0,0){$\projs$};
\draw (b1.center) to (1);
\draw (b1.center) to (3);
\end{tikzpicture}\end{aligned}
\quad := \quad
\begin{aligned}
\begin{tikzpicture}
\node(1) at (0,0.75){};
\node(3) at (0,-0.75){};
\node(b1) at (0,0){};
\node(b1) at (0,0){};
\draw (b1.center) to (1);
\draw (b1.center) to (3);
\end{tikzpicture}\end{aligned}
\quad - \quad
\begin{aligned}
\begin{tikzpicture}
\node(1) at (0,0.75){};
\node(3) at (0,-0.75){};
\node[state,black,scale=0.25,label={[label distance=-0.08cm]330:\tiny{0}}](b1) at (0,0.3){};
\node[state,black,scale=0.25,hflip,label={[label distance=-0.08cm]30:\tiny{0}}](b2) at (0,-0.3){};
\draw (b1.center) to (1);
\draw (b2.center) to (3);
\end{tikzpicture}\end{aligned}
\end{equation}
Note that:
\begin{equation}\label{eq:projzero}
\begin{aligned}
\begin{tikzpicture}
\node[state,black,scale=0.25,label={[label distance=-0.08cm]330:\tiny{0}}](3) at  (0,-0.6){};
\node(1) at (0,0.75){};
\node(b1)[whitedot,scale=1,inner sep=0.25pt] at (0,0){$\projs$};
\draw (b1.center) to (1);
\draw (b1.center) to (3);
\end{tikzpicture}\end{aligned}
\quad = \quad
\begin{aligned}
\begin{tikzpicture}
\node(1) at (0,0.75){};
\node[state,black,scale=0.25,label={[label distance=-0.08cm]330:\tiny{0}}](3) at  (0,-0.6){};
\node(b1) at (0,0){};
\node(b1) at (0,0){};
\draw (b1.center) to (1);
\draw (b1.center) to (3);
\end{tikzpicture}\end{aligned}
\quad - \quad
\begin{aligned}
\begin{tikzpicture}
\node(1) at (0,0.75){};
\node[state,black,scale=0.25,label={[label distance=-0.08cm]330:\tiny{0}}](3) at  (0,-0.6){};
\node[state,black,scale=0.25,label={[label distance=-0.08cm]330:\tiny{0}}](b1) at (0,0.3){};
\node[state,black,scale=0.25,hflip,label={[label distance=-0.08cm]30:\tiny{0}}](b2) at (0,-0.3){};
\draw (b1.center) to (1);
\draw (b2.center) to (3);
\end{tikzpicture}\end{aligned}\quad = \quad 0
\end{equation}
Also note that for $n\neq0$:
\begin{equation}\label{eq:projn}
\begin{aligned}
\begin{tikzpicture}
\node[state,black,scale=0.25,label={[label distance=-0.08cm]330:\tiny{$n$}}](3) at  (0,-0.6){};
\node(1) at (0,0.75){};
\node(b1)[whitedot,scale=1,inner sep=0.25pt] at (0,0){$\projs$};
\draw (b1.center) to (1);
\draw (b1.center) to (3);
\end{tikzpicture}\end{aligned}
\quad = \quad
\begin{aligned}
\begin{tikzpicture}
\node(1) at (0,0.75){};
\node[state,black,scale=0.25,label={[label distance=-0.08cm]330:\tiny{$n$}}](3) at  (0,-0.6){};
\node(b1) at (0,0){};
\node(b1) at (0,0){};
\draw (b1.center) to (1);
\draw (b1.center) to (3);
\end{tikzpicture}\end{aligned}
\quad - \quad
\begin{aligned}
\begin{tikzpicture}
\node(1) at (0,0.75){};
\node[state,black,scale=0.25,label={[label distance=-0.08cm]330:\tiny{$n$}}](3) at  (0,-0.6){};
\node[state,black,scale=0.25,label={[label distance=-0.08cm]330:\tiny{0}}](b1) at (0,0.3){};
\node[state,black,scale=0.25,hflip,label={[label distance=-0.08cm]30:\tiny{0}}](b2) at (0,-0.3){};
\draw (b1.center) to (1);
\draw (b2.center) to (3);
\end{tikzpicture}\end{aligned}\quad = \quad 
\begin{aligned}
\begin{tikzpicture}
\node(1) at (0,0.75){};
\node[state,black,scale=0.25,label={[label distance=-0.08cm]330:\tiny{$n$}}](3) at  (0,-0.6){};
\node(b1) at (0,0){};
\node(b1) at (0,0){};
\draw (b1.center) to (1);
\draw (b1.center) to (3);
\end{tikzpicture}\end{aligned}
\end{equation}
It can easily be shown that there exists a maximum of $d$ commuting unitary operators in dimension $d$~\cite{Bandyopadhyay}. We now define partitioned UEBs ($\UEBM$s). 
\begin{definition}[Partitioned unitary error basis~\cite{Bandyopadhyay}]\label{def:pueb}A \textit{partitioned unitary error basis} ($\UEBM$), is a $d$-dimensional UEB containing the identity, with a partition \mbox{$\{\id[d] \} \sqcup C_* \sqcup C_0 \sqcup...\sqcup C_{d-1}$}, such that each class $C_i, i \in \{*,0,...,d-1 \}$ contains exactly $d-1$ matrices, which together with $\I_d$ form maximal classes of $d$ commuting operators. 
\end{definition}
We now give a tensor diagrammatic characterization of partitioned UEBs. We assume that the partitioned UEB has been ordered such that $U_{00}=\I_d$, $C_*=\{U_{a0}|a \in \{ 1,...,d-1 \}$ and for $i \in \range{0}$, $C_i=\{ U_{ik}|k \in \range{1} \}$. Up to equivalence (see Definition~\ref{eq:uebeq}) any $\UEBM$ can be written in this way. We also choose a computational basis $\dfrob$ $\tinymult$ such that the class $C_*$ is diagonal with respect to it.  
\begin{lemma}\label{lem:guebp}
A unitary error basis  $U$, with $U_{00}$ equal to the identity, is  a  $\UEBM$ iff the following tensor diagrammatic equation holds.
\begin{equation}\label{eq:uebm}
\begin{aligned}
\begin{tikzpicture}[scale=0.75]
\node (f) [morphism,wedge,scale=0.75, connect n ] at (0,1) {$U$};
\node(i2)  at (-0.4125,-2.2) {};
\node(i2a)  at (0.4125,-2.2) {};
\node[whitedot,scale=0.5,inner sep=0.25pt](i3)  at (0.75,-1.7) {$\projs$};
\node[whitedot,scale=0.5,inner sep=0.25pt](i4)  at (1.25,-1.7) {$\projs$};
\node (e) [morphism,wedge, scale=0.75,connect sw length = 1.65cm] at (-0.24,-0.25) {$U$};
\node(b)[blackdot] at (0,-1.1){};
\node(c)[blackdot] at (0,-1.7){};
\node(m1) at (0.75,-0.25) {};
\node(m2) at (1,-0.25) {};
\draw[string,out=90,in=left] (i2.center) to (c.center);
\draw[string,out=90,in=right] (i2a.center) to (c.center);
\draw[string] (b.center) to (c.center);
\draw[string] (e.north) to (f.south west);
\draw[string,in=270,out=left] (b.center) to (e.south);
\draw[string,in=270,out=right,looseness=1.25] (b.center) to (m1.center);
\draw[string,in=270,out=90,looseness=1.25] (m1.center) to (f.south);
\draw[string,in=270,out=90] (i3.center) to (e.south east);
\draw[string,out=90,in=270] (i4.center) to (m2.center);
\draw[string,out=90,in=270] (m2.center) to (f.south east);
\draw[string] (i3.center) to +(0,-0.5);
\draw[string] (i4.center) to +(0,-0.5);
\end{tikzpicture}
\end{aligned}
\quad + \quad
\begin{aligned}
\begin{tikzpicture}[scale=0.75]
\node (f) [morphism,wedge,scale=0.75, connect n ] at (0,1) {$U$};
\node(i2)  at (-0.4125,-1.7) {};
\node(i2a)  at (0.4125,-1.7) {};
\node (e) [morphism,wedge, scale=0.75,connect sw length = 1.65cm,connect s length = 1.65cm] at (-0.24,-0.25) {$U$};
\node(m1) at (0.75,-0.25) {};
\node(m2) at (1,-0.25) {};
\node[state,black,scale=0.25,label={[label distance=-0.125cm]315:\tiny{0}}](b1) at (0.75,-1.4){};
\node[state,black,scale=0.25,hflip,label={[label distance=-0.125cm]45:\tiny{0}}](b2) at (0.75,-2){};
\node[state,black,scale=0.25,label={[label distance=-0.125cm]315:\tiny{0}}](b3) at (1.25,-1.4){};
\node[state,black,scale=0.25,hflip,label={[label distance=-0.125cm]45:\tiny{0}}](b4) at (1.25,-2){};
\draw[string] (e.north) to (f.south west);
\draw[string,in=270,out=90,looseness=1.25] (i2a.center) to (m1.center);
\draw[string,in=270,out=90,looseness=1.25] (m1.center) to (f.south);
\draw[string,in=270,out=90] (b1.center) to (e.south east);
\draw[string,out=90,in=270] (b3.center) to (m2.center);
\draw[string,out=90,in=270] (m2.center) to (f.south east);
\draw[string] (i2a.center) to +(0,-0.5);
\draw[string] (b2.center) to (0.75,-2.2);
\draw[string] (b4.center) to (1.25,-2.2);
\end{tikzpicture}
\end{aligned}
\quad = \quad
\begin{aligned}
\begin{tikzpicture}[scale=0.75]
\node (f) [morphism,wedge,scale=0.75, connect n ] at (0,1) {$U$};
\node(i2)  at (-0.4125,-2.2) {};
\node(i2a)  at (0.4125,-2.2) {};
\node[whitedot,scale=0.5,inner sep=0.25pt](i3)  at (0.75,-1.7) {$\projs$};
\node[whitedot,scale=0.5,inner sep=0.25pt](i4)  at (1.25,-1.7) {$\projs$};
\node (e) [morphism,wedge,scale=0.75,connect sw length = 1.65cm] at (-0.24,-0.25){$U$};
\node(b)[blackdot,scale=0.95] at (0,-1.3){};
\node(c)[blackdot,scale=0.95] at (0,-1.7){};
\node(m1) at (0.75,-0.25) {};
\node(m2) at (1,-0.25) {};
\draw[string,out=90,in=left] (i2.center) to (c.center);
\draw[string,out=90,in=right] (i2a.center) to (c.center);
\draw[string] (b.center) to (c.center);
\draw[string] (e.north) to (f.south west);
\draw[string,in=270,out=left] (b.center) to (e.south);
\draw[string,in=270,out=right,looseness=1.25] (b.center) to (m1.center);
\draw[string,in=270,out=90,looseness=1.25] (m1.center) to (f.south);
\draw[string,in=270,out=90] (i3.center) to (e.south east);
\draw[string,out=90,in=270] (i4.center) to (m2.center);
\draw[string,out=90,in=270] (m2.center) to (f.south east);
\draw[string,out=270,in=90] (i3.center) to +(0.5,-0.5);
\draw[string,out=270,in=90] (i4.center) to +(-0.5,-0.5);
\end{tikzpicture}
\end{aligned}
\quad + \quad
\begin{aligned}
\begin{tikzpicture}[scale=0.75]
\node (f) [morphism,wedge,scale=0.75, connect n ] at (0,1) {$U$};
\node(i2)  at (-0.4125,-1.7) {};
\node(i2a)  at (0.4125,-1.7) {};
\node (e) [morphism,wedge, scale=0.75,connect sw length = 1.65cm] at (-0.24,-0.25){$U$};
\node(m1) at (0.75,-0.25){} ;
\node(m2) at (1,-0.25) {};
\node[state,black,scale=0.25,label={[label distance=-0.125cm]315:\tiny{0}}](b1) at (0.75,-1.4){};
\node[state,black,scale=0.25,hflip,label={[label distance=-0.125cm]45:\tiny{0}}](b2) at (0.75,-2){};
\node[state,black,scale=0.25,label={[label distance=-0.125cm]315:\tiny{0}}](b3) at (1.25,-1.4){};
\node[state,black,scale=0.25,hflip,label={[label distance=-0.125cm]45:\tiny{0}}](b4) at (1.25,-2){};
\draw[string] (i2.center) to (e.south);
\draw[string] (e.north) to (f.south west);
\draw[string,in=270,out=90,looseness=1.25] (i2a.center) to (m1.center);
\draw[string,in=270,out=90,looseness=1.25] (m1.center) to (f.south);
\draw[string,in=270,out=90] (b1.center) to (e.south east);
\draw[string,out=90,in=270] (b3.center) to (m2.center);
\draw[string,out=90,in=270] (m2.center) to (f.south east);
\draw[string,out=270,in=90] (i2.center) to +(0.825,-0.5);
\draw[string,out=270,in=90] (i2a.center) to +(-0.825,-0.5);
\draw[string] (b2.center) to (0.75,-2.2);
\draw[string] (b4.center) to (1.25,-2.2);
\end{tikzpicture}
\end{aligned}
\end{equation}
\end{lemma}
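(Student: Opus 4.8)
The plan is to reduce the diagrammatic identity \eqref{eq:uebm} to a family of scalar equations — one for each assignment of computational basis states to the dangling wires — and to recognise those scalar equations as exactly the conditions of Definition~\ref{def:pueb}.

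First I would make the partitioned-UEB conditions explicit in the ordering fixed just before the lemma: $U_{00}=\I$, the class $C_*=\{U_{a0}:a\in\{1,\dots,d-1\}\}$ is diagonal with respect to $\tinymult$, and $C_i=\{U_{ik}:k\in\{1,\dots,d-1\}\}$ for $i\in\{0,\dots,d-1\}$. Since the $d^2$ operators of a unitary error basis are linearly independent (immediate from the trace-orthogonality of equation~\eqref{eq:trueb}), each $d$-element set $\{\I\}\cup C_*$ and $\{\I\}\cup C_i$ automatically spans a \emph{maximal} abelian subalgebra of the matrix algebra as soon as its members pairwise commute: $d$ commuting unitaries are simultaneously diagonalisable, and $d$ linearly independent such operators span a conjugate of the full diagonal algebra. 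Hence ``$U$ is a $\UEBM$'' is equivalent to the two elementary conditions: (i) each $U_{a0}$ is diagonal in the black basis; and (ii) for every $i$ the operators $U_{i1},\dots,U_{i,d-1}$ pairwise commute.

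Next I would carry out the reduction. Composing both sides of \eqref{eq:uebm} with black states and effects on every free wire, the black spiders are evaluated by the copying rule~\eqref{eq:spider} — in particular the internal black-dot gadget forces the two copies of $U$ in each summand to share a common first index — and each node $\projs$ is evaluated by~\eqref{eq:projzero}--\eqref{eq:projn}, so it annihilates the basis state labelled $0$ and acts as the identity on the others. The two summands on each side then implement the resolution $\projs+\ketbra{0}{0}=\I$ on the relevant ``second-index'' leg: the $\projs$-summand ranges that index over $\{1,\dots,d-1\}$ and the $\ketbra{0}{0}$-summand pins it to $0$. The transposition of a pair of legs that distinguishes the two sides turns the $\projs$-summand into the statement that $U_{ik}$ and $U_{il}$ commute for all $i$ and all $k,l\in\{1,\dots,d-1\}$ — condition (ii) — and, after cancelling the pieces common to both sides, turns the $\ketbra{0}{0}$-summand into the statement that the operators $U_{a0}$ are simultaneously diagonalised by $\tinymult$ — condition (i). Conversely, once (i) and (ii) hold, all the scalar equations hold for every basis tuple and reassemble, by bilinearity of the basis-tensor formalism, into~\eqref{eq:uebm}.

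The step I expect to be the main obstacle is precisely this bookkeeping: tracking which dangling wire of \eqref{eq:uebm} carries which index, checking that the black-dot gadget at the bottom genuinely copies one first index into both occurrences of $U$ while the $\projs$ wires land on their second-index legs, and — most delicately — verifying that the only difference between the two sides is a transposition of one pair of legs, so that equality is \emph{exactly} the vanishing of the relevant commutator (condition (ii)) or off-diagonal entry (condition (i)) rather than some strictly weaker relation. A subsidiary point that also needs care is that the two summands together cover the whole index range $\{0,\dots,d-1\}$ without overlap, which is where~\eqref{eq:projzero} and~\eqref{eq:projn} are used.
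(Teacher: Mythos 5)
There is a genuine gap, and it is in what you claim the $\ketbra{0}{0}$-summand encodes. In the second summand of \eqref{eq:uebm} the two sides differ by a swap of the two \emph{first-index} legs, while both second-index legs are pinned to $\ket{0}$ (with $\bra{0}$ effects left free); so after composing with basis states, the only nontrivial case $j=n=0$ yields exactly $U_{m0}U_{i0}=U_{i0}U_{m0}$ for all $i,m$, i.e.\ pairwise commutativity of $C_*\cup\{U_{00}\}$ — not your condition (i) that each $U_{a0}$ is diagonal in the black basis. Indeed \eqref{eq:uebm} cannot encode (i): in dimension $d=2$ every UEB with $U_{00}=\I$ satisfies \eqref{eq:uebm} (all extracted commutation conditions are vacuous) and is a \UEBM{} in the sense of Definition~\ref{def:pueb}, even when $U_{10}$ is, say, Pauli $X$ and hence not diagonal in the black basis. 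So your backward direction, which tries to deduce (i) from the equation, attempts a false implication; it is also unnecessary, since Definition~\ref{def:pueb} only demands commuting classes (the diagonality of $C_*$ with respect to $\tinymult$ is a normalization fixed before the lemma, not content of \eqref{eq:uebm}). The paper's proof extracts precisely: for $j=n=0$, commutativity of $C_*$; for $j,n\neq 0$, $\delta_{im}U_{in}U_{ij}=\delta_{im}U_{ij}U_{in}$, i.e.\ commutativity within each $C_i$; mixed cases give $0=0$ by \eqref{eq:projzero}.

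A secondary misreading feeds this: the two summands are not the $\projs$- and $\ketbra{0}{0}$-pieces of a resolution of the identity applied to one common diagram. Only the first summand carries the black-dot gadget identifying the two first indices (its two sides differ by swapping the two $\projs$ legs), whereas the second summand has independent first-index wires (its two sides differ by swapping those), so ``the transposition that distinguishes the two sides'' is a different pair of legs in each summand. What decouples the summands is the case analysis on whether $j,n$ are zero, via \eqref{eq:projzero} and \eqref{eq:projn}, exactly as in the paper — a point you gesture at but then override with the resolution-of-identity picture. Your maximality argument ($d$ linearly independent commuting unitaries span a maximal abelian algebra) is fine and matches the paper's appeal to the bound of $d$ commuting unitaries; the proof is repaired simply by replacing your condition (i) with pairwise commutativity of $C_*$, which is what the second summand actually delivers.
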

\begin{proof}
To show the equivalence with Definition~\ref{def:pueb} we compose with black states $\tinystate[black,state,label={[label distance=-0.08cm]330:\tiny{$i$}}]\tinystate[black,state,label={[label distance=-0.08cm]330:\tiny{$j$}}]\tinystate[black,state,label={[label distance=-0.08cm]330:\tiny{$m$}}]\tinystate[black,state,label={[label distance=-0.08cm]330:\tiny{$n$}}]$ in the following way; for all $i,j,m,n$:
\begin{align*}
\begin{aligned}
\begin{tikzpicture}[scale=0.75]
\node (f) [morphism,wedge,scale=0.75, connect n ] at (0,1) {$U$};
\node(i2)[state,black,scale=0.2,label={[label distance=-0.1cm]315:\tiny{$i$}}]  at (-0.3,-2.0) {};
\node(i2a)[state,black,scale=0.2,label={[label distance=-0.1cm]315:\tiny{$m$}}]  at (0.3,-2) {};
\node(p1)[state,black,scale=0.2,label={[label distance=-0.1cm]315:\tiny{$j$}}]  at (0.75,-2) {};
\node(p2)[state,black,scale=0.2,label={[label distance=-0.1cm]315:\tiny{$n$}}]  at (1.25,-2) {};
\node (e) [morphism,wedge, scale=0.75,connect sw length = 1.65cm] at (-0.24,-0.25) {$U$};
\node[whitedot,scale=0.5,inner sep=0.25pt](i3)  at (0.75,-1.7) {$\projs$};
\node[whitedot,scale=0.5,inner sep=0.25pt](i4)  at (1.25,-1.7) {$\projs$};
\node(b)[blackdot] at (0,-1.1){};
\node(c)[blackdot] at (0,-1.7){};
\node(m1) at (0.75,-0.25) {};
\node(m2) at (1,-0.25) {};
\draw[string,out=90,in=left] (i2.center) to (c.center);
\draw[string,out=90,in=right] (i2a.center) to (c.center);
\draw[string] (b.center) to (c.center);
\draw[string] (e.north) to (f.south west);
\draw[string,in=270,out=left] (b.center) to (e.south);
\draw[string,in=270,out=right,looseness=1.25] (b.center) to (m1.center);
\draw[string,in=270,out=90,looseness=1.25] (m1.center) to (f.south);
\draw[string,in=270,out=90] (i3.center) to (e.south east);
\draw[string,out=90,in=270] (i4.center) to (m2.center);
\draw[string,out=90,in=270] (m2.center) to (f.south east);
\draw[string] (i3.center) to (p1);
\draw[string] (i4.center) to (p2);
\end{tikzpicture}
\end{aligned}
\quad + \quad
\begin{aligned}
\begin{tikzpicture}[scale=0.75]
\node (f) [morphism,wedge,scale=0.75, connect n ] at (0,1) {$U$};
\node(i2)[state,black,scale=0.2,label={[label distance=-0.1cm]315:\tiny{$i$}}]  at (-0.25,-2.0) {};
\node(i2a)[state,black,scale=0.2,label={[label distance=-0.1cm]315:\tiny{$m$}}]  at (0.3,-2) {};
\node(p1)[state,black,scale=0.2,label={[label distance=-0.1cm]315:\tiny{$j$}}]  at (0.75,-2) {};
\node(p2)[state,black,scale=0.2,label={[label distance=-0.1cm]315:\tiny{$n$}}]  at (1.25,-2) {};
\node (e) [morphism,wedge, scale=0.75,connect sw length = 1.65cm,connect s length = 1.45cm] at (-0.24,-0.25) {$U$};
\node(m1) at (0.75,-0.25) {};
\node(m2) at (1,-0.25) {};
\node[state,black,scale=0.2,label={[label distance=-0.125cm]315:\tiny{0}}](b1) at (0.75,-1.25){};
\node[state,black,scale=0.2,hflip,label={[label distance=-0.125cm]45:\tiny{0}}](b2) at (0.75,-1.85){};
\node[state,black,scale=0.2,label={[label distance=-0.125cm]315:\tiny{0}}](b3) at (1.25,-1.25){};
\node[state,black,scale=0.2,hflip,label={[label distance=-0.125cm]45:\tiny{0}}](b4) at (1.25,-1.85){};
\draw[string] (e.north) to (f.south west);
\draw[string,in=270,out=90,looseness=1.25] (i2a.center) to (m1.center);
\draw[string,in=270,out=90,looseness=1.25] (m1.center) to (f.south);
\draw[string,in=270,out=90] (b1.center) to (e.south east);
\draw[string,out=90,in=270] (b3.center) to (m2.center);
\draw[string,out=90,in=270] (m2.center) to (f.south east);
\draw[string] (b2.center) to (p1);
\draw[string] (b4.center) to (p2);
\end{tikzpicture}
\end{aligned}
\quad &= \quad
\begin{aligned}
\begin{tikzpicture}[scale=0.75]
\node (f) [morphism,wedge,scale=0.75, connect n ] at (0,1) {$U$};
\node(i2)[state,black,scale=0.2,label={[label distance=-0.1cm]315:\tiny{$i$}}]  at (-0.3,-2.0) {};
\node(i2a)[state,black,scale=0.2,label={[label distance=-0.1cm]315:\tiny{$m$}}]  at (0.3,-2) {};
\node(p1)[state,black,scale=0.2,label={[label distance=-0.1cm]315:\tiny{$j$}}]  at (0.75,-2) {};
\node(p2)[state,black,scale=0.2,label={[label distance=-0.1cm]315:\tiny{$n$}}]  at (1.25,-2) {};
\node[whitedot,scale=0.5,inner sep=0.25pt](i3)  at (0.75,-1.5) {$\projs$};
\node[whitedot,scale=0.5,inner sep=0.25pt](i4)  at (1.25,-1.5) {$\projs$};
\node (e) [morphism,wedge,scale=0.75,connect sw length = 1.65cm] at (-0.24,-0.25){$U$};
\node(b)[blackdot,scale=0.95] at (0,-1.3){};
\node(c)[blackdot,scale=0.95] at (0,-1.7){};
\node(m1) at (0.75,-0.25) {};
\node(m2) at (1,-0.25) {};
\draw[string,out=90,in=left] (i2.center) to (c.center);
\draw[string,out=90,in=right] (i2a.center) to (c.center);
\draw[string] (b.center) to (c.center);
\draw[string] (e.north) to (f.south west);
\draw[string,in=270,out=left] (b.center) to (e.south);
\draw[string,in=270,out=right,looseness=1.25] (b.center) to (m1.center);
\draw[string,in=270,out=90,looseness=1.25] (m1.center) to (f.south);
\draw[string,in=270,out=90] (i3.center) to (e.south east);
\draw[string,out=90,in=270] (i4.center) to (m2.center);
\draw[string,out=90,in=270] (m2.center) to (f.south east);
\draw[string,out=270,in=90] (i3.center) to +(0.5,-0.5);
\draw[string,out=270,in=90] (i4.center) to +(-0.5,-0.5);
\end{tikzpicture}
\end{aligned}
\quad + \quad
\begin{aligned}
\begin{tikzpicture}[scale=0.75]
\node (f) [morphism,wedge,scale=0.75, connect n ] at (0,1) {$U$};
\node(i2)[state,black,scale=0.2,label={[label distance=-0.1cm]315:\tiny{$i$}}]  at (-0.25,-2.0) {};
\node(i2a)[state,black,scale=0.2,label={[label distance=-0.1cm]315:\tiny{$m$}}]  at (0.3,-2) {};
\node(p1)[state,black,scale=0.2,label={[label distance=-0.1cm]315:\tiny{$j$}}]  at (0.75,-2) {};
\node(p2)[state,black,scale=0.2,label={[label distance=-0.1cm]315:\tiny{$n$}}]  at (1.25,-2) {};
\node (e) [morphism,wedge, scale=0.75,connect sw length = 1.65cm] at (-0.24,-0.25){$U$};
\node(m1) at (0.75,-0.25){} ;
\node(m2) at (1,-0.25) {};
\node[state,black,scale=0.2,label={[label distance=-0.125cm]315:\tiny{0}}](b1) at (0.75,-1.25){};
\node[state,black,scale=0.2,hflip,label={[label distance=-0.125cm]45:\tiny{0}}](b2) at (0.75,-1.85){};
\node[state,black,scale=0.2,label={[label distance=-0.125cm]315:\tiny{0}}](b3) at (1.25,-1.25){};
\node[state,black,scale=0.2,hflip,label={[label distance=-0.125cm]45:\tiny{0}}](b4) at (1.25,-1.85){};
\draw[string,out=90,in=270] (i2a.center) to (e.south);
\draw[string] (e.north) to (f.south west);
\draw[string,in=270,out=90,looseness=1.25] (i2.center) to (m1.center);
\draw[string,in=270,out=90,looseness=1.25] (m1.center) to (f.south);
\draw[string,in=270,out=90] (b1.center) to (e.south east);
\draw[string,out=90,in=270] (b3.center) to (m2.center);
\draw[string,out=90,in=270] (m2.center) to (f.south east);
\draw[string] (b2.center) to (p1);
\draw[string] (b4.center) to (p2);
\end{tikzpicture}
\end{aligned}
\\ \super{\eqref{eq:spider}}\Leftrightarrow \quad
\begin{aligned}
\begin{tikzpicture}[scale=0.75]
\node (u1) [morphism,wedge,scale=0.75, connect n,connect s length = 0.25cm] at (0,1) {$U$};
\node(i2)[state,black,scale=0.2,label={[label distance=0cm]270:\tiny{$i$}}]  at (u2.connect s) {};
\node(i3)[state,black,scale=0.2,label={[label distance=-0.13cm]240:\tiny{$i$}}]  at (u1.connect s) {};
\node(p2)[state,black,scale=0.2,label={[label distance=-0.1cm]315:\tiny{$m$}}]  at (0.35,-2) {};
\node (u2) [morphism,wedge, scale=0.75,connect sw length = 1.65cm,connect s length = 0.25cm] at (-0.24,-0.25) {$U$};
\node[state,black,scale=0.2,label={[label distance=-0.125cm]315:\tiny{$j$}}](b1) at (0.2,-1.35){};
\node[state,black,scale=0.2,label={[label distance=-0.125cm]315:\tiny{$n$}}](b3) at (0.75,-0.25){};
\node[state,black,scale=0.2,hflip,label={[label distance=-0.125cm]45:\tiny{$i$}}](b4) at (0.35,-1.85){};
\node[whitedot,scale=0.5,inner sep=0.25pt](s1)  at (0.2,-1) {$\projs$};
\node[whitedot,scale=0.5,inner sep=0.25pt](s2)  at (0.5,0.15) {$\projs$};
\draw[string] (u2.north) to (u1.south west);
\draw[string] (b4.center) to (p2);
\draw[string,out=270,in=90] (u2.south east) to (s1.center);
\draw[string,out=270,in=90] (u1.south east) to (s2.center);
\draw[string,out=270,in=90] (s1.center) to (b1.center);
\draw[string,out=270,in=90] (s2.center) to (b3.center);
\end{tikzpicture}
\end{aligned}\quad +\quad
\begin{aligned}
\begin{tikzpicture}[scale=0.75]
\node (u1) [morphism,wedge,scale=0.75, connect n,connect s length = 0.25cm,connect se length = 0.25cm] at (0,1) {$U$};
\node(i2)[state,black,scale=0.2,label={[label distance=0.05cm]270:\tiny{$i$}}]  at (u2.connect s) {};
\node(i3)[state,black,scale=0.2,label={[label distance=0.035cm]270:\tiny{$m$}}]  at (u1.connect s) {};
\node(p1)[state,black,scale=0.2,label={[label distance=-0.1cm]315:\tiny{$j$}}]  at (0.75,-2) {};
\node(p2)[state,black,scale=0.2,label={[label distance=-0.1cm]315:\tiny{$n$}}]  at (1.25,-2) {};
\node (u2) [morphism,wedge, scale=0.75,connect sw length = 1.65cm,connect s length = 0.25cm,connect se length = 0.25cm] at (-0.24,-0.25) {$U$};
\node[state,black,scale=0.2,label={[label distance=-0.125cm]315:\tiny{0}}](b1) at (u2.connect se){};
\node[state,black,scale=0.2,hflip,label={[label distance=-0.125cm]45:\tiny{0}}](b2) at (0.75,-1.85){};
\node[state,black,scale=0.2,label={[label distance=-0.125cm]315:\tiny{0}}](b3) at (u1.connect se){};
\node[state,black,scale=0.2,hflip,label={[label distance=-0.125cm]45:\tiny{0}}](b4) at (1.25,-1.85){};
\draw[string] (u2.north) to (u1.south west);
\draw[string] (b2.center) to (p1);
\draw[string] (b4.center) to (p2);
\end{tikzpicture}
\end{aligned}\quad &= \quad
\begin{aligned}
\begin{tikzpicture}[scale=0.75]
\node (u1) [morphism,wedge,scale=0.75, connect n,connect s length = 0.25cm] at (0,1) {$U$};
\node(i2)[state,black,scale=0.2,label={[label distance=0cm]270:\tiny{$i$}}]  at (u2.connect s) {};
\node(i3)[state,black,scale=0.2,label={[label distance=-0.13cm]240:\tiny{$i$}}]  at (u1.connect s) {};
\node(p2)[state,black,scale=0.2,label={[label distance=-0.1cm]315:\tiny{$m$}}]  at (0.35,-2) {};
\node (u2) [morphism,wedge, scale=0.75,connect sw length = 1.65cm,connect s length = 0.25cm] at (-0.24,-0.25) {$U$};
\node[state,black,scale=0.2,label={[label distance=-0.125cm]315:\tiny{$n$}}](b1) at (0.2,-1.35){};
\node[state,black,scale=0.2,label={[label distance=-0.125cm]315:\tiny{$j$}}](b3) at (0.75,-0.25){};
\node[state,black,scale=0.2,hflip,label={[label distance=-0.125cm]45:\tiny{$i$}}](b4) at (0.35,-1.85){};
\node[whitedot,scale=0.5,inner sep=0.25pt](s1)  at (0.2,-1) {$\projs$};
\node[whitedot,scale=0.5,inner sep=0.25pt](s2)  at (0.5,0.15) {$\projs$};
\draw[string] (u2.north) to (u1.south west);
\draw[string] (b4.center) to (p2);
\draw[string,out=270,in=90] (u2.south east) to (s1.center);
\draw[string,out=270,in=90] (u1.south east) to (s2.center);
\draw[string,out=270,in=90] (s1.center) to (b1.center);
\draw[string,out=270,in=90] (s2.center) to (b3.center);
\end{tikzpicture}
\end{aligned}\quad +\quad
\begin{aligned}
\begin{tikzpicture}[scale=0.75]
\node (u1) [morphism,wedge,scale=0.75, connect n,connect s length = 0.25cm,connect se length = 0.25cm] at (0,1) {$U$};
\node(i2)[state,black,scale=0.2,label={[label distance=0.05cm]270:\tiny{$m$}}]  at (u2.connect s) {};
\node(i3)[state,black,scale=0.2,label={[label distance=-0.13cm]240:\tiny{$i$}}]  at (u1.connect s) {};
\node(p1)[state,black,scale=0.2,label={[label distance=-0.1cm]315:\tiny{$j$}}]  at (0.75,-2) {};
\node(p2)[state,black,scale=0.2,label={[label distance=-0.1cm]315:\tiny{$n$}}]  at (1.25,-2) {};
\node (u2) [morphism,wedge, scale=0.75,connect sw length = 1.65cm,connect s length = 0.25cm,connect se length = 0.25cm] at (-0.24,-0.25) {$U$};
\node[state,black,scale=0.2,label={[label distance=-0.125cm]315:\tiny{0}}](b1) at (u2.connect se){};
\node[state,black,scale=0.2,hflip,label={[label distance=-0.125cm]45:\tiny{0}}](b2) at (0.75,-1.85){};
\node[state,black,scale=0.2,label={[label distance=-0.125cm]315:\tiny{0}}](b3) at (u1.connect se){};
\node[state,black,scale=0.2,hflip,label={[label distance=-0.125cm]45:\tiny{0}}](b4) at (1.25,-1.85){};
\draw[string] (u2.north) to (u1.south west);
\draw[string] (b2.center) to (p1);
\draw[string] (b4.center) to (p2);
\end{tikzpicture}
\end{aligned}
\end{align*}
If $j=0$ and $n \neq 0$ we obtain $0+0=0+0$, the first zero in each summand being due to equation~\eqref{eq:projzero}, the second summands are multiplied by $\braket{0}{n}=0$.  Similarly if $j \neq 0$ and $n=0$ we obtain $0+0=0+0$. So no condition is imposed by equation~\eqref{eq:uebm} unless either, case one $j=n=0$ or case two $j\neq 0$ and $n\neq0$. 
\paragraph{Case one.}For $j=n=0$, again by equation~\eqref{eq:projzero} we obtain for all $i,m$: \begin{align*}
0+\braket{0}{0}^2U_{m0}U_{i0}&=0+\braket{0}{0}^2U_{i0}U_{m0}\\
\Leftrightarrow\quad U_{m0}U_{i0}&= U_{i0}U_{m0}
\end{align*}
This shows that the class $C_*$ together with the identity, $U_{00}$ form a maximal class of commuting operators. 
\paragraph{Case two.} For $j\neq 0$ and $n\neq0$ by equation~\eqref{eq:projn} we obtain for all $i,m$:
\begin{align*}
\braket{i}{m}U_{in}U_{ij}+\braket{0}{j}\braket{0}{n}U_{m0}U_{i0}&=\braket{i}{m}U_{ij}U_{in}+\braket{0}{j}\braket{0}{n}U_{i0}U_{m0}\\
\Rightarrow\quad \delta_{im}U_{in}U_{ij}&= \delta_{im}U_{ij}U_{in}
\end{align*}
For $i\neq m$ this gives $0=0$, for $i=m$ we have that for each $i$ the $d-1$ operators $U_{ik}$ with $k\in \range{1}$ pairwise commute. Thus the classes $C_i$ with $i\in \range{0}$ together with the identity $U_{00}$ form maximal classes of commuting operators. This completes the proof.
\end{proof}
\paragraph{Main results.}   We first present the following theorem due to Bandyopadhyay et al~\cite{Bandyopadhyay}. 
\begin{theorem}\label{thm:uebp}
Given $U$, a $\UEBM$, the common eigenbases $\ket{b^i_k}$ for each class \mbox{$C_i|i \in \{0,...,d-1\}$} form a maximal family of MUBs. 
\end{theorem}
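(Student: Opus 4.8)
The plan is to prove Theorem~\ref{thm:uebp} by working with the Hilbert--Schmidt (trace) inner product on $M_d(\C)$, for which the defining UEB identity~\eqref{eq:trueb} is exactly an orthogonality statement, and by expanding the rank-one eigenprojectors of each commuting class in terms of the class.

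First I would fix notation. For each of the $d+1$ classes $C_i\cup\{\I_d\}$, $i\in\{*,0,\dots,d-1\}$ (Definition~\ref{def:pueb}), the $d$ operators it contains are unitary and pairwise commuting, hence simultaneously unitarily diagonalizable; I choose a common orthonormal eigenbasis $\mathcal B^i=\{\ket{b^i_k}\}$ and set $P^i_k:=\ketbra{b^i_k}{b^i_k}$. Since the $d$ operators in $C_i\cup\{\I_d\}$ are linearly independent (distinct members of a UEB by~\eqref{eq:trueb}) and all lie in the $d$-dimensional maximal abelian subalgebra $\mathcal D_i$ of matrices diagonal in $\mathcal B^i$, they span $\mathcal D_i$; in particular $P^i_k\in\mathcal D_i=\mathrm{span}(C_i\cup\{\I_d\})$, and $\mathcal B^i$ is determined by the class up to phases and reordering, so "the" common eigenbasis is a well-defined set of lines. (When $i=*$, this is the computational basis $\mathcal B^*$.)

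Next I would record the expansion $P^i_k=\tfrac1d\I_d+A^i_k$ with $A^i_k\in\mathrm{span}(C_i)$ Hermitian: the coefficient of $\I_d=U_{00}$ is forced to be $1/d$ by taking traces and using $\tr U_{mn}=\tr(U_{00}^\dagger U_{mn})=0$ for $(m,n)\neq(0,0)$ (a consequence of~\eqref{eq:trueb}), which also gives $\tr A^i_k=0$. Then, for $i\neq j$, I would compute the overlap by using Hermiticity of the projectors to pass to the Hilbert--Schmidt inner product:
\[
|\braket{b^i_k}{b^j_l}|^2=\tr\!\big(P^i_k P^j_l\big)=\Big\langle \tfrac1d\I_d+A^i_k,\ \tfrac1d\I_d+A^j_l\Big\rangle_{\mathrm{HS}}.
\]
Expanding bilinearly: the $\tfrac1d\I_d$--$\tfrac1d\I_d$ term gives $\tfrac1{d^2}\tr\I_d=\tfrac1d$; the two cross terms vanish since $\tr A^i_k=\tr A^j_l=0$; and $\langle A^i_k,A^j_l\rangle_{\mathrm{HS}}$ is a linear combination of quantities $\tr(V^\dagger W)$ with $V\in C_i$, $W\in C_j$, each of which is $0$ by~\eqref{eq:trueb} because $V$ and $W$ are distinct UEB elements (the partition classes being disjoint, and $\I_d$ in neither). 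Hence $|\braket{b^i_k}{b^j_l}|^2=1/d$ for all $k,l$, so $\mathcal B^i$ and $\mathcal B^j$ are mutually unbiased (Definition~\ref{def:mub}); as this holds for every pair among the $d+1$ classes, the common eigenbases form a maximal family of MUBs.

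The step I expect to be the crux, and the one easy to get wrong, is that last computation: one must \emph{not} try to evaluate $\tr(VW)$ directly (it is not controlled by~\eqref{eq:trueb}, since $VW$ need not be a scalar multiple of a single UEB element), but instead exploit self-adjointness of $P^j_l$ to turn the trace into a Hilbert--Schmidt pairing $\tr(V^\dagger W)$, to which~\eqref{eq:trueb} applies. The only other point needing care is the linear-algebra observation that $d$ linearly independent commuting unitaries span the full maximal abelian subalgebra, which is precisely what places each $P^i_k$ inside $\mathrm{span}(C_i\cup\{\I_d\})$ and makes the expansion above available.
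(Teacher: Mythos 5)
The paper does not actually prove this statement itself: Theorem~\ref{thm:uebp} is quoted from Bandyopadhyay et al.~\cite{Bandyopadhyay} without proof, so there is no in-paper argument to compare against. Your proof is correct and is essentially the standard argument of that reference: each class $C_i\cup\{\I_d\}$ consists of $d$ commuting, Hilbert--Schmidt-orthogonal (hence linearly independent) unitaries, so it spans the full algebra of matrices diagonal in its common eigenbasis; each rank-one eigenprojector therefore expands as $\tfrac1d\I_d$ plus a traceless element of $\mathrm{span}(C_i)$, and Hermiticity of the projectors lets you evaluate $\tr(P^i_k P^j_l)$ as a Hilbert--Schmidt pairing, where~\eqref{eq:trueb} kills all cross-class terms and leaves $1/d$. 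The two points you single out — converting $\tr(VW)$ into $\tr(V^\dagger W)$ via self-adjointness so that~\eqref{eq:trueb} applies, and the spanning argument that places each projector in $\mathrm{span}(C_i\cup\{\I_d\})$ — are indeed the load-bearing steps, and you handle both correctly (including treating all $d+1$ classes $i\in\{*,0,\dots,d-1\}$, with $\mathcal B^*$ the computational basis).
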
  
As a notational point we introduce $\theta$ to represent the map from partitioned UEBs to maximal families of MUBs given by Theorem~\ref{thm:uebp}. In their paper Banyopadhyay et al also give a construction which takes a maximal MUB in dimension $d$ and the Fourier matrix for the cyclic group $\mathbb Z_d$ and gives a $\UEBM$. The following construction generalises theirs.   

In the following construction we will use a Hadamard $G$ and a controlled Hadamard $H$ (see Definition~\ref{def:ch}). Every Hadamard is equivalent to a Hadamard with ones along the first column and first row~\cite{mubs}. We assume that each Hadamard in the controlled family as well as $G$ are in this form. This gives us the following axioms.
\begin{align}\label{eq:had1}
\begin{aligned}\begin{tikzpicture}
           \node[state,scale=0.375,black,label={[label distance=0.05cm]330:$0$}] (2) at (0,0.25) {}; 
           \node (3) at (0,2) {};
           \node (b1) at (-0.5,0) {};
           \node[morphism,wedge,scale=0.5] (h1) at (0,1) {$H$}; 
           \draw[string] (2.center) to (h1.south);
           \draw[string] (h1.north) to (3.center);
           \draw[string] (h1.south west) to (b1.center);          
\end{tikzpicture}\end{aligned}
 = 
\begin{aligned}\begin{tikzpicture}
           \node[blackdot] (2) at (0,1.25) {}; 
           \node (3) at (0,2) {};
           \node (b1) at (-0.5,0) {};
           \node[blackdot] (h1) at (-0.5,0.75) {}; 
           \draw[string] (2.center) to (3.center);
           \draw[string] (h1.center) to (b1.center);          
\end{tikzpicture}\end{aligned} 
\quad & \qquad 
\begin{aligned}\begin{tikzpicture}
           \node[hflip,state,scale=0.375,black,label={[label distance=0.05cm]330:$0$}] (2) at (0,1.75) {}; 
           \node (3) at (0,0) {};
           \node (b1) at (-0.5,0) {};
           \node[morphism,wedge,scale=0.5] (h1) at (0,1) {$H$}; 
           \draw[string] (2.center) to (h1.north);
           \draw[string] (h1.south) to (3.center);
           \draw[string] (h1.south west) to (b1.center);          
\end{tikzpicture}\end{aligned} 
=
\begin{aligned}\begin{tikzpicture}
           \node[blackdot] (2) at (0,0.75) {}; 
           \node (3) at (0,0) {};
           \node  at (0,1.95) {};
           \node (b1) at (-0.5,0) {};
           \node[blackdot] (h1) at (-0.5,0.75) {}; 
           \draw[string] (2.center) to (3.center);
           \draw[string] (h1.center) to (b1.center);          
\end{tikzpicture}\end{aligned} 
\quad & \qquad
\begin{aligned}\begin{tikzpicture}
           \node[state,scale=0.375,black,label={[label distance=0.05cm]330:$0$}] (2) at (0,0.25) {}; 
           \node (3) at (0,2) {};
           \node (b1) at (-0.5,0) {};
           \node[morphism,wedge,scale=0.5](h1) at (0,1) {$G$}; 
           \draw[string] (2.center) to (h1.south);
           \draw[string] (h1.north) to (3.center);         
\end{tikzpicture}\end{aligned}
 = 
\begin{aligned}\begin{tikzpicture}
           \node[blackdot] (2) at (0,1.25) {}; 
           \node (3) at (0,2) {};
           \node (b1) at (0,-0.15) {}; 
           \draw[string] (2.center) to (3.center);          
\end{tikzpicture}\end{aligned} 
\quad & \qquad 
\begin{aligned}\begin{tikzpicture}
           \node[hflip,state,scale=0.375,black,label={[label distance=0.05cm]330:$0$}] (2) at (0,1.75) {}; 
           \node (3) at (0,0) {};
           \node (b1) at (-0.5,0) {};
           \node[morphism,wedge,scale=0.5] (h1) at (0,1) {$G$};            \draw[string] (h1.south) to (3.center);
           \draw[string] (h1.north) to (2.center);   
\end{tikzpicture}\end{aligned} 
=
\begin{aligned}\begin{tikzpicture}
           \node[blackdot] (2) at (0,0.75) {}; 
           \node (3) at (0,0) {};
           \node  at (0,1.95) {};
           \node (b1) at (0,0) {}; 
           \draw[string] (2.center) to (3.center);          
\end{tikzpicture}\end{aligned} 
\end{align}
We now provide the main result of this section, a converse to Theorem~\ref{thm:uebp} taking a maximal MUB and a controlled Hadamard to construct  a $\UEBM$. We will later show in Theorem~\ref{thm:compid} that if we start with a $\UEBM$ and then obtain a maximal MUB by taking the eigenbases (see Theorem~\ref{thm:uebp})
and then perform the following construction we recover the $\UEBM$ we started with.\begin{theorem}\label{thm:rev}
Given a maximal MUB on a $d$ dimensional Hilbert space, a controlled Hadamard $H$ and an additional Hadamard $G$ the following map $\phi_H$ gives a $\UEBM$:
\begin{equation}
\begin{aligned}

\end{aligned}
\end{equation*}
\end{proof}
Let $\theta$ be the map that takes a partitioned UEB and gives the corresponding maximal family of MUBs according to Theorem~\ref{thm:uebp}. We now investigate  the map $\theta$ and the infinite family of maps $\phi_{H}$ each taking a maximal family of MUBs and giving a partitioned UEB, given by Theorem~\ref{thm:rev} above. Given a controlled Hadamard, $H$ and a maximal family of MUBs we now consider the effect of the composition $\theta \circ \phi_{H^i}  $ on the maximal family of MUBs:
\begin{theorem}\label{thm:compid}
Given a maximal family of MUBs $M$ and a contolled Hadamard ${H^i}$: \begin{equation}\theta \circ \phi_{H}   (M)=M\end{equation} 
\end{theorem}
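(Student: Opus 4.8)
The plan is to unwind the definition of the $\UEBM$ $\phi_H(M)$ produced by Theorem~\ref{thm:rev}, diagonalise each of its commuting classes explicitly, and then observe that the resulting common eigenbases are exactly the bases of the maximal family $M$ we began with.

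First I would record the partition of $\phi_H(M)$. By (the proof of) Theorem~\ref{thm:rev} it is a $\UEBM$ with $U_{00}=\I_d$, $C_{\ast}=\{U_{a0}\mid a\in\{1,\dots,d-1\}\}$, and $C_i=\{U_{ik}\mid k\in\{1,\dots,d-1\}\}$ for $i\in\range{0}$. Composing the defining diagram with the computational states $\tinystatelab[i]\,\tinystatelab[k]$ on the two parameter legs, the $G$-summand contributes nothing when $k\neq 0$ (the $\ket 0$ effect meets $\ket k$, giving $\braket{0}{k}=0$), so for $k\neq 0$ the operator $U_{ik}$ is just the $M$-summand at indices $i,k$; and when $k=0$ the $\tinymult$--$\projs$ projector together with equations~\eqref{eq:projzero} and~\eqref{eq:had1} kill the $M$-summand, leaving the diagonal operators $U_{a0}=\sum_m G_{ma}\,\ketbra{m}{m}$ of $C_{\ast}$ (and $U_{00}=\I_d$ since $G$ has a constant first column).

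The main work --- and the step I expect to be the real obstacle --- is to put the diagram of $U_{ik}$ (for $k\neq0$) into diagonal form. Here I would push the black dots through the two copies of $M$ and $M^{\dag}$ using the spider law~\eqref{eq:sm} and the copying law~\eqref{eq:spider}, and then collapse the $M\,M^{\dag}$ fragment using the controlled-Hadamard axioms~\eqref{eq:ch} (and~\eqref{eq:had} for the single Hadamard $H_i$ that the index $i$ isolates), to arrive at
\[
U_{ik}\;=\;\sum_{m}\,(H_i)_{mk}\,\ketbra{b_i^m}{b_i^m},\qquad\text{where}\qquad \ket{b_i^m}:=M(\ket m\otimes\ket i).
\]
Care is needed to check that it is the index $i$ (not $k$) that controls the isolated Hadamard, and to use the first-row/column normalisation~\eqref{eq:had1} so that the eigenvalues come out as the genuine matrix entries $(H_i)_{mk}$ rather than up to an undetermined scalar. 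This identity already shows that the $i$-th basis $\mathcal B^i=\{\ket{b_i^m}\}_m$ of $M$ consists of simultaneous eigenvectors of $C_i\cup\{\I_d\}$.

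Finally I would argue that $\mathcal B^i$ is the \emph{unique} common eigenbasis, so that $\theta$ is forced to return it. The joint eigenvalue of $\ket{b_i^m}$ under $(U_{ik})_{k=1,\dots,d-1}$ is the tuple $\big((H_i)_{m1},\dots,(H_i)_{m,d-1}\big)$; since $H_i$ is a Hadamard its rows are orthogonal, hence distinct, and since its $0$-th column is constant (by~\eqref{eq:had1}), deleting that column leaves the rows pairwise distinct. Hence the simultaneous eigenspaces are all one-dimensional, $\theta$ returns $\mathcal B^i$ with its $m$-th vector the one carrying eigenvalue pattern $\big((H_i)_{m1},\dots\big)$, and this labelling coincides with the labelling $\ket{b_i^m}=M(\ket m\otimes\ket i)$ of $M$ (the phases being pinned down by~\eqref{eq:had1}). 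Running this over all $i\in\range{0}$ and noting that both $M$ and $\theta\circ\phi_H(M)$ send the basis $\mathcal B^{\ast}$ to the computational basis $\tinymult$ of the $\dfrob$, we conclude $\theta\circ\phi_H(M)(\ket m\otimes\ket i)=M(\ket m\otimes\ket i)$ for all $m,i$, i.e.\ $\theta\circ\phi_H(M)=M$.
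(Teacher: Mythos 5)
Your proposal is correct and follows essentially the same route as the paper: unwind $\phi_H(M)$, use the spider law and the controlled-Hadamard axioms to show that for $k\neq 0$ the operator $U_{ik}$ acts on $\ket{b_i^m}=M(\ket m\otimes\ket i)$ as multiplication by $(H_i)_{mk}$, so the bases of $M$ are the common eigenbases of the classes $C_i$ (with $C_*$ diagonal in the black basis). Your extra step checking that the joint eigenvalue tuples are pairwise distinct (hence the eigenbases are uniquely determined and $\theta$ must return exactly $\mathcal B^i$) is a detail the paper leaves implicit, not a different argument.
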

\begin{proof}
\begin{equation*}
\phi_{H}   (M):=\quad
\begin{aligned}
\begin{tikzpicture}
\node (m1) [morphism,wedge] at (0,0.75) {$M$};
\node(m2)[morphism,wedge,hflip]  at (0,-0.75) {$M$};
\draw (m1.south east) to (m2.north east)[string];
\draw (m1.south west) to (m2.north west)[string];
\node(b1)[blackdot] at (-0.25,0.1){};
\node(b2)[blackdot] at (0.25,0.2){};
\node(h)[morphism,wedge] at (1.75,-1){$H$};
\node(b4)[blackdot] at (1.25,-1.5){};
\node(m) at (1,-0.25){};
\node(s)[whitedot,inner sep=0.25pt] at (1.75,-2){$\boldsymbol{*}$};
\draw[string,out=left,in=right,looseness=1] (m.center) to (b1.center);
\draw[string,out=left,in=right,looseness=2] (b4.center) to (m.center);
\draw[string,out=right,in=south west] (b4.center) to (h.south west);
\draw[string,out=90,in=right] (h.north) to (b2.center);
\node(i1) at (0,-2.5){};
\node(i2) at (1.25,-2.5){};
\node(i3) at (1.75,-2.5){};
\node(o) at (0,2){};
\draw[string] (i1.center) to (m2.south);
\draw[string] (i2.center) to (b4.center);
\draw[string] (i3.center) to (h.south);
\draw[string] (o.center) to (m1.north);
\end{tikzpicture}
\end{aligned}
\quad + \quad
\begin{aligned}
\begin{tikzpicture}
\node(b2)[blackdot] at (0.425,0.2){};
\node(h)[morphism,wedge] at (1.25,-1.25){$G$};
\node(m) at (1,-0.25){};
\node(s)[state,black,hflip,scale=0.25,label={[label distance=-0.08cm]30:\tiny{0}}] at (1.75,-2){};
\draw[string,out=90,in=right] (h.north) to (b2.center);
\node(i1) at (0,-2.5){};
\node(i2) at (1.25,-2.5){};
\node(i3) at (1.75,-2.5){};
\node(o) at (0.425,2){};
\draw[string,out=90,in=left,looseness=0.5] (i1.center) to (b2.center);
\draw[string] (i2.center) to (h.south);
\draw[string] (i3.center) to (s.center);
\draw[string] (o.center) to (b2.center);
\end{tikzpicture}
\end{aligned}
\end{equation*}
By design we have a partition $\{ U_{00}\} \sqcup C_* \sqcup C_0 \sqcup ... \sqcup C_{d-1}$, where $C_*=\{U_{i0} |i \in \range{1} \}$ and for $k \in \range{0}, C_a=\{ U_{aj}|j \in \range{1} \}$. Clearly the eigenbasis of $C_*$ is the black basis. Let $\ket{b^i_k}$ be the $k$th state of the $i$th basis of $M$. We claim that $\ket{b^i_k}$ is the $k$th eigenstate of $C_i$. To see this consider the following composite linear map.
\begin{equation*}
\begin{aligned}\begin{tikzpicture}[scale=0.75]
\node[morphism,wedge,scale=0.75](m) at (0,1){$\phi_H   (M)$};
\node[morphism,wedge,scale=0.75](e) at (-0.24,-0.5){$M$};
\node(i1) at (-0.35,-2.25){};
\node(i2) at (0,-2.25){};
\node(o) at(0,2.25){};
\node[blackdot,scale=0.5](b) at (-0.35,-1.25){};
\node(g) at (0.75,-0.5){};
\node(h) at (1,-0.5){};
\node(i3) at (1,-2.25){};
\node(*)[whitedot,inner sep=0.25pt,scale=0.75] at (1,-1.25){$\boldsymbol{*}$};
\draw[string,out=90,in=270] (i3.center) to (h.center);
\draw[string,out=90,in=270] (h.center) to (m.south east);
\draw[string] (m.north) to (o.center);
\draw[string,out=90,in=270] (e.north) to (m.south west);
\draw[string,out=left,in=270] (b.center) to (e.south west);
\draw[string,out=90,in=270] (i1.center) to (b.center);
\draw[string,out=90,in=270] (i2.center) to (e.south east);
\draw[string,out=right,in=270] (b.center) to (g.center);
\draw[string,out=90,in=270] (g.center) to (m.south);
\end{tikzpicture}\end{aligned}:= \quad
\begin{aligned}
\begin{tikzpicture}[scale=0.75]
\node (m1) [morphism,wedge,scale=0.75] at (0,0.75) {$M$};
\node(m2)[morphism,wedge,hflip,scale=0.75]  at (0,-0.75) {$M$};
\node(m3)[morphism,wedge,scale=0.75]  at (0,-1.5) {$M$};
\node[blackdot,scale=0.5](b) at (-0.1,-2.2){};
\draw (m1.south east) to (m2.north east)[string];
\draw (m1.south west) to (m2.north west)[string];
\node(b1)[blackdot,scale=0.5] at (-0.25,0.1){};
\node(b2)[blackdot,scale=0.5] at (0.25,0.2){};
\node(h)[morphism,wedge,scale=0.75] at (1.75,-1){$H$};
\node(b4)[blackdot,scale=0.5] at (1.25,-1.5){};
\node(m) at (1,-0.25){};
\node(s)[whitedot,inner sep=0.25pt,scale=0.75] at (1.75,-2){$\boldsymbol{*}$};
\draw[string,out=left,in=right,looseness=1] (m.center) to (b1.center);
\draw[string,out=left,in=right,looseness=2] (b4.center) to (m.center);
\draw[string,out=right,in=south west] (b4.center) to (h.south west);
\draw[string,out=90,in=right] (h.north) to (b2.center);
\node(i1) at (-0.1,-2.5){};
\node(i2) at (1.25,-2.5){};
\node(i3) at (1.75,-2.5){};
\node(i1a) at (0.24,-2.5){};

\node(o) at (0,2){};
\draw[string] (m3.north) to (m2.south);
\draw[string,out=right,in=270] (b.center) to (b4.center);
\draw[string] (b.center) to (m3.south west);
\draw[string] (i3.center) to (h.south);
\draw[string] (o.center) to (m1.north);
\draw[string] (b.center) to (i1.center);
\draw[string] (i1a.center) to (m3.south east);
\end{tikzpicture}
\end{aligned}
\quad \super{\eqref{eq:sm}}= \quad
\begin{aligned}
\begin{tikzpicture}[scale=0.75]
\node (m1) [morphism,wedge,scale=0.75] at (0,0.75) {$M$};
\node(m2)[morphism,wedge,hflip,scale=0.75]  at (0,-0.75) {$M$};
\node(m3)[morphism,wedge,scale=0.75]  at (0,-1.5) {$M$};
\node[blackdot,scale=0.5](b) at (-0.25,-2.25){};
\draw (m1.south east) to (m2.north east)[string];
\draw (m1.south west) to (m2.north west)[string];
\node(b1)[blackdot,scale=0.5] at (-0.25,0){};
\node(b2)[blackdot,scale=0.5] at (0.25,0.2){};
\node(h)[morphism,wedge,scale=0.75] at (1.75,-1){$H$};
\node(m) at (1,-0.25){};
\node(b5)[blackdot,scale=0.5] at (-0.25,-2.65){};
\draw[string,out=right,in=south west] (b5.center) to (h.south west);
\node(s)[whitedot,inner sep=0.25pt,scale=0.75] at (1.75,-2){$\boldsymbol{*}$};
\draw[string,out=90,in=right] (h.north) to (b2.center);
\node(i1) at (-0.25,-3){};
\node(i2) at (1.25,-3){};
\node(i3) at (1.75,-3){};
\node(i1a) at (0.24,-3){};
\node(o) at (0,1.5){};
\draw[string] (m3.north) to (m2.south);
\draw[string,out=left,in=left] (b.center) to (b1.center);
\draw[string] (b.center) to (m3.south west);
\draw[string] (i3.center) to (h.south);
\draw[string] (o.center) to (m1.north);
\draw[string] (b.center) to (i1.center);
\draw[string] (i1a.center) to (m3.south east);
\end{tikzpicture}\end{aligned}
\quad \super{\eqref{eq:ch}}= \quad
\begin{aligned}
\begin{tikzpicture}[scale=0.75]
\node (m1) [morphism,wedge,scale=0.75] at (0,0.75) {$M$};

\node(b) at (-0.25,-2.25){};

\node(b1) at (-0.25,0){};
\node(b2)[blackdot,scale=0.5] at (0.25,0.2){};
\node(h)[morphism,wedge,scale=0.75] at (1.75,-1){$H$};
\node(m) at (1,-0.25){};
\node(b5)[blackdot,scale=0.5] at (-0.25,-2.65){};
\draw[string,out=right,in=south west] (b5.center) to (h.south west);
\node(s)[whitedot,inner sep=0.25pt,scale=0.75] at (1.75,-2){$\boldsymbol{*}$};
\draw[string,out=90,in=right] (h.north) to (b2.center);
\node(i1) at (-0.25,-3){};
\node(i2) at (1.25,-3){};
\node(i3) at (1.75,-3){};
\node(i1a) at (0.24,-3){};
\node(o) at (0,1.5){};
\draw[string] (i3.center) to (h.south);
\draw[string] (o.center) to (m1.north);
\draw[string] (b.center) to (i1.center);
\draw (m1.south east) to (i1a.center)[string];
\draw (m1.south west) to (i1.center)[string];
\end{tikzpicture}
\end{aligned}\end{equation*}  
If we input black states $i,j,k$ with $k \neq0$  the above equation becomes \mbox{$[\phi_H(M)]_{ik}\ket{b^i_j}=[H^i]_{jk}\ket{b^i_k}$}. Thus the bases of the original maximal family of MUBs are the eigenbases of $\phi_H(M)$  as required. 
\end{proof}

Given that the composite map $\theta \circ \phi_H$ is the identity we conclude that $\theta$ is surjective and for all controlled Hadamards $H$,  the map $\phi_H$ is injective. Given some maximal family of MUBs $M$ and controlled Hadamard ${H}$,  the proof to Theorem~\ref{thm:compid}  allows us to identify  the eigenvalues of the $\UEBM$ $\phi_{H}(M)$ with the entries of the  controlled Hadamard. This is precisely the information lost by the map $\theta$. So every maximal family of MUBs corresponds to an infinite family of $\UEBM$s for different choices of eigenvalues. These UEBs are in general inequivalent. Also every $\UEBM$ corresponds to a maximal family of MUBs with a particular choice of controlled Hadamard. This holds in any dimension, so the existence problem for maximal families of MUBs in arbitrary dimension can be phrased in terms of $\UEBM$s. Similarly for non-maximal families of MUBs we have corresponding UEBs with partial partition into maximally commuting sub-families.    
\section{Tensor diagrammatic finite fields}\label{section:gff}
We now  present a construction of $\UEBM$s from a finite field. In order to achieve this  we first introduce a Tensor diagrammatic characterisation of finite fields, so that we can interpret finite fields as algebraic structures in Hilbert space.  We begin this section by reviewing the Tensor diagrammatic properties of abelian groups in Hilbert space~\cite{stefwill}. We recall how the character theory of abelian groups can be reconstructed using tensor diagrams. This gives us a graphical representation of  complex group algebras, and the usual complex character theory~\cite{stefwill,abeliangroupgraph}. We will then build on this framework to discuss finite fields as algebraic structures defined over Hilbert spaces. 
\subsection{Abelian groups}
First we define abelian groups.
\begin{definition}[Abelian group]~\cite{lidl}
A set together with a binary operation is an \textit{abelian group} if it is closed, unital, associative, commutative and every element has an inverse.
\end{definition}
We continue with the convention that $\tinymult$ is a $\dfrob$ and use the black states as an indexing set. In this case we are indexing the elements of the abelian group, and later the elements of the finite field. 

\paragraph{Unitality, associativity and commutativity.}We introduce another Frobenius algebra which will represent the binary operator of the group. Let $\tinymult[reddot]$ be a \mbox{$\dagger$-quasi-special} commutative Frobenius algebra ($\dagger$-qSCFA). Since $\tinymult[reddot]$ is a commutative Frobenius algebra
it is unital, associative and commutative by definition.

\paragraph{Closure.}It is the interaction of red and black that gives us the structure of a group. We require that $\tinymult[reddot]$ is closed with respect to $\tinymult$ this means that we need $\tinymult[reddot]$ to take pairs of black basis states to black basis states. This is equivalent to $\tinymult[reddot]$ being a comonoid homomorphism for $\tinycomult$ (see Definition~\ref{def:cohom}). This is encapsulated by the following axiom:
\begin{definition}[Bialgebra]\label{def:bialg}A pair of unital associative algebras are a \textit{bialgebra} if:
   \begin{align}\label{eq:bialg} \begin{aligned}\begin{tikzpicture}[yscale=0.5,xscale=0.65]
          \node (0a) at (-0.5,0) {};
          \node (0b) at (0.5,0) {};
          \node[reddot] (1) at (0,1) {};
          \node[blackdot] (2) at (0,2) {};
          \node (3a) at (-0.5,3) {};
          \node (3b) at (0.5,3) {};
          \draw[string,out=90,in=180] (-0.5,0) to (1.center);
          \draw[string,out=90,in=0] (0.5,0) to (1.center);
          \draw[string] (1.center) to (0,2);
          \draw[string,out=180,in=-90] (0,2) to (-0.5,3);
          \draw[string,out=0,in=270] (2.center) to (0.5,3);
      \end{tikzpicture}\end{aligned}
    =
    \begin{aligned}\begin{tikzpicture}[yscale=0.5,xscale=0.65]
           \node[blackdot] (2) at (0,0.9) {};
           \node[blackdot] (3) at (1,0.9) {};
           \node[reddot] (4) at (0,2.1) {};
           \node[reddot] (5) at (1,2.1) {};
           \draw[string] (0,0) to (2.center);
           \draw[string] (1,0) to (3.center);
           \draw[string] (4.center) to (0,3);
           \draw[string] (5.center) to (1,3);
           \draw[string, in=180, out=180, looseness=1.2] (2.center) to (4.center);
           \draw[string, in=0, out=0, looseness=1.2] (3.center) to (5.center);
           \draw[string, in=180, out=right] (2.center) to (5.center);
           \draw[string, in=0, out=180] (3.center) to (4.center);
    \end{tikzpicture}\end{aligned}
    \qquad & \quad
 \begin{aligned}\begin{tikzpicture}[yscale=0.5,xscale=0.65]
          \node[reddot] (1) at (0,1) {};
          \node[blackdot] (2) at (0,2) {};
          \draw[string,out=90,in=180] (-0.5,0) to (1.center);
          \draw[string,out=90,in=0] (0.5,0) to (1.center);
          \draw[string] (1.center) to (2.center);
          \draw[string,out=180,in=270, white] (2.center) to (-0.5,3);
          \draw[string,out=0,in=270, white] (2.center) to (0.5,3);
      \end{tikzpicture}\end{aligned}
  =
  \begin{aligned}\begin{tikzpicture}[yscale=0.5,xscale=0.65]
        \draw [white, string] (0,0) to (0,3);
        \node[blackdot] (2) at (0,1) {};
        \node[blackdot] (3) at (1,1) {};
        \draw[string] (0,0) to (0,1) {};
        \draw[string] (1,0) to (1,1) {};
    \end{tikzpicture}\end{aligned}\quad & \quad
    \begin{aligned}\begin{tikzpicture}[yscale=-0.5,xscale=0.65]
          \node[blackdot] (1) at (0,1) {};
          \node[reddot] (2) at (0,2) {};
          \draw[string,out=90,in=180] (-0.5,0) to (0,1);
          \draw[string,out=90,in=0] (0.5,0) to (1.center);
          \draw[string] (1.center) to (2.center);
          \draw[string,out=180,in=270, white] (2) to (-0.5,3);
          \draw[string,out=0,in=270, white] (2) to (0.5,3);
      \end{tikzpicture}\end{aligned}
  =
  \begin{aligned}\begin{tikzpicture}[yscale=-0.5,xscale=0.65]
        \draw [white, string] (0,0) to (0,3);
        \node[reddot] (2) at (0,1) {};
        \node[reddot] (3) at (1,1) {};
        \draw[string] (0,0) to (2.center) {};
        \draw[string] (1,0) to (3.center) {};
    \end{tikzpicture}\end{aligned}
  \qquad & \quad
    \begin{aligned}\begin{tikzpicture}[yscale=-0.5,xscale=0.65]
        \draw[string, white] (0,-1) to (0,2);
        \node[reddot] (0) at (0,0) {};
        \node[blackdot] (1) at (0,1) {};
        \draw[string] (0.center) to (1.center);
    \end{tikzpicture}\end{aligned}
    =\qquad \end{align} 
   Note that the right hand side of the second equation is the empty diagram indicating the identity complex scalar $1$. 
\end{definition} 
Given Frobenius algebras $\tinymult$ and $\tinymult[reddot]$,
we call them Frobenius bialgebras if they obey the bialgebra laws. Note that the third bialgebra rule means that the red unit is copyable by, and thus a state of, the black basis. We will assume that the basis is ordered such that $\tinyunit[reddot]=\tinystate[black,state,label={[label distance=-0.08cm]330:\tiny{$0$}}]$.  Since we think of $\tinymult[reddot]$ as a binary operation taking black states to other black states, the morphism $\tinymult[reddot]$ should be real valued with respect to the black basis when considered as a linear map in Hilbert space. The following axiom gives this property in the general case:
\begin{definition}[$\tinydot$\hspace{-6pt} -real~\cite{rosskev}]
In a $\dagger$-symmetric monoidal category given an object $A$ with a $\dfrob$ $\tinymult$, a morphism $F:A^{\otimes n}\rightarrow A^{\otimes m}$   is $\tinydot$\hspace{-6pt} -\textit{real} if: 
\begin{equation}
\begin{pic}
     \node(f)[morphism,wedge,hflip,connect se length=0.5cm,connect sw length=0.5cm,connect ne length=0.5cm,connect nw length=0.5cm] at (0,0){$F$};
     \draw[dotted] (-0.15,0.55) to (0.25,0.55); 
     \draw[dotted] (-0.15,-0.55) to (0.25,-0.55);      
\end{pic}
\quad = \quad
\begin{pic}
     \node(f)[morphism,wedge,connect se length=0.125cm,connect sw length=0.125cm,connect ne length=0.125cm,connect nw length=0.125cm] at (0,0){$F$};
     \draw[dotted] (-0.15,0.4) to (0.25,0.4); 
     \draw[dotted] (-0.15,-0.4) to (0.25,-0.4);
     \draw[dotted] (0.3,0.7) to (0.7,0.7);
     \draw[dotted] (0.95,-0.7) to (1.35,-0.7);
     \draw[dotted] (-0.3,-0.7) to (-0.7,-0.7);
     \draw[dotted] (-0.95,0.7) to (-1.35,0.7);     
     \node(b1)[blackdot,scale=0.7] at (0.25,0.65){};
     \node(b2)[blackdot,scale=0.7] at (0.75,0.65){};
     \draw[string,out=90,in=180,looseness=0.8] (f.connect nw) to (b1.center);      \node(i1) at (0.9,-0.95){};      
     \node(i2) at (1.4,-0.95){};
     \draw[string,out=90,in=180,looseness=0.8] (f.connect ne) to (b2.center);      \draw[string,out=0,in=90] (b1.center) to (i1);      
     \draw[string,out=0,in=90] (b2.center) to (i2);     
     \node(b11)[blackdot,scale=0.7] at (-0.25,-0.65){};
     \node(b21)[blackdot,scale=0.7] at (-0.75,-0.65){};
     \draw[string,out=270,in=0,looseness=0.8] (f.connect se) to (b11.center);      \node(i11) at (-0.9,0.95){};      
     \node(i21) at (-1.4,0.95){};
     \draw[string,out=270,in=0,looseness=0.8] (f.connect sw) to (b21.center);      \draw[string,out=180,in=270] (b11.center) to (i11);      
     \draw[string,out=180,in=270] (b21.center) to (i21);
\end{pic}
\end{equation}
\end{definition}
We require $\tinymult[reddot]$ to be $\tinydot$\hspace{-6pt} -real, which gives us the following equations:
\begin{align}
\begin{aligned}\begin{tikzpicture}[scale=0.75]
          \node (0a) at (-1,1) {};
          \node (0b) at (-0.5,1) {};
          \node (0c) at (0,1) {};
          \node[reddot] (7) at (-0.5,0) {};       
          \node (10a) at (-0.5,-1) {};       
          \draw[string, out=270, in =180] (0a.center) to (7.center);
          \draw[string, out=0, in=270] (7.center) to (0c.center);
          \draw[string] (7.center) to (10a.center);
          \end{tikzpicture}\end{aligned}
          \quad=\quad
\begin{aligned}\begin{tikzpicture}[scale=0.75]
          \node (0a') at (-2,1) {};
          \node[blackdot] (a) at (-0.75,-1) {};
          \node (0a) at (-2,0.5) {};
          \node (0c) at (-1.5,0.5) {}; 
          \node[blackdot] (b) at (-1.5,-1) {};
          \node[blackdot] (c) at (0,0.5) {};
          \node (0b) at (-0.5,0.5) {};
          \node (0c') at (-1.5,1) {};
          \node[reddot] (7) at (-0.5,0) {};       
          \node (10a) at (0.5,-1.5) {};       
          \draw[string, out=270, in =180] (0c'.center) to (a.center);
          \draw[string, out=0, in =0] (7.center) to (a.center);
          \draw[string, out=180, in=0] (7.center) to (b.center);
          \draw[string, out=180, in=270] (b.center) to (0a'.center);               \draw[string,out=90 ,in=180] (7.center) to (c.center);
          \draw[string,out=0,in=90] (c.center) to (10a); 
   \end{tikzpicture}\end{aligned}
    \qquad      &  \qquad \qquad
    \begin{aligned}\begin{tikzpicture}
    \node(r)[reddot] at (0,0){};
    \node(b)[blackdot] at (0.5,-0.5){};
    \draw[string,in=left,out=270] (r.center) to (b.center);    
    \draw[string,in=right,out=270] (1,0.75) to (b.center);
    \end{tikzpicture}\end{aligned}
    \quad = \quad
    \begin{aligned}\begin{tikzpicture}
    \node(b)[reddot] at (1,-0.5){};    
    \draw[string] (1,0.75) to (b.center){};
    \end{tikzpicture}\end{aligned}          
\end{align}
\paragraph{Inverses.}
The following condition is equivalent to the binary operator having inverses, which gives us a Hopf algebra~\cite{rosskev}.
\begin{definition}[Strong complementarity]
Given two $\dagger$-commutative Frobenius algebras $\tinymult$ and $\tinymult[reddot]$, they are \textit{strongly complementary} if the following composite linear maps are both unitary: 
\begin{align}
    \begin{aligned}\begin{tikzpicture}[yscale=0.5,xscale=0.6]
          \node (0) at (0,0) {};
          \node (0a) at (0,1) {};
          \node[blackdot] (1) at (0.5,2) {};
          \node[reddot] (2) at (1.5,1) {};
          \node (3) at (1.5,0) {};
          \node (4) at (2,3) {};
          \node (4a) at (2,2) {};
          \node (5) at (0.5,3) {};
          \draw[string] (0) to (0a.center);
          \draw[string,out=90,in=180] (0a.center) to (1.center);
          \draw[string,out=0,in=180] (1.center) to (2.center);
          \draw[string,out=0,in=270] (2.center) to (4a.center);
          \draw[string] (4a.center) to (4);
          \draw[string] (2.center) to (3);
          \draw[string] (1.center) to (5);
      \end{tikzpicture}\end{aligned}
      \qquad & \qquad
    \begin{aligned}\begin{tikzpicture}[yscale=-0.5,xscale=0.6]
          \node (0) at (0,0) {};
          \node (0a) at (0,1) {};
          \node[reddot] (1) at (0.5,2) {};
          \node[blackdot] (2) at (1.5,1) {};
          \node (3) at (1.5,0) {};
          \node (4) at (2,3) {};
          \node (4a) at (2,2) {};
          \node (5) at (0.5,3) {};
          \draw[string] (0) to (0a.center);
          \draw[string,out=90,in=180] (0a.center) to (1.center);
          \draw[string,out=0,in=180] (1.center) to (2.center);
          \draw[string,out=0,in=270] (2.center) to (4a.center);
          \draw[string] (4a.center) to (4);
          \draw[string] (2.center) to (3);
          \draw[string] (1.center) to (5);
      \end{tikzpicture}\end{aligned}
      \end{align}
\end{definition}
The following theorem is useful in understanding how the character theory of an Abelian group can be derived using tensor diagrams. 
\begin{theorem}[\cite{zeng}, Theorem 9]\label{thm:zeng} A pair of $\dagger$-commutative Frobenius algebras are strongly complementarity if and only if the corresponding  bases are mutually unbiased. \end{theorem}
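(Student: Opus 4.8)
The plan is to unfold the diagrammatic definition of strong complementarity via the classification of $\dfrob$s and reduce it, one basis state at a time, to the numerical condition of Definition~\ref{def:mub}. By the Proposition of Coecke--Pavlovi\'{c}--Vicary~\cite{spiderONB} (equation~\eqref{classtr}), the $\dagger$-commutative Frobenius algebras $\tinymult$ and $\tinymult[reddot]$ are determined by their copyable states; write $\ket{a_i}$ for the black basis and $\ket{b_j}$ for the red one, and let $\ket{\hat a_i},\ket{\hat b_j}$ be the associated unit vectors (these coincide with the copyable states in the special case and differ from them by the factor coming from equation~\eqref{eq:sp} in the quasi-special case). First I would identify the first composite map in the definition of strong complementarity as $L:=(\mathrm{id}\otimes m_{\mathrm{red}})\circ(\Delta_{\mathrm{black}}\otimes\mathrm{id})\colon\mathcal H\otimes\mathcal H\to\mathcal H\otimes\mathcal H$, and the second as the colour-swapped map $L':=(\mathrm{id}\otimes m_{\mathrm{black}})\circ(\Delta_{\mathrm{red}}\otimes\mathrm{id})$.

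The next step uses that the black basis is copied by $\tinymult$ (equation~\eqref{eq:spider}): one has $L(\ket{a_i}\otimes\ket{v})=\ket{a_i}\otimes M_i\ket{v}$, where $M_i:=m_{\mathrm{red}}\circ(\ket{a_i}\otimes\mathrm{id})$ is ``left multiplication by $\ket{a_i}$ in the red algebra'', so that $L^{\dagger}L=\sum_i\ketbra{a_i}{a_i}\otimes M_i^{\dagger}M_i$. Hence $L$ is unitary iff every $M_i$ is unitary, and by the finite-dimensional isometry-implies-unitary fact recalled above it suffices to test $M_i^{\dagger}M_i=\mathrm{id}$. I would then prove (or cite) the elementary lemma that multiplication of two red copyable states is diagonal, $m_{\mathrm{red}}(\ket{b_j}\otimes\ket{b_k})=\langle b_j|b_k\rangle\,\ket{b_j}$, which follows from the red $\dagger$-Frobenius law together with copyability (equivalently, from the analogue of~\eqref{classtr} for $\tinymult[reddot]$). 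Expanding $\ket{a_i}$ in the red basis and feeding it through this lemma gives $M_i\ket{b_k}=\langle b_k|a_i\rangle\,\ket{b_k}$, i.e. $M_i$ is diagonal in the red basis with eigenvalues $\langle b_k|a_i\rangle$; therefore $M_i$ is unitary exactly when all these eigenvalues have modulus $1$, which, once rewritten in terms of the normalised vectors $\ket{\hat b_k}$ (and using counitality, so that the relevant moduli sum to $1$), is precisely $|\langle\hat b_k|a_i\rangle|^{2}=1/d$ for every $k$ --- the statement that the black state $\ket{a_i}$ is mutually unbiased to the red basis.

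Quantifying over $i$ shows that $L$ is unitary iff the black basis is mutually unbiased to the red basis; running the identical computation for $L'$ (now with left multiplication by red copyable states inside the black algebra) gives the mirror statement and, as one checks, the very same set of numerical conditions $|\langle\hat a_i|\hat b_j\rangle|^{2}=1/d$. Thus both composites are unitary iff the two bases satisfy Definition~\ref{def:mub}. The converse needs nothing new: assuming mutual unbiasedness, the constant-modulus hypothesis makes each $M_i$ and each $M'_j$ an isometry, hence (finite dimension) unitary, hence $L$ and $L'$ are unitary, so the two algebras are strongly complementary.

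I expect the real difficulty to be bookkeeping rather than anything conceptual: one must carry the normalisation constants attached to $\tinymult$ and $\tinymult[reddot]$ through equation~\eqref{eq:sp} --- the norms $\langle a_i|a_i\rangle$, $\langle b_j|b_j\rangle$ of the unnormalised copyable states and the quasi-special factor --- so that the unitarity of $M_i$ really collapses to $|\langle\hat b_k|a_i\rangle|^{2}=1/d$ with the correct constant, and so that the conditions extracted from $L$ and from $L'$ genuinely coincide. Isolating the ``diagonal multiplication'' lemma together with its normalised form, and proving it once, is the cleanest way to keep this under control; after that both implications fall out mechanically.
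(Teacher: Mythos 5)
The paper never proves this statement: it is imported wholesale by citation from~\cite{zeng} and used as a black box, so there is no internal argument to compare yours against. Judged on its own terms, your proposal is the standard proof and its core is sound: restricting the first composite along a black copyable state $\ket{a_i}$ turns it into left red-multiplication $M_i$, the diagonal-multiplication lemma $m_{\mathrm{red}}(\ket{b_j}\otimes\ket{b_k})=\braket{b_j}{b_k}\,\ket{b_j}$ (immediate from the classification of commutative $\dagger$-Frobenius algebras by orthogonal families of copyable states) shows $M_i$ is diagonal in the red copyable basis with eigenvalues $\braket{b_k}{a_i}$, and unitarity of the composite (and, by the mirrored computation, of the second one) is exactly the constant-modulus condition $|\braket{b_k}{a_i}|=1$, which after normalisation is Definition~\ref{def:mub}. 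The block decomposition $L^{\dagger}L=\sum_i\ketbra{a_i}{a_i}\otimes M_i^{\dagger}M_i$ and the isometry-implies-unitary shortcut are both fine.

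One point deserves more than ``bookkeeping'' status. The direction ``mutually unbiased $\Rightarrow$ strongly complementary'' is genuinely false without fixing a normalisation convention: if both algebras are taken special (both copyable bases orthonormal), mutual unbiasedness gives eigenvalues $\braket{b_k}{a_i}$ of modulus $1/\sqrt{d}$, so each $M_i$ is $1/\sqrt{d}$ times a unitary and the composites are unitary only up to the scalar $\sqrt{d}$. The biconditional therefore presupposes the pairing this paper actually works with --- one algebra special, the other quasi-special with factor $d$, so that $\|a_i\|\,\|b_k\|=\sqrt{d}$ --- and your converse paragraph, which says it ``needs nothing new'', silently uses exactly this. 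Your forward direction, by contrast, does not need it as a hypothesis: when the black algebra is special, unitarity forces $\|b_k\|^{2}=\sum_i|\braket{a_i}{b_k}|^{2}=d$, i.e.\ quasi-specialness of red comes out for free. So the fix is only to promote the normalisation convention from ``constants to be carried through equation~\eqref{eq:sp}'' to an explicit standing assumption (as in the paper's usage of the theorem); with that stated, your argument goes through.
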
 
\paragraph{Character group.}We shall now see that the basis states copyable by  $\tinycomult[reddot]$  form the character group, which has binary operator $\tinymult$. 
Let $\chi$ be the change of basis linear map that maps the red basis to the black basis up to a normalization factor, as follows:
\begin{equation}\label{eq:addhad}
\begin{aligned}\begin{tikzpicture}
   \node(r)[reddot]{};
   \node(b) at(0,0.5){};
   \draw[string] (r.center) to (b.center);
   \draw[string] (0,1) to (b.center);
   \draw[string,out=90,in=left] (-0.5,-1) to (r.center);
   \draw[string,out=90,in=right] (0.5,-1) to (r.center);
\end{tikzpicture}
\end{aligned} 
\quad = \quad \frac{1}{d}
\begin{aligned}\begin{tikzpicture}
   \node(r)[blackdot]{};
      \node(b)[morphism,wedge,hflip,scale=0.5] at(0,0.5){\large$\chi$};
   \node(b1)[morphism,wedge,scale=0.5] at(-0.4,-0.5){\large$\chi$};
   \node(b2)[morphism,wedge,scale=0.5] at(0.4,-0.5){\large$\chi$};
   \draw[string] (r.center) to (b.south);
   \draw[string] (b.north) to (0,1);
   \draw[string,out=90,in=270] (-0.4,-1) to (b1.south);
   \draw[string,out=90,in=270] (0.4,-1) to (b2.south);
   \draw[string,out=left,in=90] (r.center) to (b1.north);
   \draw[string,out=right,in=90] (r.center) to (b2.north);
\end{tikzpicture}
\end{aligned} 
\end{equation} 
Considering the discussion below Definition~\ref{def:had} and Theorem~\ref{thm:zeng} we can see that $\chi$ is a Hadamard . This Hadamard is the Fourier transform of the group and its rows, which are the copyable states of the red basis, are the irreducible characters.  Apart from the axioms for a Hadamard (see Definition~\ref{def:had}), it can easily be shown that the following equations hold which gives us the expected character theory.  \begin{align}\label{eq:chi}
\begin{aligned}\begin{tikzpicture}
   \node(r)[reddot]{};
   \node(b)[morphism,wedge,scale=0.5] at(0,0.5){\large$\chi$};
   \draw[string] (r.center) to (b.south);
   \draw[string] (0,1) to (b.north);
   \draw[string,out=90,in=left] (-0.5,-1) to (r.center);
   \draw[string,out=90,in=right] (0.5,-1) to (r.center);
\end{tikzpicture}
\end{aligned} 
\quad = \quad
\begin{aligned}\begin{tikzpicture}
   \node(r)[blackdot]{};
   \node(b1)[morphism,wedge,scale=0.5] at(-0.4,-0.5){\large$\chi$};
   \node(b2)[morphism,wedge,scale=0.5] at(0.4,-0.5){\large$\chi$};
   \draw[string] (r.center) to (0,1);
   \draw[string,out=90,in=270] (-0.4,-1) to (b1.south);
   \draw[string,out=90,in=270] (0.4,-1) to (b2.south);
   \draw[string,out=left,in=90] (r.center) to (b1.north);
   \draw[string,out=right,in=90] (r.center) to (b2.north);
\end{tikzpicture}
\end{aligned} 
\qquad & \qquad
\begin{pic}
\node(1)[] at (0,2){};
\node(2)[morphism,wedge,scale=0.5] at (0,1){\large$\chi$};
\node(3)[reddot] at (0,0){};
\draw (1.center) to (2.north);
\draw (2.south) to (3.center);
\end{pic}\quad = \quad
\begin{pic}
\node(1)[] at (0,2){};
\node(2) at (0,1){};
\node(3)[blackdot] at (0,0){};
\draw (1.center) to (2.center);
\draw (2.center) to (3.center);
\end{pic}
\qquad & \qquad
\begin{pic}[yscale=-1]
\node(1)[reddot] at (0,2){};
\node(2)[morphism,wedge,scale=0.5,hflip] at (0,1){\large$\chi$};
\node(3) at (0,0){};
\draw (1.center) to (2.south);
\draw (2.north) to (3.center);
\end{pic}\quad = \quad
\begin{pic}[yscale=1]
\node(1)[] at (0,2){};
\node(2) at (0,1){};
\node(3)[blackdot] at (0,0){};
\draw (1.center) to (2.center);
\draw (2.center) to (3.center);
\end{pic}
\end{align}
Let $\chi_i(x):=\bra i \chi \ket x$ so $\chi_i (x)$ is the $i$th irreducible character applied to an element of the group $x$. The left hand equation  is then equivalent to; for all \mbox{$i,x,y \in \range{0}, \chi_i(x+y)=\chi_i(x)\chi_i(y)$} which is the expected property of a character.
For a  more detailed discussion of the above please refer to the 2015 paper by Gogioso and Zeng~\cite{stefwill}. 
We summarize the results of this subsection in the following theorem:
\begin{theorem}~\cite{stefwill}
Given a $d$-dimensional Hilbert space with a $\dagger$-qSCFA $\tinymult[reddot]$ and a $\dfrob$ $\tinymult$ the following are equivalent:
\begin{itemize}
\item The copyable states of $\tinymult$ form an abelian complex group algebra under the linearly extended binary operator $\tinymult[reddot]$;
\item The algebras $\tinymult$ and $\tinymult[reddot]$ form a strongly complementary bialgebra and $\tinymult[reddot]$ is $\tinydot$\hspace{-6pt} -real.
\end{itemize}
\end{theorem}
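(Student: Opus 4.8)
The plan is to prove the two implications separately. Throughout I would use the fact (\cite{spiderONB}, recalled above) that in $\cat{FHilb}$ a $\dfrob$ is the same data as a finite orthonormal basis, so that the copyable states of $\tinymult$ form a genuine finite set $X$, and likewise $\tinymult[reddot]$ determines an orthogonal ``red basis''; restricting diagrammatic equations to these basis states is what lets one move back and forth between the algebraic and the diagrammatic pictures.

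\textbf{First bullet implies second.} Assume the linear extension $\tinymult[reddot]$ of an abelian group operation on $X$ is a $\dagger$-qSCFA. First I would note that ``$\tinymult[reddot]$ carries pairs of black states to black states'' is exactly the assertion that $\tinymult[reddot]$ is a comonoid homomorphism for $\tinycomult$ in the sense of Definition~\ref{def:cohom}, i.e.\ the first of the bialgebra equations of Definition~\ref{def:bialg}; the two unit equations hold because the red unit is the black state $\tinystate[black,state,label={[label distance=-0.08cm]330:\tiny{$0$}}]$ (the group identity) and $\tinymult$ copies black states, and the remaining scalar equation is just the normalisation of that state. So $(\tinymult,\tinymult[reddot])$ is a bialgebra. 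For $\tinydot$\hspace{-6pt} -realness: on black basis vectors $\tinymult[reddot]$ acts by $\ket i\otimes\ket j\mapsto\ket{i\cdot j}$, so its matrix in the black basis has entries in $\{0,1\}$ and is fixed by black complex conjugation, which is precisely the $\tinydot$\hspace{-6pt} -real equation. Finally, since $\tinymult[reddot]$ is (a rescaling of) the convolution algebra of the abelian group $X$, the states copyable by $\tinycomult[reddot]$ -- the red basis -- are its characters, which are mutually unbiased to $X$ by classical character orthogonality; strong complementarity then follows from Theorem~\ref{thm:zeng}.

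\textbf{Second bullet implies first.} Conversely, suppose $(\tinymult,\tinymult[reddot])$ is a strongly complementary bialgebra with $\tinymult[reddot]$ a $\tinydot$\hspace{-6pt} -real $\dagger$-qSCFA. Composing the comonoid-homomorphism law with two black states shows that $\tinymult[reddot]$ sends $\ket i\otimes\ket j$ to a copyable state, and the counit condition forces it to be a genuine black basis vector; $\tinydot$\hspace{-6pt} -realness guarantees no spurious phase, so $\tinymult[reddot]$ restricts to an honest binary operation $\cdot\colon X\times X\to X$. Associativity, commutativity and unitality of $\cdot$ are then read off from the corresponding $\dagger$-qSCFA axioms for $\tinymult[reddot]$ evaluated on black states. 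For inverses I would invoke strong complementarity: by \cite{rosskev} it upgrades the bialgebra to a Hopf algebra, whose antipode $S$ is an algebra/coalgebra map, hence restricts to a map $X\to X$; the Hopf axiom, restricted to black states, reads $S(x)\cdot x=e$, so every $x\in X$ is invertible. Thus $(X,\cdot,e)$ is an abelian group and $\mathcal H$ with $\tinymult[reddot]$ is its complex group algebra.

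The hard part will be the reverse direction, and specifically the two places where a purely diagrammatic hypothesis must be converted into a set-level statement about $X$: showing that $\tinydot$\hspace{-6pt} -realness together with the comonoid-homomorphism law really does pin the red multiplication down to a plain function $X\times X\to X$ (without realness one could only conclude it is a function up to phases, i.e.\ a projective datum), and showing that the antipode supplied by strong complementarity actually lands in $X$ and acts there as group inversion. Everything else -- matching each bialgebra and Frobenius equation to the closure, associativity, commutativity and unit axioms -- is routine bookkeeping once these two bridges are in place; the full argument is that of Gogioso and Zeng~\cite{stefwill}.
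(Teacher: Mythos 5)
The paper never actually proves this theorem: it is quoted from Gogioso and Zeng~\cite{stefwill}, and the subsection preceding it is only an exposition of the ingredients (closure as the bialgebra law of Definition~\ref{def:bialg}, inverses as strong complementarity via the Hopf law, $\chi$ as the character table). Your outline reproduces exactly that correspondence --- Frobenius axioms for associativity, commutativity and unit, bialgebra for closure, strong complementarity for inverses, realness for the $0/1$ structure constants --- so in substance you are following the same route as the paper, just in more detail, and your identification of the two nontrivial bridges in the converse direction is accurate.

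Two repairs are worth making. First, in the forward direction you obtain strong complementarity by arguing that the characters are mutually unbiased to the black basis and then invoking Theorem~\ref{thm:zeng}; but the direction ``mutually unbiased $\Rightarrow$ strongly complementary'' is the delicate half of that biconditional (unbiasedness alone does not force the bialgebra-type unitarity conditions; generic Hadamards in dimension $4$ already illustrate this), and you do not need it: with the group structure in hand, the two composites in the definition of strong complementarity act on black basis states by $\ket{g}\otimes\ket{h}\mapsto\ket{g}\otimes\ket{g\cdot h}$ (and the mirrored version), i.e.\ they are permutation matrices and hence unitary, which proves strong complementarity directly. Second, in the converse direction the phase issue you assign to $\tinydot$\hspace{-6pt}-realness does not arise: the first bialgebra equation makes the red product of two black states black-copyable, copyable states of a $\dfrob$ admit no phase freedom (a copyable $\lambda\ket{k}$ satisfies $\lambda=\lambda^{2}$), and the counit equation excludes the zero vector; so closure to an honest function $X\times X\to X$ follows from the bialgebra laws alone, with realness then automatic in $\cat{FHilb}$ rather than the hypothesis doing that work. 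Your antipode argument for inverses is fine, since the antipode of a Hopf algebra preserves group-like (copyable) elements and $S(x)\cdot x=\varepsilon(x)\,e=e$. With these adjustments the sketch is a correct proof outline, and, like the paper, it ultimately defers the remaining bookkeeping to~\cite{stefwill}.
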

\subsection{Finite fields}
We now define a finite field.
\begin{definition}[Finite field]\label{def:algff}~\cite{lidl}
A finite set $A$ together with closed binary operators $\tinydot[reddot]$ and $\tinydot[yellowdot]$ is a \textit{finite field} if:
\begin{itemize}
\item \textbf{Addition}: The operator $\tinydot[reddot]$ is an abelian group on the set $A$ with unit $0\in A$;
\item \textbf{Multiplication}: The operator $\tinydot[yellowdot]$ is an abelian group on the subset $A':=A\backslash \{0\}$;
\item \textbf{Distributivity}: For all $a,b,c \in A$, $a\tinydot[yellowdot](b\tinydot[reddot]c)=(a\tinydot[reddot]b)\tinydot[yellowdot](a\tinydot[reddot]c)$.
\end{itemize}
\end{definition}
Finite fields only exist in prime power dimensions~\cite{lidl} so we take $d=p^n$  for some prime $p$ and $n \in \mathbb N$, and as usual take the black wires to represent the  Hilbert space $\mathcal H\cong \C^d$. 
\paragraph{Addition.}In formulating  a diagrammatic notation for finite fields as algebraic structures defined over Hilbert spaces we start with an abelian group algebra representing addition. We therefore take $\tinymult$ and $\tinymult[reddot]$ to be a pair of  strongly complementary \mbox{$\dagger$-commutative} Frobenius bialgebras with black special, red quasi-special and red $\tinydot$\hspace{-6pt} -real . As seen in the last subsection $\tinycomult[reddot]$ copies the additive characters which form the columns of a Fourier Hadamard matrix on $\mathcal H$ which we will again call $\chi$ with the formal definition given by equation~\eqref{eq:addhad}. 
\paragraph{Multiplication.}The multiplication of a finite field also forms an abelian group on the non-zero elements. We introduce another Hilbert space, $\mathcal H' \cong \C^{d-1}$ which we represent as green wires, and a $\dagger$SCFA $\tinymultgr$. We also introduce linear maps to relate the green and black Hilbert spaces:
\begin{align}\label{eq:projj}
p:= 
\begin{aligned}\begin{tikzpicture}[scale=0.65]
           \node(3)[whitedot,projdot] at (0,0){};
           \node (1) at (0,-1) {}; 
           \node (2) at (0,1) {}; 
           \draw[green][string,line width=1pt] (2.center) to (3.center);
           \draw[string] (3.center) to (1.center);           
\end{tikzpicture}
\end{aligned} 
\qquad & \qquad \qquad \iota:= 
\begin{aligned}\begin{tikzpicture}[scale=0.65]
           \node(3)[whitedot,projdot] at (0,0){};
           \node (1) at (0,-1) {}; 
           \node (2) at (0,1) {}; 
           \draw[string] (2.center) to (3.center);
           \draw[green][string,line width=1pt] (3.center) to (1.center);           
\end{tikzpicture}
\end{aligned} 
\end{align}
We require the following relationships between $p$, $\iota$, $\tinymultgr$, $\tinymult$ and $\tinyunit[reddot]$:
 \begin{align}\label{eq:projprop}
 \begin{aligned}\begin{tikzpicture}[scale=0.65]
           \node(4) at (0,-2){}; 
           \node(3)[whitedot,projdot] at (0,0){};
           \node (1)[whitedot,projdot] at (0,-1) {}; 
           \node (2) at (0,1) {}; 
           \draw[green][string,line width=1pt] (2.center) to (3.center);
           \draw[green][string, line width=1pt] (4.center) to (1.center); 
           \draw[string] (3.center) to (1.center);           
\end{tikzpicture}
\end{aligned} 
 = 
\begin{aligned}\begin{tikzpicture}[scale=0.65]
           \node(4) at (0,-2){};           
           \node (2) at (0,1) {}; 
           \draw[green][string, line width=1pt] (2.center) to (4.center);
               \end{tikzpicture}
\end{aligned} 
\qquad & \quad
\begin{aligned}\begin{tikzpicture}[scale=0.65]
           \node(3)[whitedot,projdot] at (0,0){};
           \node[blackdot] (1) at (0,-1.5) {}; 
           \node (2) at (0,1.5) {}; 
           \draw[green][string, line width=1pt] (2.center) to (3.center);
           \draw[string] (3.center) to (1.center);           
\end{tikzpicture}
\end{aligned} 
 = 
\begin{aligned}\begin{tikzpicture}[scale=0.65]
           \node(3) at (0,0){};
           \node[greendot] (1) at (0,-1.5) {}; 
           \node (2) at (0,1.5) {}; 
           \draw[green][string, line width=1pt] (2.center) to (3.center);
           \draw[green][string, line width=1pt] (3.center) to (1.center);           
\end{tikzpicture}
\end{aligned} 
\qquad & \quad
\begin{aligned}\begin{tikzpicture}[scale=0.65]
\node(b)[blackdot] at (0,0){};
\node(1)[whitedot,projdot] at (0,0.5){};
\node(2)[whitedot,projdot] at (-0.5,-0.5){};
\node(3)[whitedot,projdot] at (0.5,-0.5){};
\node(i1) at (-0.5,-1.5){};
\node(i2) at (0.5,-1.5){};
\node(o) at (0,1.5){};
\draw[string,out=90, in=left] (2.center) to (b.center);
\draw[string,out=90, in=right] (3.center) to (b.center);
\draw[string,out=270, in=90] (1.center) to (b.center);
\draw[green][string, line width=1pt,out=90,in=270] (1.center) to (o.center);
\draw[green][string, line width=1pt,in=90,out=270] (2.center) to (i1.center);
\draw[green][string, line width=1pt,in=90,out=270] (3.center) to (i2.center);
\end{tikzpicture}\end{aligned}
= 
\begin{aligned}\begin{tikzpicture}[scale=0.65]
\node(b)[greendot] at (0,0){};
\node(1) at (0,0.5){};
\node(2) at (-0.5,-0.5){};
\node(3) at (0.5,-0.5){};
\node(i1) at (-0.5,-1.5){};
\node(i2) at (0.5,-1.5){};
\node(o) at (0,1.5){};
\draw[green][string, line width=1pt,out=90, in=left] (2.center) to (b.center);
\draw[green][string, line width=1pt,out=90, in=right] (3.center) to (b.center);
\draw[green][string, line width=1pt,out=270, in=90] (1.center) to (b.center);
\draw[green][string, line width=1pt,out=90,in=270] (1.center) to (o.center);
\draw[green][string, line width=1pt,in=90,out=270] (2.center) to (i1.center);
\draw[green][string, line width=1pt,in=90,out=270] (3.center) to (i2.center);
\end{tikzpicture}\end{aligned}
\qquad & \qquad
\begin{aligned}\begin{tikzpicture}[scale=0.65]
\node(b)[greendot] at (0,0){};
\node(1)[whitedot,projdot] at (0,0.5){};
\node(2)[whitedot,projdot] at (-0.5,-0.5){};
\node(3)[whitedot,projdot] at (0.5,-0.5){};
\node(i1) at (-0.5,-1.5){};
\node(i2) at (0.5,-1.5){};
\node(o) at (0,1.5){};
\draw[green][string, line width=1pt,out=90, in=left] (2.center) to (b.center);
\draw[green][string, line width=1pt,out=90, in=right] (3.center) to (b.center);
\draw[green][string, line width=1pt,out=270, in=90] (1.center) to (b.center);
\draw[string,out=90,in=270] (1.center) to (o.center);
\draw[string,in=90,out=270] (2.center) to (i1.center);
\draw[string,in=90,out=270] (3.center) to (i2.center);
\end{tikzpicture}\end{aligned}
 = 
\begin{aligned}\begin{tikzpicture}[scale=0.65]
\node(b)[blackdot] at (0,0){};
\node(1) at (0,0.5){};
\node(2) at (-0.5,-0.5){};
\node(3) at (0.5,-0.5){};
\node(i1) at (-0.5,-1.5){};
\node(i2) at (0.5,-1.5){};
\node(o) at (0,1.5){};
\draw[string,out=90, in=left] (2.center) to (b.center);
\draw[string,out=90, in=right] (3.center) to (b.center);
\draw[string,out=270, in=90] (1.center) to (b.center);
\draw[string,out=90,in=270] (1.center) to (o.center);
\draw[string,in=90,out=270] (2.center) to (i1.center);
\draw[string,in=90,out=270] (3.center) to (i2.center);
\end{tikzpicture}\end{aligned}
 - 
\begin{aligned}\begin{tikzpicture}[scale=0.65]
\node(b)[reddot] at (0,0.25){};
\node(1) at (0,0.5){};
\node[reddot](2) at (-0.5,-0.25){};
\node[reddot](3) at (0.5,-0.25){};
\node(i1) at (-0.5,-1.5){};
\node(i2) at (0.5,-1.5){};
\node(o) at (0,1.5){};
\draw[string,out=270, in=90] (1.center) to (b.center);
\draw[string,out=90,in=270] (1.center) to (o.center);
\draw[string,in=90,out=270] (2.center) to (i1.center);
\draw[string,in=90,out=270] (3.center) to (i2.center);
\end{tikzpicture}\end{aligned}
\end{align}     
We will assume that the black basis has been ordered such that $\tinystate[black,state,label={[label distance=-0.08cm]330:\tiny{$0$}}]:=\tinyunit[reddot]$. This makes $p$ and $\iota$ an isomorphism between $\mathcal H'$ and the $d-1$-dimensional subspace of $\mathcal H$ spanned by the non-zero black states. This isomorphism takes the green basis states to the non-zero black states. The Hilbert space $\mathcal H'$ is the analogue of the set $A'$ in Definition~\ref{def:algff}. 

The following lemma shows that the linear map given by $\iota \circ p$ is  equal to the projector defined by equation~\eqref{eq:proj}, we will make use of this projector.  
\begin{lemma}\label{lem:a}
The following equation holds.
\begin{align}\label{eq:proj2}
 \begin{aligned}\begin{tikzpicture}[scale=0.65]
           \node(4) at (0,-2){}; 
           \node(3)[whitedot,projdot] at (0,0){};
           \node (1)[whitedot,projdot] at (0,-1) {}; 
           \node (2) at (0,1) {}; 
           \draw[string] (2.center) to (3.center);
           \draw[string] (4.center) to (1.center); 
           \draw[green][string,line width=1pt] (3.center) to (1.center);           
\end{tikzpicture}
\end{aligned} 
\quad = \quad
\begin{aligned}\begin{tikzpicture}[scale=0.65]
           \node(4) at (0,-2){};           
           \node (2) at (0,1) {}; 
           \draw[string] (2.center) to (4.center);
               \end{tikzpicture}
\end{aligned} 
\quad - \quad
 \begin{aligned}\begin{tikzpicture}[scale=0.65]
           \node(4) at (0,-2){}; 
           \node[reddot](3) at (0,0){};
           \node[reddot] (1) at (0,-1) {}; 
           \node (2) at (0,1) {}; 
           \draw[string] (2.center) to (3.center);
           \draw[string] (4.center) to (1.center);
\end{tikzpicture}
\end{aligned} 
\end{align}
\end{lemma}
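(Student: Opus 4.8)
The plan is to read Lemma~\ref{lem:a} directly off the fourth equation of~\eqref{eq:projprop} by feeding the black unit $\tinyunit$ into one of its two lower legs. Recall that that equation expresses $\iota\circ\tinymultgr\circ(p\otimes p)$ as the black multiplication $\tinymult$ minus the (disconnected) term $\tinyunit[reddot]\circ(\tinycounit[reddot]\otimes\tinycounit[reddot])$. I would precompose both sides with $\tinyunit$ plugged into, say, the second black input; the claim is that this turns the left-hand side into $\iota\circ p$ and the right-hand side into the projector of~\eqref{eq:proj}, which is exactly the content of~\eqref{eq:proj2}.

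In more detail: on the left-hand side, plugging $\tinyunit$ into the second black leg replaces the corresponding copy of $p$ by $p\circ\tinyunit$, which by the second equation of~\eqref{eq:projprop} is the unit of the green algebra $\tinymultgr$; since $\tinymultgr$ is a commutative, hence unital, Frobenius algebra, multiplication by its unit is the identity, so the left-hand side collapses to $\iota\circ p$ --- precisely the left-hand side of~\eqref{eq:proj2}. On the right-hand side, the black multiplication $\tinymult$ with $\tinyunit$ in one leg becomes the bare identity wire by unitality of $\tinymult$, while the remaining term becomes $\tinyunit[reddot]\circ\tinycounit[reddot]$ scaled by the number $\tinycounit[reddot]\circ\tinyunit$, which equals $1$ by the normalization built into the bialgebra axioms~\eqref{eq:bialg}. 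Finally, because the basis was ordered so that $\tinyunit[reddot]=\tinystatelab[0]$ and the red algebra is a $\dagger$-Frobenius algebra --- so its counit is the adjoint of its unit, \ie the black effect $\tinyeffectlab[0]$ --- the surviving term $\tinyunit[reddot]\circ\tinycounit[reddot]$ is exactly $\ketbra 0 0$ on the black basis. Hence the right-hand side equals the identity wire minus $\ketbra 0 0$, which is the definition~\eqref{eq:proj} of the projector, and~\eqref{eq:proj2} follows.

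I do not expect a genuine obstacle here; the argument is essentially bookkeeping in a strongly complementary bialgebra. The two points needing a little care are which leg receives $\tinyunit$ (harmless, since both $\tinymult$ and $\tinymultgr$ are commutative) and the verification that the leftover scalar $\tinycounit[reddot]\circ\tinyunit$ is $1$ --- this is the one place where the chosen ordering $\tinyunit[reddot]=\tinystatelab[0]$ and the bialgebra normalization really get used. An alternative would be to compose the third equation of~\eqref{eq:projprop} with $\iota$ and use $p\circ\iota=\id[\mathcal H']$ from its first equation, together with the idempotency of $\iota\circ p$; but that route reduces to the same red-counit computation, so feeding $\tinyunit$ into the fourth equation of~\eqref{eq:projprop} is the most economical path.
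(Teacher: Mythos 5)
Your proposal is correct and follows essentially the same route as the paper: the paper also proves Lemma~\ref{lem:a} by plugging the black unit into one input leg of the fourth equation of~\eqref{eq:projprop}, using the second equation of~\eqref{eq:projprop} together with green and black unitality~\eqref{eq:un} to collapse the left side to $\iota\circ p$, and the bialgebra normalization~\eqref{eq:bialg} to reduce the subtracted term to the red unit composed with the red counit, i.e.\ the projector of~\eqref{eq:proj}. No gaps; the bookkeeping you flag (choice of leg, and the scalar given by the red counit on the black unit) is handled identically in the paper.
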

\begin{proof}
By the $4$th equation of~\eqref{eq:projprop}:
\begin{align*}
\begin{aligned}\begin{tikzpicture}[scale=0.65]
\node(b)[greendot] at (0,0){};
\node(1)[whitedot,projdot] at (0,0.5){};
\node(2)[whitedot,projdot] at (-0.5,-0.5){};
\node(3)[whitedot,projdot] at (0.5,-0.5){};
\node(i1) at (-0.5,-1.5){};
\node(i2) at (0.5,-1.5){};
\node(o) at (0,1.5){};
\draw[green][string,line width=1pt,out=90, in=left] (2.center) to (b.center);
\draw[green][string,line width=1pt,out=90, in=right] (3.center) to (b.center);
\draw[green][string,line width=1pt,out=270, in=90] (1.center) to (b.center);
\draw[string,out=90,in=270] (1.center) to (o.center);
\draw[string,in=90,out=270] (2.center) to (i1.center);
\draw[string,in=90,out=270] (3.center) to (i2.center);
\end{tikzpicture}\end{aligned}
\quad =& \quad
\begin{aligned}\begin{tikzpicture}[scale=0.65]
\node(b)[blackdot] at (0,0){};
\node(1) at (0,0.5){};
\node(2) at (-0.5,-0.5){};
\node(3) at (0.5,-0.5){};
\node(i1) at (-0.5,-1.5){};
\node(i2) at (0.5,-1.5){};
\node(o) at (0,1.5){};
\draw[string,out=90, in=left] (2.center) to (b.center);
\draw[string,out=90, in=right] (3.center) to (b.center);
\draw[string,out=270, in=90] (1.center) to (b.center);
\draw[string,out=90,in=270] (1.center) to (o.center);
\draw[string,in=90,out=270] (2.center) to (i1.center);
\draw[string,in=90,out=270] (3.center) to (i2.center);
\end{tikzpicture}\end{aligned}
\quad - \quad
\begin{aligned}\begin{tikzpicture}[scale=0.65]
\node(b)[reddot] at (0,0.25){};
\node(1) at (0,0.5){};
\node[reddot](2) at (-0.5,-0.25){};
\node[reddot](3) at (0.5,-0.25){};
\node(i1) at (-0.5,-1.5){};
\node(i2) at (0.5,-1.5){};
\node(o) at (0,1.5){};
\draw[string,out=270, in=90] (1.center) to (b.center);
\draw[string,out=90,in=270] (1.center) to (o.center);
\draw[string,in=90,out=270] (2.center) to (i1.center);
\draw[string,in=90,out=270] (3.center) to (i2.center);
\end{tikzpicture}\end{aligned}
 \quad \Rightarrow
\begin{aligned}\begin{tikzpicture}[scale=0.65]
\node(b)[greendot] at (0,0){};
\node(1)[whitedot,projdot] at (0,0.5){};
\node(2)[whitedot,projdot] at (-0.5,-0.5){};
\node(3)[whitedot,projdot] at (0.5,-0.5){};
\node(i1) at (-0.5,-1.5){};
\node[blackdot](i2) at (0.5,-1.5){};
\node(o) at (0,1.5){};
\draw[green][string,line width=1pt,out=90, in=left] (2.center) to (b.center);
\draw[green][string,line width=1pt,out=90, in=right] (3.center) to (b.center);
\draw[green][string,line width=1pt,out=270, in=90] (1.center) to (b.center);
\draw[string,out=90,in=270] (1.center) to (o.center);
\draw[string,in=90,out=270] (2.center) to (i1.center);
\draw[string,in=90,out=270] (3.center) to (i2.center);
\end{tikzpicture}\end{aligned}
\quad = \quad
\begin{aligned}\begin{tikzpicture}[scale=0.65]
\node(b)[blackdot] at (0,0){};
\node(1) at (0,0.5){};
\node(2) at (-0.5,-0.5){};
\node(3) at (0.5,-0.5){};
\node(i1) at (-0.5,-1.5){};
\node[blackdot](i2) at (0.5,-1.5){};
\node(o) at (0,1.5){};
\draw[string,out=90, in=left] (2.center) to (b.center);
\draw[string,out=90, in=right] (3.center) to (b.center);
\draw[string,out=270, in=90] (1.center) to (b.center);
\draw[string,out=90,in=270] (1.center) to (o.center);
\draw[string,in=90,out=270] (2.center) to (i1.center);
\draw[string,in=90,out=270] (3.center) to (i2.center);
\end{tikzpicture}\end{aligned}
\quad - \quad
\begin{aligned}\begin{tikzpicture}[scale=0.65]
\node(b)[reddot] at (0,0.25){};
\node(1) at (0,0.5){};
\node[reddot](2) at (-0.5,-0.25){};
\node[reddot](3) at (0.5,-0.25){};
\node(i1) at (-0.5,-1.5){};
\node[blackdot](i2) at (0.5,-1.5){};
\node(o) at (0,1.5){};
\draw[string,out=270, in=90] (1.center) to (b.center);
\draw[string,out=90,in=270] (1.center) to (o.center);
\draw[string,in=90,out=270] (2.center) to (i1.center);
\draw[string,in=90,out=270] (3.center) to (i2.center);
\end{tikzpicture}\end{aligned}
\\
\super{\eqref{eq:un}\eqref{eq:bialg}\eqref{eq:projprop}}\Rightarrow
\begin{aligned}\begin{tikzpicture}[scale=0.65]
\node(b)[greendot] at (0,0){};
\node(1)[whitedot,projdot] at (0,0.5){};
\node(2)[whitedot,projdot] at (-0.5,-0.5){};
\node(3) at (0.5,-0.5){};
\node(i1) at (-0.5,-1.5){};
\node[greendot](i2) at (0.5,-1.5){};
\node(o) at (0,1.5){};
\draw[green][string,line width=1pt,out=90, in=left] (2.center) to (b.center);
\draw[green][string,line width=1pt,out=90, in=right] (3.center) to (b.center);
\draw[green][string,line width=1pt,out=270, in=90] (1.center) to (b.center);
\draw[string,out=90,in=270] (1.center) to (o.center);
\draw[string,in=90,out=270] (2.center) to (i1.center);
\draw[green][string,line width=1pt,in=90,out=270] (3.center) to (i2.center);
\end{tikzpicture}\end{aligned}
\quad \super{\eqref{eq:un}}=& \quad
\begin{pic}[scale=0.65]
           \node(4) at (0,-2){}; 
           \node(3)[whitedot,projdot] at (0,0){};
           \node (1)[whitedot,projdot] at (0,-1) {}; 
           \node (2) at (0,1) {}; 
           \draw[string] (2.center) to (3.center);
           \draw[string] (4.center) to (1.center); 
           \draw[green][string,line width=1pt] (3.center) to (1.center);           \end{pic}\quad = 
\begin{aligned}\begin{tikzpicture}[scale=0.65]
\node(b) at (0,0){};
\node(1) at (0,0.5){};
\node(2) at (-0.5,-0.5){};
\node(3) at (0.5,-0.5){};
\node(i1) at (0,-1.5){};
\node(o) at (0,1.5){};
\draw[string,out=270, in=90] (1.center) to (b.center);
\draw[string,out=90,in=270] (1.center) to (o.center);
\draw[string] (b.center) to (i1.center);
\end{tikzpicture}\end{aligned}
\quad - \quad
\begin{aligned}\begin{tikzpicture}[scale=0.65]
\node(b)[reddot] at (0,0.25){};
\node(1) at (0,0.5){};
\node[reddot](2) at (0,-0.25){};
\node(i1) at (0,-1.5){};
\node(o) at (0,1.5){};
\draw[string,out=270, in=90] (1.center) to (b.center);
\draw[string,out=90,in=270] (1.center) to (o.center);
\draw[string,in=90,out=270] (2.center) to (i1.center);
\end{tikzpicture}\end{aligned}
\end{align*} 
\end{proof}
In light of this lemma we will again use $*$ to denote this projector.
\begin{equation*}
\begin{aligned}
\begin{tikzpicture}
\node(1) at (0,0.75){};
\node(3) at (0,-0.75){};
\node(b1)[whitedot,scale=1,inner sep=0.25pt] at (0,0){$\projs$};
\draw (b1.center) to (1.center);
\draw (b1.center) to (3.center);
\end{tikzpicture}\end{aligned}
\quad := \begin{aligned}\begin{tikzpicture}[scale=0.5]
           \node(4) at (0,-2){}; 
           \node(3)[whitedot,projdot] at (0,0){};
           \node (1)[whitedot,projdot] at (0,-1) {}; 
           \node (2) at (0,1) {}; 
           \draw[string] (2.center) to (3.center);
           \draw[string] (4.center) to (1.center); 
           \draw[green][string,line width=1pt] (3.center) to (1.center);           
\end{tikzpicture}
\end{aligned} 
\end{equation*}
We now introduce the multiplication acting on the subspace $\mathcal H'$ which we represent as $\tinymultgr[yellowdot]$. We require that $\tinymultgr[yellowdot]$ is a $\dagger$-qSCFA, and that $\tinymultgr[yellowdot]$ and $\tinymultgr$ are a strongly complementary bialgebra. Thus $\tinymultgr[yellowdot]$ is an abelian group on the green basis states. This also tells us that the yellow unit is a green basis state and thus isomorphic to a black basis state not equal to the red unit. We corrupt notation slightly to represent this state as $\tinyunit[yellowdot]$. We denote the multiplicative character Fourier Hadamard matrix as $\psi$ formally defined as follows:
\begin{equation}\label{eq:psi}
\begin{aligned}\begin{tikzpicture}
   \node(r)[yellowdot]{};
   \node(b) at(0,0.5){};
   \draw[string,green, line width=1pt] (r.center) to (b.center);
   \draw[string,green, line width=1pt] (0,1) to (b.center);
   \draw[string,green, line width=1pt,out=90,in=left] (-0.5,-1) to (r.center);
   \draw[string,green, line width=1pt,out=90,in=right] (0.5,-1) to (r.center);
\end{tikzpicture}
\end{aligned} 
\quad = \quad \frac{1}{d-1}
\begin{aligned}\begin{tikzpicture}
   \node(r)[blackdot]{};
      \node(b)[morphism,wedge,hflip,scale=0.5] at(0,0.5){\large$\psi$};
   \node(b1)[morphism,wedge,scale=0.5] at(-0.4,-0.5){\large$\psi$};
   \node(b2)[morphism,wedge,scale=0.5] at(0.4,-0.5){\large$\psi$};
   \draw[string,green, line width=1pt] (r.center) to (b.south);
   \draw[string,green, line width=1pt] (b.north) to (0,1);
   \draw[string,green, line width=1pt,out=90,in=270] (-0.4,-1) to (b1.south);
   \draw[string,green, line width=1pt,out=90,in=270] (0.4,-1) to (b2.south);
   \draw[string,green, line width=1pt,out=left,in=90] (r.center) to (b1.north);
   \draw[string,green, line width=1pt,out=right,in=90] (r.center) to (b2.north);
\end{tikzpicture}
\end{aligned} 
\end{equation}  
We now  introduce the multiplication on the whole Hilbert space $\mathcal H$, which we denote $\tinymult[yellowdot]$. We define $\tinymult[yellowdot]$ and $\tinyunit[yellowdot]$ as follows:
\begin{equation}\label{eq:yellowdef}
\begin{aligned}\begin{tikzpicture}[scale=0.65]
\node(b)[yellowdot] at (0,0){};
\node(1) at (0,0.5){};
\node(2) at (-0.5,-0.5){};
\node(3) at (0.5,-0.5){};
\node(i1) at (-0.5,-1.5){};
\node(i2) at (0.5,-1.5){};
\node(o) at (0,1.5){};
\draw[string,out=90, in=left] (2.center) to (b.center);
\draw[string,out=90, in=right] (3.center) to (b.center);
\draw[string,out=270, in=90] (1.center) to (b.center);
\draw[string,out=90,in=270] (1.center) to (o.center);
\draw[string,in=90,out=270] (2.center) to (i1.center);
\draw[string,in=90,out=270] (3.center) to (i2.center);
\end{tikzpicture}\end{aligned}
:= 
\begin{aligned}\begin{tikzpicture}[scale=0.65]
\node(b)[yellowdot] at (0,0){};
\node(1)[whitedot,projdot] at (0,0.5){};
\node(2)[whitedot,projdot] at (-0.5,-0.5){};
\node(3)[whitedot,projdot] at (0.5,-0.5){};
\node(i1) at (-0.5,-1.5){};
\node(i2) at (0.5,-1.5){};
\node(o) at (0,1.5){};
\draw[green][string, line width=1pt,out=90, in=left] (2.center) to (b.center);
\draw[green][string, line width=1pt,out=90, in=right] (3.center) to (b.center);
\draw[green][string, line width=1pt,out=270, in=90] (1.center) to (b.center);
\draw[string,out=90,in=270] (1.center) to (o.center);
\draw[string,in=90,out=270] (2.center) to (i1.center);
\draw[string,in=90,out=270] (3.center) to (i2.center);
\end{tikzpicture}\end{aligned}
\quad +\quad
\begin{aligned}\begin{tikzpicture}[scale=0.65]
\node(b)[reddot] at (0,0.25){};
\node(1) at (0,0.5){};
\node[reddot](2) at (-0.5,-0.25){};
\node[greendot](3) at (0.5,-0.25){};
\node(i1) at (-0.5,-1.5){};
\node(i2) at (0.5,-1.5){};
\node(o) at (0,1.5){};
\node(p)[whitedot,projdot] at (0.5,-1){};
\draw[string,out=270, in=90] (1.center) to (b.center);
\draw[string,out=90,in=270] (1.center) to (o.center);
\draw[string,in=90,out=270] (2.center) to (i1.center);
\draw[green][string, line width=1pt,in=90,out=270] (3.center) to (p.center);
\draw[string,in=90,out=270] (0.5,-1) to (i2.center);
\end{tikzpicture}\end{aligned}
\quad +\quad
\begin{aligned}\begin{tikzpicture}[scale=0.65]
\node(b)[reddot] at (0,0.25){};
\node(1) at (0,0.5){};
\node[greendot](2) at (-0.5,-0.25){};
\node[reddot](3) at (0.5,-0.25){};
\node(i1) at (-0.5,-1.5){};
\node(i2) at (0.5,-1.5){};
\node(o) at (0,1.5){};
\node(p)[whitedot,projdot] at (-0.5,-1){};
\draw[string,out=270, in=90] (1.center) to (b.center);
\draw[string,out=90,in=270] (1.center) to (o.center);
\draw[green][string, line width=1pt,in=90,out=270] (2.center) to (p.center);
\draw[string,in=90,out=270] (-0.5,-1) to (i1.center);
\draw[string,in=90,out=270] (3.center) to (i2.center);
\end{tikzpicture}\end{aligned}
\quad +\quad
\begin{aligned}\begin{tikzpicture}[scale=0.65]
\node(b)[reddot] at (0,0.25){};
\node(1) at (0,0.5){};
\node[reddot](2) at (-0.5,-0.25){};
\node[reddot](3) at (0.5,-0.25){};
\node(i1) at (-0.5,-1.5){};
\node(i2) at (0.5,-1.5){};
\node(o) at (0,1.5){};
\draw[string,out=270, in=90] (1.center) to (b.center);
\draw[string,out=90,in=270] (1.center) to (o.center);
\draw[string,in=90,out=270] (2.center) to (i1.center);
\draw[string,in=90,out=270] (3.center) to (i2.center);
\end{tikzpicture}\end{aligned}
\end{equation}
\begin{equation}\label{eq:yellowdef}
\begin{aligned}\begin{tikzpicture}[scale=0.65]
\node(b)[yellowdot] at (0,0){};
\node(1) at (0,0.5){};
\node(o) at (0,1.5){};
\draw[string,out=270, in=90] (1.center) to (b.center);
\draw[string,out=90,in=270] (1.center) to (o.center);\end{tikzpicture}\end{aligned}
:= 
\begin{aligned}\begin{tikzpicture}[scale=0.65]
\node(b)[greendot] at (0,0){};
\node(1)[whitedot,projdot] at (0,0.5){};
\node(o) at (0,1.5){};
\draw[green][string, line width=1pt,out=270, in=90] (1.center) to (b.center);
\draw[string,out=90,in=270] (1.center) to (o.center);
\end{tikzpicture}\end{aligned}\end{equation}
This ensures that $\tinymult[yellowdot]$ agrees with $\tinymultgr[yellowdot]$ on the subspace isomorphic to $\mathcal H'$.  The linear map $\tinymult[yellowdot]$ is associative, commutative and  unital with unit $\tinyunit[yellowdot]$,  as can easily be proven from the axioms and the definition of $\tinymult[yellowdot]$. We also require that $\tinymult[yellowdot]$ and $\tinymult$ form a bialgebra (this implies the requirement already made that $\tinymultgr[yellowdot]$ and $\tinymultgr$ form a bialgebra). Although $\tinymult[yellowdot]$ and $\tinymult$ are not strongly complementary following condition can easily be derived from the definitions:
 \begin{equation}\label{diag:last}
 \begin{pic}[yscale=0.5,xscale=2/3]
\node[blackdot] (1) at (0,0) {};
\node[blackdot] (2) at (0,2) {};
\node[yellowdot] (b) at (0.625,0.5) {};
\node[yellowdot] (a) at (0.625,1.5) {};
\node (a') at (1.25,-1) {};
\node (b') at (1.25,3) {};
\node (ga)[whitedot,projdot] at (0,-0.5) {};
\node (g1)[whitedot,projdot] at (0,2.5) {};
\draw[string,out=180,in=180] (1.center) to (2.center);
\draw[string,out=0,in=180] (1.center) to (b.center);
\draw[string,out=0,in=180] (2.center) to (a.center);
\draw[string] (a.center) to (b.center);
\draw[string] (0,-1) to (1.center);
\draw[string] (0,3) to (2.center);
\draw[green][string, line width=1pt] (0,-1) to (ga.center);
\draw[green][string, line width=1pt] (0,3) to (g1.center);
\draw[string,out=90,in=0] (a'.center) to (b.center);
\draw[string,out=270,in=0] (b'.center) to (a.center);
\end{pic}
\quad=\quad
\begin{pic}[yscale=0.5,xscale=2/3]
\draw[green][string, line width=1pt] (0,-1) to (0,3);
\node (a) at (0.75,-1) {};
\node (b) at (0.75,3) {};
\draw[string] (a.center) to (b.center);
\end{pic}
 \end{equation}

\paragraph{Distributivity.}Finally we relate $\tinymult[reddot]$ and $\tinymult[yellowdot]$ as follows.
\begin{definition}[Left distributivity]
Let $\tinymult[reddot]$ and $\tinymult[yellowdot]$ each form a bialgebra with $\tinymult$ Yellow \textit{left distributes over} red if the following equation holds:
\begin{equation}\label{2}
\begin{aligned}\begin{tikzpicture}[scale=1]
          \node (0a) at (-1,0.25) {};
          \node (0b) at (-0.5,0.25) {};
          \node (0c) at (0.5,0.25) {};

          \node[reddot] (1) at (0,1) {};
          \node[yellowdot] (2) at (-0.5,1.5) {};

          \node (5a) at (0-0.5,2) {};

          \draw[string, out=90, in =180] (0a.center) to (2.center);
          \draw[string, out=0, in=90] (2.center) to (1.center);
          \draw[string, out=0, in=90] (1.center) to (0c.center);
          \draw[string, out=180, in=90] (1.center) to (0b.center);

          \draw[string] (2.center) to (5a.center);
         
          \end{tikzpicture}\end{aligned} 
\quad = \quad
\begin{aligned}\begin{tikzpicture}[yscale=0.5,xscale=0.65]
           \node[blackdot] (2) at (0,0.9) {};
           \node (3) at (1.75,0.25) {};
           \node (6) at (0.75,0.25) {};
           \node[yellowdot] (4) at (0,2.1) {};
           \node[yellowdot] (5) at (1,2.1) {};
           \node[reddot](7) at (0.5,2.7) {};
           \draw[string] (0,0.25) to (2.center);
           \draw[string](7.center) to (0.5,3.5);

           \draw[string, in=180, out=180, looseness=1.2] (2.center) to (4.center);
           \draw[string, in=0, out=90, looseness=1] (3.center) to (5.center);
           \draw[string, in=180, out=right] (2.center) to (5.center);
           \draw[string, in=0, out=90] (6.center) to (4.center);
           \draw[string, out=0, in=90] (7.center) to (5.center);
           \draw[string, out=180, in=90] (7.center) to (4.center);
    \end{tikzpicture}\end{aligned}
\ignore{    \quad , \quad
\begin{aligned}\begin{tikzpicture}[xscale=-1]
          \node (0a) at (-1,0.25) {};
          \node (0b) at (-0.5,0.25) {};
          \node (0c) at (0.5,0.25) {};

          \node[reddot] (1) at (0,1) {};
          \node[yellowdot] (2) at (-0.5,1.5) {};

          \node (5a) at (0-0.5,2) {};

          \draw[string, out=90, in =180] (0a.center) to (2.center);
          \draw[string, out=0, in=90] (2.center) to (1.center);
          \draw[string, out=0, in=90] (1.center) to (0c.center);
          \draw[string, out=180, in=90] (1.center) to (0b.center);

          \draw[string] (2.center) to (5a.center);
         
          \end{tikzpicture}\end{aligned} 
\quad = \quad
\begin{aligned}\begin{tikzpicture}[yscale=0.5,xscale=-0.65]
           \node[blackdot] (2) at (0,0.9) {};
           \node (3) at (1.75,0.25) {};
           \node (6) at (0.75,0.25) {};
           \node[yellowdot] (4) at (0,2.1) {};
           \node[yellowdot] (5) at (1,2.1) {};
           \node[reddot] (7) at (0.5,2.7) {};
           \draw[string] (0,0.25) to (2.center);
           \draw[string](7.center) to (0.5,3.5);

           \draw[string, in=180, out=180, looseness=1.2] (2.center) to (4.center);
           \draw[string, in=0, out=90, looseness=1] (3.center) to (5.center);
           \draw[string, in=180, out=right] (2.center) to (5.center);
           \draw[string, in=0, out=90] (6.center) to (4.center);
           \draw[string, out=0, in=90] (7.center) to (5.center);
           \draw[string, out=180, in=90] (7.center) to (4.center);
    \end{tikzpicture}\end{aligned}
    }\end{equation} 
\end{definition} 
\textit{Right distributivity} is defined by reflecting both sides of equation~\eqref{2} in a vertical axis. We now show that right distributivity follows from left distributivity and commutativity.
\begin{lemma}
If $\tinymult[blackdot]$,$\tinymult[reddot]$ and $\tinymult[yellowdot]$ are commutative and yellow left distributes over red, then yellow right distributes over red; so the following equation holds: 
\begin{equation}\label{eq:rdist}
\begin{aligned}\begin{tikzpicture}[xscale=-1]
          \node (0a) at (-1,0.25) {};
          \node (0b) at (-0.5,0.25) {};
          \node (0c) at (0.5,0.25) {};

          \node[reddot] (1) at (0,1) {};
          \node[yellowdot] (2) at (-0.5,1.5) {};

          \node (5a) at (0-0.5,2) {};

          \draw[string, out=90, in =180] (0a.center) to (2.center);
          \draw[string, out=0, in=90] (2.center) to (1.center);
          \draw[string, out=0, in=90] (1.center) to (0c.center);
          \draw[string, out=180, in=90] (1.center) to (0b.center);

          \draw[string] (2.center) to (5a.center);
         
          \end{tikzpicture}\end{aligned} 
\quad = \quad
\begin{aligned}\begin{tikzpicture}[yscale=0.5,xscale=-0.65]
           \node[blackdot] (2) at (0,0.9) {};
           \node (3) at (1.75,0.25) {};
           \node (6) at (0.75,0.25) {};
           \node[yellowdot] (4) at (0,2.1) {};
           \node[yellowdot] (5) at (1,2.1) {};
           \node[reddot] (7) at (0.5,2.7) {};
           \draw[string] (0,0.25) to (2.center);
           \draw[string](7.center) to (0.5,3.5);

           \draw[string, in=180, out=180, looseness=1.2] (2.center) to (4.center);
           \draw[string, in=0, out=90, looseness=1] (3.center) to (5.center);
           \draw[string, in=180, out=right] (2.center) to (5.center);
           \draw[string, in=0, out=90] (6.center) to (4.center);
           \draw[string, out=0, in=90] (7.center) to (5.center);
           \draw[string, out=180, in=90] (7.center) to (4.center);
    \end{tikzpicture}\end{aligned}
\end{equation} \end{lemma}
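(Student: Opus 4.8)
The plan is to derive right distributivity purely by conjugating the left distributivity law with swap maps, exploiting the commutativity of all three algebras involved. Concretely, recall that commutativity of $\tinymult[yellowdot]$ means that post-composing the multiplication with a swap on the two inputs does nothing, and similarly for $\tinymult[reddot]$; also the comultiplications $\tinycomult[yellowdot]$ and $\tinycomult[reddot]$ are cocommutative (the adjoint statements). The right-distributivity diagram in equation~\eqref{eq:rdist} is literally the left-distributivity diagram of equation~\eqref{2} reflected in a vertical axis, and a vertical reflection of a tensor diagram in $\cat{FHilb}$ is complex conjugation, which does not obviously preserve the real-valued structure maps. So instead of appealing to reflection symmetry directly, I would rewrite the right-distributivity diagram as a sequence of the left-distributivity diagram pre- and post-composed with symmetry isomorphisms.

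First I would take the left-hand side of~\eqref{eq:rdist}, which is $\tinymult[reddot]$ applied to $\tinymult[yellowdot](a,b)$ and $\tinymult[yellowdot](a,c)$ but with the roles of the two ``outer'' inputs swapped relative to~\eqref{2}; more precisely, write out what linear map each side computes on basis states: the left side of~\eqref{2} sends $(a,b,c)\mapsto (a\cdot b)+(a\cdot c)$ (here I write $+$ for red and $\cdot$ for yellow), whereas~\eqref{eq:rdist} should send $(b,c,a)\mapsto (b\cdot a)+(c\cdot a)$. By commutativity of $\tinymult[yellowdot]$ we have $b\cdot a = a\cdot b$ and $c\cdot a = a\cdot c$, so $(b,c,a)\mapsto (a\cdot b)+(a\cdot c)$, which is exactly the left side of~\eqref{2} precomposed with the permutation of inputs $(b,c,a)\mapsto (a,b,c)$. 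Diagrammatically this precomposition is built from two swap maps (braidings) on the black wires, and the right-hand sides of~\eqref{2} and~\eqref{eq:rdist} are related the same way using cocommutativity of $\tinycomult[yellowdot]$ and $\tinycomult[reddot]$ together with the same input permutation. So the key steps are: (i) express~\eqref{eq:rdist} as~\eqref{2} conjugated by explicit symmetry isomorphisms; (ii) invoke~\eqref{2}; (iii) slide the symmetries through the right-hand side of~\eqref{2} using cocommutativity of the two comultiplications and commutativity of the black multiplication, collapsing them to the right-hand side of~\eqref{eq:rdist}.

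The bookkeeping is entirely naturality-of-symmetry moves plus the (co)commutativity equations~\eqref{eq:comm} and their adjoints, so there is no deep content; the only real care needed is to get the wire permutation right and to make sure the swaps introduced on the input side match the swaps that cocommutativity lets me introduce on the output side after applying~\eqref{2}. I expect the main obstacle to be purely presentational: drawing the intermediate diagrams so that it is visually evident which braidings are being applied where, since the statement is about a six-legged diagram and an ill-chosen layout makes the symmetry moves look like magic rather than routine. A secondary subtlety is that $\tinymult[yellowdot]$ on all of $\mathcal H$ is defined piecewise via~\eqref{eq:yellowdef}, but since we only use its commutativity and its bialgebra interaction with $\tinymult$ (both already established in the surrounding text), this causes no trouble. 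I would therefore present the proof as a short chain of diagrammatic equalities, annotating each step with \eqref{eq:comm}, its adjoint, naturality of the braiding, and \eqref{2}, and leave the trivial verifications to the reader.
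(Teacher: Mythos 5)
Your proposal takes essentially the same route as the paper: the paper's proof is precisely a chain of swap/commutativity moves (annotated with~\eqref{eq:comm} for the yellow, red and black structures) that rewrites the left-hand side of~\eqref{eq:rdist} as the left-distributivity diagram precomposed with a wire permutation, applies~\eqref{2}, and then undoes the swaps to land on the right-hand side. One small mislabeling in your step (iii): the facts needed to slide the symmetries through the right-hand side of~\eqref{2} are commutativity of the yellow and red \emph{multiplications} together with cocommutativity of the black \emph{comultiplication} (the copy map), rather than cocommutativity of yellow/red comultiplications plus commutativity of the black multiplication — but this does not affect the argument, since all of these follow (via adjoints for the black $\dagger$-SCFA) from the commutativity hypotheses of the lemma.
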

\begin{proof}
\begin{align*}
\begin{aligned}\begin{tikzpicture}[xscale=-1]
          \node (0a) at (-1.2,-0.25) {};
          \node (0b) at (-0.5,-0.25) {};
          \node (0c) at (0.5,-0.25) {};
          \node[reddot] (1) at (0,1) {};
          \node[yellowdot] (2) at (-0.5,1.5) {};
          \node (5a) at (-0.5,2.5) {};
          \draw[string, out=90, in =180] (0a.center) to (2.center);
          \draw[string, out=0, in=90] (2.center) to (1.center);
          \draw[string, out=0, in=90] (1.center) to (0c.center);
          \draw[string, out=180, in=90] (1.center) to (0b.center);
          \draw[string] (2.center) to (5a.center);
          \end{tikzpicture}\end{aligned} 
 \super{\eqref{eq:comm}}= 
\begin{aligned}\begin{tikzpicture}[xscale=-1]
          \node (0a) at (-1.2,-0.25) {};
          \node (0b) at (-0.5,-0.25) {};
          \node (0c) at (0.5,-0.25) {};
          \node[reddot] (1) at (0,1) {};
          \node (1l) at (0.25,0.75) {};
          \node (1r) at (-0.25,0.75) {};
          \node[yellowdot] (2) at (-0.5,1.5) {};
          \node (2l) at (-0.25,1.25) {};
          \node (2r) at (-0.75,1.325) {};
          \node (5a) at (-0.5,2.5) {};
          \draw[string, out=90, in =270] (0a.center) to (2l.center);
          \draw[string, out=90, in =0] (2l.center) to (2.center);
          \draw[string, out=180, in=90] (2.center) to (2r.center);
          \draw[string, out=270, in=90] (2r.center) to (1.center);
          \draw[string, out=270, in=90] (1r.center) to (0c.center);
          \draw[string, out=180, in=90] (1.center) to (1r.center);
          \draw[string, out=0, in=90] (1.center) to (1l.center);
          \draw[string, out=270, in=90] (1l.center) to (0b.center);
          \draw[string] (2.center) to (5a.center);
          \end{tikzpicture}\end{aligned} = \begin{aligned}\begin{tikzpicture}[yscale=0.5]
          \node (0a) at (-1,0.25) {};
          \node (0b) at (-0.5,0.25) {};
          \node (0c) at (0.5,0.25) {};         
          \node[reddot] (1) at (0,1) {};
          \node[yellowdot] (2) at (-0.5,1.5) {};
          \node (5a) at (-0.5,3) {};
          \draw[string, out=90, in =180] (0a.center) to (2.center);
          \draw[string, out=0, in=90] (2.center) to (1.center);
          \draw[string, out=0, in=90] (1.center) to (0c.center);
          \draw[string, out=180, in=90] (1.center) to (0b.center);
          \draw[string] (2.center) to (5a.center);
          \node (0a7) at (-1,-2.5) {};
          \node (0b7) at (-0.5,-2.5) {};
          \node (0c7) at (0.5,-2.5) {};
          \draw[string, out=90, in=270] (0a7.center) to (0c.center);
          \draw[string, out=90, in=270] (0b7.center) to (0b.center);
          \draw[string, out=90, in=270] (0c7.center) to (0a.center);
          \end{tikzpicture}\end{aligned}  \super{\eqref{2}}= 
\begin{aligned}\begin{tikzpicture}[yscale=0.25,xscale=0.65]
           \node[blackdot] (2) at (0,0.25) {};
           \node (3) at (1.75,0.25) {};
           \node (6) at (0.75,0.25) {};
           \node[yellowdot] (4) at (0,1.75) {};
           \node[yellowdot] (5) at (1,1.75) {};
           \node[reddot](7) at (0.5,2.7) {};
           \node(out) at (0.5,5.5) {};
           \draw[string] (0,0.25) to (2.center);
           \draw[string](7.center) to (out.center);
           \draw[string, in=180, out=180, looseness=1.2] (2.center) to (4.center);
           \draw[string, in=0, out=90, looseness=1] (3.center) to (5.center);
           \draw[string, in=180, out=right] (2.center) to (5.center);
           \draw[string, in=0, out=90] (6.center) to (4.center);
           \draw[string, out=0, in=90] (7.center) to (5.center);
           \draw[string, out=180, in=90] (7.center) to (4.center);
          \node (0a7) at (0,-5.5) {};
          \node (0b7) at (0.75,-5.5) {};
          \node (0c7) at (2.5,-5.5) {};
          \draw[string, out=90, in=270,looseness=0.75] (0a7.center) to (3.center);
          \draw[string, out=90, in=270] (0b7.center) to (6.center);
          \draw[string, out=90, in=270,looseness=0.75] (0c7.center) to (0,0.25);
    \end{tikzpicture}\end{aligned} \super{\eqref{eq:comm}}= 
\begin{aligned}\begin{tikzpicture}[yscale=0.25,xscale=0.65]
           \node[blackdot] (2) at (0,-2.25) {};
           \node(2l) at (-0.35,-1.75) {};
           \node(2r) at (0.35,-1.5) {};
           \node (3) at (1.35,-0.25) {};
           \node (6) at (0.75,-0.75) {};
           \node[yellowdot] (4) at (0,1.5) {};
           \node (4u) at (0,1.9) {};
           \node (4l) at (-0.35,1) {};
           \node (4r) at (0.35,1) {};
           \node[yellowdot] (5) at (1,1.5) {};
           \node (5u) at (1,1.9) {};
           \node(5l) at (0.65,1) {};
           \node(5r) at (1.35,1) {};
           \node[reddot](7) at (0.5,4) {};
           \node(7l) at (0.15,3.5) {};
           \node(7r) at (0.85,3.5) {};
           \node(m) at (-0.35,-0.5) {};
           \node(m') at (0.625,-0.5) {};
           \draw[string] (0,-2) to (2.center);
           \draw[string](7.center) to (0.5,5.5);
           \draw[string, in=270, out=90, looseness=1.2] (2r.center) to (m.center);
           \draw[string, in=270, out=90, looseness=1.2] (m.center) to (4r.center);
           \draw[string, in=0, out=270, looseness=1.2] (2r.center) to (2.center);
           \draw[string, in=90, out=0, looseness=1.2] (4.center) to (4r.center);
           \draw[string, in=270, out=90, looseness=1] (3.center) to (5l.center);
           \draw[string, in=90, out=180, looseness=1] (5.center) to (5l.center);
           \draw[string, in=270, out=90] (2l.center) to (5r.center);
           \draw[string, in=180, out=270] (2l.center) to (2.center);
           \draw[string, in=90, out=0] (5.center) to (5r.center);
           \draw[string, in=270, out=90] (6.center) to (4l.center);
           \draw[string, in=90, out=180] (4.center) to (4l.center);
           \draw[string, out=270, in=90] (7l.center) to (5u.center);
           \draw[string, out=270, in=90] (5u.center) to (5.center);
           \draw[string, out=180, in=90] (7.center) to (7l.center);
           \draw[string, out=270, in=90] (7r.center) to (4u.center);
           \draw[string, out=270, in=90] (4u.center) to (4.center);
           \draw[string, out=0, in=90] (7.center) to (7r.center);
          \node (0a7) at (-0.25,-5.5) {};
          \node (0b7) at (0.75,-5.5) {};
          \node (0c7) at (2.25,-5.5) {};
          \draw[string, out=90, in=270,looseness=0.75] (0a7.center) to (3.center);
          \draw[string, out=90, in=270] (0b7.center) to (6.center);
          \draw[string, out=90, in=270,looseness=0.75] (0c7.center) to (0,-2);
    \end{tikzpicture}\end{aligned} = \begin{aligned}\begin{tikzpicture}[yscale=0.5,xscale=-0.65]
           \node[blackdot] (2) at (0,0.9) {};
           \node (3) at (2,-0.75) {};
           \node (6) at (1,-0.75) {};
           \node[yellowdot] (4) at (0,2.1) {};
           \node[yellowdot] (5) at (1,2.1) {};
           \node[reddot] (7) at (0.5,2.7) {};
           \draw[string] (0,-0.75) to (2.center);
           \draw[string](7.center) to (0.5,4.5);
           \draw[string, in=180, out=180, looseness=1.2] (2.center) to (4.center);
           \draw[string, in=0, out=90, looseness=1] (3.center) to (5.center);
           \draw[string, in=180, out=right] (2.center) to (5.center);
           \draw[string, in=0, out=90] (6.center) to (4.center);
           \draw[string, out=0, in=90] (7.center) to (5.center);
           \draw[string, out=180, in=90] (7.center) to (4.center);
    \end{tikzpicture}\end{aligned}
\end{align*}
\end{proof}
\paragraph{Additive characters.}We also require the following interaction between the yellow unit and $\chi$:
\begin{equation}\label{1}
\begin{pic}
\node(1)[yellowdot] at (0,2){};
\node(2)[morphism,wedge,scale=0.5] at (0,1.5){\large$\chi$};
\node(3)[yellowdot] at (0,1){};
\node(b)[blackdot] at (0.5,0.75){};
\node(i) at (-0.5,0){};
\node(o) at (1,2.25){};
\draw (1.center) to (2.north);
\draw (2.south) to (3.center);
\draw[string,in=left,out=90] (i.center) to (3.center);
\draw[string,in=left,out=right] (3.center) to (b.center);
\draw[string,in=270,out=right] (b.center) to (o.center);
\end{pic}
\quad = \quad
\begin{pic}
\node(1)[] at (0,2.25){};
\node(2)[morphism,wedge,scale=0.5] at (0,1.125){\large$\chi$};
\node(3) at (0,0){};
\draw (1.center) to (2.north);
\draw (2.south) to (3.center);
\end{pic}
\end{equation} 
This corresponds to the the following algebraic equation which can be recovered by composing by computational basis states: $\chi_a(b)=\chi_1(a\tinydot[yellowdot] b)$.
\begin{definition}[Complex finite field ]
Given a $d$-dimensional Hilbert space represented by black wires and a $d-1$-dimensional  Hilbert space represented by green wires,  \textit{a complex finite field } is a pair of $\dfrob$s $\tinymult$ and $\tinymultgr$,  a pair of $\dagger$-qSCFAs $\tinymult[reddot]$ and $\tinymultgr[yellowdot]$ as well as $\tinymult[yellowdot]$ as defined by equation~\eqref{eq:yellowdef}, linear maps $\chi$ and $\psi$ defined by equations~\eqref{eq:addhad} and~\eqref{eq:psi}, linear maps $p$ and $\iota$ defined by equation~\eqref{eq:projj} and obeying equations~\eqref{eq:proj} such that equations~\eqref{2} and~\eqref{1} hold. We denote a complex finite field $(\tinymult[reddot]$, $\tinymult$ , $\tinymultgr[yellowdot]$, $\tinymult[yellowdot]$, $\tinymultgr$,  $\chi$, $\psi$).
\end{definition}
 We summarize the results of this subsection in the following theorem.
\begin{theorem}
Given a compex finite field  $(\tinymult[reddot]$, $\tinymult$ , $\tinymultgr[yellowdot]$, $\tinymult[yellowdot]$, $\tinymultgr$,  $\chi$, $\psi$), $\tinymult[reddot]$ and $\tinymult[yellowdot]$ are the linear extension of the addition and multiplication respectively of a finite field with the underlying set of elements given by the states copyable by $\tinymult$. $\chi$ and  $\psi$ are the complex Fourier Hadamards for the additive and multiplicative group respectively.
\end{theorem}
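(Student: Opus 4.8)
The theorem asserts that the structure axiomatised by Definition of a complex finite field really does encode a finite field, with $\tinymult[reddot]$, $\tinymult[yellowdot]$ the linear extensions of its addition and multiplication, and $\chi$, $\psi$ the associated Fourier Hadamards. The plan is to read off each defining clause of Definition~\ref{def:algff} from a clause (or a small combination of clauses) of the diagrammatic axioms, using the dictionary between $\dfrob$s and orthonormal bases (Proposition~\cite{spiderONB}) to translate everything into statements about the copyable (black) basis states, which we identify with the underlying set $A$ of the putative field.

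First I would handle \textbf{addition}. Since $\tinymult[reddot]$ is a $\dagger$-qSCFA that is $\tinydot$\hspace{-6pt}-real and forms a strongly complementary bialgebra with $\tinymult$, the theorem at the end of the abelian-groups subsection (itself due to~\cite{stefwill}) gives immediately that the black copyable states form an abelian group under the linear extension of $\tinymult[reddot]$; the red unit $\tinyunit[reddot]=\tinystate[black,state,label={[label distance=-0.08cm]330:\tiny{$0$}}]$ is its identity $0\in A$. This is essentially a citation, so it is the easy part. Next, \textbf{multiplication}: $\tinymultgr[yellowdot]$ is a $\dagger$-qSCFA forming a strongly complementary bialgebra with $\tinymultgr$, so by the same abelian-group theorem applied on $\mathcal H'$ the green copyable states form an abelian group under $\tinymultgr[yellowdot]$. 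I then transport this along the isomorphism $p,\iota$ (Lemma~\ref{lem:a} shows $\iota\circ p=\projs$ is exactly the projector of~\eqref{eq:proj}, so $p$ and $\iota$ identify $\mathcal H'$ with the span of the non-zero black states), and use the defining equation~\eqref{eq:yellowdef} for $\tinymult[yellowdot]$ to check that $\tinymult[yellowdot]$ restricted to the non-zero black states agrees with the transported $\tinymultgr[yellowdot]$; composing~\eqref{eq:yellowdef} with black basis states, the first summand gives the green product on $A'$, and the remaining summands vanish or reduce to the rule ``anything times $0$ is $0$'' because they are built from $\tinyunit[reddot]$-effects and $p$ annihilates $0$. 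So $\tinymult[yellowdot]$ is an abelian group on $A'=A\setminus\{0\}$ with unit $\tinyunit[yellowdot]$, and $0$ is absorbing.

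For \textbf{distributivity}, I would compose the left-distributivity axiom~\eqref{2} with arbitrary black basis states $a,b,c$ and use the copying property~\eqref{eq:spider} and the bialgebra/real-ness rules to evaluate both sides; the left side yields $a\tinydot[yellowdot](b\tinydot[reddot]c)$ and the right side $(a\tinydot[yellowdot]b)\tinydot[reddot](a\tinydot[yellowdot]c)$ — matching the distributive law in Definition~\ref{def:algff} once one notes the printed ``$\tinydot[reddot]$'' on the inner right-hand operator is the addition. (Right distributivity is already supplied by the preceding lemma from left distributivity and commutativity, so no extra work is needed.) Finally, the \textbf{Fourier-Hadamard} claims: $\chi$ is a Hadamard by the discussion under Definition~\ref{def:had} together with Theorem~\ref{thm:zeng} applied to the strongly complementary pair $(\tinymult,\tinymult[reddot])$, and its rows are the irreducible additive characters by~\eqref{eq:chi}; symmetrically $\psi$ is the multiplicative Fourier Hadamard from $(\tinymultgr,\tinymultgr[yellowdot])$ via~\eqref{eq:psi}. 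Equation~\eqref{1} records the compatibility $\chi_a(b)=\chi_1(a\tinydot[yellowdot]b)$, confirming that $\chi$ is genuinely \emph{the} additive Fourier transform of the field and not merely of the additive group in isolation.

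\textbf{Main obstacle.} Everything except the multiplication step is a translation or a citation; the delicate point is reconciling the "global" multiplication $\tinymult[yellowdot]$ on $\mathcal H$ (defined piecewise by~\eqref{eq:yellowdef}) with the "restricted" multiplication $\tinymultgr[yellowdot]$ on $\mathcal H'$, and in particular verifying associativity, commutativity and unitality of $\tinymult[yellowdot]$ on all of $\mathcal H$ — including the mixed cases where one argument is the red unit $0$. This requires careful bookkeeping with the projector identities~\eqref{eq:projprop}, Lemma~\ref{lem:a}, and the auxiliary identity~\eqref{diag:last}, since the cross terms in~\eqref{eq:yellowdef} are precisely what make $0$ absorbing while leaving the group law on $A'$ intact; I expect this case analysis to be the longest part of the argument, though each individual check is a routine diagram manipulation.
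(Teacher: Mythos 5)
Your proposal is correct and follows essentially the same route as the paper's proof, which likewise reads off each clause of Definition~\ref{def:algff}: the additive group from the Gogioso--Zeng characterisation, the multiplicative group from the fact that the global multiplication agrees with the green one on the span of the non-zero black states (by its defining equation), distributivity directly from equation~\eqref{2}, and the properties of $\chi$ and $\psi$ cited from~\cite{stefwill}. The only difference is one of detail: the paper states the agreement on the non-zero subspace and the unitality/associativity of the extended multiplication as immediate from the definitions, whereas you spell out the case analysis with the projector identities, which is harmless but not needed beyond what the paper records.
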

\begin{proof}
The binary operator $\tinymult[reddot]$ forms an abelian group on the states copyable by $\tinymult$, which is the first axiom of Definition~\ref{def:algff}.
. On the subspace of $\mathcal H$ isomorphic to $\mathcal H'$ which is spanned by the non-zero black states $\tinymult[yellowdot]$ agrees with $\tinymultgr[yellowdot]$ and thus forms an abelian group, thus fulfilling the second axiom of Definition~\ref{def:algff}. Equation~\eqref{2} is precisely the linear extension of distributivity, the third axiom of Definition~\ref{def:algff}. The properties of $\chi$ and $\psi$ were proven by Gogioso and Zeng~\cite{stefwill}.
\end{proof}
\subsection{A construction of $d+1$ MUBs}\label{section:construction}
We now give an application of the complex finite fields developed in the previous subsection to the problem of constructing maximal families of MUBs. First we present two lemmas which will be necessary to proving the main result of this section.
 \begin{lemma}
Given a complex finite field  $(\tinymult[reddot]$, $\tinymult$ , $\tinymultgr[yellowdot]$, $\tinymult[yellowdot]$, $\tinymultgr$,  $\chi$, $\psi$), the following equation holds:
\begin{equation}\label{eqs:lem1}
\begin{pic}
   \node(r1)[reddot]at (0,0.5){};
   \node(r2)[reddot]at (-0.75,0){};
   \node(y1)[yellowdot]at (0.75,0){};
   \node(y2)[yellowdot]at (0.5,-0.5){};
   \node(b)[blackdot]at (1,-0.75){};
   \node(i1) at (-1.25,-1){};
   \node(i2) at (0,-1){};
   \node(i3) at (0.25,-1){};
   \node(i4) at (1,-1){};
   \node(o) at (0,1){};
   \draw[string] (o.center) to (r1.center);
   \draw[string,in=90,out=180] (r1.center) to (r2.center);
   \draw[string,in=90,out=0] (r1.center) to (y1.center);
   \draw[string,in=90,out=180] (r2.center) to (i1.center);
   \draw[string,in=90,out=0] (r2.center) to (y2.center);
   \draw[string,in=90,out=180] (y1.center) to (i2.center);
   \draw[string,in=0,out=0] (y1.center) to (b.center);
   \draw[string,in=90,out=180] (y2.center) to (i3.center);
   \draw[string,in=180,out=0] (y2.center) to (b.center);
   \draw[string] (b.center) to (i4.center);
\end{pic}\quad = \quad
\begin{pic}
   \node(r1)[reddot]at (0,0.5){};
   \node(r2)[reddot]at (-0.75,0){};
   \node(y1)[yellowdot]at (1,0){};
   \node(y2)[yellowdot]at (0.25,-0.25){};
   \node(b)[blackdot]at (1,-0.75){};
   \node(i1) at (-1.25,-1){};
   \node(i2) at (0,-1){};
   \node(i3) at (0.25,-1){};
   \node(i4) at (1,-1){};
   \node(o) at (0,1){};
   \draw[string] (o.center) to (r1.center);
   \draw[string,in=90,out=180] (r1.center) to (r2.center);
   \draw[string,in=90,out=0] (r1.center) to (y1.center);
   \draw[string,in=90,out=180] (r2.center) to (i1.center);
   \draw[string,in=90,out=0] (r2.center) to (y2.center);
   \draw[string,in=90,out=180,looseness=0.8] (y1.center) to (i3.center);
   \draw[string,in=0,out=0] (y1.center) to (b.center);
   \draw[string,in=90,out=180] (y2.center) to (i2.center);
   \draw[string,in=180,out=0] (y2.center) to (b.center);
   \draw[string] (b.center) to (i4.center);
\end{pic}
\end{equation}
\end{lemma}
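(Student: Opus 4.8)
The plan is to read off what the two diagrams in \eqref{eqs:lem1} compute and to check that they agree after commutativity moves only. Reading the left-hand side from the bottom: the black comultiplication $b$ copies the fourth input onto the two $\tinymult[yellowdot]$ vertices, one of which produces the product of the second input with a copy of the fourth and the other the product of the third input with the other copy; these two outputs are then combined with the first input through the pair of $\tinymult[reddot]$ vertices. The right-hand diagram has exactly the same shape, with the second and third inputs exchanged, so the statement is precisely that this combinator is symmetric in its second and third inputs.

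First I would apply cocommutativity of $\tinymult$ --- equivalently the spider corollary \eqref{eq:sm}, which says a connected diagram of black dots depends only on its boundary --- to interchange the two output legs of the comultiplication $b$; this swaps which copy of the fourth input reaches which $\tinymult[yellowdot]$ vertex. Then I would rearrange the red part using associativity \eqref{eq:as} of $\tinymult[reddot]$ to re-bracket the nested red multiplications, commutativity \eqref{eq:comm} of $\tinymult[reddot]$ to swap the two $\tinymult[yellowdot]$-outputs that are being summed, and \eqref{eq:as} once more to re-bracket back. Tracking the four inputs through these moves shows the left-hand diagram has been transformed exactly into the right-hand diagram of \eqref{eqs:lem1}.

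Since every rewrite used is one of the stated axioms for the black or red Frobenius algebras, there is no real obstacle here; the only thing requiring care is the bookkeeping --- keeping straight which copy leg of $\tinycomult[blackdot]$ and which branch of $\tinymult[reddot]$ is moved at each stage --- so that the final picture matches \eqref{eqs:lem1} on the nose rather than merely up to an unstated planar isotopy. In particular, neither distributivity \eqref{2} nor the definition \eqref{eq:yellowdef} of $\tinymult[yellowdot]$ on all of $\mathcal H$ is needed for this lemma: it is purely a statement about the symmetry of the combinator built from $\tinycomult[blackdot]$, $\tinymult[yellowdot]$ and $\tinymult[reddot]$.
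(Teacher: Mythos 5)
Your proposal is correct and matches the paper's own proof in substance: the paper also establishes \eqref{eqs:lem1} purely by red associativity \eqref{eq:as}, red commutativity \eqref{eq:comm}, and a (co)commutativity move on the black comultiplication, with no appeal to distributivity \eqref{2} or the definition of $\tinymult[yellowdot]$. The only difference is ordering --- the paper performs the red rebracket/swap/rebracket first and the black swap last, whereas you do the black swap first --- which is immaterial.
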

\begin{proof}
\begin{equation*}\text{LHS}\quad\super{Red\eqref{eq:as}}=
\begin{pic}
   \node(r1)[reddot]at (0,1){};
   \node(r2)[reddot]at (0.5,0.5){};
   \node(y1)[yellowdot]at (0.75,0){};
   \node(y2)[yellowdot]at (0.5,-0.5){};
   \node(b)[blackdot]at (1,-0.75){};
   \node(i1) at (-1.25,-1){};
   \node(i2) at (0,-1){};
   \node(i3) at (0.25,-1){};
   \node(i4) at (1,-1){};
   \node(o) at (0,1.5){};
   \draw[string] (o.center) to (r1.center);
   \draw[string,in=90,out=0] (r1.center) to (r2.center);
   \draw[string,in=90,out=180] (r1.center) to (i1.center);
   \draw[string,in=90,out=0] (r2.center) to (y1.center);
   \draw[string,in=90,out=180] (r2.center) to (y2.center);
   \draw[string,in=90,out=180] (y1.center) to (i2.center);
   \draw[string,in=0,out=0] (y1.center) to (b.center);
   \draw[string,in=90,out=180] (y2.center) to (i3.center);
   \draw[string,in=180,out=0] (y2.center) to (b.center);
   \draw[string] (b.center) to (i4.center);
\end{pic}\quad \super{Red\eqref{eq:comm}}= \quad
\begin{pic}
   \node(r1)[reddot]at (0,1){};
   \node(r2)[reddot]at (0.5,0.5){};
   \node(y1)[yellowdot]at (0.75,-0.25){};
   \node(y2)[yellowdot]at (0.5,-0.5){};
   \node(b)[blackdot]at (1,-0.75){};
   \node(i1) at (-1.25,-1){};
   \node(i2) at (0,-1){};
   \node(i3) at (0.25,-1){};
   \node(i4) at (1,-1){};
   \node(o) at (0,1.5){};
   \draw[string] (o.center) to (r1.center);
   \draw[string,in=90,out=0] (r1.center) to (r2.center);
   \draw[string,in=90,out=180] (r1.center) to (i1.center);
   \draw[string,in=90,out=180,looseness=1.8] (r2.center) to (y1.center);
   \draw[string,in=90,out=0,looseness=1.3] (r2.center) to (y2.center);
   \draw[string,in=90,out=180] (y1.center) to (i2.center);
   \draw[string,in=0,out=0] (y1.center) to (b.center);
   \draw[string,in=90,out=180] (y2.center) to (i3.center);
   \draw[string,in=180,out=0] (y2.center) to (b.center);
   \draw[string] (b.center) to (i4.center);
\end{pic}\quad \super{Red\eqref{eq:as}}= \quad
\begin{pic}
   \node(r1)[reddot]at (0,1){};
   \node(r2)[reddot]at (-0.75,0.5){};
   \node(y1)[yellowdot]at (0.75,-0.25){};
   \node(y2)[yellowdot]at (0.5,-0.5){};
   \node(b)[blackdot]at (1,-0.75){};
   \node(i1) at (-1.25,-1){};
   \node(i2) at (0,-1){};
   \node(i3) at (0.25,-1){};
   \node(i4) at (1,-1){};
   \node(o) at (0,1.5){};
   \draw[string] (o.center) to (r1.center);
   \draw[string,in=90,out=180] (r1.center) to (r2.center);
   \draw[string,in=90,out=0] (r1.center) to (y2.center);
   \draw[string,in=90,out=180] (r2.center) to (i1.center);
   \draw[string,in=90,out=0] (r2.center) to (y1.center);
   \draw[string,in=90,out=180] (y1.center) to (i2.center);
   \draw[string,in=0,out=0] (y1.center) to (b.center);
   \draw[string,in=90,out=180] (y2.center) to (i3.center);
   \draw[string,in=180,out=0] (y2.center) to (b.center);
   \draw[string] (b.center) to (i4.center);
\end{pic}\quad \super{Black\eqref{eq:comm}}=\quad\text{RHS}
\end{equation*}
\end{proof}
\begin{lemma}
Given a complex finite field  $(\tinymult[reddot]$, $\tinymult$ , $\tinymultgr[yellowdot]$, $\tinymult[yellowdot]$, $\tinymultgr$,  $\chi$, $\psi$), the following equation holds:
\begin{equation}\label{eqs:lem2}\begin{pic}
   \node(r1)[reddot] at (-0.75,1){};
   \node(y1)[yellowdot] at (-1.5,0.5){};
   \node(y2)[yellowdot] at (0,0.5){};
   \node(r2)[reddot] at (0.75,0){};
   \node(y3)[yellowdot] at (1.5,-0.5){};
   \node(b1)[blackdot] at (-1.5,-1){};
   \node(b2)[blackdot] at (0.75,-1){};
   \node(i1) at (-1.5,-1.5){};
   \node(i2) at (-0.75,-1.5){};
   \node(i3) at (0.75,-1.5){};
   \node(i4) at (2,-1.5){};
   \node(o) at (-0.75,1.5){};
   \draw[string] (o.center) to (r1.center);
   \draw[string,in=90,out=180] (r1.center) to (y1.center);
   \draw[string,in=90,out=0] (r1.center) to (y2.center);
   \draw[string,in=180,out=180] (y1.center) to (b1.center);
   \draw[string,in=180,out=0] (y1.center) to (b2.center);
   \draw[string,in=90,out=180] (y2.center) to (i2.center);
   \draw[string,in=90,out=0] (y2.center) to (r2.center);
   \draw[string,in=0,out=180] (r2.center) to (b1.center);
   \draw[string,in=90,out=0] (r2.center) to (y3.center);
   \draw[string,in=0,out=180] (y3.center) to (b2.center);
   \draw[string,in=90,out=0] (y3.center) to (i4.center);
   \draw[string] (b1.center) to (i1.center);
   \draw[string] (b2.center) to (i3.center);
\end{pic}\quad = \quad
\begin{pic}
   \node(r1)[reddot] at (0,1){};
   \node(y1)[yellowdot] at (-0.75,0){};
   \node(y2)[yellowdot] at (0.75,0.5){};
   \node(r2)[reddot] at (0,0){};
   \node(y3)[yellowdot] at (0.75,-0.5){};
   \node(b1)[blackdot] at (-0.75,-1){};
   \node(b2)[blackdot] at (0,-1){};
   \node(i1) at (-0.75,-1.5){};
   \node(i2) at (0,-1.5){};
   \node(i3) at (1.5,-1.5){};
   \node(i4) at (2.75,-1.5){};
   \node(o) at (0,1.5){};
   \draw[string] (o.center) to (r1.center);
   \draw[string,out=180,in=90] (r1.center) to (y1.center);
   \draw[string,out=0,in=90] (r1.center) to (y2.center);
   \draw[string,out=180,in=180] (y1.center) to (b1.center);
   \draw[string,out=0,in=180] (y1.center) to (b2.center);
   \draw[string,out=180,in=90] (y2.center) to (r2.center);
   \draw[string,out=0,in=90] (y2.center) to (i3.center);
   \draw[string,out=180,in=0] (r2.center) to (b1.center);
   \draw[string,out=0,in=90] (r2.center) to (y3.center);
   \draw[string,out=180,in=0] (y3.center) to (b2.center);
   \draw[string,out=0,in=90] (y3.center) to (i4.center);
   \draw[string] (b1.center) to (i1.center);
   \draw[string] (b2.center) to (i2.center);
\end{pic}
\end{equation}
\end{lemma}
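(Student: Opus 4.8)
The plan is to prove \eqref{eqs:lem2} in the same spirit as \eqref{eqs:lem1}: reduce both sides to a common normal form by a chain of applications of the finite‑field axioms. It helps first to read off what each side computes. Translating the diagram through the correspondence between black states and field elements (using the copying law \eqref{eq:spider}, so that the black dots $b_1,b_2$ duplicate their inputs, the red dots apply $\oplus$, and the yellow dots apply $\odot$), the left‑hand side sends $(a,b,c,d)\mapsto (a\odot c)\oplus\bigl(b\odot(a\oplus(c\odot d))\bigr)$ and the right‑hand side sends $(a,b,c,d)\mapsto (a\odot b)\oplus\bigl(c\odot(a\oplus(b\odot d))\bigr)$. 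Expanding both by distributivity and commutativity, each equals $\bigl(a\odot(b\oplus c)\bigr)\oplus(b\odot c\odot d)$, so the target normal form is the diagram computing $\bigl(a\odot(b\oplus c)\bigr)\oplus(b\odot c\odot d)$.

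First I would transform the left‑hand side. Apply left distributivity \eqref{2} to the sub‑diagram where the yellow dot carrying $b$ sits above the red dot $r_2$ (which carries $a\oplus(c\odot d)$); this introduces a black copy dot on the $b$‑wire and turns the term into $(b\odot a)\oplus\bigl(b\odot(c\odot d)\bigr)$. Then tidy the second summand to $(b\odot c)\odot d$ by yellow associativity and commutativity, and the first to $a\odot b$ by yellow commutativity. Now use red associativity and commutativity \eqref{eq:as}, \eqref{eq:comm} to regroup the resulting triple sum as $\bigl((a\odot c)\oplus(a\odot b)\bigr)\oplus(b\odot c\odot d)$. Finally apply \eqref{2} in reverse to the two $a$‑labelled products — this is licit because the two occurrences of $a$ are precisely the two outputs of the copy dot $b_1$ — collapsing $(a\odot c)\oplus(a\odot b)$ to $a\odot(c\oplus b)$, and one red commutativity gives $\bigl(a\odot(b\oplus c)\bigr)\oplus(b\odot c\odot d)$.

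Carrying out the mirror‑image sequence on the right‑hand side (left distributivity on the yellow dot carrying $c$, the analogous yellow/red bookkeeping, reverse distributivity on the two $a$‑products) yields $\bigl(a\odot(b\oplus c)\bigr)\oplus(c\odot b\odot d)$, and one last yellow commutativity/associativity identifies $c\odot b\odot d$ with $b\odot c\odot d$, so both sides meet at the normal form. The main obstacle is the bookkeeping around the black copy dots when using distributivity in reverse: applying \eqref{2} left‑to‑right is free (it supplies the copy dot it needs), but to re‑contract $(a\odot c)\oplus(a\odot b)$ into $a\odot(c\oplus b)$ one must first manoeuvre the ambient diagram so that the copy dot $b_1$, the $c$‑wire, and the trailing $b\odot c\odot d$ term are positioned exactly as on the right‑hand side of \eqref{2} — this uses the bialgebra laws \eqref{eq:bialg} to slide copies past red and yellow dots and cocommutativity/coassociativity of the black comultiplication to align the two $a$‑outputs. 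One should also note in passing that every move is valid for the \emph{full} yellow multiplication $\tinymult[yellowdot]$ of \eqref{eq:yellowdef}, not just for $\tinymultgr[yellowdot]$ on $\mathcal H'$: its associativity, commutativity, unitality, and left/right distributivity \eqref{2}, \eqref{eq:rdist} over $\tinymult[reddot]$ have all been established above.

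As a robust alternative and sanity check, the lemma follows at once by precomposing both sides with arbitrary black states $\ket a\ket b\ket c\ket d$: by \eqref{eq:spider} the copy dots duplicate these states, the red and yellow dots then evaluate to $\ket{(a\odot c)\oplus(b\odot(a\oplus(c\odot d)))}$ and $\ket{(a\odot b)\oplus(c\odot(a\oplus(b\odot d)))}$ respectively, and since these two field elements coincide in any field, the two linear maps agree on a spanning set and hence are equal.
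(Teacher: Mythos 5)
Your proposal is correct, and its main diagrammatic route is essentially the paper's own proof: the paper also expands the left-hand side with left distributivity \eqref{2}, fuses the resulting yellow dots with the spider law \eqref{eq:sm}, shuffles the red tree with associativity \eqref{eq:as} and commutativity \eqref{eq:comm}, and then re-contracts with right distributivity \eqref{eq:rdist} — the only cosmetic difference is that the paper rewrites the left-hand side directly into the right-hand side rather than meeting at a symmetric normal form, and your worry about the re-contraction bookkeeping is milder than you fear: no bialgebra moves \eqref{eq:bialg} are needed, since the black copy dots never have to slide past coloured dots; the black/yellow spider laws plus the red rewrites already put the diagram into the shape demanded by \eqref{eq:rdist}. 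Your second argument — precomposing with arbitrary black basis states, letting \eqref{eq:spider} copy them, and invoking the field identity $(a\odot c)\oplus\bigl(b\odot(a\oplus(c\odot d))\bigr)=(a\odot b)\oplus\bigl(c\odot(a\oplus(b\odot d))\bigr)$ — is genuinely different from what the paper does and is sound in $\cat{FHilb}$, because the summary theorem preceding this subsection (proved independently of the two lemmas) guarantees that $\tinymult[reddot]$ and $\tinymult[yellowdot]$ are the linear extensions of the field operations on the copyable black states, so agreement on the spanning set of basis states gives equality of the linear maps. What the diagrammatic route buys is independence from that concrete basis description (it uses only the stated axioms, so it would survive in a more abstract setting); what your evaluation argument buys is brevity and a transparent reason why the identity holds, at the cost of leaving the graphical calculus that the paper is advertising.
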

\begin{proof}
\begin{align*}\text{LHS}\quad \super{\eqref{2}}=&
\begin{pic}
   \node(r1)[reddot] at (-0.75,1){};
   \node(y1)[yellowdot] at (-1.5,0.5){};
   \node(r2)[reddot] at (0,0.5){};
   \node(y2)[yellowdot] at (0.75,0){};
   \node(y3)[yellowdot] at (1.5,-0.5){};
   \node(b1)[blackdot] at (-1.5,-1){};
   \node(b2)[blackdot] at (0.75,-1){};
   \node(y4)[yellowdot] at (-0.75,-0.25){};
   \node(b3)[blackdot] at (-0.75,-1){};
   \node(i1) at (-1.5,-1.5){};
   \node(i2) at (-0.75,-1.5){};
   \node(i3) at (0.75,-1.5){};
   \node(i4) at (2,-1.5){};
   \node(o) at (-0.75,1.5){};
   \draw[string] (o.center) to (r1.center);
   \draw[string,in=90,out=180] (r1.center) to (y1.center);
   \draw[string,in=90,out=0] (r1.center) to (r2.center);
   \draw[string,in=180,out=180] (y1.center) to (b1.center);
   \draw[string,in=180,out=0] (y1.center) to (b2.center);
   \draw[string,in=90,out=180] (r2.center) to (y4.center);
   \draw[string,in=180,out=180] (y4.center) to (b3.center);
   \draw[string,in=90,out=0] (r2.center) to (y2.center);
   \draw[string,in=0,out=0] (y4.center) to (b1.center);
   \draw[string,in=0,out=180] (y2.center) to (b3.center);
   \draw[string,in=90,out=0] (y2.center) to (y3.center);
   \draw[string,in=0,out=180] (y3.center) to (b2.center);
   \draw[string,in=90,out=0] (y3.center) to (i4.center);
   \draw[string] (b1.center) to (i1.center);
   \draw[string] (b2.center) to (i3.center);
   \draw[string] (b3.center) to (i2.center);
\end{pic}\quad \super{Yellow\eqref{eq:sm}}= \quad
\begin{pic}
   \node(r1)[reddot] at (-0.75,1){};
   \node(y1)[yellowdot] at (-1.5,0.5){};
   \node(r2)[reddot] at (0,0.5){};
   \node(y2)[yellowdot] at (0.75,0){};
   \node(y3)[yellowdot] at (0.5,-0.5){};
   \node(b1)[blackdot] at (-1.5,-1){};
   \node(b2)[blackdot] at (0.75,-1){};
   \node(y4)[yellowdot] at (-1.125,-0.25){};
   \node(b3)[blackdot] at (-0.75,-1){};
   \node(i1) at (-1.5,-1.5){};
   \node(i2) at (-0.75,-1.5){};
   \node(i3) at (0.75,-1.5){};
   \node(i4) at (2,-1.5){};
   \node(o) at (-0.75,1.5){};
   \draw[string] (o.center) to (r1.center);
   \draw[string,in=90,out=180] (r1.center) to (y1.center);
   \draw[string,in=90,out=0] (r1.center) to (r2.center);
   \draw[string,in=180,out=180] (y1.center) to (b1.center);
   \draw[string,in=180,out=0] (y1.center) to (b2.center);
   \draw[string,in=90,out=180] (r2.center) to (y4.center);
   \draw[string,in=180,out=0] (y4.center) to (b3.center);
   \draw[string,in=90,out=0] (r2.center) to (y2.center);
   \draw[string,in=0,out=180] (y4.center) to (b1.center);
   \draw[string,in=90,out=180] (y2.center) to (y3.center);
   \draw[string,in=0,out=0] (y2.center) to (b2.center);
   \draw[string,in=90,out=0] (y3.center) to (i4.center);
   \draw[string,in=0,out=180] (y3.center) to (b3.center);
   \draw[string] (b1.center) to (i1.center);
   \draw[string] (b2.center) to (i3.center);
   \draw[string] (b3.center) to (i2.center);
\end{pic}\\ 
\quad \super{Red\eqref{eq:as}}=& \quad\begin{pic}
   \node(r1)[reddot] at (-0.75,1){};
   \node(y1)[yellowdot] at (-1.75,0.25){};
   \node(r2)[reddot] at (-1.25,0.5){};
   \node(y2)[yellowdot] at (0.75,0){};
   \node(y3)[yellowdot] at (0.5,-0.5){};
   \node(b1)[blackdot] at (-1.5,-1){};
   \node(b2)[blackdot] at (0.75,-1){};
   \node(y4)[yellowdot] at (-1.125,-0.25){};
   \node(b3)[blackdot] at (-0.75,-1){};
   \node(i1) at (-1.5,-1.5){};
   \node(i2) at (-0.75,-1.5){};
   \node(i3) at (0.75,-1.5){};
   \node(i4) at (2,-1.5){};
   \node(o) at (-0.75,1.5){};
   \draw[string] (o.center) to (r1.center);
   \draw[string,in=90,out=180] (r1.center) to (r2.center);
   \draw[string,in=90,out=0] (r1.center) to (y2.center);
   \draw[string,in=180,out=180] (y1.center) to (b1.center);
   \draw[string,in=180,out=0] (y1.center) to (b2.center);
   \draw[string,in=90,out=0] (r2.center) to (y4.center);
   \draw[string,in=180,out=0] (y4.center) to (b3.center);
   \draw[string,in=90,out=180] (r2.center) to (y1.center);
   \draw[string,in=0,out=180] (y4.center) to (b1.center);
   \draw[string,in=90,out=180] (y2.center) to (y3.center);
   \draw[string,in=0,out=0] (y2.center) to (b2.center);
   \draw[string,in=90,out=0] (y3.center) to (i4.center);
   \draw[string,in=0,out=180] (y3.center) to (b3.center);
   \draw[string] (b1.center) to (i1.center);
   \draw[string] (b2.center) to (i3.center);
   \draw[string] (b3.center) to (i2.center);
\end{pic} 
\quad \super{Red\eqref{eq:comm}}= \quad
\begin{pic}
   \node(r1)[reddot] at (-0.75,1){};
   \node(y1)[yellowdot] at (-1.75,0){};
   \node(r2)[reddot] at (-1.25,0.75){};
   \node(y2)[yellowdot] at (0.75,0){};
   \node(y3)[yellowdot] at (0.5,-0.5){};
   \node(b1)[blackdot] at (-1.5,-1){};
   \node(b2)[blackdot] at (0.75,-1){};
   \node(y4)[yellowdot] at (-1.125,-0.5){};
   \node(b3)[blackdot] at (-0.75,-1){};
   \node(i1) at (-1.5,-1.5){};
   \node(i2) at (-0.75,-1.5){};
   \node(i3) at (0.75,-1.5){};
   \node(i4) at (2,-1.5){};
   \node(o) at (-0.75,1.5){};
   \node(r2mr) at (-1,0.5){};
   \draw[string] (o.center) to (r1.center);
   \draw[string,in=90,out=180] (r1.center) to (r2.center);
   \draw[string,in=90,out=0] (r1.center) to (y2.center);
   \draw[string,in=180,out=180] (y1.center) to (b1.center);
   \draw[string,in=180,out=0] (y1.center) to (b2.center);
   \draw[string,in=90,out=180] (r2.center) to (y4.center);
   \draw[string,in=180,out=0] (y4.center) to (b3.center);
   \draw[string,in=90,out=0] (r2.center) to (r2mr.center);
   \draw[string,in=90,out=270] (r2mr.center) to (y1.center);
   \draw[string,in=0,out=180] (y4.center) to (b1.center);
   \draw[string,in=90,out=180] (y2.center) to (y3.center);
   \draw[string,in=0,out=0] (y2.center) to (b2.center);
   \draw[string,in=90,out=0] (y3.center) to (i4.center);
   \draw[string,in=0,out=180] (y3.center) to (b3.center);
   \draw[string] (b1.center) to (i1.center);
   \draw[string] (b2.center) to (i3.center);
   \draw[string] (b3.center) to (i2.center);
\end{pic}\\
\quad \super{Red\eqref{eq:as}}=& \quad
\begin{pic}
   \node(r1)[reddot] at (-0.75,1){};
   \node(y1)[yellowdot] at (-1,0){};
   \node(r2)[reddot] at (0,0.75){};
   \node(y2)[yellowdot] at (0.75,0){};
   \node(y3)[yellowdot] at (0.5,-0.5){};
   \node(b1)[blackdot] at (-1.5,-1){};
   \node(b1ml) at (-1.75,-0.75){};
   \node(b1mr) at (-1.25,-0.75){};
   \node(b2)[blackdot] at (0.75,-1){};
   \node(y4)[yellowdot] at (-1.75,0){};
   \node(y4ml) at (-2,-0.25){};
   \node(b3)[blackdot] at (-0.75,-1){};
   \node(i1) at (-1.5,-1.5){};
   \node(i2) at (-0.75,-1.5){};
   \node(i3) at (0.75,-1.5){};
   \node(i4) at (2,-1.5){};
   \node(o) at (-0.75,1.5){};
   \node(r2mr) at (-1,0.5){};
   \draw[string] (o.center) to (r1.center);
   \draw[string,in=90,out=0] (r1.center) to (r2.center);
   \draw[string,in=90,out=180] (r1.center) to (y4.center);
   \draw[string,in=90,out=180] (y1.center) to (b1ml.center);
   \draw[string,in=180,out=270] (b1ml.center) to (b1.center);
   \draw[string,in=180,out=0] (y1.center) to (b2.center);
   \draw[string,in=90,out=180] (r2.center) to (y1.center);
   \draw[string,in=180,out=0] (y4.center) to (b3.center);
   \draw[string,in=90,out=0] (r2.center) to (y2.center);
   \draw[string,in=90,out=180] (y4.center) to (y4ml.center);
   \draw[string,in=90,out=270] (y4ml.center) to (b1mr.center);
   \draw[string,in=0,out=270] (b1mr.center) to (b1.center);
   \draw[string,in=90,out=180] (y2.center) to (y3.center);
   \draw[string,in=0,out=0] (y2.center) to (b2.center);
   \draw[string,in=90,out=0] (y3.center) to (i4.center);
   \draw[string,in=0,out=180] (y3.center) to (b3.center);
   \draw[string] (b1.center) to (i1.center);
   \draw[string] (b2.center) to (i3.center);
   \draw[string] (b3.center) to (i2.center);
\end{pic}\quad \super{\eqref{eq:rdist}}=\quad\text{RHS}
\end{align*}
\end{proof}
We construct a $\UEBM$ as follows:
\begin{theorem}\label{thm:const}
Given a complex finite field  $(\tinymult[reddot]$, $\tinymult$ , $\tinymultgr[yellowdot]$, $\tinymult[yellowdot]$, $\tinymultgr$,  $\chi$, $\psi$) the following is a $\UEBM$:
\begin{equation}\label{eq:mainconst}
\begin{aligned}
\begin{tikzpicture}
\node (f) [morphism,wedge, connect s length=1.95cm width=0.5mm, connect se length=1.95cm, width=1cm, connect n length=1.95cm,connect sw length=1.95cm] at (0,0) {$U_{FF}$};
\node  at (f.connect s) {};
\node  at (f.connect se) {};
\node  at (f.connect n) {};
\end{tikzpicture}
\end{aligned}
:=\quad
\begin{aligned}
\begin{tikzpicture}
\node (chi) [morphism,wedge,scale=0.75] at (0,-0.1) {\large$\chi$};
\node(b1)[blackdot]  at (-0.25,0.5) {};
\node(add)[reddot] at (0.25,1){};
\node(x)[yellowdot] at (1.25,-0.5){};
\node(b2)[blackdot] at (0.5,-1){};
\node(*) at (0.5,-1.5){};
\node(m) at (1.25,-1){};
\node(i1) at (-1,-2.5){};
\node(i2) at (0.5,-2.5){};
\node(i3) at (1.25,-2.5){};
\node(i2a) at (0.5,-1.75){};
\node(i3a) at (1.25,-1.75){};
\node(o) at (0.25,2){};
\node(g1) at (1.25,-2){};
\node(g2)[whitedot,inner sep=0.1pt] at (1.25,-2){$\projs$};
\draw[green][string, line width=1pt](g1.center) to (g2.center);
\draw[string,in=left,out=90,looseness=0.6] (i1.center) to (b1.center);
\draw[string,in=90,out=right] (b1.center) to (chi.north);
\draw[string,in=left,out=90] (b1.center) to (add.center);
\draw[string,in=90,out=right] (add.center) to (x.center);
\draw[string,in=270,out=90] (add.center) to (o.center);
\draw[string,in=270,out=90,looseness=1] (i2.center) to (i2a.center);
\draw[string,in=270,out=90,looseness=1] (i2a.center) to (m.center);
\draw[string,in=right,out=90,looseness=1] (m.center) to (x.center);
\draw[string,in=270,out=90] (i3.center) to (g1.center);
\draw[string,in=270,out=90] (g2.center) to (i3a.center);
\draw[string,in=270,out=90] (i3a.center) to (b2.center);
\draw[string,in=270,out=left] (b2.center) to (chi.south);
\draw[string,in=left,out=right] (b2.center) to (x.center);
\end{tikzpicture}
\end{aligned}
\quad + \quad
\begin{aligned}
\begin{tikzpicture}
\node(b2)[reddot] at (0.6,0.2){};
\node(m) at (1,-0.25){};
\node(s)[reddot] at (1.75,-2){};
\draw[string,out=90,in=right,looseness=0.5] (i3.center) to (b2.center);
\node(i1) at (0,-2.5){};
\node(i2) at (1.2,-2.5){};
\node(i3) at (1.75,-2.5){};
\node(o) at (0.6,2){};
\draw[string,out=90,in=left,looseness=0.5] (i1.center) to (b2.center);
\draw[string] (i3.center) to (s.center);
\draw[string] (o.center) to (b2.center);
\end{tikzpicture}
\end{aligned}
\end{equation}
\end{theorem}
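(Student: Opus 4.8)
The plan is to verify that $U_{FF}$, as defined by the displayed formula, is a partitioned UEB by checking the conditions of Lemma~\ref{lem:guebp}, rather than by going back to Definition~\ref{def:pueb} directly; this way we can reuse the diagrammatic machinery. First I would show that $U_{FF}$ is a genuine UEB, i.e.\ that the two equations of~\eqref{eq:ueb} hold. The first summand of $U_{FF}$ is essentially the map $M$ built from the finite field together with a Hadamard-like twist supplied by $\chi$ and the multiplication $\tinymult[yellowdot]$, so the idea is to identify $\sqrt dM$ with a controlled Hadamard (verifying Lemma~\ref{lem:had}/Definition~\ref{def:ch} using equation~\eqref{eq:chi}, the $\tinydot$\hspace{-6pt}-realness of red, and the strong complementarity/bialgebra laws of~\eqref{eq:bialg}), and then invoke the diagrammatic maximal-MUB theorem~\eqref{eq:MMUB} to get the unitarity equation of~\eqref{eq:ueb} for free. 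Concretely: composing the left-hand side of~\eqref{eq:ueb} for $U_{FF}$ and expanding the four cross terms, the terms involving both a $\chi$-block and a $0$-state collapse by~\eqref{eq:had1}-type zero-column identities, the term with two $0$-states gives the identity, and the main term reduces via~\eqref{classtr}, \eqref{eq:sm}, \eqref{eq:spider} and the Hadamard law to the required identity-on-wires diagram; the trace condition (right-hand side of~\eqref{eq:ueb}) follows similarly using the quasi-speciality of red together with $\chi\chi^\dagger=d\,\id$.

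Next I would check the partition axiom, equation~\eqref{eq:uebm}. By the form of the second summand, $C_*=\{U_{i0}\mid i\neq 0\}$ is the diagonal class, so the $j=n=0$ case of~\eqref{eq:uebm} amounts to showing that the additive characters commute as diagonal operators, which is immediate from Red\eqref{eq:comm} and~\eqref{eq:chi}. The substantive case is $j\neq0,\ n\neq0$, where one must show $U_{ij}U_{in}=U_{in}U_{ij}$ for each fixed $i$, i.e.\ that within a fixed basis $\mathcal B^i$ the $d-1$ operators commute. This is exactly where Lemmas giving equations~\eqref{eqs:lem1} and~\eqref{eqs:lem2} come in: after composing~\eqref{eq:uebm} with black states and using~\eqref{eq:spider} to copy the $i$-indices, the left- and right-hand composites become diagrams built from two copies of the finite-field multiplication $\tinymult[yellowdot]$, two copies of addition $\tinymult[reddot]$, and a black multiplication; equations~\eqref{eqs:lem1} and~\eqref{eqs:lem2} are precisely the rewrites (associativity/commutativity of $+$ and $\cdot$, plus distributivity via~\eqref{2} and~\eqref{eq:rdist}) that transform one into the other. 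One also needs equation~\eqref{1}, the compatibility of the yellow unit with $\chi$, to commute the Fourier block past the multiplication, and the projector identities~\eqref{eq:proj}, \eqref{eq:projzero}, \eqref{eq:projn} together with Lemma~\ref{lem:a} to handle the $\projs$-decorated wires that restrict the multiplicative action to $\mathcal H'$.

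Finally I would note that $U_{00}=\I_d$ by construction: setting $i=j=0$ in the second summand and using $\tinyunit[reddot]=\tinystate[black,state,label={[label distance=-0.08cm]330:\tiny{$0$}}]$ together with the fact that the red comultiplication applied to the red unit copies it, the whole expression collapses to the identity wire, while the first summand vanishes because of the $\projs$-projector composed with $\tinystate[black,state,label={[label distance=-0.08cm]330:\tiny{$0$}}]$ (equation~\eqref{eq:projzero}). I expect the main obstacle to be the $j,n\neq0$ commutativity computation: the bookkeeping of which wires carry the $\projs$-projection onto $\mathcal H'$, and massaging the nested $\tinymult[yellowdot]$/$\tinymult[reddot]$/$\chi$ diagram into the exact shape matched by~\eqref{eqs:lem1} and~\eqref{eqs:lem2}, is delicate — in particular one must be careful that distributivity is applied in the correct (left vs.\ right) orientation, which is why the right-distributivity lemma~\eqref{eq:rdist} is needed as a separate ingredient. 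Once the diagram is in that normal form the two sides of~\eqref{eq:uebm} are visibly equal and the proof concludes.
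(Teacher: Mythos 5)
The second half of your plan (the partition axiom) is essentially the paper's route: the paper also verifies Lemma~\ref{lem:guebp}'s equation~\eqref{eq:uebm} by splitting it into the $0$\-state case and the $\projs$\-projected case (its equations~\eqref{eq:uffzero} and~\eqref{eq:uffnzero}), and the rewrites it uses are exactly the ones you name — the auxiliary identities~\eqref{eqs:lem1} and~\eqref{eqs:lem2}, the compatibility~\eqref{1} of the yellow unit with $\chi$, the bialgebra laws~\eqref{eq:bialg}, and the projector facts of Lemma~\ref{lem:a}/\eqref{eq:proj2}. (One small slip: the class $C_*$ here consists of the translation operators $x\mapsto x\,\tinydot[reddot]\,i$ coming from the second summand, not of diagonal character operators; but the commutation you need in that case does indeed follow from red associativity/commutativity, so this does not damage the argument.)

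The genuine gap is in the first half, the claim that $U_{FF}$ is a UEB. Your primary route — "identify $\sqrt{d}M$ with a controlled Hadamard and then invoke~\eqref{eq:MMUB} to get the unitarity equation of~\eqref{eq:ueb} for free" — is circular as stated: Theorem \ref{thm:const} is precisely what is supposed to deliver a maximal family of MUBs from the field (via Theorem~\ref{thm:uebp}), so you cannot assume that the finite-field $M$ already satisfies~\eqref{eq:MMUB}; that equation would have to be proved for this $M$, which is not free and is nowhere sketched. Even granting it, you would still need to exhibit $U_{FF}$ as $\phi_H(M)$ for a concrete controlled Hadamard and appeal to Theorem~\ref{thm:rev}, another unaddressed step. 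Your fallback — expanding~\eqref{eq:ueb} directly and reducing "via \eqref{classtr}, \eqref{eq:sm}, \eqref{eq:spider} and the Hadamard law" — omits the structural ingredient that actually makes the collapse work: the map on the two index wires built from field multiplication (restricted to nonzero elements by $\projs$) and addition must be a permutation with respect to the black basis. The paper isolates this as the map $P$, proves it unitary and a comonoid homomorphism using~\eqref{diag:last} and~\eqref{eq:proj2}, and then concludes that $U_{FF}$ equals $V_{P(i,j)}$ for a shift-and-multiply basis $V$ built from the Latin square of the additive group and the Hadamard $\chi$, so that $U_{FF}$ is a UEB by equivalence~\eqref{eq:uebeq}. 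Spider and Hadamard laws alone do not give unitarity or the trace orthogonality; the invertibility of multiplication by nonzero field elements (equivalently the permutation property of $P$, or the identity~\eqref{diag:last}) is indispensable, and your proposal never supplies it.
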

\begin{proof}
We first prove that $U_{FF}$ is a UEB. We do this by showing that $U_{FF}$ is equivalent to a \textit{shift and multiply basis}. First we rearrange equation~\eqref{eq:mainconst}.
\begin{align*}
\begin{aligned}
\begin{tikzpicture}[scale=0.75]
\node (f) [morphism,wedge, connect s length=1.4625cm width=0.5mm, connect se length=1.4625cm, width=1cm, connect n length=1.35cm,connect sw length=1.4625cm] at (0,0) {$U_{FF}$};
\node  at (f.connect s) {};
\node  at (f.connect se) {};
\node  at (f.connect n) {};
\end{tikzpicture}
\end{aligned}
:=&\quad
\begin{aligned}
\begin{tikzpicture}[scale=0.75]
\node (chi) [morphism,wedge,scale=0.75] at (0,-0.1) {\large$\chi$};
\node(b1)[blackdot]  at (-0.25,0.5) {};
\node(add)[reddot] at (0.25,1){};
\node(x)[yellowdot] at (1.25,-0.5){};
\node(b2)[blackdot] at (0.5,-1){};
\node(*) at (0.5,-1.5){};
\node(m) at (1.25,-1){};
\node(i1) at (-1,-2.5){};
\node(i2) at (0.5,-2.5){};
\node(i3) at (1.25,-2.5){};
\node(i2a) at (0.5,-1.75){};
\node(i3a) at (1.25,-1.75){};
\node(o) at (0.25,2){};
\node(g1)[whitedot,projdot] at (1.25,-2.25){};
\node(g2)[whitedot,inner sep=0.1pt] at (1.25,-2){$\projs$};
\draw[green][string, line width=1pt](g1.center) to (g2.center);
\draw[string,in=left,out=90,looseness=0.6] (i1.center) to (b1.center);
\draw[string,in=90,out=right] (b1.center) to (chi.north);
\draw[string,in=left,out=90] (b1.center) to (add.center);
\draw[string,in=90,out=right] (add.center) to (x.center);
\draw[string,in=270,out=90] (add.center) to (o.center);
\draw[string,in=270,out=90,looseness=1] (i2.center) to (i2a.center);
\draw[string,in=270,out=90,looseness=1] (i2a.center) to (m.center);
\draw[string,in=right,out=90,looseness=1] (m.center) to (x.center);
\draw[string,in=270,out=90] (i3.center) to (g1.center);
\draw[string,in=270,out=90] (g2.center) to (i3a.center);
\draw[string,in=270,out=90] (i3a.center) to (b2.center);
\draw[string,in=270,out=left] (b2.center) to (chi.south);
\draw[string,in=left,out=right] (b2.center) to (x.center);
\end{tikzpicture}
\end{aligned}
\quad + \quad
\begin{aligned}
\begin{tikzpicture}[scale=0.75]
\node(b2)[reddot] at (0.6,0.2){};
\node(m) at (1,-0.25){};
\node(s)[reddot] at (1.75,-2){};
\draw[string,out=90,in=right,looseness=0.5] (i3.center) to (b2.center);
\node(i1) at (0,-2.5){};
\node(i2) at (1.2,-2.5){};
\node(i3) at (1.75,-2.5){};
\node(o) at (0.6,2){};
\draw[string,out=90,in=left,looseness=0.5] (i1.center) to (b2.center);
\draw[string] (i3.center) to (s.center);
\draw[string] (o.center) to (b2.center);
\end{tikzpicture}
\end{aligned}
\quad \super{\eqref{eq:un}}=\quad
\begin{aligned}
\begin{tikzpicture}[scale=0.75]
\node (chi) [morphism,wedge,scale=0.75] at (0,-0.1) {\large$\chi$};
\node(b1)[blackdot]  at (-0.25,0.5) {};
\node(add)[reddot] at (0.25,1){};
\node(x)[yellowdot] at (1.25,-0.5){};
\node(b2)[blackdot] at (0.5,-1){};
\node(*) at (0.5,-1.5){};
\node(m) at (1.25,-1){};
\node(i1) at (-1,-2.5){};
\node(i2) at (0.5,-2.5){};
\node(i3) at (1.25,-2.5){};
\node(i2a) at (0.5,-1.75){};
\node(i3a) at (1.25,-1.75){};
\node(o) at (0.25,2){};
\node(g1)[whitedot,projdot] at (1.25,-2.25){};
\node(g2)[whitedot,inner sep=0.1pt] at (1.25,-2){$\projs$};
\draw[green][string, line width=1pt](g1.center) to (g2.center);
\draw[string,in=left,out=90,looseness=0.6] (i1.center) to (b1.center);
\draw[string,in=90,out=right] (b1.center) to (chi.north);
\draw[string,in=left,out=90] (b1.center) to (add.center);
\draw[string,in=90,out=right] (add.center) to (x.center);
\draw[string,in=270,out=90] (add.center) to (o.center);
\draw[string,in=270,out=90,looseness=1] (i2.center) to (i2a.center);
\draw[string,in=270,out=90,looseness=1] (i2a.center) to (m.center);
\draw[string,in=right,out=90,looseness=1] (m.center) to (x.center);
\draw[string,in=270,out=90] (i3.center) to (g1.center);
\draw[string,in=270,out=90] (g2.center) to (i3a.center);
\draw[string,in=270,out=90] (i3a.center) to (b2.center);
\draw[string,in=270,out=left] (b2.center) to (chi.south);
\draw[string,in=left,out=right] (b2.center) to (x.center);
\end{tikzpicture}
\end{aligned}
\quad + \quad
\begin{aligned}
\begin{tikzpicture}[scale=0.75]
\node(b2)[reddot] at (1,0.75){};
\node(b3)[blackdot] at (0.6,0.2){};
\node(m) at (1,-0.25){};
\node(s)[reddot] at (1.75,-2){};
\node(i1) at (0,-2.5){};
\node(i2) at (0.75,-2.5){};
\node(i3) at (1.75,-2.5){};
\node(o) at (1,2){};
\node(k) at (1.75,-0.5){};
\node(r)[blackdot] at (1.75,-1.5){};
\node(1) at (0.75,-1.5){};
\draw[string,out=90,in=left,looseness=0.5] (i1.center) to (b3.center);
\draw[string,out=90,in=left,looseness=0.5] (b3.center) to (b2.center);
\draw[string] (i3.center) to (s.center);
\draw[string] (o.center) to (b2.center);
\draw[string,out=90,in=270,looseness=0.5] (i2.center) to (1.center);
\draw[string,out=90,in=270,looseness=0.5] (1.center) to (k.center);
\draw[string,out=right,in=90,looseness=0.5] (b2.center) to (k.center);
\draw[string,out=right,in=90,looseness=0.5] (b3.center) to (r.center);
\end{tikzpicture}
\end{aligned}
\\
\super{\eqref{eq:chi}}=&\begin{aligned}
\begin{tikzpicture}[scale=0.75]
\node (chi) [morphism,wedge,scale=0.75] at (0,-0.1) {\large$\chi$};
\node(b1)[blackdot]  at (-0.25,0.5) {};
\node(add)[reddot] at (0.25,1){};
\node(x)[yellowdot] at (1.15,-0.75){};
\node(b2)[blackdot] at (0.5,-1){};
\node(*) at (0.5,-1.5){};
\node(m) at (1.25,-1){};
\node(i1) at (-1,-2.5){};
\node(i2) at (0.5,-2.5){};
\node(i3) at (1.25,-2.5){};
\node(i2a) at (0.5,-1.75){};
\node(i3a) at (1.25,-1.75){};
\node(o) at (0.25,2){};
\node(g1)[whitedot,projdot] at (1.25,-2.25){};
\node(g2)[whitedot,inner sep=0.1pt] at (1.25,-2){$\projs$};
\draw[green][string, line width=1pt](g1.center) to (g2.center);
\draw[string,in=left,out=90,looseness=0.6] (i1.center) to (b1.center);
\draw[string,in=90,out=right] (b1.center) to (chi.north);
\draw[string,in=left,out=90] (b1.center) to (add.center);
\draw[string,in=90,out=right] (add.center) to (x.center);
\draw[string,in=270,out=90] (add.center) to (o.center);
\draw[string,in=270,out=90,looseness=1] (i2.center) to (i2a.center);
\draw[string,in=270,out=90,looseness=1] (i2a.center) to (m.center);
\draw[string,in=right,out=90,looseness=1] (m.center) to (x.center);
\draw[string,in=270,out=90] (i3.center) to (g1.center);
\draw[string,in=270,out=90] (g2.center) to (i3a.center);
\draw[string,in=270,out=90] (i3a.center) to (b2.center);
\draw[string,in=270,out=left] (b2.center) to (chi.south);
\draw[string,in=left,out=right] (b2.center) to (x.center);
\end{tikzpicture}
\end{aligned}
\quad + \quad
\begin{aligned}
\begin{tikzpicture}[scale=0.75]
\node (chi) [morphism,wedge,scale=0.75] at (0,-0.1) {\large$\chi$};
\node(b1)[blackdot]  at (-0.25,0.5) {};
\node(add)[reddot] at (0.25,1){};
\node(x) at (1.25,-0.5){};
\node(b2) at (0.5,-0.75){};
\node(*) at (0.5,-1.5){};
\node(m) at (1.25,-0.75){};
\node(i1) at (-1,-2.5){};
\node(i2) at (0.5,-2.5){};
\node(i3) at (1.25,-2.5){};
\node(i2a) at (0.5,-1.25){};
\node(i3a) at (1.25,-1.25){};
\node(o) at (0.25,2){};
\node[reddot](g1) at (1.25,-2){};
\node[reddot](g2) at (1.25,-1.5){};
\draw[string,in=left,out=90,looseness=0.6] (i1.center) to (b1.center);
\draw[string,in=90,out=right] (b1.center) to (chi.north);
\draw[string,in=left,out=90] (b1.center) to (add.center);
\draw[string,in=90,out=right] (add.center) to (x.center);
\draw[string,in=270,out=90] (add.center) to (o.center);
\draw[string,in=270,out=90,looseness=1] (i2.center) to (i2a.center);
\draw[string,in=270,out=90,looseness=1] (i2a.center) to (m.center);
\draw[string,in=270,out=90,looseness=1] (m.center) to (x.center);
\draw[string,in=270,out=90] (i3.center) to (g1.center);
\draw[string,in=270,out=90] (g2.center) to (i3a.center);
\draw[string,in=270,out=90] (i3a.center) to (b2.center);
\draw[string,in=270,out=90] (b2.center) to (chi.south);
\end{tikzpicture}
\end{aligned}
\end{align*}
We now prove that the following linear map $P$, as defined below, is a permutation.
\begin{align*}
\begin{pic}[scale=0.75]
\node(o1) at (-0.32,2.4){};
\node(m) at (0.32,2.4){};
\node(i2) at (-0.32,0){};
\node(i3) at (0.32,0){};
\node(p)[morphism,wedge] at (0,1.2){$P$};
\draw[string] (i2.center) to (p.south west);
\draw[string] (i3.center) to (p.south east);
\draw[string] (m.center) to (p.north east);
\draw[string] (o1.center) to (p.north west);
\end{pic}:= \quad
\begin{aligned}
\begin{tikzpicture}[scale=0.75]
\node(b)[blackdot] at (0.5,-0.75){};
\node(m) at (1.25,-0.75){};
\node(i1) at (0.5,-2.25){};
\node(i2) at (1.25,-2.25){};
\node(i1a) at (0.5,-1.5){};
\node(i2a) at (1.25,-1.5){};
\node(g1)[whitedot,projdot] at (1.25,-2){};
\node(g2)[whitedot,inner sep=0.1pt] at (1.25,-1.75){$\projs$};
\node(o1) at (0,0){};
\node(o2) at (1.15,0){};
\node(y)[yellowdot] at (1.15,-0.5){};
\draw[green][string, line width=1pt](g1.center) to (g2.center);
\draw[string,in=270,out=90,looseness=1] (i1.center) to (i1a.center);
\draw[string,in=270,out=90,looseness=1] (i1a.center) to (m.center);
\draw[string,in=right,out=90,looseness=1] (m.center) to (y.center);
\draw[string,in=270,out=90] (i2.center) to (g1.center);
\draw[string,in=270,out=90] (g2.center) to (i2a.center);
\draw[string,in=270,out=90] (i2a.center) to (b.center);
\draw[string,in=180,out=270] (o1.center) to (b.center);
\draw[string,in=left,out=right] (b.center) to (y.center);
\draw[string](y.center) to(o2.center);
\end{tikzpicture}
\end{aligned}
+
\begin{pic}[scale=0.75]
\node(b) at (0.5,0){};
\node(m) at (1.25,0){};
\node(i1) at (0.5,-2.25){};
\node(i2) at (1.25,-2.25){};
\node(i1a) at (0.5,-1){};
\node(i2a) at (1.25,-1){};
\node[reddot](g1) at (1.25,-1.75){};
\node[reddot](g2) at (1.25,-1.25){};
\draw[string,in=270,out=90,looseness=1] (i1.center) to (i1a.center);
\draw[string,in=270,out=90,looseness=1] (i1a.center) to (m.center);
\draw[string,in=270,out=90] (i2.center) to (g1.center);
\draw[string,in=270,out=90] (g2.center) to (i2a.center);
\draw[string,in=270,out=90] (i2a.center) to (b.center);
\end{pic}
\end{align*}
First we show that equation~\eqref{eq:puni} holds for $P$.
\begin{align*}\begin{pic}[scale=0.75]
\node(i2) at (-0.32,-2.25){};
\node(i3) at (0.32,-2.25){};
\node(p)[morphism,wedge] at (0,-0.6){$P$};
\draw[string] (i2.center) to (p.south west);
\draw[string] (i3.center) to (p.south east);
\node(i2a) at (-0.32,2.25){};
\node(i3a) at (0.32,2.25){};
\node(pa)[morphism,wedge,hflip] at (0,0.6){$P$};
\draw[string] (i2a.center) to (pa.north west);
\draw[string] (i3a.center) to (pa.north east);
\draw[string] (p.north east) to (pa.south east);
\draw[string] (p.north west) to (pa.south west);
\end{pic}
:=& \quad \begin{aligned}
\begin{tikzpicture}[scale=0.75]
\node(b)[blackdot] at (0.5,-0.75){};
\node(m) at (1.25,-0.75){};
\node(i1) at (0.5,-2.25){};
\node(i2) at (1.25,-2.25){};
\node(i1a) at (0.5,-1.5){};
\node(i2a) at (1.25,-1.5){};
\node(g1)[whitedot,projdot] at (1.25,-2){};
\node(g2)[whitedot,inner sep=0.1pt] at (1.25,-1.75){$\projs$};
\node(o2) at (1.15,0){};
\node(y)[yellowdot] at (1.15,-0.5){};
\draw[green][string, line width=1pt](g1.center) to (g2.center);
\draw[string,in=270,out=90,looseness=1] (i1.center) to (i1a.center);
\draw[string,in=270,out=90,looseness=1] (i1a.center) to (m.center);
\draw[string,in=right,out=90,looseness=1] (m.center) to (y.center);
\draw[string,in=270,out=90] (i2.center) to (g1.center);
\draw[string,in=270,out=90] (g2.center) to (i2a.center);
\draw[string,in=270,out=90] (i2a.center) to (b.center);\draw[string,in=left,out=right] (b.center) to (y.center);
\draw[string](y.center) to(o2.center);
\node(b1)[blackdot] at (0.5,0.75){};
\node(m1) at (1.25,0.75){};
\node(i11) at (0.5,2.25){};
\node(i21) at (1.25,2.25){};
\node(i1a1) at (0.5,1.5){};
\node(i2a1) at (1.25,1.5){};
\node(g11)[whitedot,projdot] at (1.25,2){};
\node(g21)[whitedot,inner sep=0.1pt] at (1.25,1.75){$\projs$};
\node(o21) at (1.15,0){};
\node(y1)[yellowdot] at (1.15,0.5){};
\draw[green][string, line width=1pt](g11.center) to (g21.center);
\draw[string,in=90,out=270,looseness=1] (i11.center) to (i1a1.center);
\draw[string,in=90,out=270,looseness=1] (i1a1.center) to (m1.center);
\draw[string,in=right,out=270,looseness=1] (m1.center) to (y1.center);
\draw[string,in=90,out=270] (i21.center) to (g11.center);
\draw[string,in=90,out=270] (g21.center) to (i2a1.center);
\draw[string,in=90,out=270] (i2a1.center) to (b1.center);
\draw[string,in=180,out=180] (b.center) to (b1.center);
\draw[string,in=left,out=right] (b1.center) to (y1.center);
\draw[string](y1.center) to(o21.center);
\end{tikzpicture}
\end{aligned}
+
\begin{pic}[scale=0.75]
\node(b) at (0.5,0){};
\node(m) at (1.25,0){};
\node(i1) at (0.5,-2.25){};
\node(i2) at (1.25,-2.25){};
\node(i1a) at (0.5,-1){};
\node(i2a) at (1.25,-1){};
\node[reddot](g1) at (1.25,-1.75){};
\node[reddot](g2) at (1.25,-1.25){};
\draw[string,in=270,out=90,looseness=1] (i1.center) to (i1a.center);
\draw[string,in=270,out=90,looseness=1] (i1a.center) to (m.center);
\draw[string,in=270,out=90] (i2.center) to (g1.center);
\draw[string,in=270,out=90] (g2.center) to (i2a.center);
\draw[string,in=270,out=90] (i2a.center) to (b.center);
\node(b1) at (0.5,0){};
\node(m1) at (1.25,0){};
\node(i11) at (0.5,2.25){};
\node(i21) at (1.25,2.25){};
\node(i1a1) at (0.5,1){};
\node(i2a1) at (1.25,1){};
\node[reddot](g11) at (1.25,1.75){};
\node[reddot](g21) at (1.25,1.25){};
\draw[string,in=90,out=270,looseness=1] (i11.center) to (i1a1.center);
\draw[string,in=90,out=270,looseness=1] (i1a1.center) to (m1.center);
\draw[string,in=90,out=270] (i21.center) to (g11.center);
\draw[string,in=90,out=270] (g21.center) to (i2a1.center);
\draw[string,in=90,out=270] (i2a1.center) to (b1.center);
\end{pic}\quad \super{\eqref{diag:last}}= \quad
\begin{pic}[scale=0.75]
\node(i1) at (-0.32,-2.25){};
\node(i2) at (0.32,-2.25){};
\node(o1) at (-0.32,2.25){};
\node(o2) at (0.32,2.25){};
\node(m1)[whitedot,inner sep=0.25pt] at (0.32,0){$\projs$};
\node(m2) at (0.32,0){};
\draw[string](i1.center) to (o1.center);
\draw[string](i2.center) to (m2.center);
\draw[string](m1.center) to (o2.center);
\draw[green][string, line width=1pt](m1.center) to (m2.center);
\end{pic}\quad + \quad
\begin{pic}[scale=0.75]
\node(i1) at (-0.32,-2.25){};
\node(i2) at (0.32,-2.25){};
\node(o1) at (-0.32,2.25){};
\node(o2) at (0.32,2.25){};
\node[reddot](m1) at (0.32,0.5){};
\node[reddot](m2) at (0.32,-0.5){};
\draw[string](i1.center) to (o1.center);
\draw[string](i2.center) to (m2.center);
\draw[string](m1.center) to (o2.center);
\end{pic}
\quad \super{\eqref{eq:proj2}}= \quad
\begin{pic}[scale=0.75]
\node(i1) at (-0.32,-2.25){};
\node(i2) at (0.32,-2.25){};
\node(o1) at (-0.32,2.25){};
\node(o2) at (0.32,2.25){};
\draw[string](i1.center) to (o1.center);
\draw[string](i2.center) to (o2.center);
\end{pic}
\end{align*}
Now we show that equation~\eqref{eq:pfunc} holds for $P$.
\begin{align*}\begin{pic}[scale=0.75]
\node(o2) at (-0.36,3){};
\node(o3) at (0.36,3){};
\node(o4) at (1,3){};
\node(i2) at (-0.32,0){};
\node(i3) at (0.32,0){};
\node(o1) at (-1,3){};
\node(p)[morphism,wedge] at (0,1.2){$P$};
\node(b1)[blackdot] at (-0.32,2){};
\node(b2)[blackdot] at (0.32,2){};
\draw[string] (i2.center) to (p.south west);
\draw[string] (i3.center) to (p.south east);
\draw[string] (b2.center) to (p.north east);
\draw[string] (b1.center) to (p.north west);
\draw[string,out=left,in=270] (b1.center) to (o1.center);
\draw[string,out=right,in=270] (b1.center) to (o3.center);
\draw[string,out=left,in=270] (b2.center) to (o2.center);
\draw[string,out=right,in=270] (b2.center) to (o4.center);
\end{pic}:=& \quad 
\begin{pic}[scale=0.75]
\node(b)[blackdot] at (0.5,-0.75){};
\node(m) at (1.25,-0.75){};
\node(i1) at (0.5,-2.25){};
\node(i2) at (1.25,-2.25){};
\node(i1a) at (0.5,-1.5){};
\node(i2a) at (1.25,-1.5){};
\node(g1)[whitedot,projdot] at (1.25,-2){};
\node(g2)[whitedot,inner sep=0.1pt] at (1.25,-1.75){$\projs$};
\node[blackdot](o1) at (0,0){};
\node[blackdot](o2) at (1.15,0){};
\node(o1a) at (-0.5,0.75){};
\node(o1b) at (1,0.75){};
\node(o2a) at (0.65,0.75){};
\node(o2b) at (1.65,0.75){};
\node(y)[yellowdot] at (1.15,-0.5){};
\draw[string,out=left,in=270] (o1.center) to (o1a.center);
\draw[string,out=right,in=270] (o1.center) to (o1b.center);
\draw[string,out=left,in=270] (o2.center) to (o2a.center);
\draw[string,out=right,in=270] (o2.center) to (o2b.center);
\draw[green][string, line width=1pt](g1.center) to (g2.center);
\draw[string,in=270,out=90,looseness=1] (i1.center) to (i1a.center);
\draw[string,in=270,out=90,looseness=1] (i1a.center) to (m.center);
\draw[string,in=right,out=90,looseness=1] (m.center) to (y.center);
\draw[string,in=270,out=90] (i2.center) to (g1.center);
\draw[string,in=270,out=90] (g2.center) to (i2a.center);
\draw[string,in=270,out=90] (i2a.center) to (b.center);
\draw[string,in=180,out=270] (o1.center) to (b.center);
\draw[string,in=left,out=right] (b.center) to (y.center);
\draw[string](y.center) to(o2.center);
\end{pic}+
\begin{pic}[scale=0.75]
\node[blackdot](b) at (0.5,0){};
\node[blackdot](m) at (1.25,0){};
\node(o1a) at (0,0.75){};
\node(o1b) at (1,0.75){};
\draw[string,out=left,in=270] (b.center) to (o1a.center);
\draw[string,out=right,in=270] (b.center) to (o1b.center);
\draw[string,out=left,in=270] (m.center) to (o2a.center);
\draw[string,out=right,in=270] (m.center) to (o2b.center);
\node(o2a) at (0.75,0.75){};
\node(o2b) at (1.75,0.75){};
\node(i1) at (0.5,-2.25){};
\node(i2) at (1.25,-2.25){};
\node(i1a) at (0.5,-1){};
\node(i2a) at (1.25,-1){};
\node[reddot](g1) at (1.25,-1.75){};
\node[reddot](g2) at (1.25,-1.25){};
\draw[string,in=270,out=90,looseness=1] (i1.center) to (i1a.center);
\draw[string,in=270,out=90,looseness=1] (i1a.center) to (m.center);
\draw[string,in=270,out=90] (i2.center) to (g1.center);
\draw[string,in=270,out=90] (g2.center) to (i2a.center);
\draw[string,in=270,out=90] (i2a.center) to (b.center);
\end{pic}\quad \super{\eqref{eq:bialg}}= \quad
\begin{pic}[scale=3/8]
\node(o1) at (-2,3){};
\node(o2) at (-0.5,3){};
\node(o3) at (0.5,3){};
\node(o4) at (2,3){};
\node(i1) at (-0.5,-3){};
\node(i2) at (1.5,-3){};
\node(m1) at (-0.5,-0.5){};
\node(m2) at (1.5,-0.5){};
\node(m3) at (-0.5,-2){};
\node(g1)[whitedot,projdot] at (1.5,-2.5){};
\node(g2)[whitedot,inner sep=0.1pt] at (1.5,-2){$\projs$};
\node(b1)[blackdot] at (-1.25,2){};
\node(b2)[blackdot] at (-0.5,0){};
\node(b3)[blackdot] at (0.5,1){};
\node(b4)[blackdot] at (2,1){};
\node(y1)[yellowdot] at (0.5,2){};
\node(y2)[yellowdot] at (2,2){};
\draw[string,in=180,out=270] (o1.center) to (b1.center);
\draw[string,in=right,out=270] (o2.center) to (b1.center);
\draw[string] (o3.center) to (y1.center);
\draw[string] (o4.center) to (y2.center);
\draw[string,in=left,out=left] (y1.center) to (b3.center);
\draw[string,in=left,out=right] (y1.center) to (b4.center);
\draw[string,in=right,out=right] (y2.center) to (b4.center);
\draw[string,out=left,in=right] (y2.center) to (b3.center);
\draw[string,in=left,out=270] (b1.center) to (b2.center);
\draw[string,in=right,out=270] (b3.center) to (b2.center);
\draw[string,in=90,out=270] (b4.center) to (m2.center);
\draw[string,in=90,out=270] (b2.center) to (m1.center);
\draw[string,in=90,out=270] (m1.center) to (g2.center);
\draw[string,in=90,out=270] (m2.center) to (m3.center);
\draw[string,in=90,out=270] (m3.center) to (i1.center);
\draw[green][string, line width=1pt,in=90,out=270] (g2.center) to (g1.center);
\draw[string,in=90,out=270] (g1.center) to (i2.center);
\end{pic}+
\begin{pic}[scale=0.75]
\node[reddot](b) at (0,0){};
\node[reddot](r) at (1,0){};
\node[blackdot](m) at (0.5,-1){};
\node(o1a) at (0,0.75){};
\node(o1b) at (1,0.75){};
\draw[string,out=90,in=270] (b.center) to (o1a.center);
\draw[string] (r.center) to (o1b.center);
\draw[string,out=left,in=270] (m.center) to (o2a.center);
\draw[string,out=right,in=270] (m.center) to (o2b.center);
\node(o2a) at (0.75,0.75){};
\node(o2b) at (1.75,0.75){};
\node(i1) at (0.5,-2.25){};
\node(i2) at (1.25,-2.25){};
\node(i1a) at (0.5,-1){};
\node(i2a) at (1.25,-1){};
\node[reddot](g1) at (1.25,-1.75){};
\draw[string,in=270,out=90,looseness=1] (i1.center) to (i1a.center);
\draw[string,in=270,out=90,looseness=1] (i1a.center) to (m.center);
\draw[string,in=270,out=90] (i2.center) to (g1.center);
\end{pic}
\\
\super{\eqref{eq:sm}}=& \quad
\begin{pic}[scale=3/8]
\node(o1) at (-2,3){};
\node(o2) at (-0.5,3){};
\node(o3) at (0.5,3){};
\node(o4) at (2.75,3){};
\node(i1) at (-0.5,-3){};
\node(i2) at (1.5,-3){};
\node(g1)[whitedot,projdot] at (1.5,-2.5){};
\node(g2)[whitedot,inner sep=0.1pt] at (1.5,-2){$\projs$};
\node(b1)[blackdot] at (-1.25,1){};
\node(b2)[blackdot] at (1.5,-0.8){};
\node(b3)[blackdot] at (2,1){};
\node(b4)[blackdot] at (-0.5,-0.9){};
\node(y1)[yellowdot] at (-0.5,2){};
\node(y2)[yellowdot] at (2.75,2){};
\draw[string,in=180,out=270] (o1.center) to (b1.center);
\draw[string] (o2.center) to (y1.center);
\draw[string,out=270,in=left] (o3.center) to (b3.center);
\draw[string] (o4.center) to (y2.center);
\draw[string,in=left,out=right] (y1.center) to (b4.center);
\draw[string] (i1.center) to (b4.center);
\draw[string,in=right,out=right] (y2.center) to (b4.center);
\draw[string,out=left,in=right] (y2.center) to (b3.center);
\draw[string,in=left,out=270] (b1.center) to (b2.center);
\draw[string,in=right,out=270] (b3.center) to (b2.center);
\draw[green][string, line width=1pt,in=90,out=270] (g2.center) to (g1.center);
\draw[string,in=90,out=270] (g1.center) to (i2.center);
\draw[string] (g2.center) to (b2.center);
\draw[string,in=right,out=left] (y1.center) to (b1.center);
\end{pic}+
\begin{pic}[scale=0.75]
\node[reddot](b) at (0,-0.25){};
\node[reddot](r) at (1,0){};
\node[blackdot](m) at (0,-1.75){};
\node(o1a) at (0,0.75){};
\node(o1b) at (1,0.75){};
\draw[string,out=90,in=270] (b.center) to (o1a.center);
\draw[string] (r.center) to (o1b.center);
\draw[string,out=left,in=270] (m.center) to (o2a.center);
\draw[string,out=right,in=270] (m.center) to (o2b.center);
\node(o2a) at (0.75,0.75){};
\node(o2b) at (1.75,0.75){};
\node(i1) at (0,-2.25){};
\node(i2) at (1.25,-2.25){};
\node(i1a) at (0.5,-1){};
\node(i2a) at (1.25,-1){};
\node[blackdot](g1) at (1.25,-1.75){};
\node[reddot](r1) at (0.25,-1){};
\node[reddot](r2) at (1,-0.5){};
\draw[string] (i1.center) to (m.center);
\draw[string,in=270,out=90] (i2.center) to (g1.center);
\draw[string,out=left,in=270] (g1.center) to (r1.center);
\draw[string,out=right,in=270] (g1.center) to (r2.center);
\end{pic}\quad = \quad
\begin{pic}[scale=0.75]
\node(o1) at (-0.32,2.4){};
\node(m) at (0.32,2.4){};
\node(i2) at (-0.32,0){};
\node(i3) at (0.32,0){};
\node(p)[morphism,wedge] at (0,1.2){$P$};
\draw[string] (m.center) to (p.north east);
\draw[string] (o1.center) to (p.north west);
\node(o1a) at (1.68,2.4){};
\node(ma) at (2.32,2.4){};
\node(i2a) at (1.68,0){};
\node(i3a) at (2.32,0){};
\node(pa)[morphism,wedge] at (2,1.2){$P$};
\node(b1)[blackdot] at (0.68,-0.25){};
\node(b2)[blackdot] at (1.32,-0.25){};
\draw[string] (ma.center) to (pa.north east);
\draw[string] (o1a.center) to (pa.north west);
\draw[string,out=left,in=270] (b1.center) to (p.south west);
\draw[string,out=left,in=270,looseness=0.6] (b2.center) to (p.south east);
\draw[string,out=right,in=270,looseness=0.6] (b1.center) to (pa.south west);
\draw[string,out=right,in=270] (b2.center) to (pa.south east);
\node(a1) at (0.68,-0.75){};
\node(a2) at (1.32,-0.75){};
\draw[string] (a1.center) to (b1.center);
\draw[string] (a2.center) to (b2.center);
\end{pic}\end{align*}
So $P$ is a permutation and so $U_{FF}$ is equal to $V_{P(i,j)}$, where $V_{ij}$ is given by the following:
\begin{equation}
V_{ij}:=
 \begin{aligned}\begin{pic}
          \node (in) at (0,0) {};         
          \node (H)[morphism,wedge,scale=0.75] at(1,1.5) {\large$\chi$};
          \node (b1)[blackdot,scale=1.2] at (0.5,2) {};
          \node (b2)[reddot,scale=1.2] at (1,3) {};          
          \node (i)[state,black,scale=0.5,label={[label distance=0.08cm]330:i}] at(1,1) {};
          \node (0b) at (0.5,2) {};
          \node (j)[state,black,scale=0.5,label={[label distance=0.08cm]330:j}] at(1.5,2.5) {};
          \node (out) at (1,4) {}; 
          
          \draw[string,out=90,in=180,looseness=0.75] (in.center) to (b1.center);           \draw[string,out=90,in=0] (H.north) to (b1.center);
          \draw[string,out=90,in=180] (b1.center) to (b2.center);
          \draw[string,out=90,in=0] (j.center) to (b2.center);
          \draw[string] (b2.center) to (out.center);
          \draw[string] (i) to (H.south);
\end{pic}\end{aligned}
\end{equation}
Since $\tinymult[reddot]$ is a finite abelian group it is a finite quasigroup and thus a Latin square. $\chi$ is a Hadamard and so $V$ is a shift and multiply basis, and therefore a UEB ~\cite{werner2001all,mustothesis,mypaper1}. $V$ and $U_{FF}$ are equivalent by equation~\eqref{eq:uebeq}, and so $U_{FF}$ is a UEB. 
\paragraph{Commuting property.}\ignore{We now show the commuting property for $U_{FF}$. For $i,x \in \range{0}$ and $a,b\in \range{1}$ consider the following:
\begin{align*}[U_{FF}]_{xa}\circ[U_{FF}]_{xb}\ket i=&\chi_i[a]\chi_{(i \tinydot[reddot](a\tinydot[yellowdot]x))}[b]\ket{(i \tinydot[reddot](a\tinydot[yellowdot]x))\tinydot[reddot](b\tinydot[yellowdot]x)}\\ \super{\eqref{1}}=&\chi_1[i \tinydot[yellowdot]a]\chi_1[(i \tinydot[reddot](a\tinydot[yellowdot]x))\tinydot[yellowdot]b]\ket{(i \tinydot[reddot](a\tinydot[yellowdot]x))\tinydot[reddot](b\tinydot[yellowdot]x)}\\
\super{\eqref{eq:chi}}=&\chi_1[(i \tinydot[yellowdot]a))\tinydot[reddot]((i \tinydot[reddot](a\tinydot[yellowdot]x))\tinydot[yellowdot]b)]\ket{(i \tinydot[reddot](a\tinydot[yellowdot]x))\tinydot[reddot](b\tinydot[yellowdot]x)}
\\\super{\eqref{eq:as}\eqref{eq:comm}\eqref{2}}=&\chi_1[(i \tinydot[yellowdot]b))\tinydot[reddot]((i \tinydot[reddot](b\tinydot[yellowdot]x))\tinydot[yellowdot]a)]\ket{(i \tinydot[reddot](b\tinydot[yellowdot]x))\tinydot[reddot](a\tinydot[yellowdot]x)}
\\\super{\eqref{eq:chi}}=&\chi_1[i \tinydot[yellowdot]b]\chi_1[(i \tinydot[reddot](b\tinydot[yellowdot]x))\tinydot[yellowdot]a]\ket{(i \tinydot[reddot](b\tinydot[yellowdot]x))\tinydot[reddot](a\tinydot[yellowdot]x)}
\\\super{\eqref{1}}=&\chi_i[b]\chi_{(i \tinydot[reddot](b\tinydot[yellowdot]x))}[a]\ket{(i \tinydot[reddot](b\tinydot[yellowdot]x))\tinydot[reddot](a\tinydot[yellowdot]x)}
\\=&[U_{FF}]_{xb}\circ[U_{FF}]_{xa}\ket i
\end{align*}
This is true for all $i$ and so the classes $C_x$ with $x \in \range{0}$ commute as required. For $C_*$ we have the following:
\begin{align*}
[U_{FF}]_{a0}\circ[U_{FF}]_{b0}\ket i=&\ket{((i\tinydot[reddot]a)\tinydot[reddot]b}
\\\super{\eqref{eq:as}}=&\ket{((i\tinydot[reddot]b)\tinydot[reddot]a}
\\=&[U_{FF}]_{b0}\circ[U_{FF}]_{a0}\ket i
\end{align*}
Again this holds for all $i$ and we have the required result.}


We now prove the following:
\begin{align}\label{eq:uffzero}
\begin{aligned}
\begin{tikzpicture}[scale=0.75]
\node (f) [morphism,wedge,scale=0.75, connect n ] at (0,1) {$U_{FF}$};
\node(i2)  at (-0.4125,-1.7) {};
\node(i2a)  at (0.4125,-1.7) {};
\node (e) [morphism,wedge, scale=0.75,connect sw length = 1.65cm,connect s length = 1.65cm] at (-0.24,-0.25) {$U_{FF}$};
\node(m1) at (0.75,-0.25) {};
\node(m2) at (1,-0.25) {};
\node[state,black,scale=0.25,label={[label distance=-0.125cm]315:\tiny{0}}](b1) at (0.75,-1.4){};
\node[state,black,scale=0.25,hflip,label={[label distance=-0.125cm]45:\tiny{0}}](b2) at (0.75,-2){};
\node[state,black,scale=0.25,label={[label distance=-0.125cm]315:\tiny{0}}](b3) at (1.25,-1.4){};
\node[state,black,scale=0.25,hflip,label={[label distance=-0.125cm]45:\tiny{0}}](b4) at (1.25,-2){};
\draw[string] (e.north) to (f.south west);
\draw[string,in=270,out=90,looseness=1.25] (i2a.center) to (m1.center);
\draw[string,in=270,out=90,looseness=1.25] (m1.center) to (f.south);
\draw[string,in=270,out=90] (b1.center) to (e.south east);
\draw[string,out=90,in=270] (b3.center) to (m2.center);
\draw[string,out=90,in=270] (m2.center) to (f.south east);
\draw[string] (i2a.center) to +(0,-0.5);
\draw[string] (b2.center) to (0.75,-2.2);
\draw[string] (b4.center) to (1.25,-2.2);
\end{tikzpicture}
\end{aligned}
\quad = \quad
\begin{aligned}
\begin{tikzpicture}[scale=0.75]
\node (f) [morphism,wedge,scale=0.75, connect n ] at (0,1) {$U_{FF}$};
\node(i2)  at (-0.4125,-1.7) {};
\node(i2a)  at (0.4125,-1.7) {};
\node (e) [morphism,wedge, scale=0.75,connect sw length = 1.65cm] at (-0.24,-0.25){$U_{FF}$};
\node(m1) at (0.75,-0.25){} ;
\node(m2) at (1,-0.25) {};
\node[state,black,scale=0.25,label={[label distance=-0.125cm]315:\tiny{0}}](b1) at (0.75,-1.4){};
\node[state,black,scale=0.25,hflip,label={[label distance=-0.125cm]45:\tiny{0}}](b2) at (0.75,-2){};
\node[state,black,scale=0.25,label={[label distance=-0.125cm]315:\tiny{0}}](b3) at (1.25,-1.4){};
\node[state,black,scale=0.25,hflip,label={[label distance=-0.125cm]45:\tiny{0}}](b4) at (1.25,-2){};
\draw[string] (i2.center) to (e.south);
\draw[string] (e.north) to (f.south west);
\draw[string,in=270,out=90,looseness=1.25] (i2a.center) to (m1.center);
\draw[string,in=270,out=90,looseness=1.25] (m1.center) to (f.south);
\draw[string,in=270,out=90] (b1.center) to (e.south east);
\draw[string,out=90,in=270] (b3.center) to (m2.center);
\draw[string,out=90,in=270] (m2.center) to (f.south east);
\draw[string,out=270,in=90] (i2.center) to +(0.825,-0.5);
\draw[string,out=270,in=90] (i2a.center) to +(-0.825,-0.5);
\draw[string] (b2.center) to (0.75,-2.2);
\draw[string] (b4.center) to (1.25,-2.2);
\end{tikzpicture}
\end{aligned}
\end{align}

\begin{align*}
\begin{aligned}
\begin{tikzpicture}[scale=0.75]
\node (f) [morphism,wedge,scale=0.75, connect n ] at (0,1) {$U_{FF}$};
\node(i2)  at (-0.4125,-1.7) {};
\node(i2a)  at (0.4125,-1.7) {};
\node (e) [morphism,wedge, scale=0.75,connect sw length = 1.65cm,connect s length = 1.65cm] at (-0.24,-0.25) {$U_{FF}$};
\node(m1) at (0.75,-0.25) {};
\node(m2) at (1,-0.25) {};
\node[state,black,scale=0.25,label={[label distance=-0.125cm]315:\tiny{0}}](b1) at (0.75,-1.4){};
\node[state,black,scale=0.25,hflip,label={[label distance=-0.125cm]45:\tiny{0}}](b2) at (0.75,-2){};
\node[state,black,scale=0.25,label={[label distance=-0.125cm]315:\tiny{0}}](b3) at (1.25,-1.4){};
\node[state,black,scale=0.25,hflip,label={[label distance=-0.125cm]45:\tiny{0}}](b4) at (1.25,-2){};
\draw[string] (e.north) to (f.south west);
\draw[string,in=270,out=90,looseness=1.25] (i2a.center) to (m1.center);
\draw[string,in=270,out=90,looseness=1.25] (m1.center) to (f.south);
\draw[string,in=270,out=90] (b1.center) to (e.south east);
\draw[string,out=90,in=270] (b3.center) to (m2.center);
\draw[string,out=90,in=270] (m2.center) to (f.south east);
\draw[string] (i2a.center) to +(0,-0.5);
\draw[string] (b2.center) to (0.75,-2.2);
\draw[string] (b4.center) to (1.25,-2.2);
\end{tikzpicture}
\end{aligned}
:=& \quad 
\begin{pic}[scale=1/3]
\node (chi) [morphism,wedge,scale=0.5] at (-0.5,-0.1) {\large$\chi$};
\node(b1)[blackdot,proofdiagram]  at (-1,0.75) {};
\node(add)[reddot,proofdiagram] at (-0.5,1.25){};
\node(x)[yellowdot,proofdiagram] at (1.75,-0.5){};
\node(b2)[blackdot,proofdiagram] at (0.5,-1){};
\node(*) at (1,-0.5){};
\node(m) at (2,-1){};
\node(i1) at (-2,-4){};
\node(i2) at (1,-1.5){};
\node(i3)[reddot,proofdiagram] at (-0.75,-3){};
\node(i2a) at (1,-0.75){};
\node(i3a) at (1.75,-0.75){};
\node(o) at (0.75,2.5){};
\node(g1)[whitedot,projdot] at (-0.5,-2){};
\node(g2)[whitedot,scale=1,inner sep=0.25pt] at (-0.5,-2){$\boldsymbol{*}$};
\draw[green][string, line width=1pt,in=270,out=90](g1.center) to (g2.center);
\draw[string,in=left,out=90,looseness=0.6] (i1.center) to (b1.center);
\draw[string,in=90,out=right] (b1.center) to (chi.north);
\draw[string,in=left,out=90] (b1.center) to (add.center);
\draw[string,in=90,out=right] (add.center) to (x.center);
\draw[string,in=270,out=90,looseness=1] (i2.center) to (m.center);
\draw[string,in=right,out=90,looseness=1] (m.center) to (x.center);
\draw[string,in=270,out=90] (i3.center) to (g1.center);
\draw[string,in=270,out=90] (g2.center) to (b2.center);
\draw[string,in=270,out=left] (b2.center) to (chi.south);
\draw[string,in=left,out=right] (b2.center) to (x.center);
\node (chi7) [morphism,wedge,scale=0.5] at (2,3.1) {\large$\chi$};
\node(b17)[blackdot,proofdiagram]  at (1.5,4) {};
\node(add7)[reddot,proofdiagram] at (2,4.5){};
\node(x7)[yellowdot,proofdiagram] at (4.25,2.75){};
\node(b27)[blackdot,proofdiagram] at (3,2.25){};
\node(*7) at (2.25,1.5){};
\node(m7) at (4.5,2.25){};
\node(i17) at (0.75,1){};
\node(i27) at (2.5,-2){};
\node[reddot,proofdiagram](i37) at (3,-0.25){};
\node(i2a7) at (2.25,1.25){};
\node(i3a7) at (3,1.75){};
\node(o7) at (2,5){};
\node(g17)[whitedot,projdot] at (3,0.75){};
\node(g27)[whitedot,scale=1,inner sep=0.25pt] at (3,0.8){$\boldsymbol{*}$};
\draw[green][string,line width=1pt](g17.center) to (g27.center);
\draw[string,in=left,out=90,looseness=0.6] (add.center) to (b17.center);
\draw[string,in=90,out=right] (b17.center) to (chi7.north);
\draw[string,in=left,out=90] (b17.center) to (add7.center);
\draw[string,in=90,out=right] (add7.center) to (x7.center);
\draw[string,in=270,out=90] (add7.center) to (o7.center);
\draw[string,in=270,out=90,looseness=1] (i27.center) to (i2a7.center);
\draw[string,in=270,out=90,looseness=1] (i2a7.center) to (m7.center);
\draw[string,in=right,out=90,looseness=1] (m7.center) to (x7.center);
\draw[string,in=270,out=90] (i37.center) to (g17.center);
\draw[string,in=270,out=90] (g27.center) to (i3a7.center);
\draw[string,in=270,out=90] (i3a7.center) to (b27.center);
\draw[string,in=270,out=left] (b27.center) to (chi7.south);
\draw[string,in=left,out=right] (b27.center) to (x7.center);
\node(in1) at (-0.5,-4){};
\draw[string,in=90,out=270] (i2.center) to (in1.center);
\node(in2) at (1,-4){};
\draw[string,in=90,out=270,looseness=1] (i27.center) to (in2.center);
\node[reddot,proofdiagram](b2) at (2,-3.5){};
\node[reddot,proofdiagram](b4) at (3,-3.5){};
\node(in3) at (2,-4){};
\node(in4) at (3,-4){};
\draw[string] (b2.center) to (in3.center);
\draw[string] (b4.center) to (in4.center);
\end{pic}+
\begin{pic}[scale=1/3]
\node (chi) [morphism,wedge,scale=0.5] at (-0.5,-0.15) {\large$\chi$};
\node(b1)[blackdot,proofdiagram]  at (-1,0.75) {};
\node(add)[reddot,proofdiagram] at (-0.5,1.25){};
\node(x)[yellowdot,proofdiagram] at (1.75,-0.5){};
\node(b2)[blackdot,proofdiagram] at (0.5,-1){};
\node(*) at (1,-0.5){};
\node(m) at (2,-1){};
\node(i1) at (-2,-4){};
\node(i2) at (1,-1.5){};
\node(i3)[reddot,proofdiagram] at (-0.75,-3){};
\node(i2a) at (1,-0.75){};
\node(i3a) at (1.75,-0.75){};
\node(o) at (0.75,2.5){};
\node(g1)[whitedot,projdot] at (-0.5,-2){};
\node(g2)[whitedot,scale=1,inner sep=0.25pt] at (-0.5,-2){$\boldsymbol{*}$};
\draw[green][string, line width=1pt,out=90,in=270](g1.center) to (g2.center);
\draw[string,in=left,out=90,looseness=0.6] (i1.center) to (b1.center);
\draw[string,in=90,out=right] (b1.center) to (chi.north);
\draw[string,in=left,out=90] (b1.center) to (add.center);
\draw[string,in=90,out=right] (add.center) to (x.center);
\draw[string,in=270,out=90,looseness=1] (i2.center) to (m.center);
\draw[string,in=right,out=90,looseness=1] (m.center) to (x.center);
\draw[string,in=270,out=90] (i3.center) to (g1.center);
\draw[string,in=270,out=90] (g2.center) to (b2.center);
\draw[string,in=270,out=left] (b2.center) to (chi.south);
\draw[string,in=left,out=right] (b2.center) to (x.center);
\node (chi7)  at (1.75,3.25) {};
\node(b17)  at (1.5,4) {};
\node(add7)[reddot,proofdiagram] at (1.5,4.5){};
\node(x7) at (3,3){};
\node(b27) at (2.25,2.5){};
\node(*7) at (2.25,2){};
\node(m7) at (3,2.5){};
\node(i17) at (0.75,1){};
\node(i27) at (2.5,-2){};
\node[reddot,proofdiagram](i37) at (3,0.5){};
\node(i2a7) at (2.25,1.75){};
\node(i3a7) at (3,1.75){};
\node(o7) at (1.5,5){};
\node(g17) at (3,0.75){};
\node(g27)[reddot,proofdiagram] at (3,1.5){};
\draw[string](g17.center) to (g27.center);
\draw[string,in=left,out=90,looseness=0.6] (add.center) to (add7.center);
\draw[string,in=90,out=right] (add7.center) to (*7.center);
\draw[string,in=270,out=90] (add7.center) to (o7.center);
\draw[string,in=270,out=90,looseness=1] (i27.center) to (*7.center);
\draw[string,in=270,out=90] (i37.center) to (g17.center);
\node(in1) at (-0.5,-4){};
\draw[string,in=90,out=270] (i2.center) to (in1.center);
\node(in2) at (1,-4){};
\draw[string,in=90,out=270,looseness=1] (i27.center) to (in2.center);
\node[reddot,proofdiagram](b2) at (2,-3.5){};
\node[reddot,proofdiagram](b4) at (3,-3.5){};
\node(in3) at (2,-4){};
\node(in4) at (3,-4){};
\draw[string] (b2.center) to (in3.center);
\draw[string] (b4.center) to (in4.center);
\end{pic}+
\begin{pic}[scale=1/3]
\node (chi)  at (0.5,0.75) {};
\node(b1)  at (0.25,1.5) {};
\node(add)[reddot,proofdiagram] at (0.75,2){};
\node(x) at (1.75,0.5){};
\node(b2) at (1,0){};
\node(*) at (1,-0.5){};
\node(m) at (1.75,0){};
\node(i1) at (-1,-4){};
\node(i2) at (1,-1.5){};
\node(i3)[reddot,proofdiagram] at (1.75,-2){};
\node(i2a) at (1,-0.75){};
\node(i3a) at (1.75,-0.75){};
\node(o) at (0.75,2.5){};
\node(g1) at (1.75,-1.75){};
\node(g2)[reddot,proofdiagram] at (1.75,-1){};
\draw[string](g1.center) to (g2.center);
\draw[string,in=left,out=90,looseness=0.6] (i1.center) to (add.center);
\draw[string,in=90,out=right] (add.center) to (x.center);
\draw[string,in=270,out=90,looseness=1] (i2.center) to (i2a.center);
\draw[string,in=270,out=90,looseness=1] (i2a.center) to (m.center);
\draw[string,in=270,out=90,looseness=1] (m.center) to (x.center);
\draw[string,in=270,out=90] (i3.center) to (g1.center);
\node (chi7) [morphism,wedge,scale=0.5] at (2,3.1) {\large$\chi$};
\node(b17)[blackdot,proofdiagram]  at (1.5,4) {};
\node(add7)[reddot,proofdiagram] at (2,4.5){};
\node(x7)[yellowdot,proofdiagram] at (4.25,2.75){};
\node(b27)[blackdot,proofdiagram] at (3,2.25){};
\node(*7) at (2.25,2){};
\node(m7) at (4.5,2.25){};
\node(i17) at (0.75,1){};
\node(i27) at (2.5,-2){};
\node[reddot,proofdiagram](i37) at (3,-0.25){};
\node(i2a7) at (2.25,1.25){};
\node(i3a7) at (3,1.75){};
\node(o7) at (2,5){};
\node(g17)[whitedot,projdot] at (3,0.75){};
\node(g27)[whitedot,scale=1,inner sep=0.25pt] at (3,0.8){$\boldsymbol{*}$};
\draw[green][string, line width=1pt](g17.center) to (g27.center);
\draw[string,in=left,out=90,looseness=0.6] (add.center) to (b17.center);
\draw[string,in=90,out=right] (b17.center) to (chi7.north);
\draw[string,in=left,out=90] (b17.center) to (add7.center);
\draw[string,in=90,out=right] (add7.center) to (x7.center);
\draw[string,in=270,out=90] (add7.center) to (o7.center);
\draw[string,in=270,out=90,looseness=1] (i27.center) to (i2a7.center);
\draw[string,in=270,out=90,looseness=1] (i2a7.center) to (m7.center);
\draw[string,in=right,out=90,looseness=1] (m7.center) to (x7.center);
\draw[string,in=270,out=90] (i37.center) to (g17.center);
\draw[string,in=270,out=90] (g27.center) to (i3a7.center);
\draw[string,in=270,out=90] (i3a7.center) to (b27.center);
\draw[string,in=270,out=left] (b27.center) to (chi7.south);
\draw[string,in=left,out=right] (b27.center) to (x7.center);
\node(in1) at (-0.5,-4){};
\draw[string,in=90,out=270] (i2.center) to (in1.center);
\node(in2) at (1,-4){};
\draw[string,in=90,out=270,looseness=1] (i27.center) to (in2.center);
\node[reddot,proofdiagram](b2) at (2,-3.5){};
\node[reddot,proofdiagram](b4) at (3,-3.5){};
\node(in3) at (2,-4){};
\node(in4) at (3,-4){};
\draw[string] (b2.center) to (in3.center);
\draw[string] (b4.center) to (in4.center);
\end{pic}+
\begin{pic}[scale=1/3]
\node (chi)  at (0.5,0.75) {};
\node(b1)  at (0.25,1.5) {};
\node(add)[reddot,proofdiagram] at (0.75,2){};
\node(x) at (1.75,0.5){};
\node(b2) at (1,0){};
\node(*) at (1,-0.5){};
\node(m) at (1.75,0){};
\node(i1) at (-1,-4){};
\node(i2) at (1,-1.5){};
\node(i3)[reddot,proofdiagram] at (1.75,-2){};
\node(i2a) at (1,-0.75){};
\node(i3a) at (1.75,-0.75){};
\node(o) at (0.75,2.5){};
\node(g1) at (1.75,-1.75){};
\node(g2)[reddot,proofdiagram] at (1.75,-1){};
\draw[string](g1.center) to (g2.center);
\draw[string,in=left,out=90,looseness=0.6] (i1.center) to (add.center);
\draw[string,in=90,out=right] (add.center) to (x.center);
\draw[string,in=270,out=90,looseness=1] (i2.center) to (i2a.center);
\draw[string,in=270,out=90,looseness=1] (i2a.center) to (m.center);
\draw[string,in=270,out=90,looseness=1] (m.center) to (x.center);
\draw[string,in=270,out=90] (i3.center) to (g1.center);
\node (chi7)  at (1.75,3.25) {};
\node(b17)  at (1.5,4) {};
\node(add7)[reddot,proofdiagram] at (1.5,4.5){};
\node(x7) at (3,3){};
\node(b27) at (2.25,2.5){};
\node(*7) at (2.25,2){};
\node(m7) at (3,2.5){};
\node(i17) at (0.75,1){};
\node(i27) at (2.5,-2){};
\node[reddot,proofdiagram](i37) at (3,0.25){};
\node(i2a7) at (2.25,1.75){};
\node(i3a7) at (3,1.75){};
\node(o7) at (1.5,5){};
\node(g17) at (3,0.75){};
\node(g27)[reddot,proofdiagram] at (3,1.5){};
\draw[string](g17.center) to (g27.center);
\draw[string,in=left,out=90,looseness=0.6] (add.center) to (add7.center);
\draw[string,in=90,out=right] (add7.center) to (*7.center);
\draw[string,in=270,out=90] (add7.center) to (o7.center);
\draw[string,in=270,out=90,looseness=1] (i27.center) to (*7.center);
\draw[string,in=270,out=90] (i37.center) to (g17.center);
\node(in1) at (-0.5,-4){};
\draw[string,in=90,out=270] (i2.center) to (in1.center);
\node(in2) at (1,-4){};
\draw[string,in=90,out=270,looseness=1] (i27.center) to (in2.center);
\node[reddot,proofdiagram](b2) at (2,-3.5){};
\node[reddot,proofdiagram](b4) at (3,-3.5){};
\node(in3) at (2,-4){};
\node(in4) at (3,-4){};
\draw[string] (b2.center) to (in3.center);
\draw[string] (b4.center) to (in4.center);
\end{pic}
\\ \super{\eqref{eq:proj2}}=& \quad 0 \quad + \quad 0 \quad + \quad 0 \quad +
\begin{pic}
   \node(i1) at (-1,0){};
   \node(i2) at (-0.5,0){};
   \node(i3) at (0.25,0){};
   \node(i4) at (0.5,0){};
   \node(i5) at (1,0){};
   \node(o) at (-0.325,3){};
   \node[reddot,proofdiagram](r1) at (-0.75,1){};
   \node[reddot,proofdiagram](r2) at (-0.325,2){};
   \node[reddot,proofdiagram](r3) at (0.5,0.25){};
   \node[reddot,proofdiagram](r4) at (1,0.25){};
   \draw[string,out=90,in=180,looseness=0.8] (i1.center) to (r1.center);
   \draw[string,out=90,in=180,looseness=0.8] (r1.center) to (r2.center);
   \draw[string] (r2.center) to (o.center);
   \draw[string,out=90,in=0,looseness=0.8] (i2.center) to (r1.center);
   \draw[string,out=90,in=0,looseness=0.8] (i3.center) to (r2.center);
   \draw[string] (r3.center) to (i4.center);
   \draw[string] (r4.center) to (i5.center);
\end{pic}\\
\\ \super{Red \eqref{eq:sm}}=& \quad 0 \quad + \quad 0 \quad + \quad 0 \quad +
\begin{pic}
   \node(i1) at (-1,0){};
   \node(i2) at (-0.5,0){};
   \node(i3) at (0.25,0){};
   \node(i4) at (0.5,0){};
   \node(i5) at (1,0){};
   \node(o) at (-0.325,3){};
   \node[reddot,proofdiagram](r1) at (-0.75,1){};
   \node[reddot,proofdiagram](r2) at (-0.325,2){};
   \node[reddot,proofdiagram](r3) at (0.5,0.25){};
   \node[reddot,proofdiagram](r4) at (1,0.25){};
   \draw[string,out=90,in=180,looseness=0.8] (i1.center) to (r1.center);
   \draw[string,out=90,in=180,looseness=0.8] (r1.center) to (r2.center);
   \draw[string] (r2.center) to (o.center);
   \draw[string,out=90,in=0,looseness=0.8] (i3.center) to (r1.center);
   \draw[string,out=90,in=0,looseness=0.8] (i2.center) to (r2.center);
   \draw[string] (r3.center) to (i4.center);
   \draw[string] (r4.center) to (i5.center);
\end{pic}\\ \super{\eqref{eq:proj2}}=& \quad 
\begin{pic}[scale=1/3]
\node (chi) [morphism,wedge,scale=0.5] at (-0.5,0.1) {\large$\chi$};
\node(b1)[blackdot,proofdiagram]  at (-1,1) {};
\node(add)[reddot,proofdiagram] at (-0.5,1.5){};
\node(x)[yellowdot,proofdiagram] at (1.75,-0.25){};
\node(b2)[blackdot,proofdiagram] at (0.5,-0.75){};
\node(*) at (1,-0.5){};
\node(m) at (2,-0.75){};
\node(i1) at (-2,-4){};
\node(i2) at (1,-1.5){};
\node(i3)[reddot,proofdiagram] at (-0.75,-3){};
\node(i2a) at (1,-0.75){};
\node(i3a) at (1.75,-0.75){};
\node(o) at (0.75,2.5){};
\node(g1)[whitedot,projdot] at (-0.5,-2){};
\node(g2)[whitedot,scale=1,inner sep=0.25pt] at (-0.25,-1.75){$\boldsymbol{*}$};
\draw[green][string, line width=1pt](g1.center) to (g2.center);
\draw[string,in=left,out=90,looseness=0.6] (i1.center) to (b1.center);
\draw[string,in=90,out=right] (b1.center) to (chi.north);
\draw[string,in=left,out=90] (b1.center) to (add.center);
\draw[string,in=90,out=right] (add.center) to (x.center);
\draw[string,in=270,out=90,looseness=1] (i2.center) to (m.center);
\draw[string,in=right,out=90,looseness=1] (m.center) to (x.center);
\draw[string,in=270,out=90] (i3.center) to (g1.center);
\draw[string,in=270,out=90] (g2.center) to (b2.center);
\draw[string,in=270,out=left] (b2.center) to (chi.south);
\draw[string,in=left,out=right] (b2.center) to (x.center);
\node (chi7) [morphism,wedge,scale=0.5] at (2,3.1) {\large$\chi$};
\node(b17)[blackdot,proofdiagram]  at (1.5,4) {};
\node(add7)[reddot,proofdiagram] at (2,4.5){};
\node(x7)[yellowdot,proofdiagram] at (4.25,2.75){};
\node(b27)[blackdot,proofdiagram] at (3,2.25){};
\node(*7) at (2.25,2){};
\node(m7) at (4.5,2.25){};
\node(i17) at (0.75,1){};
\node(i27) at (2.5,-2){};
\node[reddot,proofdiagram](i37) at (3,-0.25){};
\node(i2a7) at (2.25,1.25){};
\node(i3a7) at (3,1.75){};
\node(o7) at (2,5){};
\node(g17)[whitedot,projdot] at (3,0.75){};
\node(g27)[whitedot,scale=1,inner sep=0.25pt] at (3,0.8){$\boldsymbol{*}$};
\draw[green][string, line width=1pt](g17.center) to (g27.center);
\draw[string,in=left,out=90,looseness=0.6] (add.center) to (b17.center);
\draw[string,in=90,out=right] (b17.center) to (chi7.north);
\draw[string,in=left,out=90] (b17.center) to (add7.center);
\draw[string,in=90,out=right] (add7.center) to (x7.center);
\draw[string,in=270,out=90] (add7.center) to (o7.center);
\draw[string,in=270,out=90,looseness=1] (i27.center) to (i2a7.center);
\draw[string,in=270,out=90,looseness=1] (i2a7.center) to (m7.center);
\draw[string,in=right,out=90,looseness=1] (m7.center) to (x7.center);
\draw[string,in=270,out=90] (i37.center) to (g17.center);
\draw[string,in=270,out=90] (g27.center) to (i3a7.center);
\draw[string,in=270,out=90] (i3a7.center) to (b27.center);
\draw[string,in=270,out=left] (b27.center) to (chi7.south);
\draw[string,in=left,out=right] (b27.center) to (x7.center);
\node(in2) at (-0.5,-4){};
\node(in1) at (1,-4){};
\draw[string,in=90,out=270] (i2.center) to (in1.center);
\draw[string,in=90,out=270,looseness=1] (i27.center) to (in2.center);
\node[reddot,proofdiagram](b2) at (2,-3.5){};
\node[reddot,proofdiagram](b4) at (3,-3.5){};
\node(in3) at (2,-4){};
\node(in4) at (3,-4){};
\draw[string] (b2.center) to (in3.center);
\draw[string] (b4.center) to (in4.center);
\end{pic}+
\begin{pic}[scale=1/3]
\node (chi) [morphism,wedge,scale=0.5] at (-0.5,0.1) {\large$\chi$};
\node(b1)[blackdot,proofdiagram]  at (-1,1) {};
\node(add)[reddot,proofdiagram] at (-0.5,1.5){};
\node(x)[yellowdot,proofdiagram] at (1.75,-0.25){};
\node(b2)[blackdot,proofdiagram] at (0.5,-0.75){};
\node(*) at (1,-0.5){};
\node(m) at (2,-0.75){};
\node(i1) at (-2,-4){};
\node(i2) at (1,-1.5){};
\node(i3)[reddot,proofdiagram] at (-0.75,-3){};
\node(i2a) at (1,-0.75){};
\node(i3a) at (1.75,-0.75){};
\node(o) at (0.75,2.5){};
\node(g1)[whitedot,projdot] at (-0.5,-2){};
\node(g2)[whitedot,scale=1,inner sep=0.25pt] at (-0.5,-2){$\boldsymbol{*}$};
\draw[green][string,line width=1pt](g1.center) to (g2.center);
\draw[string,in=left,out=90,looseness=0.6] (i1.center) to (b1.center);
\draw[string,in=90,out=right] (b1.center) to (chi.north);
\draw[string,in=left,out=90] (b1.center) to (add.center);
\draw[string,in=90,out=right] (add.center) to (x.center);
\draw[string,in=270,out=90,looseness=1] (i2.center) to (m.center);
\draw[string,in=right,out=90,looseness=1] (m.center) to (x.center);
\draw[string,in=270,out=90] (i3.center) to (g1.center);
\draw[string,in=270,out=90] (g2.center) to (b2.center);
\draw[string,in=270,out=left] (b2.center) to (chi.south);
\draw[string,in=left,out=right] (b2.center) to (x.center);
\node (chi7)  at (1.75,3.25) {};
\node(b17)  at (1.5,4) {};
\node(add7)[reddot,proofdiagram] at (1.5,4.5){};
\node(x7) at (3,3){};
\node(b27) at (2.25,2.5){};
\node(*7) at (2.25,2){};
\node(m7) at (3,2.5){};
\node(i17) at (0.75,1){};
\node(i27) at (2.5,-2){};
\node[reddot,proofdiagram](i37) at (3,0.5){};
\node(i2a7) at (2.25,1.75){};
\node(i3a7) at (3,1.75){};
\node(o7) at (1.5,5){};
\node(g17) at (3,0.75){};
\node(g27)[reddot,proofdiagram] at (3,1.5){};
\draw[string](g17.center) to (g27.center);
\draw[string,in=left,out=90,looseness=0.6] (add.center) to (add7.center);
\draw[string,in=90,out=right] (add7.center) to (*7.center);
\draw[string,in=270,out=90] (add7.center) to (o7.center);
\draw[string,in=270,out=90,looseness=1] (i27.center) to (*7.center);
\draw[string,in=270,out=90] (i37.center) to (g17.center);
\node(in2) at (-0.5,-4){};
\node(in1) at (1,-4){};
\draw[string,in=90,out=270] (i2.center) to (in1.center);
\draw[string,in=90,out=270,looseness=1] (i27.center) to (in2.center);
\node[reddot,proofdiagram](b2) at (2,-3.5){};
\node[reddot,proofdiagram](b4) at (3,-3.5){};
\node(in3) at (2,-4){};
\node(in4) at (3,-4){};
\draw[string] (b2.center) to (in3.center);
\draw[string] (b4.center) to (in4.center);
\end{pic}+
\begin{pic}[scale=1/3]
\node (chi)  at (-0.5,-0.15) {};
\node(b1)  at (0.25,1.5) {};
\node(add)[reddot,proofdiagram] at (0.75,2){};
\node(x) at (1.75,0.5){};
\node(b2) at (1,0){};
\node(*) at (1,-0.5){};
\node(m) at (1.75,0){};
\node(i1) at (-1,-4){};
\node(i2) at (1,-1.5){};
\node(i3)[reddot,proofdiagram] at (1.75,-2){};
\node(i2a) at (1,-0.75){};
\node(i3a) at (1.75,-0.75){};
\node(o) at (0.75,2.5){};
\node(g1) at (1.75,-1.75){};
\node(g2)[reddot,proofdiagram] at (1.75,-1){};
\draw[string](g1.center) to (g2.center);
\draw[string,in=left,out=90,looseness=0.6] (i1.center) to (add.center);
\draw[string,in=90,out=right] (add.center) to (x.center);
\draw[string,in=270,out=90,looseness=1] (i2.center) to (i2a.center);
\draw[string,in=270,out=90,looseness=1] (i2a.center) to (m.center);
\draw[string,in=270,out=90,looseness=1] (m.center) to (x.center);
\draw[string,in=270,out=90] (i3.center) to (g1.center);
\node (chi7) [morphism,wedge,scale=0.5] at (2,3.1) {\large$\chi$};
\node(b17)[blackdot,proofdiagram]  at (1.5,4) {};
\node(add7)[reddot,proofdiagram] at (2,4.5){};
\node(x7)[yellowdot,proofdiagram] at (4.25,2.75){};
\node(b27)[blackdot,proofdiagram] at (3,2.25){};
\node(*7) at (2.25,2){};
\node(m7) at (4.5,2.25){};
\node(i17) at (0.75,1){};
\node(i27) at (2.5,-2){};
\node[reddot,proofdiagram](i37) at (3,-0.25){};
\node(i2a7) at (2.25,1.25){};
\node(i3a7) at (3,1.75){};
\node(o7) at (2,5){};
\node(g17)[whitedot,projdot] at (3,0.75){};
\node(g27)[whitedot,scale=1,inner sep=0.25pt] at (3,0.8){$\boldsymbol{*}$};
\draw[green][string,line width=1pt](g17.center) to (g27.center);
\draw[string,in=left,out=90,looseness=0.6] (add.center) to (b17.center);
\draw[string,in=90,out=right] (b17.center) to (chi7.north);
\draw[string,in=left,out=90] (b17.center) to (add7.center);
\draw[string,in=90,out=right] (add7.center) to (x7.center);
\draw[string,in=270,out=90] (add7.center) to (o7.center);
\draw[string,in=270,out=90,looseness=1] (i27.center) to (i2a7.center);
\draw[string,in=270,out=90,looseness=1] (i2a7.center) to (m7.center);
\draw[string,in=right,out=90,looseness=1] (m7.center) to (x7.center);
\draw[string,in=270,out=90] (i37.center) to (g17.center);
\draw[string,in=270,out=90] (g27.center) to (i3a7.center);
\draw[string,in=270,out=90] (i3a7.center) to (b27.center);
\draw[string,in=270,out=left] (b27.center) to (chi7.south);
\draw[string,in=left,out=right] (b27.center) to (x7.center);
\node(in2) at (-0.5,-4){};
\node(in1) at (1,-4){};
\draw[string,in=90,out=270] (i2.center) to (in1.center);
\draw[string,in=90,out=270,looseness=1] (i27.center) to (in2.center);
\node[reddot,proofdiagram](b2) at (2,-3.5){};
\node[reddot,proofdiagram](b4) at (3,-3.5){};
\node(in3) at (2,-4){};
\node(in4) at (3,-4){};
\draw[string] (b2.center) to (in3.center);
\draw[string] (b4.center) to (in4.center);
\end{pic}+
\begin{pic}[scale=1/3]
\node (chi)  at (0.5,0.75) {};
\node(b1)  at (0.25,1.5) {};
\node(add)[reddot,proofdiagram] at (0.75,2){};
\node(x) at (1.75,0.5){};
\node(b2) at (1,0){};
\node(*) at (1,-0.5){};
\node(m) at (1.75,0){};
\node(i1) at (-1,-4){};
\node(i2) at (1,-1.5){};
\node(i3)[reddot,proofdiagram] at (1.75,-2){};
\node(i2a) at (1,-0.75){};
\node(i3a) at (1.75,-0.75){};
\node(o) at (0.75,2.5){};
\node(g1) at (1.75,-1.75){};
\node(g2)[reddot,proofdiagram] at (1.75,-1){};
\draw[string](g1.center) to (g2.center);
\draw[string,in=left,out=90,looseness=0.6] (i1.center) to (add.center);
\draw[string,in=90,out=right] (add.center) to (x.center);
\draw[string,in=270,out=90,looseness=1] (i2.center) to (i2a.center);
\draw[string,in=270,out=90,looseness=1] (i2a.center) to (m.center);
\draw[string,in=270,out=90,looseness=1] (m.center) to (x.center);
\draw[string,in=270,out=90] (i3.center) to (g1.center);
\node (chi7)  at (1.75,3.25) {};
\node(b17)  at (1.5,4) {};
\node(add7)[reddot,proofdiagram] at (1.5,4.5){};
\node(x7) at (3,3){};
\node(b27) at (2.25,2.5){};
\node(*7) at (2.25,2){};
\node(m7) at (3,2.5){};
\node(i17) at (0.75,1){};
\node(i27) at (2.5,-2){};
\node[reddot,proofdiagram](i37) at (3,0.5){};
\node(i2a7) at (2.25,1.75){};
\node(i3a7) at (3,1.75){};
\node(o7) at (1.5,5){};
\node(g17) at (3,0.75){};
\node(g27)[reddot,proofdiagram] at (3,1.5){};
\draw[string](g17.center) to (g27.center);
\draw[string,in=left,out=90,looseness=0.6] (add.center) to (add7.center);
\draw[string,in=90,out=right] (add7.center) to (*7.center);
\draw[string,in=270,out=90] (add7.center) to (o7.center);
\draw[string,in=270,out=90,looseness=1] (i27.center) to (*7.center);
\draw[string,in=270,out=90] (i37.center) to (g17.center);
\node(in2) at (-0.5,-4){};
\node(in1) at (1,-4){};
\draw[string,in=90,out=270] (i2.center) to (in1.center);
\draw[string,in=90,out=270,looseness=1] (i27.center) to (in2.center);
\node[reddot,proofdiagram](b2) at (2,-3.5){};
\node[reddot,proofdiagram](b4) at (3,-3.5){};
\node(in3) at (2,-4){};
\node(in4) at (3,-4){};
\draw[string] (b2.center) to (in3.center);
\draw[string] (b4.center) to (in4.center);
\end{pic}\quad = \quad
\begin{aligned}
\begin{tikzpicture}[scale=0.75]
\node (f) [morphism,wedge,scale=0.75, connect n ] at (0,1) {$U_{FF}$};
\node(i2)  at (-0.4125,-1.7) {};
\node(i2a)  at (0.4125,-1.7) {};
\node (e) [morphism,wedge, scale=0.75,connect sw length = 1.65cm] at (-0.24,-0.25){$U_{FF}$};
\node(m1) at (0.75,-0.25){} ;
\node(m2) at (1,-0.25) {};
\node[state,black,scale=0.25,label={[label distance=-0.125cm]315:\tiny{0}}](b1) at (0.75,-1.4){};
\node[state,black,scale=0.25,hflip,label={[label distance=-0.125cm]45:\tiny{0}}](b2) at (0.75,-2){};
\node[state,black,scale=0.25,label={[label distance=-0.125cm]315:\tiny{0}}](b3) at (1.25,-1.4){};
\node[state,black,scale=0.25,hflip,label={[label distance=-0.125cm]45:\tiny{0}}](b4) at (1.25,-2){};
\draw[string] (i2.center) to (e.south);
\draw[string] (e.north) to (f.south west);
\draw[string,in=270,out=90,looseness=1.25] (i2a.center) to (m1.center);
\draw[string,in=270,out=90,looseness=1.25] (m1.center) to (f.south);
\draw[string,in=270,out=90] (b1.center) to (e.south east);
\draw[string,out=90,in=270] (b3.center) to (m2.center);
\draw[string,out=90,in=270] (m2.center) to (f.south east);
\draw[string,out=270,in=90] (i2.center) to +(0.825,-0.5);
\draw[string,out=270,in=90] (i2a.center) to +(-0.825,-0.5);
\draw[string] (b2.center) to (0.75,-2.2);
\draw[string] (b4.center) to (1.25,-2.2);
\end{tikzpicture}
\end{aligned}
\end{align*}
We now show that:
\begin{equation}\label{eq:uffnzero}
\begin{aligned}
\begin{tikzpicture}[scale=0.75]
\node (f) [morphism,wedge,scale=0.75, connect n ] at (0,1) {$U_{FF}$};
\node(i2)  at (-0.4125,-2.2) {};
\node(i2a)  at (0.4125,-2.2) {};
\node[whitedot,scale=1,inner sep=0.25pt](i3)  at (0.75,-1.7) {$\boldsymbol{*}$};
\node[whitedot,scale=1,inner sep=0.25pt](i4)  at (1.25,-1.7) {$\boldsymbol{*}$};
\node (e) [morphism,wedge, scale=0.75,connect sw length = 1.65cm] at (-0.24,-0.25) {$U_{FF}$};
\node(b)[blackdot] at (0,-1.1){};
\node(c)[blackdot] at (0,-1.7){};
\node(m1) at (0.75,-0.25) {};
\node(m2) at (1,-0.25) {};
\draw[string,out=90,in=left] (i2.center) to (c.center);
\draw[string,out=90,in=right] (i2a.center) to (c.center);
\draw[string] (b.center) to (c.center);
\draw[string] (e.north) to (f.south west);
\draw[string,in=270,out=left] (b.center) to (e.south);
\draw[string,in=270,out=right,looseness=1.25] (b.center) to (m1.center);
\draw[string,in=270,out=90,looseness=1.25] (m1.center) to (f.south);
\draw[string,in=270,out=90] (i3.center) to (e.south east);
\draw[string,out=90,in=270] (i4.center) to (m2.center);
\draw[string,out=90,in=270] (m2.center) to (f.south east);
\draw[string] (i3.center) to +(0,-0.5);
\draw[string] (i4.center) to +(0,-0.5);
\end{tikzpicture}
\end{aligned}
\quad = \quad
\begin{aligned}
\begin{tikzpicture}[scale=0.75]
\node (f) [morphism,wedge,scale=0.75, connect n ] at (0,1) {$U_{FF}$};
\node(i2)  at (-0.4125,-2.2) {};
\node(i2a)  at (0.4125,-2.2) {};
\node[whitedot,scale=1,inner sep=0.25pt](i3)  at (0.75,-1.7) {$\boldsymbol{*}$};
\node[whitedot,scale=1,inner sep=0.25pt](i4)  at (1.25,-1.7) {$\boldsymbol{*}$};
\node (e) [morphism,wedge,scale=0.75,connect sw length = 1.65cm] at (-0.24,-0.25){$U_{FF}$};
\node(b)[blackdot,scale=0.95] at (0,-1.3){};
\node(c)[blackdot,scale=0.95] at (0,-1.7){};
\node(m1) at (0.75,-0.25) {};
\node(m2) at (1,-0.25) {};
\draw[string,out=90,in=left] (i2.center) to (c.center);
\draw[string,out=90,in=right] (i2a.center) to (c.center);
\draw[string] (b.center) to (c.center);
\draw[string] (e.north) to (f.south west);
\draw[string,in=270,out=left] (b.center) to (e.south);
\draw[string,in=270,out=right,looseness=1.25] (b.center) to (m1.center);
\draw[string,in=270,out=90,looseness=1.25] (m1.center) to (f.south);
\draw[string,in=270,out=90] (i3.center) to (e.south east);
\draw[string,out=90,in=270] (i4.center) to (m2.center);
\draw[string,out=90,in=270] (m2.center) to (f.south east);
\draw[string,out=270,in=90] (i3.center) to +(0.5,-0.5);
\draw[string,out=270,in=90] (i4.center) to +(-0.5,-0.5);
\end{tikzpicture}
\end{aligned}
\end{equation}
\begin{align*}
\begin{aligned}
\begin{tikzpicture}[scale=11/12]
\node (f) [morphism,wedge,scale=0.75, connect n ] at (0,1) {$U_{FF}$};
\node(i2)  at (-0.4125,-2.2) {};
\node(i2a)  at (0.4125,-2.2) {};
\node[whitedot,scale=1,inner sep=0.25pt](i3)  at (0.75,-1.7) {$\boldsymbol{*}$};
\node[whitedot,scale=1,inner sep=0.25pt](i4)  at (1.25,-1.7) {$\boldsymbol{*}$};
\node (e) [morphism,wedge, scale=0.75,connect sw length = 1cm] at (-0.24,-0.25) {$U_{FF}$};
\node(b)[blackdot] at (0,-1.1){};
\node(c)[blackdot] at (0,-1.7){};
\node(m1) at (0.75,-0.25) {};
\node(m2) at (1,-0.25) {};
\draw[string,out=90,in=left] (i2.center) to (c.center);
\draw[string,out=90,in=right] (i2a.center) to (c.center);
\draw[string] (b.center) to (c.center);
\draw[string] (e.north) to (f.south west);
\draw[string,in=270,out=left] (b.center) to (e.south);
\draw[string,in=270,out=right,looseness=1.25] (b.center) to (m1.center);
\draw[string,in=270,out=90,looseness=1.25] (m1.center) to (f.south);
\draw[string,in=270,out=90] (i3.center) to (e.south east);
\draw[string,out=90,in=270] (i4.center) to (m2.center);
\draw[string,out=90,in=270] (m2.center) to (f.south east);
\draw[string] (i3.center) to +(0,-0.5);
\draw[string] (i4.center) to +(0,-0.5);
\node(i)  at (-0.6,-2.2) {};
\draw[string,in=270,out=90] (i.center) to (e.connect sw);
\end{tikzpicture}
\end{aligned}
:=& \quad 
\begin{pic}[scale=1/3]
\node (chi) [morphism,wedge,scale=0.5] at (0.4,1.2) {\large$\chi$};
\node(b1)[blackdot,proofdiagram]  at (-0.1,2.1) {};
\node(add)[reddot,proofdiagram] at (0.4,2.6){};
\node(x)[yellowdot,proofdiagram] at (1.85,0.4){};
\node(b2)[blackdot,proofdiagram] at (0.9,0){};
\node(*) at (1,-0.5){};
\node(m) at (2.15,0.15){};
\node(i1) at (-2,-5){};
\node(i2) at (-0.5,-2){};
\node(i3)[whitedot,projdot] at (2,-3.5){};
\node(i2a) at (1,-0.75){};
\node(i3a) at (1.75,-0.75){};
\node(o) at (0.75,2.5){};
\node(g1)[whitedot,projdot] at (1.75,-1.35){};
\node(g2)[whitedot,inner sep=0.25pt] at (1.75,-1.3){$\boldsymbol{*}$};
\draw[green][string,line width=1pt](g1.center) to (g2.center);
\draw[string,in=left,out=90,looseness=0.6] (i1.center) to (b1.center);
\draw[string,in=90,out=right] (b1.center) to (chi.north);
\draw[string,in=left,out=90] (b1.center) to (add.center);
\draw[string,in=90,out=right] (add.center) to (x.center);
\draw[string,in=270,out=90,looseness=1] (i2.center) to (i2a.center);
\draw[string,in=270,out=90,looseness=1] (i2a.center) to (m.center);
\draw[string,in=right,out=90,looseness=1] (m.center) to (x.center);
\draw[string,in=270,out=90] (i3.center) to (g1.center);
\draw[string,in=270,out=90] (g2.center) to (i3a.center);
\draw[string,in=270,out=90] (i3a.center) to (b2.center);
\draw[string,in=270,out=left] (b2.center) to (chi.south);
\draw[string,in=left,out=right] (b2.center) to (x.center);
\node (chi7) [morphism,wedge,scale=0.5] at (1.4,3.6) {\large$\chi$};
\node(b17)[blackdot,proofdiagram]  at (0.95,4.6) {};
\node(add7)[reddot,proofdiagram] at (2,5.1){};
\node(x7)[yellowdot,proofdiagram] at (3.1,3){};
\node(b27)[blackdot,proofdiagram] at (2.25,2.5){};
\node(*7) at (2.25,2){};
\node(m7) at (3.4,2.75){};
\node(i17) at (0.75,1){};
\node(i27) at (2.5,-1){};
\node(i37)[whitedot,projdot] at (3,-3.5){};
\node(i2a7) at (2.25,1.5){};
\node(i3a7) at (3,1.75){};
\node(o7) at (2,6){};
\node(g17)[whitedot,projdot] at (3,0.75){};
\node(g27)[whitedot,scale=1,inner sep=0.25pt] at (3,0.8){$\boldsymbol{*}$};
\draw[green][string,line width=1pt](g17.center) to (g27.center);
\draw[string,in=left,out=90,looseness=0.6] (add.center) to (b17.center);
\draw[string,in=90,out=right] (b17.center) to (chi7.north);
\draw[string,in=left,out=90] (b17.center) to (add7.center);
\draw[string,in=90,out=right] (add7.center) to (x7.center);
\draw[string,in=270,out=90] (add7.center) to (o7.center);
\draw[string,in=270,out=90,looseness=1] (i27.center) to (i2a7.center);
\draw[string,in=270,out=90,looseness=1] (i2a7.center) to (m7.center);
\draw[string,in=right,out=90,looseness=1] (m7.center) to (x7.center);
\draw[string,in=270,out=90] (i37.center) to (g17.center);
\draw[string,in=270,out=90] (g27.center) to (i3a7.center);
\draw[string,in=270,out=90] (i3a7.center) to (b27.center);
\draw[string,in=270,out=left] (b27.center) to (chi7.south);
\draw[string,in=left,out=right] (b27.center) to (x7.center);
\node(in1) at (-0.5,-5){};
\node(in2) at (1,-5){};
\node(b2)[whitedot,inner sep=0.25pt] at (2,-3.5){$\boldsymbol{*}$};
\node(b4)[whitedot,inner sep=0.25pt] at (3,-3.5){$\boldsymbol{*}$};
\node(in3) at (2,-5){};
\node(in4) at (3,-5){};
\draw[string] (b2.center) to (in3.center);
\draw[string] (b4.center) to (in4.center);
\draw[green][string,line width=1pt] (b4.center) to (i37.center);
\draw[green][string,line width=1pt] (b2.center) to (i3.center);
\node(t1)[blackdot,proofdiagram] at (0.25,-3.5){};
\node(t2)[blackdot,proofdiagram] at (0.25,-2.75){};
\draw[string] (t1.center) to (t2.center);
\draw[string,out=180,in=90] (t1.center) to (in1.center);
\draw[string,out=0,in=90] (t1.center) to (in2.center);
\draw[string,in=180,out=270] (i2.center) to (t2.center);
\draw[string,in=0,out=270,looseness=1] (i27.center) to (t2.center);
\end{pic}+
\begin{pic}[scale=1/3]
\node (chi) [morphism,wedge,scale=0.5] at (0.4,1.2) {\large$\chi$};
\node(b1)[blackdot,proofdiagram]  at (-0.1,2.1) {};
\node(add)[reddot,proofdiagram] at (0.4,2.6){};
\node(x)[yellowdot,proofdiagram] at (1.85,0.4){};
\node(b2)[blackdot,proofdiagram] at (0.9,0){};
\node(*) at (1,-0.5){};
\node(m) at (2.1,0.15){};
\node(i1) at (-2,-5){};
\node(i2) at (-0.5,-2){};
\node(i3) at (2,-3.5){};
\node(i2a) at (1,-0.75){};
\node(i3a) at (1.75,-0.75){};
\node(o) at (0.75,2.5){};
\node(g1)[whitedot,inner sep=0.25pt] at  (1.75,-1.325){$\boldsymbol{*}$};
\draw[string,in=left,out=90,looseness=0.6] (i1.center) to (b1.center);
\draw[string,in=90,out=right] (b1.center) to (chi.north);
\draw[string,in=left,out=90] (b1.center) to (add.center);
\draw[string,in=90,out=right] (add.center) to (x.center);
\draw[string,in=270,out=90,looseness=1] (i2.center) to (i2a.center);
\draw[string,in=270,out=90,looseness=1] (i2a.center) to (m.center);
\draw[string,in=right,out=90,looseness=1] (m.center) to (x.center);
\draw[string,in=270,out=90] (i3.center) to (g1.center);
\draw[string,in=270,out=90] (g2.center) to (i3a.center);
\draw[string,in=270,out=90] (i3a.center) to (b2.center);
\draw[string,in=270,out=left] (b2.center) to (chi.south);
\draw[string,in=left,out=right] (b2.center) to (x.center);
\node (chi7)  at (1.75,3.25) {};
\node(b17)  at (1.5,4) {};
\node(add7)[reddot,proofdiagram] at (1.5,4.5){};
\node(x7) at (3,3){};
\node(b27) at (2.25,2.5){};
\node(*7) at (2.25,2){};
\node(m7) at (3,2.5){};
\node(i17) at (0.75,1){};
\node(i27) at (2.5,-1){};
\node(i2a7) at (2.25,1.75){};
\node(i3a7) at (3,1.75){};
\node(o7) at (1.5,6){};
\node(g17) at (3,0.75){};
\draw[string,in=left,out=90,looseness=0.6] (add.center) to (add7.center);
\draw[string,in=90,out=right] (add7.center) to (*7.center);
\draw[string,in=270,out=90] (add7.center) to (o7.center);
\draw[string,in=270,out=90,looseness=1] (i27.center) to (*7.center);
\node(in1) at (-0.5,-5){};
\node(in2) at (1,-5){};
\draw[string,in=180,out=270] (i2.center) to (t2.center);
\draw[string,in=0,out=270,looseness=1] (i27.center) to (t2.center);
\node(b2)[whitedot,inner sep=0.25pt] at (2,-3.5){$\boldsymbol{*}$};
\node[reddot,proofdiagram](b4) at (3,-2){};
\node(u1)[whitedot,inner sep=0.25pt] at (3,-3.5){$\boldsymbol{*}$};
\node(u2) at (3,-3.5){};
\draw[string,line width=1pt] (u1.center) to (u2.center);
\draw[string,line width=1pt] (b2.center) to (i3.center);
\node(in3) at (2,-5){};
\node(in4) at (3,-5){};
\draw[string] (b2.center) to (in3.center);
\draw[string] (b4.center) to (in4.center);
\node(t1)[blackdot,proofdiagram] at (0.25,-3.5){};
\node(t2)[blackdot,proofdiagram] at (0.25,-2.75){};
\draw[string] (t1.center) to (t2.center);
\draw[string,out=180,in=90] (t1.center) to (in1.center);
\draw[string,out=0,in=90] (t1.center) to (in2.center);
\end{pic}+
\begin{pic}[scale=1/3]
\node (chi)  at (0.5,0.75) {};
\node(b1)  at (0.25,1.5) {};
\node(add)[reddot,proofdiagram] at (0.4,2){};
\node(x) at (1.75,0.5){};
\node(b2) at (1,0){};
\node(*) at (1,-0.5){};
\node(m) at (1.75,0){};
\node(i1) at (-2,-5){};
\node(i2) at (-0.5,-2){};
\node(i2a) at (1,-0.75){};
\node(i3a) at (1.75,-0.75){};
\node(o) at (0.75,2.5){};
\draw[string,in=left,out=90,looseness=0.6] (i1.center) to (add.center);
\draw[string,in=90,out=right] (add.center) to (x.center);
\draw[string,in=270,out=90,looseness=1] (i2.center) to (i2a.center);
\draw[string,in=270,out=90,looseness=1] (i2a.center) to (m.center);
\draw[string,in=270,out=90,looseness=1] (m.center) to (x.center);
\node (chi7) [morphism,wedge,scale=0.5] at (1.4,3.6) {\large$\chi$};
\node(b17)[blackdot,proofdiagram]  at (1,4.6) {};
\node(add7)[reddot,proofdiagram] at (2,5.1){};
\node(x7)[yellowdot,proofdiagram] at (3.1,3){};
\node(b27)[blackdot,proofdiagram] at (2.25,2.5){};
\node(*7) at (2.25,2){};
\node(m7) at (3.4,2.75){};
\node(i17) at (0.75,1){};
\node(i27) at (2.5,0){};
\node(i37) at (3,-2.75){};
\node(i2a7) at (2.25,1.75){};
\node(i3a7) at (3,1.75){};
\node(o7) at (2,6){};
\node(g17) at (3,0.75){};
\node(g27)[whitedot,scale=1,inner sep=0.25pt] at (3,1){$\boldsymbol{*}$};
\draw[green][string,line width=1pt](g17.center) to (g27.center);
\draw[string,in=left,out=90,looseness=0.6] (add.center) to (b17.center);
\draw[string,in=90,out=right] (b17.center) to (chi7.north);
\draw[string,in=left,out=90] (b17.center) to (add7.center);
\draw[string,in=90,out=right] (add7.center) to (x7.center);
\draw[string,in=270,out=90] (add7.center) to (o7.center);
\draw[string,in=270,out=90,looseness=1] (i27.center) to (i2a7.center);
\draw[string,in=270,out=90,looseness=1] (i2a7.center) to (m7.center);
\draw[string,in=right,out=90,looseness=1] (m7.center) to (x7.center);
\draw[string,in=270,out=90] (i37.center) to (g17.center);
\draw[string,in=270,out=90] (g27.center) to (i3a7.center);
\draw[string,in=270,out=90] (i3a7.center) to (b27.center);
\draw[string,in=270,out=left] (b27.center) to (chi7.south);
\draw[string,in=left,out=right] (b27.center) to (x7.center);
\node(in1) at (-0.5,-5){};
\node(in2) at (1,-5){};
\draw[string,in=180,out=270] (i2.center) to (t2.center);
\draw[string,in=0,out=270,looseness=0.7] (i27.center) to (t2.center);
\node[reddot,proofdiagram](b2) at (2,-2){};
\node(u1)[whitedot,inner sep=0.25pt] at (2,-3.5){$\boldsymbol{*}$};
\node(u2)[whitedot,inner sep=0.25pt] at (2,-3.5){$\boldsymbol{*}$};
\draw[string] (b2.center) to (i3.center);
\node(b4)[whitedot,inner sep=0.25pt] at (3,-3.5){$\boldsymbol{*}$};
\draw[string] (b4.center) to (i37.center);
\node(in3) at (2,-5){};
\node(in4) at (3,-5){};
\draw[string] (b2.center) to (in3.center);
\draw[string] (b4.center) to (in4.center);
\node(t1)[blackdot,proofdiagram] at (0.25,-3.5){};
\node(t2)[blackdot,proofdiagram] at (0.25,-2.75){};
\draw[string] (t1.center) to (t2.center);
\draw[string,out=180,in=90] (t1.center) to (in1.center);
\draw[string,out=0,in=90] (t1.center) to (in2.center);
\draw[green][string,line width=1pt] (u1.center) to (u2.center);
\end{pic}+
\begin{pic}[scale=1/3]
\node (chi)  at (0.5,0.75) {};
\node(b1)  at (0.25,1.5) {};
\node(add)[reddot,proofdiagram] at (0.75,2){};
\node(x) at (1.75,0.5){};
\node(b2) at (1,0){};
\node(*) at (1,-0.5){};
\node(m) at (1.75,0){};
\node(i1) at (-2,-5){};
\node(i2) at (-0.5,-2){};
\node(i2a) at (1,-0.75){};
\node(i3a) at (1.75,-0.75){};
\node(o) at (0.75,2.5){};
\draw[string,in=left,out=90,looseness=0.6] (i1.center) to (add.center);
\draw[string,in=90,out=right] (add.center) to (x.center);
\draw[string,in=270,out=90,looseness=1] (i2.center) to (i2a.center);
\draw[string,in=270,out=90,looseness=1] (i2a.center) to (m.center);
\draw[string,in=270,out=90,looseness=1] (m.center) to (x.center);
\node (chi7)  at (1.75,3.25) {};
\node(b17)  at (1.5,4) {};
\node(add7)[reddot,proofdiagram] at (1.5,4.5){};
\node(x7) at (3,3){};
\node(b27) at (2.25,2.5){};
\node(*7) at (2.25,2){};
\node(m7) at (3,2.5){};
\node(i17) at (0.75,1){};
\node(i27) at (2.325,0){};
\node(i2a7) at (2.25,1.75){};
\node(i3a7) at (3,1.75){};
\node(o7) at (1.5,6){};
\draw[string,in=left,out=90,looseness=0.6] (add.center) to (add7.center);
\draw[string,in=90,out=right] (add7.center) to (*7.center);
\draw[string,in=270,out=90] (add7.center) to (o7.center);
\draw[string,in=270,out=90,looseness=1] (i27.center) to (*7.center);
\node(in1) at (-0.5,-5){};
\node(in2) at (1,-5){};
\draw[string,in=180,out=270] (i2.center) to (t2.center);
\draw[string,in=0,out=270,looseness=0.7] (i27.center) to (t2.center);
\node[reddot,proofdiagram](b2) at (2,-2){};
\node[reddot,proofdiagram](b4) at (3,-2){};
\node(u1)[whitedot,inner sep=0.25pt] at (2,-3.5){$\boldsymbol{*}$};
\node(p1)[whitedot,inner sep=0.25pt] at (3,-3.5){$\boldsymbol{*}$};
\node(u2) at (2,-3.5){};
\node(p2) at (3,-3.5){};
\node(in3) at (2,-5){};
\node(in4) at (3,-5){};
\draw[string] (u1.center) to (in3.center);
\draw[string] (b2.center) to (u2.center);
\draw[string] (b4.center) to (in4.center);
\draw[string,green] (u2.center) to (u1.center);
\draw[string,green] (p2.center) to (p1.center);
\node(t1)[blackdot,proofdiagram] at (0.25,-3.5){};
\node(t2)[blackdot,proofdiagram] at (0.25,-2.75){};
\draw[string] (t1.center) to (t2.center);
\draw[string,out=180,in=90] (t1.center) to (in1.center);
\draw[string,out=0,in=90] (t1.center) to (in2.center);
\end{pic}\\  \super{\eqref{eq:proj2}\eqref{eq:projprop}}=& \quad
\begin{pic}[scale=1/3]
\node (chi) [morphism,wedge,scale=0.5] at (0.4,1.2) {\large$\chi$};
\node(b1)[blackdot,proofdiagram]  at (-0.1,2.1) {};
\node(add)[reddot,proofdiagram] at (0.4,2.6){};
\node(x)[yellowdot,proofdiagram] at (1.85,0.4){};
\node(b2)[blackdot,proofdiagram] at (0.9,0){};
\node(*) at (1,-0.5){};
\node(m) at (2.15,0.15){};
\node(i1) at (-2,-5){};
\node(i2) at (-0.5,-2){};
\node(i3) at (2,-2.75){};
\node(i2a) at (1,-0.75){};
\node(i3a) at (1.75,-0.75){};
\node(o) at (0.75,2.5){};
\node(g1) at (1.75,-1.75){};
\node(g2) at (1.75,-1){};
\draw[string](g1.center) to (g2.center);
\draw[string,in=left,out=90,looseness=0.6] (i1.center) to (b1.center);
\draw[string,in=90,out=right] (b1.center) to (chi.north);
\draw[string,in=left,out=90] (b1.center) to (add.center);
\draw[string,in=90,out=right] (add.center) to (x.center);
\draw[string,in=270,out=90,looseness=1] (i2.center) to (i2a.center);
\draw[string,in=270,out=90,looseness=1] (i2a.center) to (m.center);
\draw[string,in=right,out=90,looseness=1] (m.center) to (x.center);
\draw[string,in=270,out=90] (i3.center) to (g1.center);
\draw[string,in=270,out=90] (g2.center) to (i3a.center);
\draw[string,in=270,out=90] (i3a.center) to (b2.center);
\draw[string,in=270,out=left] (b2.center) to (chi.south);
\draw[string,in=left,out=right] (b2.center) to (x.center);
\node (chi7) [morphism,wedge,scale=0.5] at (1.4,3.6) {\large$\chi$};
\node(b17)[blackdot,proofdiagram]  at (1,4.6) {};
\node(add7)[reddot,proofdiagram] at (2,5.1){};
\node(x7)[yellowdot,proofdiagram] at (3.1,3){};
\node(b27)[blackdot,proofdiagram] at (2.25,2.5){};
\node(*7) at (2.25,2){};
\node(m7) at (3.4,2.75){};
\node(i17) at (0.75,1){};
\node(i27) at (2.5,-1){};
\node(i37) at (3,-2.75){};
\node(i2a7) at (2.25,1.5){};
\node(i3a7) at (3,1.75){};
\node(o7) at (2,6){};
\node(g17) at (3,0.75){};
\node(g27) at (3,1.5){};
\draw[string](g17.center) to (g27.center);
\draw[string,in=left,out=90,looseness=0.6] (add.center) to (b17.center);
\draw[string,in=90,out=right] (b17.center) to (chi7.north);
\draw[string,in=left,out=90] (b17.center) to (add7.center);
\draw[string,in=90,out=right] (add7.center) to (x7.center);
\draw[string,in=270,out=90] (add7.center) to (o7.center);
\draw[string,in=270,out=90,looseness=1] (i27.center) to (i2a7.center);
\draw[string,in=270,out=90,looseness=1] (i2a7.center) to (m7.center);
\draw[string,in=right,out=90,looseness=1] (m7.center) to (x7.center);
\draw[string,in=270,out=90] (i37.center) to (g17.center);
\draw[string,in=270,out=90] (g27.center) to (i3a7.center);
\draw[string,in=270,out=90] (i3a7.center) to (b27.center);
\draw[string,in=270,out=left] (b27.center) to (chi7.south);
\draw[string,in=left,out=right] (b27.center) to (x7.center);
\node(in1) at (-0.5,-5){};
\node(in2) at (1,-5){};
\draw[string,in=180,out=270] (i2.center) to (t2.center);
\draw[string,in=0,out=270,looseness=1] (i27.center) to (t2.center);
\node(b2)[whitedot,inner sep=0.25pt] at (2,-3.5){$\boldsymbol{*}$};
\node(b4)[whitedot,inner sep=0.25pt] at (3,-3.5){$\boldsymbol{*}$};
\node(in3) at (2,-5){};
\node(in4) at (3,-5){};
\draw[string] (b2.center) to (in3.center);
\draw[string] (b4.center) to (in4.center);
\draw[string] (b4.center) to (i37.center);
\draw[string] (b2.center) to (i3.center);
\node(t1)[blackdot,proofdiagram] at (0.25,-3.5){};
\node(t2)[blackdot,proofdiagram] at (0.25,-2.75){};
\draw[string] (t1.center) to (t2.center);
\draw[string,out=180,in=90] (t1.center) to (in1.center);
\draw[string,out=0,in=90] (t1.center) to (in2.center);
\end{pic}\quad + \quad 0 \quad + \quad 0 \quad + \quad 0 \\  \super{\eqref{1}}=& \quad 
\begin{pic}[scale=1/3]
\node(chi) [morphism,wedge,scale=0.5,connect n length=0.35cm,connect s length=0.35cm] at (-1.25,-0.1) {\large$\chi$};
\node(b1)[blackdot,proofdiagram]  at (-1.5,1.5) {};
\node(add)[reddot,proofdiagram] at (-1,2){};
\node(x)[yellowdot,proofdiagram] at (1.5,-0.25){};
\node(b2)[blackdot,proofdiagram] at (0,-2){};
\node(y)[yellowdot,proofdiagram] at (chi.connect s){};
\node(y2)[yellowdot,proofdiagram] at (chi.connect n){};
\node(k)[blackdot,proofdiagram] at (-0.4,-1.35){};
\node(i1) at (-4,-4){};
\node(i3)[whitedot,projdot] at (2,-2.75){};
\node(mm) at (-1.9,-1.35){};
\node(t1)[blackdot,proofdiagram] at (0.25,-3.5){};
\node(t2)[blackdot,proofdiagram] at (2.25,-1){};
\node (chi7) [morphism,wedge,scale=0.5,connect n length=0.35cm,connect s length=0.35cm] at (1.75,3.9) {\large$\chi$};
\node(b17)[blackdot,proofdiagram]  at (1.5,5.5) {};
\node(add7)[reddot,proofdiagram] at (2,6){};
\node(x7)[yellowdot,proofdiagram] at (4.5,3.75){};
\node(b27)[blackdot,proofdiagram] at (3,2){};
\node(y7)[yellowdot,proofdiagram] at (chi7.connect s){};
\node(y27)[yellowdot,proofdiagram] at (chi7.connect n){};
\node(k7)[blackdot,proofdiagram] at (2.6,2.65){};
\node(m7) at (5,2.5){};
\node(i17) at (0.75,1){};
\node(i37)[whitedot,projdot] at (3,-2.75){};
\node(o7) at (2,7){};
\node(mm7) at (1.1,2.65){};
\node(in1) at (-0.5,-4){};
\node(in2) at (1,-4){};
\node(b2k)[whitedot,inner sep=0.25pt] at (2,-3.1){$\boldsymbol{*}$};
\node(b4k)[whitedot,inner sep=0.25pt] at (3,-3.1){$\boldsymbol{*}$};
\node(in3) at (2,-4){};
\node(in4) at (3,-4){};
\draw[string,out=180,in=0] (k.center) to (y.center);
\draw[string,in=left,out=90,looseness=0.6] (i1.center) to (b1.center);
\draw[string,in=0,out=right,looseness=1.5] (b1.center) to (k.center);
\draw[string,in=left,out=90] (b1.center) to (add.center);
\draw[string,in=90,out=right] (add.center) to (x.center);
\draw[string,in=270,out=90] (i3.center) to (b2.center);
\draw[string,in=270,out=left] (b2.center) to (mm.center);
\draw[string,in=180,out=90] (mm.center) to (y.center);
\draw[string,in=left,out=right] (b2.center) to (x.center);
\draw[string,out=180,in=0] (k7.center) to (y7.center);
\draw[string,in=0,out=right,looseness=1.5] (b17.center) to (k7.center);
\draw[string,in=270,out=left] (b27.center) to (mm7.center);
\draw[string,in=180,out=90] (mm7.center) to (y7.center);
\draw[string,in=left,out=90,looseness=0.6] (add.center) to (b17.center);
\draw[string,in=left,out=90] (b17.center) to (add7.center);
\draw[string,in=90,out=right] (add7.center) to (x7.center);
\draw[string,in=270,out=90] (add7.center) to (o7.center);
\draw[string,in=right,out=90,looseness=1] (m7.center) to (x7.center);
\draw[string,in=270,out=90] (i37.center) to (b27.center);
\draw[string,in=left,out=right] (b27.center) to (x7.center);
\draw[string,in=180,out=0] (x.center) to (t2.center);
\draw[string,in=0,out=270,looseness=1] (m7.center) to (t2.center);
\draw[string] (b2k.center) to (in3.center);
\draw[string] (b4k.center) to (in4.center);
\draw[green][string,line width=1pt] (b4k.center) to (i37.center);
\draw[green][string,line width=1pt] (b2k.center) to (i3.center);
\draw[string,out=90,in=270] (t1.center) to (t2.center);
\draw[string,out=180,in=90] (t1.center) to (in1.center);
\draw[string,out=0,in=90] (t1.center) to (in2.center);
\end{pic}\qquad  \super{Black, yellow\eqref{eq:sm}}= \quad 
\begin{pic}[scale=1/3]
\node(chi) [morphism,wedge,scale=0.5,connect n length=0.35cm,connect s length=0.35cm] at (-4,5.5) {\large$\chi$};
\node(b1)[blackdot,proofdiagram]  at (-4,-2) {};
\node(add)[reddot,proofdiagram] at (1,2.5){};
\node(x)[yellowdot,proofdiagram] at (2.5,1.75){};
\node(b2)[blackdot,proofdiagram] at (2,-2){};
\node(y)[yellowdot,proofdiagram] at (chi.connect s){};
\node(y2)[yellowdot,proofdiagram] at (chi.connect n){};
\node(i1) at (-4,-4){};
\node(i3)[whitedot,projdot] at (2,-2.75){};
\node(mm) at (-3,2){};
\node(t1)[blackdot,proofdiagram] at (0.25,-3.5){};
\node(t2)[blackdot,proofdiagram] at (4.5,1){};
\node (chi7) [morphism,wedge,scale=0.5,connect n length=0.35cm,connect s length=0.35cm] at (-1.8,5.5) {\large$\chi$};
\node(b17)[blackdot,proofdiagram]  at (1,3.5) {};
\node(add7)[reddot,proofdiagram] at (2,6){};
\node(x7)[yellowdot,proofdiagram] at (4.5,3){};
\node(b27)[blackdot,proofdiagram] at (3,-1.5){};
\node(y7)[yellowdot,proofdiagram] at (chi7.connect s){};
\node(y27)[yellowdot,proofdiagram] at (chi7.connect n){};
\node(m7) at (5,2.5){};
\node(i17) at (0.75,1){};
\node(i37)[whitedot,projdot] at (3,-2.75){};
\node(o7) at (2,7){};
\node(mm7) at (-2.5,4){};
\node(in1) at (-0.5,-4){};
\node(in2) at (1,-4){};
\node(b2k)[whitedot,inner sep=0.25pt] at (2,-3.1){$\boldsymbol{*}$};
\node(b4k)[whitedot,inner sep=0.25pt] at (3,-3.1){$\boldsymbol{*}$};
\node(in3) at (2,-4){};
\node(in4) at (3,-4){};
\draw[string,in=270,out=90,looseness=0.6] (i1.center) to (b1.center);
\draw[string,in=180,out=180,looseness=1] (b1.center) to (y.center);
\draw[string,in=left,out=0] (b1.center) to (add.center);
\draw[string,in=90,out=right] (add.center) to (x.center);
\draw[string,in=270,out=90] (i3.center) to (b2.center);
\draw[string,in=270,out=left] (b2.center) to (mm.center);
\draw[string,in=0,out=90] (mm.center) to (y.center);
\draw[string,in=left,out=right] (b2.center) to (x.center);
\draw[string,out=180,in=0] (b17.center) to (y7.center);
\draw[string,in=270,out=left] (b27.center) to (mm7.center);
\draw[string,in=180,out=90] (mm7.center) to (y7.center);
\draw[string] (add.center) to (b17.center);
\draw[string,in=left,out=0] (b17.center) to (add7.center);
\draw[string,in=90,out=right] (add7.center) to (x7.center);
\draw[string,in=270,out=90] (add7.center) to (o7.center);
\draw[string,in=right,out=0,looseness=1] (t2.center) to (x7.center);
\draw[string,in=270,out=90] (i37.center) to (b27.center);
\draw[string,in=left,out=right] (b27.center) to (x7.center);
\draw[string,in=180,out=0] (x.center) to (t2.center);
\draw[string] (b2k.center) to (in3.center);
\draw[string] (b4k.center) to (in4.center);
\draw[green][string,line width=1pt] (b4k.center) to (i37.center);
\draw[green][string,line width=1pt] (b2k.center) to (i3.center);
\draw[string,out=90,in=270] (t1.center) to (t2.center);
\draw[string,out=180,in=90] (t1.center) to (in1.center);
\draw[string,out=0,in=90] (t1.center) to (in2.center);
\end{pic}\quad \\  \super{Red, black\eqref{eq:bialg}}=& \quad 
\begin{pic}[scale=1/3]
\node(chi) [morphism,wedge,scale=0.5,connect n length=0.35cm,connect s length=0.35cm] at (-4,5.5) {\large$\chi$};
\node(b1)[blackdot,proofdiagram]  at (-4,-2) {};
\node(add1)[blackdot,proofdiagram] at (0.5,2.5){};
\node(add2)[blackdot,proofdiagram] at (1.5,2.5){};
\node(x)[yellowdot,proofdiagram] at (1.5,1.75){};
\node(b2)[blackdot,proofdiagram] at (2,-2){};
\node(y)[yellowdot,proofdiagram] at (chi.connect s){};
\node(y2)[yellowdot,proofdiagram] at (chi.connect n){};
\node(i1) at (-4,-4){};
\node(i3)[whitedot,projdot] at (2,-2.75){};
\node(mm) at (-3,2){};
\node(t1)[blackdot,proofdiagram] at (0.25,-3.5){};
\node(t2)[blackdot,proofdiagram] at (4.5,1){};
\node (chi7) [morphism,wedge,scale=0.5,connect n length=0.35cm,connect s length=0.35cm] at (-1.8,5.5) {\large$\chi$};
\node(b171)[reddot,proofdiagram]  at (0.5,3.5) {};
\node(b172)[reddot,proofdiagram]  at (1.5,3.5) {};
\node(add7)[reddot,proofdiagram] at (2,6){};
\node(x7)[yellowdot,proofdiagram] at (4.5,3){};
\node(b27)[blackdot,proofdiagram] at (3,-1.5){};
\node(y7)[yellowdot,proofdiagram] at (chi7.connect s){};
\node(y27)[yellowdot,proofdiagram] at (chi7.connect n){};
\node(m7) at (5,2.5){};
\node(i17) at (0.75,1){};
\node(i37)[whitedot,projdot] at (3,-2.75){};
\node(o7) at (2,7){};
\node(mm7) at (-2.5,4){};
\node(in1) at (-0.5,-4){};
\node(in2) at (1,-4){};
\node(b2k)[whitedot,inner sep=0.25pt] at (2,-3.1){$\boldsymbol{*}$};
\node(b4k)[whitedot,inner sep=0.25pt] at (3,-3.1){$\boldsymbol{*}$};
\node(in3) at (2,-4){};
\node(in4) at (3,-4){};
\draw[string,in=270,out=90,looseness=0.6] (i1.center) to (b1.center);
\draw[string,in=180,out=180,looseness=1] (b1.center) to (y.center);
\draw[string,in=270,out=0] (b1.center) to (add1.center);
\draw[string,in=90,out=270] (add2.center) to (x.center);
\draw[string,in=270,out=90] (i3.center) to (b2.center);
\draw[string,in=270,out=left] (b2.center) to (mm.center);
\draw[string,in=0,out=90] (mm.center) to (y.center);
\draw[string,in=left,out=right,looseness=1.3] (b2.center) to (x.center);
\draw[string,out=90,in=0] (b171.center) to (y7.center);
\draw[string,in=270,out=left] (b27.center) to (mm7.center);
\draw[string,in=180,out=90] (mm7.center) to (y7.center);
\draw[string] (add1.center) to (b172.center);
\draw[string] (add2.center) to (b171.center);
\draw[string,in=180,out=180] (add1.center) to (b171.center);
\draw[string,in=0,out=0] (add2.center) to (b172.center);
\draw[string,in=left,out=90] (b172.center) to (add7.center);
\draw[string,in=90,out=right] (add7.center) to (x7.center);
\draw[string,in=270,out=90] (add7.center) to (o7.center);
\draw[string,in=right,out=0,looseness=1] (t2.center) to (x7.center);
\draw[string,in=270,out=90] (i37.center) to (b27.center);
\draw[string,in=left,out=right] (b27.center) to (x7.center);
\draw[string,in=180,out=0] (x.center) to (t2.center);
\draw[string] (b2k.center) to (in3.center);
\draw[string] (b4k.center) to (in4.center);
\draw[green][string,line width=1pt] (b4k.center) to (i37.center);
\draw[green][string,line width=1pt] (b2k.center) to (i3.center);
\draw[string,out=90,in=270] (t1.center) to (t2.center);
\draw[string,out=180,in=90] (t1.center) to (in1.center);
\draw[string,out=0,in=90] (t1.center) to (in2.center);
\end{pic}\qquad  \super{Yellow, black\eqref{eq:bialg}}= \quad 
\begin{pic}[scale=1/3]
\node(chi) [morphism,wedge,scale=0.5,connect n length=0.35cm,connect s length=0.35cm] at (-4,5.5) {\large$\chi$};
\node(b1)[blackdot,proofdiagram]  at (-4,-2) {};
\node(add1)[blackdot,proofdiagram] at (0.5,2.5){};
\node(add21)[yellowdot,proofdiagram] at (1.5,2){};
\node(add22)[yellowdot,proofdiagram] at (2.5,2){};
\node(x1)[blackdot,proofdiagram] at (1.5,1.25){};
\node(x2)[blackdot,proofdiagram] at (2.5,1.25){};
\node(b2)[blackdot,proofdiagram] at (2,-2){};
\node(y)[yellowdot,proofdiagram] at (chi.connect s){};
\node(y2)[yellowdot,proofdiagram] at (chi.connect n){};
\node(i1) at (-4,-4){};
\node(i3)[whitedot,projdot] at (2,-2.75){};
\node(mm) at (-3,2){};
\node(t1)[blackdot,proofdiagram] at (0.25,-3.5){};
\node(t2)[blackdot,proofdiagram] at (4.25,0.5){};
\node (chi7) [morphism,wedge,scale=0.5,connect n length=0.35cm,connect s length=0.35cm] at (-1.8,5.5) {\large$\chi$};
\node(b171)[reddot,proofdiagram]  at (0.5,3.5) {};
\node(b172)[reddot,proofdiagram]  at (1.5,3.5) {};
\node(add7)[reddot,proofdiagram] at (2,6){};
\node(x7)[yellowdot,proofdiagram] at (4.5,3){};
\node(b27)[blackdot,proofdiagram] at (3,-1.5){};
\node(y7)[yellowdot,proofdiagram] at (chi7.connect s){};
\node(y27)[yellowdot,proofdiagram] at (chi7.connect n){};
\node(m7) at (5,2.5){};
\node(i17) at (0.75,1){};
\node(i37)[whitedot,projdot] at (3,-2.75){};
\node(o7) at (2,7){};
\node(mm7) at (-2.5,4){};
\node(in1) at (-0.5,-4){};
\node(in2) at (1,-4){};
\node(b2k)[whitedot,inner sep=0.25pt] at (2,-3.1){$\boldsymbol{*}$};
\node(b4k)[whitedot,inner sep=0.25pt] at (3,-3.1){$\boldsymbol{*}$};
\node(in3) at (2,-4){};
\node(in4) at (3,-4){};
\draw[string,in=270,out=90,looseness=0.6] (i1.center) to (b1.center);
\draw[string,in=180,out=180,looseness=1] (b1.center) to (y.center);
\draw[string,in=270,out=0] (b1.center) to (add1.center);
\draw[string] (add21.center) to (x2.center);
\draw[string] (add22.center) to (x1.center);
\draw[string,in=0,out=0] (add22.center) to (x2.center);
\draw[string,in=180,out=180] (add21.center) to (x1.center);
\draw[string,in=270,out=90] (i3.center) to (b2.center);
\draw[string,in=270,out=left] (b2.center) to (mm.center);
\draw[string,in=0,out=90] (mm.center) to (y.center);
\draw[string,in=270,out=right,looseness=1.3] (b2.center) to (x1.center);
\draw[string,out=90,in=0] (b171.center) to (y7.center);
\draw[string,in=270,out=left] (b27.center) to (mm7.center);
\draw[string,in=180,out=90] (mm7.center) to (y7.center);
\draw[string] (add1.center) to (b172.center);
\draw[string,out=90,in=0] (add21.center) to (b171.center);
\draw[string,in=180,out=180] (add1.center) to (b171.center);
\draw[string,in=0,out=90] (add22.center) to (b172.center);
\draw[string,in=left,out=90] (b172.center) to (add7.center);
\draw[string,in=90,out=right] (add7.center) to (x7.center);
\draw[string,in=270,out=90] (add7.center) to (o7.center);
\draw[string,in=right,out=0,looseness=1] (t2.center) to (x7.center);
\draw[string,in=270,out=90] (i37.center) to (b27.center);
\draw[string,in=left,out=right] (b27.center) to (x7.center);
\draw[string,in=180,out=270] (x2.center) to (t2.center);
\draw[string] (b2k.center) to (in3.center);
\draw[string] (b4k.center) to (in4.center);
\draw[green][string,line width=1pt] (b4k.center) to (i37.center);
\draw[green][string,line width=1pt] (b2k.center) to (i3.center);
\draw[string,out=90,in=270] (t1.center) to (t2.center);
\draw[string,out=180,in=90] (t1.center) to (in1.center);
\draw[string,out=0,in=90] (t1.center) to (in2.center);
\end{pic}\quad \\  \super{\eqref{eq:sm}}=& 
\begin{pic}[scale=1/3]
\node(chi) [morphism,wedge,scale=0.5,connect n length=0.35cm,connect s length=1.6cm] at (-4.5,6) {\large$\chi$};
\node(b1)[blackdot,proofdiagram]  at (-4,-2) {};
\node(add1)[blackdot,proofdiagram] at (-4.5,2.5){};
\node(add21)[yellowdot,proofdiagram] at (-2,3.25){};
\node(add22)[yellowdot,proofdiagram] at (1.5,4){};
\node(x1)[blackdot,proofdiagram] at (-3,2.5){};
\node(x2)[blackdot,proofdiagram] at (3,3){};
\node(b2)[blackdot,proofdiagram] at (2,-2){};
\node(y)[yellowdot,proofdiagram] at (chi.connect s){};
\node(y2)[yellowdot,proofdiagram] at (chi.connect n){};
\node(i1) at (-4,-4){};
\node(i3)[whitedot,projdot] at (2,-2.75){};
\node(mm) at (-3,2){};
\node(t1)[blackdot,proofdiagram] at (0.25,-3.5){};
\node(t2)[blackdot,proofdiagram] at (0.25,-2.5){};
\node (chi7) [morphism,wedge,scale=0.5,connect n length=0.35cm,connect s length=0.35cm] at (-2,6) {\large$\chi$};
\node(b171)[reddot,proofdiagram]  at (-3,4) {};
\node(b172)[reddot,proofdiagram]  at (1,5) {};
\node(add7)[reddot,proofdiagram] at (2,6){};
\node(x7)[yellowdot,proofdiagram] at (3,5){};
\node(b27)[blackdot,proofdiagram] at (3,-1.5){};
\node(y7)[yellowdot,proofdiagram] at (chi7.connect s){};
\node(y27)[yellowdot,proofdiagram] at (chi7.connect n){};
\node(m7) at (5,2.5){};
\node(i17) at (0.75,1){};
\node(i37)[whitedot,projdot] at (3,-2.75){};
\node(o7) at (2,7){};
\node(mm7) at (0,2){};
\node(in1) at (-0.5,-4){};
\node(in2) at (1,-4){};
\node(b2k)[whitedot,inner sep=0.25pt] at (2,-3.1){$\boldsymbol{*}$};
\node(b4k)[whitedot,inner sep=0.25pt] at (3,-3.1){$\boldsymbol{*}$};
\node(in3) at (2,-4){};
\node(in4) at (3,-4){};
\draw[string,in=270,out=90] (i1.center) to (b1.center);
\draw[string,in=180,out=180,looseness=1] (add1.center) to (y.center);
\draw[string,in=270,out=180,looseness=0.7] (b1.center) to (add1.center);
\draw[string,out=0,in=180] (add21.center) to (t2.center);
\draw[string,out=180,in=0] (add22.center) to (b2.center);
\draw[string,in=180,out=0] (add22.center) to (x2.center);
\draw[string,in=0,out=180] (add21.center) to (x1.center);
\draw[string,in=270,out=90] (i3.center) to (b2.center);
\draw[string,in=0,out=left] (x1.center) to (y.center);
\draw[string,in=270,out=90] (mm.center) to (x1.center);
\draw[string,in=270,out=180,looseness=1.1] (b2.center) to (mm.center);
\draw[string,out=90,in=180] (b171.center) to (y7.center);
\draw[string,in=270,out=left] (b27.center) to (mm7.center);
\draw[string,in=0,out=90] (mm7.center) to (y7.center);
\draw[string,out=0,in=180] (b1.center) to (b172.center);
\draw[string,out=90,in=0] (add21.center) to (b171.center);
\draw[string,in=180,out=0] (add1.center) to (b171.center);
\draw[string,in=0,out=90] (add22.center) to (b172.center);
\draw[string,in=left,out=90] (b172.center) to (add7.center);
\draw[string,in=90,out=right] (add7.center) to (x7.center);
\draw[string,in=270,out=90] (add7.center) to (o7.center);
\draw[string,in=right,out=0,looseness=1] (x2.center) to (x7.center);
\draw[string,in=270,out=90] (i37.center) to (b27.center);
\draw[string,in=left,out=right] (b27.center) to (x7.center);
\draw[string,in=0,out=270,looseness=0.7] (x2.center) to (t2.center);
\draw[string] (b2k.center) to (in3.center);
\draw[string] (b4k.center) to (in4.center);
\draw[green][string,line width=1pt] (b4k.center) to (i37.center);
\draw[green][string,line width=1pt] (b2k.center) to (i3.center);
\draw[string,out=90,in=270] (t1.center) to (t2.center);
\draw[string,out=180,in=90] (t1.center) to (in1.center);
\draw[string,out=0,in=90] (t1.center) to (in2.center);
\end{pic}\quad \super{Yellow, black\eqref{eq:bialg}}= \qquad 
\begin{pic}[scale=1/3]
\node(chi) [morphism,wedge,scale=0.5,connect s length=1.6cm] at (-4.4,4.2) {\large$\chi$};
\node(b1)[blackdot,proofdiagram]  at (-4,-2) {};
\node(add1)[blackdot,proofdiagram] at (-4.5,0.5){};
\node(add21)[yellowdot,proofdiagram] at (-2,1.25){};
\node(add22)[yellowdot,proofdiagram] at (1.5,4){};
\node(x1)[blackdot,proofdiagram] at (-3,0.5){};
\node(x2)[blackdot,proofdiagram] at (3,3){};
\node(b2)[blackdot,proofdiagram] at (2,-2){};
\node(y)[yellowdot,proofdiagram] at (chi.connect s){};
\node(i1) at (-4,-4){};
\node(i3)[whitedot,projdot] at (2,-2.75){};
\node(mm) at (-3,0){};
\node(t1)[blackdot,proofdiagram] at (0.25,-3.5){};
\node(t2)[blackdot,proofdiagram] at (0.25,-2.5){};
\node (chi7) [morphism,wedge,scale=0.5,connect s length=0.35cm] at (-2.1,4.2) {\large$\chi$};
\node(b171)[reddot,proofdiagram]  at (-3,2) {};
\node(b172)[reddot,proofdiagram]  at (1,5) {};
\node(add7)[reddot,proofdiagram] at (2,6){};
\node(x7)[yellowdot,proofdiagram] at (3,5){};
\node(b27)[blackdot,proofdiagram] at (3,-1.5){};
\node(y7)[yellowdot,proofdiagram] at (chi7.connect s){};
\node(m7) at (5,2.5){};
\node(i17) at (0.75,1){};
\node(i37)[whitedot,projdot] at (3,-2.75){};
\node(o7) at (2,7){};
\node(mm7) at (0,0){};
\node(in1) at (-0.5,-4){};
\node(in2) at (1,-4){};
\node(b2k)[whitedot,inner sep=0.25pt] at (2,-3.1){$\boldsymbol{*}$};
\node(b4k)[whitedot,inner sep=0.25pt] at (3,-3.1){$\boldsymbol{*}$};
\node(in3) at (2,-4){};
\node(in4) at (3,-4){};
\node(q)[blackdot,proofdiagram] at (-3.25,5){};
\node(qy)[yellowdot,proofdiagram] at (-3.25,5.75){};
\draw[string,in=180,out=90] (chi.north) to (q.center);
\draw[string,in=0,out=90] (chi7.north) to (q.center);
\draw[string] (q.center) to (qy.center);
\draw[string,in=270,out=90] (i1.center) to (b1.center);
\draw[string,in=180,out=180,looseness=1] (add1.center) to (y.center);
\draw[string,in=270,out=180,looseness=0.7] (b1.center) to (add1.center);
\draw[string,out=0,in=180] (add21.center) to (t2.center);
\draw[string,out=180,in=0] (add22.center) to (b2.center);
\draw[string,in=180,out=0] (add22.center) to (x2.center);
\draw[string,in=0,out=180] (add21.center) to (x1.center);
\draw[string,in=270,out=90] (i3.center) to (b2.center);
\draw[string,in=0,out=left] (x1.center) to (y.center);
\draw[string,in=270,out=90] (mm.center) to (x1.center);
\draw[string,in=270,out=180,looseness=1.1] (b2.center) to (mm.center);
\draw[string,out=90,in=180] (b171.center) to (y7.center);
\draw[string,in=270,out=left] (b27.center) to (mm7.center);
\draw[string,in=0,out=90] (mm7.center) to (y7.center);
\draw[string,out=0,in=180] (b1.center) to (b172.center);
\draw[string,out=90,in=0] (add21.center) to (b171.center);
\draw[string,in=180,out=0] (add1.center) to (b171.center);
\draw[string,in=0,out=90] (add22.center) to (b172.center);
\draw[string,in=left,out=90] (b172.center) to (add7.center);
\draw[string,in=90,out=right] (add7.center) to (x7.center);
\draw[string,in=270,out=90] (add7.center) to (o7.center);
\draw[string,in=right,out=0,looseness=1] (x2.center) to (x7.center);
\draw[string,in=270,out=90] (i37.center) to (b27.center);
\draw[string,in=left,out=right] (b27.center) to (x7.center);
\draw[string,in=0,out=270,looseness=0.7] (x2.center) to (t2.center);
\draw[string] (b2k.center) to (in3.center);
\draw[string] (b4k.center) to (in4.center);
\draw[green][string,line width=1pt] (b4k.center) to (i37.center);
\draw[green][string,line width=1pt] (b2k.center) to (i3.center);
\draw[string,out=90,in=270] (t1.center) to (t2.center);
\draw[string,out=180,in=90] (t1.center) to (in1.center);
\draw[string,out=0,in=90] (t1.center) to (in2.center);
\end{pic}\\ \super{\eqref{eq:chi}}=& \quad 
\begin{pic}[scale=1/3]
\node(chi) [morphism,wedge,scale=0.5,connect n length=0.35cm] at (-5,5.75) {\large$\chi$};
\node(b1)[blackdot,proofdiagram]  at (-4,-2) {};
\node(add1)[blackdot,proofdiagram] at (-6.5,1){};
\node(add21)[yellowdot,proofdiagram] at (-3.5,2){};
\node(add22)[yellowdot,proofdiagram] at (2.5,4.5){};
\node(x1)[blackdot,proofdiagram] at (-5,1){};
\node(x2)[blackdot,proofdiagram] at (4,3.5){};
\node(b2)[blackdot,proofdiagram] at (2,-2){};
\node(y)[yellowdot,proofdiagram] at (-6.5,3){};
\node(i1) at (-4,-4){};
\node(i3)[whitedot,projdot] at (2,-2.75){};
\node(mm) at (-5,1){};
\node(t1)[blackdot,proofdiagram] at (0.25,-3.5){};
\node(t2)[blackdot,proofdiagram] at (0.25,-2.5){};
\node(b171)[reddot,proofdiagram]  at (-5,3) {};
\node(b172)[reddot,proofdiagram]  at (1.25,5.5) {};
\node(add7)[reddot,proofdiagram] at (2,6){};
\node(x7)[yellowdot,proofdiagram] at (4,5){};
\node(b27)[blackdot,proofdiagram] at (3,-1.5){};
\node(y7)[yellowdot,proofdiagram] at (-3.5,4){};
\node(m7) at (5,2.5){};
\node(i17) at (0.75,1){};
\node(i37)[whitedot,projdot] at (3,-2.75){};
\node(o7) at (2,7){};
\node(mm7) at (0,0){};
\node(in1) at (-0.5,-4){};
\node(in2) at (1,-4){};
\node(b2k)[whitedot,inner sep=0.25pt] at (2,-3.1){$\boldsymbol{*}$};
\node(b4k)[whitedot,inner sep=0.25pt] at (3,-3.1){$\boldsymbol{*}$};
\node(in3) at (2,-4){};
\node(in4) at (3,-4){};
\node(q)[reddot,proofdiagram] at (-5,4.75){};
\node(qy)[yellowdot,proofdiagram] at (chi.connect n){};
\draw[string,in=180,out=90] (y.center) to (q.center);
\draw[string,in=0,out=90] (y7.center) to (q.center);
\draw[string] (q.center) to (chi.south);
\draw[string,in=270,out=90] (i1.center) to (b1.center);
\draw[string,in=180,out=180,looseness=1] (add1.center) to (y.center);
\draw[string,in=270,out=180,looseness=0.7] (b1.center) to (add1.center);
\draw[string,out=0,in=180] (add21.center) to (t2.center);
\draw[string,out=180,in=0,looseness=0.7] (add22.center) to (b2.center);
\draw[string,in=180,out=0] (add22.center) to (x2.center);
\draw[string,in=0,out=180] (add21.center) to (x1.center);
\draw[string,in=270,out=90] (i3.center) to (b2.center);
\draw[string,in=0,out=left] (x1.center) to (y.center);
\draw[string,in=270,out=90] (mm.center) to (x1.center);
\draw[string,in=270,out=180,looseness=1.1] (b2.center) to (mm.center);
\draw[string,out=90,in=180] (b171.center) to (y7.center);
\draw[string,in=270,out=left] (b27.center) to (mm7.center);
\draw[string,in=0,out=90] (mm7.center) to (y7.center);
\draw[string,out=0,in=180] (b1.center) to (b172.center);
\draw[string,out=90,in=0] (add21.center) to (b171.center);
\draw[string,in=180,out=0] (add1.center) to (b171.center);
\draw[string,in=0,out=90] (add22.center) to (b172.center);
\draw[string,in=left,out=90] (b172.center) to (add7.center);
\draw[string,in=90,out=right] (add7.center) to (x7.center);
\draw[string,in=270,out=90] (add7.center) to (o7.center);
\draw[string,in=right,out=0,looseness=1] (x2.center) to (x7.center);
\draw[string,in=270,out=90] (i37.center) to (b27.center);
\draw[string,in=left,out=right] (b27.center) to (x7.center);
\draw[string,in=0,out=270,looseness=0.7] (x2.center) to (t2.center);
\draw[string] (b2k.center) to (in3.center);
\draw[string] (b4k.center) to (in4.center);
\draw[green][string,line width=1pt] (b4k.center) to (i37.center);
\draw[green][string,line width=1pt] (b2k.center) to (i3.center);
\draw[string,out=90,in=270] (t1.center) to (t2.center);
\draw[string,out=180,in=90] (t1.center) to (in1.center);
\draw[string,out=0,in=90] (t1.center) to (in2.center);
\end{pic}\quad \super{\eqref{eqs:lem1}\eqref{eqs:lem2}}= \quad 
\begin{pic}[scale=1/3]
\node(chi) [morphism,wedge,scale=0.5,connect n length=0.35cm] at (-5,5.75) {\large$\chi$};
\node(b1)[blackdot,proofdiagram]  at (-4,-2) {};
\node(add1)[blackdot,proofdiagram] at (-6.5,1){};
\node(add21)[yellowdot,proofdiagram] at (-0.5,2){};
\node(add22)[yellowdot,proofdiagram] at (3,4){};
\node(x1)[blackdot,proofdiagram] at (-3.5,1){};
\node(x2)[blackdot,proofdiagram] at (4,3.5){};
\node(b2)[blackdot,proofdiagram] at (2,-2){};
\node(y)[yellowdot,proofdiagram] at (-6.5,4){};
\node(i1) at (-4,-4){};
\node(i3)[whitedot,projdot] at (2,-2.75){};
\node(mm7) at (-5,1){};
\node(t1)[blackdot,proofdiagram] at (0.25,-3.5){};
\node(t2)[blackdot,proofdiagram] at (0.25,-2.5){};
\node(b171)[reddot,proofdiagram]  at (-2,3) {};
\node(b172)[reddot,proofdiagram]  at (1.25,5.5) {};
\node(add7)[reddot,proofdiagram] at (2,6){};
\node(x7)[yellowdot,proofdiagram] at (3.5,5){};
\node(b27)[blackdot,proofdiagram] at (3,-1.5){};
\node(y7)[yellowdot,proofdiagram] at (-3.5,4){};
\node(m7) at (5,2.5){};
\node(i17) at (0.75,1){};
\node(i37)[whitedot,projdot] at (3,-2.75){};
\node(o7) at (2,7){};
\node(mm) at (-3.5,0.5){};
\node(in1) at (-0.5,-4){};
\node(in2) at (1,-4){};
\node(b2k)[whitedot,inner sep=0.25pt] at (2,-3.1){$\boldsymbol{*}$};
\node(b4k)[whitedot,inner sep=0.25pt] at (3,-3.1){$\boldsymbol{*}$};
\node(in3) at (2,-4){};
\node(in4) at (3,-4){};
\node(q)[reddot,proofdiagram] at (-5,4.75){};
\node(qy)[yellowdot,proofdiagram] at (chi.connect n){};
\node(mm1) at (0.5,2){};
\node(mm2) at (2,3){};
\draw[string,in=180,out=90] (y.center) to (q.center);
\draw[string,in=0,out=90] (y7.center) to (q.center);
\draw[string] (q.center) to (chi.south);
\draw[string,in=270,out=90] (i1.center) to (b1.center);
\draw[string,in=180,out=180,looseness=1] (add1.center) to (y.center);
\draw[string,in=270,out=180,looseness=0.7] (b1.center) to (add1.center);
\draw[string,out=0,in=180] (add21.center) to (t2.center);
\draw[string,out=180,in=0,looseness=0.7] (add22.center) to (b27.center);
\draw[string,in=180,out=0] (add22.center) to (x2.center);
\draw[string,in=0,out=180] (add21.center) to (x1.center);
\draw[string,in=270,out=90] (i3.center) to (b2.center);
\draw[string,in=0,out=left] (x1.center) to (y.center);
\draw[string,in=270,out=90] (mm.center) to (x1.center);
\draw[string,in=270,out=180,looseness=0.8] (b27.center) to (mm.center);
\draw[string,out=90,in=0] (b171.center) to (y7.center);
\draw[string,in=270,out=left] (b2.center) to (mm7.center);
\draw[string,in=180,out=90] (mm7.center) to (y7.center);
\draw[string,out=0,in=270] (b1.center) to (mm1.center);
\draw[string,out=90,in=180] (mm1.center) to (b172.center);
\draw[string,out=90,in=0] (add21.center) to (b171.center);
\draw[string,in=180,out=0] (add1.center) to (b171.center);
\draw[string,in=0,out=90] (add22.center) to (b172.center);
\draw[string,in=left,out=90] (b172.center) to (add7.center);
\draw[string,in=90,out=right] (add7.center) to (x7.center);
\draw[string,in=270,out=90] (add7.center) to (o7.center);
\draw[string,in=right,out=0,looseness=1] (x2.center) to (x7.center);
\draw[string,in=270,out=90] (i37.center) to (b27.center);
\draw[string,in=270,out=right,looseness=0.7] (b2.center) to (mm2.center);
\draw[string,in=180,out=90] (mm2.center) to (x7.center);
\draw[string,in=0,out=270,looseness=0.7] (x2.center) to (t2.center);
\draw[string] (b2k.center) to (in3.center);
\draw[string] (b4k.center) to (in4.center);
\draw[green][string,line width=1pt] (b4k.center) to (i37.center);
\draw[green][string,line width=1pt] (b2k.center) to (i3.center);
\draw[string,out=90,in=270] (t1.center) to (t2.center);
\draw[string,out=180,in=90] (t1.center) to (in1.center);
\draw[string,out=0,in=90] (t1.center) to (in2.center);
\end{pic}\\ \super{\eqref{eq:chi}}=& \quad 
\begin{pic}[scale=1/3]
\node(chi) [morphism,wedge,scale=0.5] at (-6.5,5) {\large$\chi$};
\node(chi7) [morphism,wedge,scale=0.5] at (-3.5,5) {\large$\chi$};
\node(b1)[blackdot,proofdiagram]  at (-4,-2) {};
\node(add1)[blackdot,proofdiagram] at (-6.5,1){};
\node(add21)[yellowdot,proofdiagram] at (-0.5,2){};
\node(add22)[yellowdot,proofdiagram] at (3,4){};
\node(x1)[blackdot,proofdiagram] at (-3.5,1){};
\node(x2)[blackdot,proofdiagram] at (4,3.5){};
\node(b2)[blackdot,proofdiagram] at (2,-2){};
\node(y)[yellowdot,proofdiagram] at (-6.5,4){};
\node(i1) at (-4,-4){};
\node(i3)[whitedot,projdot] at (2,-2.75){};
\node(mm7) at (-5,1){};
\node(t1)[blackdot,proofdiagram] at (0.25,-3.5){};
\node(t2)[blackdot,proofdiagram] at (0.25,-2.5){};
\node(b171)[reddot,proofdiagram]  at (-2,3) {};
\node(b172)[reddot,proofdiagram]  at (1.25,5.5) {};
\node(add7)[reddot,proofdiagram] at (2,6){};
\node(x7)[yellowdot,proofdiagram] at (3.5,5){};
\node(b27)[blackdot,proofdiagram] at (3,-1.5){};
\node(y7)[yellowdot,proofdiagram] at (-3.5,4){};
\node(m7) at (5,2.5){};
\node(i17) at (0.75,1){};
\node(i37)[whitedot,projdot] at (3,-2.75){};
\node(o7) at (2,7){};
\node(mm) at (-3.5,0.5){};
\node(in1) at (-0.5,-4){};
\node(in2) at (1,-4){};
\node(b2k)[whitedot,inner sep=0.25pt] at (2,-3.1){$\boldsymbol{*}$};
\node(b4k)[whitedot,inner sep=0.25pt] at (3,-3.1){$\boldsymbol{*}$};
\node(in3) at (2,-4){};
\node(in4) at (3,-4){};
\node(q)[blackdot,proofdiagram] at (-5,6){};
\node(qy)[yellowdot,proofdiagram] at (-5,6.75){};
\node(mm1) at (0.5,2){};
\node(mm2) at (2,3){};
\draw[string,in=180,out=90] (chi.north) to (q.center);
\draw[string,in=0,out=90] (chi7.north) to (q.center);
\draw[string] (chi.south) to (y.center);
\draw[string] (chi7.south) to (y7.center);
\draw[string] (q.center) to (qy.center);
\draw[string,in=270,out=90] (i1.center) to (b1.center);
\draw[string,in=180,out=180,looseness=1] (add1.center) to (y.center);
\draw[string,in=270,out=180,looseness=0.7] (b1.center) to (add1.center);
\draw[string,out=0,in=180] (add21.center) to (t2.center);
\draw[string,out=180,in=0,looseness=0.7] (add22.center) to (b27.center);
\draw[string,in=180,out=0] (add22.center) to (x2.center);
\draw[string,in=0,out=180] (add21.center) to (x1.center);
\draw[string,in=270,out=90] (i3.center) to (b2.center);
\draw[string,in=0,out=left] (x1.center) to (y.center);
\draw[string,in=270,out=90] (mm.center) to (x1.center);
\draw[string,in=270,out=180,looseness=0.8] (b27.center) to (mm.center);
\draw[string,out=90,in=0] (b171.center) to (y7.center);
\draw[string,in=270,out=left] (b2.center) to (mm7.center);
\draw[string,in=180,out=90] (mm7.center) to (y7.center);
\draw[string,out=0,in=270] (b1.center) to (mm1.center);
\draw[string,out=90,in=180] (mm1.center) to (b172.center);
\draw[string,out=90,in=0] (add21.center) to (b171.center);
\draw[string,in=180,out=0] (add1.center) to (b171.center);
\draw[string,in=0,out=90] (add22.center) to (b172.center);
\draw[string,in=left,out=90] (b172.center) to (add7.center);
\draw[string,in=90,out=right] (add7.center) to (x7.center);
\draw[string,in=270,out=90] (add7.center) to (o7.center);
\draw[string,in=right,out=0,looseness=1] (x2.center) to (x7.center);
\draw[string,in=270,out=90] (i37.center) to (b27.center);
\draw[string,in=270,out=right,looseness=0.7] (b2.center) to (mm2.center);
\draw[string,in=180,out=90] (mm2.center) to (x7.center);
\draw[string,in=0,out=270,looseness=0.7] (x2.center) to (t2.center);
\draw[string] (b2k.center) to (in3.center);
\draw[string] (b4k.center) to (in4.center);
\draw[green][string,line width=1pt] (b4k.center) to (i37.center);
\draw[green][string,line width=1pt] (b2k.center) to (i3.center);
\draw[string,out=90,in=270] (t1.center) to (t2.center);
\draw[string,out=180,in=90] (t1.center) to (in1.center);
\draw[string,out=0,in=90] (t1.center) to (in2.center);
\end{pic}\quad \super{Yellow, black\eqref{eq:bialg}}=\qquad
\begin{pic}[scale=1/3]
\node(chi) [morphism,wedge,scale=0.5,connect n length=0.35cm] at (-6.5,5) {\large$\chi$};
\node(chi7) [morphism,wedge,scale=0.5,connect n length=0.35cm] at (-3.5,5) {\large$\chi$};
\node(b1)[blackdot,proofdiagram]  at (-4,-2) {};
\node(add1)[blackdot,proofdiagram] at (-6.5,1){};
\node(add21)[yellowdot,proofdiagram] at (-0.5,2){};
\node(add22)[yellowdot,proofdiagram] at (3,4){};
\node(x1)[blackdot,proofdiagram] at (-3.5,1){};
\node(x2)[blackdot,proofdiagram] at (4,3.5){};
\node(b2)[blackdot,proofdiagram] at (2,-2){};
\node(y)[yellowdot,proofdiagram] at (-6.5,4){};
\node(i1) at (-4,-4){};
\node(i3)[whitedot,projdot] at (2,-2.75){};
\node(mm7) at (-5,1){};
\node(t1)[blackdot,proofdiagram] at (0.25,-3.5){};
\node(t2)[blackdot,proofdiagram] at (0.25,-2.5){};
\node(b171)[reddot,proofdiagram]  at (-2,3) {};
\node(b172)[reddot,proofdiagram]  at (1.25,5.5) {};
\node(add7)[reddot,proofdiagram] at (2,6){};
\node(x7)[yellowdot,proofdiagram] at (3.5,5){};
\node(b27)[blackdot,proofdiagram] at (3,-1.5){};
\node(y7)[yellowdot,proofdiagram] at (-3.5,4){};
\node(m7) at (5,2.5){};
\node(i17) at (0.75,1){};
\node(i37)[whitedot,projdot] at (3,-2.75){};
\node(o7) at (2,7){};
\node(mm) at (-3.5,0.5){};
\node(in1) at (-0.5,-4){};
\node(in2) at (1,-4){};
\node(b2k)[whitedot,inner sep=0.25pt] at (2,-3.1){$\boldsymbol{*}$};
\node(b4k)[whitedot,inner sep=0.25pt] at (3,-3.1){$\boldsymbol{*}$};
\node(in3) at (2,-4){};
\node(in4) at (3,-4){};
\node(qy)[yellowdot,proofdiagram] at (chi.connect n){};
\node(qqy)[yellowdot,proofdiagram] at (chi7.connect n){};
\node(mm1) at (0.5,2){};
\node(mm2) at (2,3){};
\draw[string] (chi.south) to (y.center);
\draw[string] (chi7.south) to (y7.center);
\draw[string,in=270,out=90] (i1.center) to (b1.center);
\draw[string,in=180,out=180,looseness=1] (add1.center) to (y.center);
\draw[string,in=270,out=180,looseness=0.7] (b1.center) to (add1.center);
\draw[string,out=0,in=180] (add21.center) to (t2.center);
\draw[string,out=180,in=0,looseness=0.7] (add22.center) to (b27.center);
\draw[string,in=180,out=0] (add22.center) to (x2.center);
\draw[string,in=0,out=180] (add21.center) to (x1.center);
\draw[string,in=270,out=90] (i3.center) to (b2.center);
\draw[string,in=0,out=left] (x1.center) to (y.center);
\draw[string,in=270,out=90] (mm.center) to (x1.center);
\draw[string,in=270,out=180,looseness=0.8] (b27.center) to (mm.center);
\draw[string,out=90,in=0] (b171.center) to (y7.center);
\draw[string,in=270,out=left] (b2.center) to (mm7.center);
\draw[string,in=180,out=90] (mm7.center) to (y7.center);
\draw[string,out=0,in=270] (b1.center) to (mm1.center);
\draw[string,out=90,in=180] (mm1.center) to (b172.center);
\draw[string,out=90,in=0] (add21.center) to (b171.center);
\draw[string,in=180,out=0] (add1.center) to (b171.center);
\draw[string,in=0,out=90] (add22.center) to (b172.center);
\draw[string,in=left,out=90] (b172.center) to (add7.center);
\draw[string,in=90,out=right] (add7.center) to (x7.center);
\draw[string,in=270,out=90] (add7.center) to (o7.center);
\draw[string,in=right,out=0,looseness=1] (x2.center) to (x7.center);
\draw[string,in=270,out=90] (i37.center) to (b27.center);
\draw[string,in=270,out=right,looseness=0.7] (b2.center) to (mm2.center);
\draw[string,in=180,out=90] (mm2.center) to (x7.center);
\draw[string,in=0,out=270,looseness=0.7] (x2.center) to (t2.center);
\draw[string] (b2k.center) to (in3.center);
\draw[string] (b4k.center) to (in4.center);
\draw[green][string,line width=1pt] (b4k.center) to (i37.center);
\draw[green][string,line width=1pt] (b2k.center) to (i3.center);
\draw[string,out=90,in=270] (t1.center) to (t2.center);
\draw[string,out=180,in=90] (t1.center) to (in1.center);
\draw[string,out=0,in=90] (t1.center) to (in2.center);
\end{pic}\\ \super{\eqref{eq:sm}}=&\quad
\begin{pic}[scale=1/3]
\node(chi) [morphism,wedge,scale=0.5,connect n length=0.35cm,connect s length=0.35cm] at (-4,5.5) {\large$\chi$};
\node(b1)[blackdot,proofdiagram]  at (-4,-1) {};
\node(add1)[blackdot,proofdiagram] at (0.5,2.5){};
\node(add21)[yellowdot,proofdiagram] at (1.5,2){};
\node(add22)[yellowdot,proofdiagram] at (2.5,2){};
\node(x1)[blackdot,proofdiagram] at (1.5,1.25){};
\node(x2)[blackdot,proofdiagram] at (2.5,1.25){};
\node(b2)[blackdot,proofdiagram] at (2,-2){};
\node(y)[yellowdot,proofdiagram] at (chi.connect s){};
\node(y2)[yellowdot,proofdiagram] at (chi.connect n){};
\node(i1) at (-4,-4){};
\node(i3)[whitedot,projdot] at (2,-2.75){};
\node(mm) at (-3,2){};
\node(t1)[blackdot,proofdiagram] at (0.25,-3.5){};
\node(t2)[blackdot,proofdiagram] at (4.25,0.5){};
\node (chi7) [morphism,wedge,scale=0.5,connect n length=0.35cm,connect s length=0.35cm] at (-1.8,5.5) {\large$\chi$};
\node(b171)[reddot,proofdiagram]  at (0.5,3.5) {};
\node(b172)[reddot,proofdiagram]  at (1.5,3.5) {};
\node(add7)[reddot,proofdiagram] at (2,6){};
\node(x7)[yellowdot,proofdiagram] at (4.5,3){};
\node(b27)[blackdot,proofdiagram] at (3.75,-1.25){};
\node(y7)[yellowdot,proofdiagram] at (chi7.connect s){};
\node(y27)[yellowdot,proofdiagram] at (chi7.connect n){};
\node(m7) at (5,2.5){};
\node(i17) at (0.75,1){};
\node(i37)[whitedot,projdot] at (3,-2.75){};
\node(o7) at (2,7){};
\node(mm7) at (-2.5,4){};
\node(in1) at (-0.5,-4){};
\node(in2) at (1,-4){};
\node(b2k)[whitedot,inner sep=0.25pt] at (2,-3.1){$\boldsymbol{*}$};
\node(b4k)[whitedot,inner sep=0.25pt] at (3,-3.1){$\boldsymbol{*}$};
\node(in3) at (2,-4){};
\node(in4) at (3,-4){};
\draw[string,in=270,out=90,looseness=0.6] (i1.center) to (b1.center);
\draw[string,in=180,out=180,looseness=1] (b1.center) to (y.center);
\draw[string,in=270,out=0,looseness=0.6] (b1.center) to (add1.center);
\draw[string] (add21.center) to (x2.center);
\draw[string] (add22.center) to (x1.center);
\draw[string,in=0,out=0] (add22.center) to (x2.center);
\draw[string,in=180,out=180] (add21.center) to (x1.center);
\draw[string,in=270,out=90] (i3.center) to (b2.center);
\draw[string,in=270,out=left] (b27.center) to (mm.center);
\draw[string,in=0,out=90] (mm.center) to (y.center);
\draw[string,in=180,out=right,looseness=1.3] (b2.center) to (x7.center);
\draw[string,out=90,in=0] (b171.center) to (y7.center);
\draw[string,in=270,out=left] (b2.center) to (mm7.center);
\draw[string,in=180,out=90] (mm7.center) to (y7.center);
\draw[string] (add1.center) to (b172.center);
\draw[string,out=90,in=0] (add21.center) to (b171.center);
\draw[string,in=180,out=180] (add1.center) to (b171.center);
\draw[string,in=0,out=90] (add22.center) to (b172.center);
\draw[string,in=left,out=90] (b172.center) to (add7.center);
\draw[string,in=90,out=right] (add7.center) to (x7.center);
\draw[string,in=270,out=90] (add7.center) to (o7.center);
\draw[string,in=right,out=0,looseness=1] (t2.center) to (x7.center);
\draw[string,in=270,out=90] (i37.center) to (b27.center);
\draw[string,in=270,out=right,looseness=1] (b27.center) to (x1.center);
\draw[string,in=180,out=270] (x2.center) to (t2.center);
\draw[string] (b2k.center) to (in3.center);
\draw[string] (b4k.center) to (in4.center);
\draw[green][string,line width=1pt] (b4k.center) to (i37.center);
\draw[green][string,line width=1pt] (b2k.center) to (i3.center);
\draw[string,out=90,in=270] (t1.center) to (t2.center);
\draw[string,out=180,in=90] (t1.center) to (in1.center);
\draw[string,out=0,in=90] (t1.center) to (in2.center);
\end{pic}\qquad \super{Yellow, black\eqref{eq:bialg}}= \qquad 
\begin{pic}[scale=1/3]
\node(chi) [morphism,wedge,scale=0.5,connect n length=0.35cm,connect s length=0.35cm] at (-4,5.5) {\large$\chi$};
\node(b1)[blackdot,proofdiagram]  at (-4,-1) {};
\node(add1)[blackdot,proofdiagram] at (0.5,2.5){};
\node(add2)[blackdot,proofdiagram] at (1.5,2.5){};
\node(x)[yellowdot,proofdiagram] at (1.5,1.75){};
\node(b2)[blackdot,proofdiagram] at (2,-2){};
\node(y)[yellowdot,proofdiagram] at (chi.connect s){};
\node(y2)[yellowdot,proofdiagram] at (chi.connect n){};
\node(i1) at (-4,-4){};
\node(i3)[whitedot,projdot] at (2,-2.75){};
\node(mm) at (-3,2){};
\node(t1)[blackdot,proofdiagram] at (0.25,-3.5){};
\node(t2)[blackdot,proofdiagram] at (4.5,1){};
\node (chi7) [morphism,wedge,scale=0.5,connect n length=0.35cm,connect s length=0.35cm] at (-1.8,5.5) {\large$\chi$};
\node(b171)[reddot,proofdiagram]  at (0.5,3.5) {};
\node(b172)[reddot,proofdiagram]  at (1.5,3.5) {};
\node(add7)[reddot,proofdiagram] at (2,6){};
\node(x7)[yellowdot,proofdiagram] at (4.5,3){};
\node(b27)[blackdot,proofdiagram] at (3.75,-1){};
\node(y7)[yellowdot,proofdiagram] at (chi7.connect s){};
\node(y27)[yellowdot,proofdiagram] at (chi7.connect n){};
\node(m7) at (5,2.5){};
\node(i17) at (0.75,1){};
\node(i37)[whitedot,projdot] at (3,-2.75){};
\node(o7) at (2,7){};
\node(mm7) at (-2.5,4){};
\node(in1) at (-0.5,-4){};
\node(in2) at (1,-4){};
\node(b2k)[whitedot,inner sep=0.25pt] at (2,-3.1){$\boldsymbol{*}$};
\node(b4k)[whitedot,inner sep=0.25pt] at (3,-3.1){$\boldsymbol{*}$};
\node(in3) at (2,-4){};
\node(in4) at (3,-4){};
\draw[string,in=270,out=90,looseness=0.6] (i1.center) to (b1.center);
\draw[string,in=180,out=180,looseness=1] (b1.center) to (y.center);
\draw[string,in=270,out=0,looseness=0.7] (b1.center) to (add1.center);
\draw[string,in=90,out=270] (add2.center) to (x.center);
\draw[string,in=270,out=90] (i3.center) to (b2.center);
\draw[string,in=270,out=left] (b27.center) to (mm.center);
\draw[string,in=0,out=90] (mm.center) to (y.center);
\draw[string,in=left,out=right,looseness=1.3] (b2.center) to (x7.center);
\draw[string,out=90,in=0] (b171.center) to (y7.center);
\draw[string,in=270,out=left] (b2.center) to (mm7.center);
\draw[string,in=180,out=90] (mm7.center) to (y7.center);
\draw[string] (add1.center) to (b172.center);
\draw[string] (add2.center) to (b171.center);
\draw[string,in=180,out=180] (add1.center) to (b171.center);
\draw[string,in=0,out=0] (add2.center) to (b172.center);
\draw[string,in=left,out=90] (b172.center) to (add7.center);
\draw[string,in=90,out=right] (add7.center) to (x7.center);
\draw[string,in=270,out=90] (add7.center) to (o7.center);
\draw[string,in=right,out=0,looseness=1] (t2.center) to (x7.center);
\draw[string,in=270,out=90] (i37.center) to (b27.center);
\draw[string,in=left,out=right] (b27.center) to (x.center);
\draw[string,in=180,out=0] (x.center) to (t2.center);
\draw[string] (b2k.center) to (in3.center);
\draw[string] (b4k.center) to (in4.center);
\draw[green][string,line width=1pt] (b4k.center) to (i37.center);
\draw[green][string,line width=1pt] (b2k.center) to (i3.center);
\draw[string,out=90,in=270] (t1.center) to (t2.center);
\draw[string,out=180,in=90] (t1.center) to (in1.center);
\draw[string,out=0,in=90] (t1.center) to (in2.center);
\end{pic}\\ \super{Red, black\eqref{eq:bialg}}=&\qquad\begin{pic}[scale=1/3]
\node(chi) [morphism,wedge,scale=0.5,connect n length=0.35cm,connect s length=0.35cm] at (-4,5.5) {\large$\chi$};
\node(b1)[blackdot,proofdiagram]  at (-4,-2) {};
\node(add)[reddot,proofdiagram] at (1,2.5){};
\node(x)[yellowdot,proofdiagram] at (2.5,1.75){};
\node(b27)[blackdot,proofdiagram] at (2,-2){};
\node(y)[yellowdot,proofdiagram] at (chi.connect s){};
\node(y2)[yellowdot,proofdiagram] at (chi.connect n){};
\node(i1) at (-4,-4){};
\node(i3)[whitedot,projdot] at (2,-2.75){};
\node(mm) at (-3,2){};
\node(t1)[blackdot,proofdiagram] at (0.25,-3.5){};
\node(t2)[blackdot,proofdiagram] at (4.5,1){};
\node (chi7) [morphism,wedge,scale=0.5,connect n length=0.35cm,connect s length=0.35cm] at (-1.8,5.5) {\large$\chi$};
\node(b17)[blackdot,proofdiagram]  at (1,3.5) {};
\node(add7)[reddot,proofdiagram] at (2,6){};
\node(x7)[yellowdot,proofdiagram] at (4.5,3){};
\node(b2)[blackdot,proofdiagram] at (3.75,-1.25){};
\node(y7)[yellowdot,proofdiagram] at (chi7.connect s){};
\node(y27)[yellowdot,proofdiagram] at (chi7.connect n){};
\node(m7) at (5,2.5){};
\node(i17) at (0.75,1){};
\node(i37) at (3,-2.75){};
\node(o7) at (2,7){};
\node(mm7) at (-2.5,4){};
\node(in1) at (-0.5,-4){};
\node(in2) at (1,-4){};
\node(b2k)[whitedot,inner sep=0.25pt] at (2,-3.1){$\boldsymbol{*}$};
\node(b4k)[whitedot,inner sep=0.25pt] at (3,-3.1){$\boldsymbol{*}$};
\node(in3) at (2,-4){};
\node(in4) at (3,-4){};
\draw[string,in=270,out=90,looseness=0.6] (i1.center) to (b1.center);
\draw[string,in=180,out=180,looseness=1] (b1.center) to (y.center);
\draw[string,in=left,out=0] (b1.center) to (add.center);
\draw[string,in=90,out=right] (add.center) to (x.center);
\draw[string,in=270,out=90] (i3.center) to (b27.center);
\draw[string,in=270,out=left] (b2.center) to (mm.center);
\draw[string,in=0,out=90] (mm.center) to (y.center);
\draw[string,in=left,out=right] (b2.center) to (x.center);
\draw[string,out=180,in=0] (b17.center) to (y7.center);
\draw[string,in=270,out=left] (b27.center) to (mm7.center);
\draw[string,in=180,out=90] (mm7.center) to (y7.center);
\draw[string] (add.center) to (b17.center);
\draw[string,in=left,out=0] (b17.center) to (add7.center);
\draw[string,in=90,out=right] (add7.center) to (x7.center);
\draw[string,in=270,out=90] (add7.center) to (o7.center);
\draw[string,in=right,out=0,looseness=1] (t2.center) to (x7.center);
\draw[string,in=270,out=90] (i37.center) to (b2.center);
\draw[string,in=left,out=right] (b27.center) to (x7.center);
\draw[string,in=180,out=0] (x.center) to (t2.center);
\draw[string] (b2k.center) to (in3.center);
\draw[string] (b4k.center) to (in4.center);
\draw[green][string,line width=1pt] (b4k.center) to (i37.center);
\draw[green][string,line width=1pt] (b2k.center) to (i3.center);
\draw[string,out=90,in=270] (t1.center) to (t2.center);
\draw[string,out=180,in=90] (t1.center) to (in1.center);
\draw[string,out=0,in=90] (t1.center) to (in2.center);
\end{pic}\qquad \super{Black, yellow\eqref{eq:sm}}= \quad  
\begin{pic}[scale=1/3]
\node(chi) [morphism,wedge,scale=0.5,connect n length=0.35cm,connect s length=0.35cm] at (-1.25,-0.1) {\large$\chi$};
\node(b1)[blackdot,proofdiagram]  at (-1.5,1.5) {};
\node(add)[reddot,proofdiagram] at (-1,2){};
\node(x)[yellowdot,proofdiagram] at (1.5,-0.25){};
\node(b2)[blackdot,proofdiagram] at (0,-2){};
\node(y)[yellowdot,proofdiagram] at (chi.connect s){};
\node(y2)[yellowdot,proofdiagram] at (chi.connect n){};
\node(k)[blackdot,proofdiagram] at (-0.4,-1.35){};
\node(i1) at (-4,-4){};
\node(i3)[whitedot,inner sep=0.25pt] at (1.8,-2.9){$\boldsymbol{*}$};
\node(mm) at (-1.9,-1.35){};
\node(t1)[blackdot,proofdiagram] at (0.25,-3.5){};
\node(t2)[blackdot,proofdiagram] at (2.25,-1){};
\node (chi7) [morphism,wedge,scale=0.5,connect n length=0.35cm,connect s length=0.35cm] at (1.75,3.9) {\large$\chi$};
\node(b17)[blackdot,proofdiagram]  at (1.5,5.5) {};
\node(add7)[reddot,proofdiagram] at (2,6){};
\node(x7)[yellowdot,proofdiagram] at (4.5,3.75){};
\node(b27)[blackdot,proofdiagram] at (3,2){};
\node(y7)[yellowdot,proofdiagram] at (chi7.connect s){};
\node(y27)[yellowdot,proofdiagram] at (chi7.connect n){};
\node(k7)[blackdot,proofdiagram] at (2.6,2.65){};
\node(m7) at (5,2.5){};
\node(i17) at (0.75,1){};
\node(i37)[whitedot,inner sep=0.25pt] at (3,-2.9){$\boldsymbol{*}$};
\node(o7) at (2,7){};
\node(mm7) at (1.1,2.65){};
\node(in1) at (-0.5,-4){};
\node(in2) at (1,-4){};
\node(b2k) at (2,-3.1){};
\node(b4k) at (3,-3.1){};
\node(in3) at (2,-4){};
\node(in4) at (3,-4){};
\draw[string,out=180,in=0] (k.center) to (y.center);
\draw[string,in=left,out=90,looseness=0.6] (i1.center) to (b1.center);
\draw[string,in=0,out=right,looseness=1.5] (b1.center) to (k.center);
\draw[string,in=left,out=90] (b1.center) to (add.center);
\draw[string,in=90,out=right] (add.center) to (x.center);
\draw[string,in=270,out=90] (i3.center) to (b2.center);
\draw[string,in=270,out=left] (b2.center) to (mm.center);
\draw[string,in=180,out=90] (mm.center) to (y.center);
\draw[string,in=left,out=right] (b2.center) to (x.center);
\draw[string,out=180,in=0] (k7.center) to (y7.center);
\draw[string,in=0,out=right,looseness=1.5] (b17.center) to (k7.center);
\draw[string,in=270,out=left] (b27.center) to (mm7.center);
\draw[string,in=180,out=90] (mm7.center) to (y7.center);
\draw[string,in=left,out=90,looseness=0.6] (add.center) to (b17.center);
\draw[string,in=left,out=90] (b17.center) to (add7.center);
\draw[string,in=90,out=right] (add7.center) to (x7.center);
\draw[string,in=270,out=90] (add7.center) to (o7.center);
\draw[string,in=right,out=90,looseness=1] (m7.center) to (x7.center);
\draw[string,in=270,out=90] (i37.center) to (b27.center);
\draw[string,in=left,out=right] (b27.center) to (x7.center);
\draw[string,in=180,out=0] (x.center) to (t2.center);
\draw[string,in=0,out=270,looseness=1] (m7.center) to (t2.center);
\draw[string,out=270,in=90] (b2k.center) to (in4.center);
\draw[string,out=270,in=90] (b4k.center) to (in3.center);
\draw[green][string,line width=1pt] (b4k.center) to (i37.center);
\draw[green][string,line width=1pt] (b2k.center) to (i3.center);
\draw[string,out=90,in=270] (t1.center) to (t2.center);
\draw[string,out=180,in=90] (t1.center) to (in1.center);
\draw[string,out=0,in=90] (t1.center) to (in2.center);
\end{pic}\\ 
 \super{\eqref{1}}=& \quad
\begin{pic}[scale=1/3]
\node (chi) [morphism,wedge,scale=0.5] at (0.4,1.2) {\large$\chi$};
\node(b1)[blackdot,proofdiagram]  at (-0.1,2.1) {};
\node(add)[reddot,proofdiagram] at (0.4,2.6){};
\node(x)[yellowdot,proofdiagram] at (1.85,0.4){};
\node(b2)[blackdot,proofdiagram] at (0.9,0){};
\node(*) at (1,-0.5){};
\node(m) at (2.15,0.15){};
\node(i1) at (-2,-5){};
\node(i2) at (-0.5,-2){};
\node(i3) at (2,-2.75){};
\node(i2a) at (1,-0.75){};
\node(i3a) at (1.75,-0.75){};
\node(o) at (0.75,2.5){};
\node(g1) at (1.75,-1.75){};
\node(g2) at (1.75,-1){};
\draw[string](g1.center) to (g2.center);
\draw[string,in=left,out=90,looseness=0.6] (i1.center) to (b1.center);
\draw[string,in=90,out=right] (b1.center) to (chi.north);
\draw[string,in=left,out=90] (b1.center) to (add.center);
\draw[string,in=90,out=right] (add.center) to (x.center);
\draw[string,in=270,out=90,looseness=1] (i2.center) to (i2a.center);
\draw[string,in=270,out=90,looseness=1] (i2a.center) to (m.center);
\draw[string,in=right,out=90,looseness=1] (m.center) to (x.center);
\draw[string,in=270,out=90] (i3.center) to (g1.center);
\draw[string,in=270,out=90] (g2.center) to (i3a.center);
\draw[string,in=270,out=90] (i3a.center) to (b2.center);
\draw[string,in=270,out=left] (b2.center) to (chi.south);
\draw[string,in=left,out=right] (b2.center) to (x.center);
\node (chi7) [morphism,wedge,scale=0.5] at (1.4,3.6) {\large$\chi$};
\node(b17)[blackdot,proofdiagram]  at (1,4.6) {};
\node(add7)[reddot,proofdiagram] at (2,5.1){};
\node(x7)[yellowdot,proofdiagram] at (3.1,3){};
\node(b27)[blackdot,proofdiagram] at (2.25,2.5){};
\node(*7) at (2.25,2){};
\node(m7) at (3.4,2.75){};
\node(i17) at (0.75,1){};
\node(i27) at (2.5,-1){};
\node(i37) at (3,-2.75){};
\node(i2a7) at (2.25,1.5){};
\node(i3a7) at (3,1.75){};
\node(o7) at (2,6){};
\node(g17) at (3,0.75){};
\node(g27) at (3,1.5){};
\draw[string](g17.center) to (g27.center);
\draw[string,in=left,out=90,looseness=0.6] (add.center) to (b17.center);
\draw[string,in=90,out=right] (b17.center) to (chi7.north);
\draw[string,in=left,out=90] (b17.center) to (add7.center);
\draw[string,in=90,out=right] (add7.center) to (x7.center);
\draw[string,in=270,out=90] (add7.center) to (o7.center);
\draw[string,in=270,out=90,looseness=1] (i27.center) to (i2a7.center);
\draw[string,in=270,out=90,looseness=1] (i2a7.center) to (m7.center);
\draw[string,in=right,out=90,looseness=1] (m7.center) to (x7.center);
\draw[string,in=270,out=90] (i37.center) to (g17.center);
\draw[string,in=270,out=90] (g27.center) to (i3a7.center);
\draw[string,in=270,out=90] (i3a7.center) to (b27.center);
\draw[string,in=270,out=left] (b27.center) to (chi7.south);
\draw[string,in=left,out=right] (b27.center) to (x7.center);
\node(in1) at (-0.5,-5){};
\node(in2) at (1,-5){};
\node(b2)[whitedot,inner sep=0.25pt] at (2,-3.1){$\boldsymbol{*}$};
\node(b4)[whitedot,inner sep=0.25pt] at (3,-3.1){$\boldsymbol{*}$};
\node(in3) at (2,-5){};
\node(in4) at (3,-5){};
\draw[string,out=270,in=90] (b2.center) to (in4.center);
\draw[string,out=270,in=90] (b4.center) to (in3.center);
\draw[string] (b4.center) to (i37.center);
\draw[string] (b2.center) to (i3.center);
\node(t1)[blackdot,proofdiagram] at (0.25,-3.5){};
\node(t2)[blackdot,proofdiagram] at (0.25,-2.75){};
\draw[string] (t1.center) to (t2.center);
\draw[string,out=180,in=90] (t1.center) to (in1.center);
\draw[string,out=0,in=90] (t1.center) to (in2.center);
\draw[string,in=180,out=270] (i2.center) to (t2.center);
\draw[string,in=0,out=270,looseness=1] (i27.center) to (t2.center);
\end{pic}\quad + \quad 0 \quad + \quad 0 \quad + \quad 0 
\\\super{\eqref{eq:proj2}\eqref{eq:projprop}}=& \quad
\begin{pic}[scale=1/3]
\node (chi) [morphism,wedge,scale=0.5] at (0.4,1.2) {\large$\chi$};
\node(b1)[blackdot,proofdiagram]  at (-0.1,2.1) {};
\node(add)[reddot,proofdiagram] at (0.4,2.6){};
\node(x)[yellowdot,proofdiagram] at (1.85,0.4){};
\node(b2)[blackdot,proofdiagram] at (0.9,0){};
\node(*) at (1,-0.5){};
\node(m) at (2.15,0.15){};
\node(i1) at (-2,-5){};
\node(i2) at (-0.5,-2){};
\node(i3) at (2,-2.75){};
\node(i2a) at (1,-0.75){};
\node(i3a) at (1.75,-0.75){};
\node(o) at (0.75,2.5){};
\node(g1)[whitedot,inner sep=0.25pt] at (1.75,-1.4){$\boldsymbol{*}$};
\node(g2) at (1.75,-1){};
\draw[string](g1.center) to (g2.center);
\draw[string,in=left,out=90,looseness=0.6] (i1.center) to (b1.center);
\draw[string,in=90,out=right] (b1.center) to (chi.north);
\draw[string,in=left,out=90] (b1.center) to (add.center);
\draw[string,in=90,out=right] (add.center) to (x.center);
\draw[string,in=270,out=90,looseness=1] (i2.center) to (i2a.center);
\draw[string,in=270,out=90,looseness=1] (i2a.center) to (m.center);
\draw[string,in=right,out=90,looseness=1] (m.center) to (x.center);
\draw[string,in=270,out=90] (i3.center) to (g1.center);
\draw[string,in=270,out=90] (g2.center) to (i3a.center);
\draw[string,in=270,out=90] (i3a.center) to (b2.center);
\draw[string,in=270,out=left] (b2.center) to (chi.south);
\draw[string,in=left,out=right] (b2.center) to (x.center);
\node (chi7) [morphism,wedge,scale=0.5] at (1.4,3.6) {\large$\chi$};
\node(b17)[blackdot,proofdiagram]  at (1,4.6) {};
\node(add7)[reddot,proofdiagram] at (2,5.1){};
\node(x7)[yellowdot,proofdiagram] at (3.1,3){};
\node(b27)[blackdot,proofdiagram] at (2.25,2.5){};
\node(*7) at (2.25,2){};
\node(m7) at (3.4,2.75){};
\node(i17) at (0.75,1){};
\node(i27) at (2.5,-1){};
\node(i37) at (3,-2.75){};
\node(i2a7) at (2.25,1.5){};
\node(i3a7) at (3,1.75){};
\node(o7) at (2,6){};
\node(g17) at (3,1.3){};
\node(g27)[whitedot,inner sep=0.25pt] at (3,1.3){$\boldsymbol{*}$};
\draw[string](g17.center) to (g27.center);
\draw[string,in=left,out=90,looseness=0.6] (add.center) to (b17.center);
\draw[string,in=90,out=right] (b17.center) to (chi7.north);
\draw[string,in=left,out=90] (b17.center) to (add7.center);
\draw[string,in=90,out=right] (add7.center) to (x7.center);
\draw[string,in=270,out=90] (add7.center) to (o7.center);
\draw[string,in=270,out=90,looseness=1] (i27.center) to (i2a7.center);
\draw[string,in=270,out=90,looseness=1] (i2a7.center) to (m7.center);
\draw[string,in=right,out=90,looseness=1] (m7.center) to (x7.center);
\draw[string,in=270,out=90] (i37.center) to (g17.center);
\draw[string,in=270,out=90] (g27.center) to (i3a7.center);
\draw[string,in=270,out=90] (i3a7.center) to (b27.center);
\draw[string,in=270,out=left] (b27.center) to (chi7.south);
\draw[string,in=left,out=right] (b27.center) to (x7.center);
\node(in1) at (-0.5,-5){};
\node(in2) at (1,-5){};
\node(b2)[whitedot,inner sep=0.25pt] at (2,-3.1){$\boldsymbol{*}$};
\node(b4)[whitedot,inner sep=0.25pt] at (3,-3.1){$\boldsymbol{*}$};
\node(in3) at (2,-5){};
\node(in4) at (3,-5){};
\draw[string,out=270,in=90] (b2.center) to (in4.center);
\draw[string,in=90,out=270] (b4.center) to (in3.center);
\draw[green][string,line width=1pt] (b4.center) to (i37.center);
\draw[green][string,line width=1pt] (b2.center) to (i3.center);
\node(t1)[blackdot,proofdiagram] at (0.25,-3.5){};
\node(t2)[blackdot,proofdiagram] at (0.25,-2.75){};
\draw[string] (t1.center) to (t2.center);
\draw[string,out=180,in=90] (t1.center) to (in1.center);
\draw[string,out=0,in=90] (t1.center) to (in2.center);
\draw[string,in=180,out=270] (i2.center) to (t2.center);
\draw[string,in=0,out=270,looseness=1] (i27.center) to (t2.center);
\end{pic}+
\begin{pic}[scale=1/3]
\node (chi) [morphism,wedge,scale=0.5] at (0.4,1.2) {\large$\chi$};
\node(b1)[blackdot,proofdiagram]  at (-0.1,2.1) {};
\node(add)[reddot,proofdiagram] at (0.4,2.6){};
\node(x)[yellowdot,proofdiagram] at (1.85,0.4){};
\node(b2)[blackdot,proofdiagram] at (0.9,0){};
\node(*) at (1,-0.5){};
\node(m) at (2.15,0.15){};
\node(i1) at (-2,-5){};
\node(i2) at (-0.5,-2){};
\node(i3) at (2,-2.75){};
\node(i2a) at (1,-0.75){};
\node(i3a) at (1.75,-0.75){};
\node(o) at (0.75,2.5){};
\node(g1)[whitedot,inner sep=0.25pt] at (1.75,-1.4){$\boldsymbol{*}$};
\node(g2) at (1.75,-1){};
\draw[string](g1.center) to (g2.center);
\draw[string,in=left,out=90,looseness=0.6] (i1.center) to (b1.center);
\draw[string,in=90,out=right] (b1.center) to (chi.north);
\draw[string,in=left,out=90] (b1.center) to (add.center);
\draw[string,in=90,out=right] (add.center) to (x.center);
\draw[string,in=270,out=90,looseness=1] (i2.center) to (i2a.center);
\draw[string,in=270,out=90,looseness=1] (i2a.center) to (m.center);
\draw[string,in=right,out=90,looseness=1] (m.center) to (x.center);
\draw[string,in=270,out=90] (i3.center) to (g1.center);
\draw[string,in=270,out=90] (g2.center) to (i3a.center);
\draw[string,in=270,out=90] (i3a.center) to (b2.center);
\draw[string,in=270,out=left] (b2.center) to (chi.south);
\draw[string,in=left,out=right] (b2.center) to (x.center);
\node (chi7)  at (1.75,3.25) {};
\node(b17)  at (1.5,4) {};
\node(add7)[reddot,proofdiagram] at (1.5,4.5){};
\node(x7) at (3,3){};
\node(b27) at (2.25,2.5){};
\node(*7) at (2.25,2){};
\node(m7) at (3,2.5){};
\node(i17) at (0.75,1){};
\node(i27) at (2.5,-1){};
\node(i2a7) at (2.25,1.75){};
\node(i3a7) at (3,1.75){};
\node(o7) at (1.5,6){};
\node(g17) at (3,0.75){};
\draw[string,in=left,out=90,looseness=0.6] (add.center) to (add7.center);
\draw[string,in=90,out=right] (add7.center) to (*7.center);
\draw[string,in=270,out=90] (add7.center) to (o7.center);
\draw[string,in=270,out=90,looseness=1] (i27.center) to (*7.center);
\node(in1) at (-0.5,-5){};
\node(in2) at (1,-5){};
\draw[string,in=180,out=270] (i2.center) to (t2.center);
\draw[string,in=0,out=270,looseness=1] (i27.center) to (t2.center);
\node(b2)[whitedot,inner sep=0.25pt] at (2,-3.1){$\boldsymbol{*}$};
\node[reddot,proofdiagram](b4) at (3,-2){};
\node(u1)[whitedot,inner sep=0.25pt] at (3,-3.1){$\boldsymbol{*}$};
\node(u2) at (3,-2.75){};
\draw[green][string,line width=1pt] (u1.center) to (u2.center);
\draw[green][string,line width=1pt] (b2.center) to (i3.center);
\node(in3) at (2,-5){};
\node(in4) at (3,-5){};
\draw[string,out=270,in=90] (b2.center) to (in4.center);
\draw[string] (b4.center) to (p2.center);
\draw[string,out=90,in=270] (in3.center) to (p1.center);
\node(t1)[blackdot,proofdiagram] at (0.25,-3.5){};
\node(t2)[blackdot,proofdiagram] at (0.25,-2.75){};
\draw[string] (t1.center) to (t2.center);
\draw[string,out=180,in=90] (t1.center) to (in1.center);
\draw[string,out=0,in=90] (t1.center) to (in2.center);
\draw[green][string,line width=1pt] (u1.center) to (u2.center);
\end{pic}+
\begin{pic}[scale=1/3]
\node (chi)  at (0.5,0.75) {};
\node(b1)  at (0.25,1.5) {};
\node(add)[reddot,proofdiagram] at (0.4,2){};
\node(x) at (1.75,0.5){};
\node(b2) at (1,0){};
\node(*) at (1,-0.5){};
\node(m) at (1.75,0){};
\node(i1) at (-2,-5){};
\node(i2) at (-0.5,-2){};
\node(i2a) at (1,-0.75){};
\node(i3a) at (1.75,-0.75){};
\node(o) at (0.75,2.5){};
\draw[string,in=left,out=90,looseness=0.6] (i1.center) to (add.center);
\draw[string,in=90,out=right] (add.center) to (x.center);
\draw[string,in=270,out=90,looseness=1] (i2.center) to (i2a.center);
\draw[string,in=270,out=90,looseness=1] (i2a.center) to (m.center);
\draw[string,in=270,out=90,looseness=1] (m.center) to (x.center);
\node (chi7) [morphism,wedge,scale=0.5] at (1.4,3.6) {\large$\chi$};
\node(b17)[blackdot,proofdiagram]  at (1,4.6) {};
\node(add7)[reddot,proofdiagram] at (2,5.1){};
\node(x7)[yellowdot,proofdiagram] at (3.1,3){};
\node(b27)[blackdot,proofdiagram] at (2.25,2.5){};
\node(*7) at (2.25,2){};
\node(m7) at (3.4,2.75){};
\node(i17) at (0.75,1){};
\node(i27) at (2.5,0){};
\node(i37) at (3,-2.75){};
\node(i2a7) at (2.25,1.5){};
\node(i3a7) at (3,1.75){};
\node(o7) at (2,6){};
\node(g17) at (3,1.3){};
\node(g27)[whitedot,inner sep=0.25pt] at (3,1.3){$\boldsymbol{*}$};
\draw[string](g17.center) to (g27.center);
\draw[string,in=left,out=90,looseness=0.6] (add.center) to (b17.center);
\draw[string,in=90,out=right] (b17.center) to (chi7.north);
\draw[string,in=left,out=90] (b17.center) to (add7.center);
\draw[string,in=90,out=right] (add7.center) to (x7.center);
\draw[string,in=270,out=90] (add7.center) to (o7.center);
\draw[string,in=270,out=90,looseness=1] (i27.center) to (i2a7.center);
\draw[string,in=270,out=90,looseness=1] (i2a7.center) to (m7.center);
\draw[string,in=right,out=90,looseness=1] (m7.center) to (x7.center);
\draw[string,in=270,out=90] (i37.center) to (g17.center);
\draw[string,in=270,out=90] (g27.center) to (i3a7.center);
\draw[string,in=270,out=90] (i3a7.center) to (b27.center);
\draw[string,in=270,out=left] (b27.center) to (chi7.south);
\draw[string,in=left,out=right] (b27.center) to (x7.center);
\node(in1) at (-0.5,-5){};
\node(in2) at (1,-5){};
\draw[string,in=180,out=270] (i2.center) to (t2.center);
\draw[string,in=0,out=270,looseness=0.7] (i27.center) to (t2.center);
\node[reddot,proofdiagram](b2) at (2,-2){};
\node(u1)[whitedot,inner sep=0.25pt] at (2,-3.1){$\boldsymbol{*}$};
\node(u2) at (2,-2.75){};
\draw[string] (b2.center) to (u1.center);
\node(b4)[whitedot,inner sep=0.25pt] at (3,-3.1){$\boldsymbol{*}$};
\draw[green][string,line width=1pt] (b4.center) to (i37.center);
\node(in3) at (2,-5){};
\node(in4) at (3,-5){};
\draw[string,out=270,in=90] (u1.center) to (in4.center);
\draw[string,in=90,out=270] (b4.center) to (in3.center);
\node(t1)[blackdot,proofdiagram] at (0.25,-3.5){};
\node(t2)[blackdot,proofdiagram] at (0.25,-2.75){};
\draw[string] (t1.center) to (t2.center);
\draw[string,out=180,in=90] (t1.center) to (in1.center);
\draw[string,out=0,in=90] (t1.center) to (in2.center);
\draw[green][string,line width=1pt] (u1.center) to (u2.center);
\end{pic}+
\begin{pic}[scale=1/3]
\node (chi)  at (0.5,0.75) {};
\node(b1)  at (0.25,1.5) {};
\node(add)[reddot,proofdiagram] at (0.75,2){};
\node(x) at (1.75,0.5){};
\node(b2) at (1,0){};
\node(*) at (1,-0.5){};
\node(m) at (1.75,0){};
\node(i1) at (-2,-5){};
\node(i2) at (-0.5,-2){};
\node(i2a) at (1,-0.75){};
\node(i3a) at (1.75,-0.75){};
\node(o) at (0.75,2.5){};
\draw[string,in=left,out=90,looseness=0.6] (i1.center) to (add.center);
\draw[string,in=90,out=right] (add.center) to (x.center);
\draw[string,in=270,out=90,looseness=1] (i2.center) to (i2a.center);
\draw[string,in=270,out=90,looseness=1] (i2a.center) to (m.center);
\draw[string,in=270,out=90,looseness=1] (m.center) to (x.center);
\node (chi7)  at (1.75,3.25) {};
\node(b17)  at (1.5,4) {};
\node(add7)[reddot,proofdiagram] at (1.5,4.5){};
\node(x7) at (3,3){};
\node(b27) at (2.25,2.5){};
\node(*7) at (2.25,2){};
\node(m7) at (3,2.5){};
\node(i17) at (0.75,1){};
\node(i27) at (2.325,0){};
\node(i2a7) at (2.25,1.75){};
\node(i3a7) at (3,1.75){};
\node(o7) at (1.5,6){};
\draw[string,in=left,out=90,looseness=0.6] (add.center) to (add7.center);
\draw[string,in=90,out=right] (add7.center) to (*7.center);
\draw[string,in=270,out=90] (add7.center) to (o7.center);
\draw[string,in=270,out=90,looseness=1] (i27.center) to (*7.center);
\node(in1) at (-0.5,-5){};
\node(in2) at (1,-5){};
\draw[string,in=180,out=270] (i2.center) to (t2.center);
\draw[string,in=0,out=270,looseness=0.7] (i27.center) to (t2.center);
\node[reddot,proofdiagram](b2) at (2,-2){};
\node[reddot,proofdiagram](b4) at (3,-2){};
\node(u1)[whitedot,inner sep=0.25pt] at (2,-3.1){$\boldsymbol{*}$};
\node(p1)[whitedot,inner sep=0.25pt] at (3,-3.1){$\boldsymbol{*}$};
\node(u2) at (2,-2.75){};
\node(p2) at (3,-2.75){};
\node(in3) at (2,-5){};
\node(in4) at (3,-5){};
\draw[string,out=270,in=90] (u1.center) to (in4.center);
\draw[string] (b2.center) to (u2.center);
\draw[string] (b4.center) to (p2.center);
\draw[string,in=270,out=90] (in3.center) to (p1.center);
\draw[string,green,line width=1pt] (u2.center) to (u1.center);
\draw[string,green,line width=1pt] (p2.center) to (p1.center);
\node(t1)[blackdot,proofdiagram] at (0.25,-3.5){};
\node(t2)[blackdot,proofdiagram] at (0.25,-2.75){};
\draw[string] (t1.center) to (t2.center);
\draw[string,out=180,in=90] (t1.center) to (in1.center);
\draw[string,out=0,in=90] (t1.center) to (in2.center);
\end{pic}\\ =&\begin{aligned}
\begin{tikzpicture}[scale=11/12]
\node (f) [morphism,wedge,scale=0.75, connect n ] at (0,1) {$U_{FF}$};
\node(i2)  at (-0.4125,-2.2) {};
\node(i2a)  at (0.4125,-2.2) {};
\node[whitedot,scale=1,inner sep=0.25pt](i3)  at (0.75,-1.7) {$\boldsymbol{*}$};
\node[whitedot,scale=1,inner sep=0.25pt](i4)  at (1.25,-1.7) {$\boldsymbol{*}$};
\node (e) [morphism,wedge,scale=0.75,connect sw length = 1.65cm] at (-0.24,-0.25){$U_{FF}$};
\node(b)[blackdot,scale=0.95] at (0,-1.3){};
\node(c)[blackdot,scale=0.95] at (0,-1.7){};
\node(m1) at (0.75,-0.25) {};
\node(m2) at (1,-0.25) {};
\draw[string,out=90,in=left] (i2.center) to (c.center);
\draw[string,out=90,in=right] (i2a.center) to (c.center);
\draw[string] (b.center) to (c.center);
\draw[string] (e.north) to (f.south west);
\draw[string,in=270,out=left] (b.center) to (e.south);
\draw[string,in=270,out=right,looseness=1.25] (b.center) to (m1.center);
\draw[string,in=270,out=90,looseness=1.25] (m1.center) to (f.south);
\draw[string,in=270,out=90] (i3.center) to (e.south east);
\draw[string,out=90,in=270] (i4.center) to (m2.center);
\draw[string,out=90,in=270] (m2.center) to (f.south east);
\draw[string,out=270,in=90] (i3.center) to +(0.5,-0.5);
\draw[string,out=270,in=90] (i4.center) to +(-0.5,-0.5);
\node(i)  at (-0.6,-2.2) {};
\draw[string,in=270,out=90] (i.center) to (e.connect sw);
\end{tikzpicture}
\end{aligned}
\end{align*}We can now combine equations~\eqref{eq:uffzero} and \eqref{eq:uffnzero} to obtain the following:

\begin{equation*}
\begin{aligned}
\begin{tikzpicture}[scale=0.75]
\node (f) [morphism,wedge,scale=0.75, connect n ] at (0,1) {$U_{FF}$};
\node(i2)  at (-0.4125,-2.2) {};
\node(i2a)  at (0.4125,-2.2) {};
\node[whitedot,scale=1,inner sep=0.25pt](i3)  at (0.75,-1.7) {$\boldsymbol{*}$};
\node[whitedot,scale=1,inner sep=0.25pt](i4)  at (1.25,-1.7) {$\boldsymbol{*}$};
\node (e) [morphism,wedge, scale=0.75,connect sw length = 1.65cm] at (-0.24,-0.25) {$U_{FF}$};
\node(b)[blackdot] at (0,-1.1){};
\node(c)[blackdot] at (0,-1.7){};
\node(m1) at (0.75,-0.25) {};
\node(m2) at (1,-0.25) {};
\draw[string,out=90,in=left] (i2.center) to (c.center);
\draw[string,out=90,in=right] (i2a.center) to (c.center);
\draw[string] (b.center) to (c.center);
\draw[string] (e.north) to (f.south west);
\draw[string,in=270,out=left] (b.center) to (e.south);
\draw[string,in=270,out=right,looseness=1.25] (b.center) to (m1.center);
\draw[string,in=270,out=90,looseness=1.25] (m1.center) to (f.south);
\draw[string,in=270,out=90] (i3.center) to (e.south east);
\draw[string,out=90,in=270] (i4.center) to (m2.center);
\draw[string,out=90,in=270] (m2.center) to (f.south east);
\draw[string] (i3.center) to +(0,-0.5);
\draw[string] (i4.center) to +(0,-0.5);
\end{tikzpicture}
\end{aligned}
\quad + \quad
\begin{aligned}
\begin{tikzpicture}[scale=0.75]
\node (f) [morphism,wedge,scale=0.75, connect n ] at (0,1) {$U_{FF}$};
\node(i2)  at (-0.4125,-1.7) {};
\node(i2a)  at (0.4125,-1.7) {};
\node (e) [morphism,wedge, scale=0.75,connect sw length = 1.65cm,connect s length = 1.65cm] at (-0.24,-0.25) {$U_{FF}$};
\node(m1) at (0.75,-0.25) {};
\node(m2) at (1,-0.25) {};
\node[state,black,scale=0.25,label={[label distance=-0.125cm]315:\tiny{0}}](b1) at (0.75,-1.4){};
\node[state,black,scale=0.25,hflip,label={[label distance=-0.125cm]45:\tiny{0}}](b2) at (0.75,-2){};
\node[state,black,scale=0.25,label={[label distance=-0.125cm]315:\tiny{0}}](b3) at (1.25,-1.4){};
\node[state,black,scale=0.25,hflip,label={[label distance=-0.125cm]45:\tiny{0}}](b4) at (1.25,-2){};
\draw[string] (e.north) to (f.south west);
\draw[string,in=270,out=90,looseness=1.25] (i2a.center) to (m1.center);
\draw[string,in=270,out=90,looseness=1.25] (m1.center) to (f.south);
\draw[string,in=270,out=90] (b1.center) to (e.south east);
\draw[string,out=90,in=270] (b3.center) to (m2.center);
\draw[string,out=90,in=270] (m2.center) to (f.south east);
\draw[string] (i2a.center) to +(0,-0.5);
\draw[string] (b2.center) to (0.75,-2.2);
\draw[string] (b4.center) to (1.25,-2.2);
\end{tikzpicture}
\end{aligned}
\quad = \quad
\begin{aligned}
\begin{tikzpicture}[scale=0.75]
\node (f) [morphism,wedge,scale=0.75, connect n ] at (0,1) {$U_{FF}$};
\node(i2)  at (-0.4125,-2.2) {};
\node(i2a)  at (0.4125,-2.2) {};
\node[whitedot,scale=1,inner sep=0.25pt](i3)  at (0.75,-1.7) {$\boldsymbol{*}$};
\node[whitedot,scale=1,inner sep=0.25pt](i4)  at (1.25,-1.7) {$\boldsymbol{*}$};
\node (e) [morphism,wedge,scale=0.75,connect sw length = 1.65cm] at (-0.24,-0.25){$U_{FF}$};
\node(b)[blackdot,scale=0.95] at (0,-1.3){};
\node(c)[blackdot,scale=0.95] at (0,-1.7){};
\node(m1) at (0.75,-0.25) {};
\node(m2) at (1,-0.25) {};
\draw[string,out=90,in=left] (i2.center) to (c.center);
\draw[string,out=90,in=right] (i2a.center) to (c.center);
\draw[string] (b.center) to (c.center);
\draw[string] (e.north) to (f.south west);
\draw[string,in=270,out=left] (b.center) to (e.south);
\draw[string,in=270,out=right,looseness=1.25] (b.center) to (m1.center);
\draw[string,in=270,out=90,looseness=1.25] (m1.center) to (f.south);
\draw[string,in=270,out=90] (i3.center) to (e.south east);
\draw[string,out=90,in=270] (i4.center) to (m2.center);
\draw[string,out=90,in=270] (m2.center) to (f.south east);
\draw[string,out=270,in=90] (i3.center) to +(0.5,-0.5);
\draw[string,out=270,in=90] (i4.center) to +(-0.5,-0.5);
\end{tikzpicture}
\end{aligned}
\quad + \quad
\begin{aligned}
\begin{tikzpicture}[scale=0.75]
\node (f) [morphism,wedge,scale=0.75, connect n ] at (0,1) {$U_{FF}$};
\node(i2)  at (-0.4125,-1.7) {};
\node(i2a)  at (0.4125,-1.7) {};
\node (e) [morphism,wedge, scale=0.75,connect sw length = 1.65cm] at (-0.24,-0.25){$U_{FF}$};
\node(m1) at (0.75,-0.25){} ;
\node(m2) at (1,-0.25) {};
\node[state,black,scale=0.25,label={[label distance=-0.125cm]315:\tiny{0}}](b1) at (0.75,-1.4){};
\node[state,black,scale=0.25,hflip,label={[label distance=-0.125cm]45:\tiny{0}}](b2) at (0.75,-2){};
\node[state,black,scale=0.25,label={[label distance=-0.125cm]315:\tiny{0}}](b3) at (1.25,-1.4){};
\node[state,black,scale=0.25,hflip,label={[label distance=-0.125cm]45:\tiny{0}}](b4) at (1.25,-2){};
\draw[string] (i2.center) to (e.south);
\draw[string] (e.north) to (f.south west);
\draw[string,in=270,out=90,looseness=1.25] (i2a.center) to (m1.center);
\draw[string,in=270,out=90,looseness=1.25] (m1.center) to (f.south);
\draw[string,in=270,out=90] (b1.center) to (e.south east);
\draw[string,out=90,in=270] (b3.center) to (m2.center);
\draw[string,out=90,in=270] (m2.center) to (f.south east);
\draw[string,out=270,in=90] (i2.center) to +(0.825,-0.5);
\draw[string,out=270,in=90] (i2a.center) to +(-0.825,-0.5);
\draw[string] (b2.center) to (0.75,-2.2);
\draw[string] (b4.center) to (1.25,-2.2);
\end{tikzpicture}
\end{aligned}
\end{equation*}This concludes the proof.
\end{proof}We now present an example of a partitioned UEB in dimension $d=4$ constructed from the finite field $\mathbb F_4$. 
\begin{example}The Fourier transform Hadamard for the additive group of $\mathbb F_4$ is given by the following matrix.
\begin{equation*}
\chi:=
\beginmatrix
1 & 1 &1 & 1 \\
1 & 1 & \minus1 & \minus1 \\
1 & \minus1 & \minus1 & 1 \\
1 & \minus1 & 1 & \minus1
\endmatrix
\end{equation*}
Let $\M_{ij}:=
\begin{aligned}  
\begin{tikzpicture}[scale=0.4]
\node (f) [morphism,wedge, connect s length=0.4cm width=0.5mm, connect se length=0.4cm, width=0.5cm, connect n length=0.8cm,connect sw length=0.8cm] at (0,0) {$U_{FF}$};
\node[state,scale=0.25,black,label={[label distance=-0.08cm]330:\tiny{$i$}}]  at (f.connect s){} ;
\node[state,scale=0.25,black,label={[label distance=-0.08cm]330:\tiny{$j$}}]  at (f.connect se) {};
\node  at (f.connect n) {};
\end{tikzpicture}
\end{aligned}
$ with $U_{FF}$ as defined in equation~\eqref{eq:mainconst}, then the partitioned UEB, with partitions $C_x, x \in \{ *,0,...,3\}$, is as follows:
\[
\scriptsize{ 
\begin{array}{llll}
\M_{00} =
\beginmatrix
1 & 0 & 0 & 0 \\
 0 & 1 & 0 & 0 \\
 0 & 0 & 1 & 0 \\
 0 & 0 & 0 & 1 \\
\endmatrix
&
\M_{01} =
\beginmatrix
1 & 0 & 0 & 0 \\
 0 & 1 & 0 & 0 \\
 0 & 0 & \minus1 & 0 \\
 0 & 0 & 0 & \minus1 \\
\endmatrix
&
\M_{02} =
\beginmatrix
1 & 0 & 0 & 0 \\
 0 & \minus1 & 0 & 0 \\
 0 & 0 & \minus1 & 0 \\
 0 & 0 & 0 & 1 \\
\endmatrix
&
\M_{03} =
\beginmatrix
1 & 0 & 0 & 0 \\
 0 & \minus1 & 0 & 0 \\
 0 & 0 & 1 & 0 \\
 0 & 0 & 0 & \minus1 \\
\endmatrix
\\
\M_{10} = 
\beginmatrix
 0 & 1 & 0 & 0 \\
 1 & 0 & 0 & 0 \\
 0 & 0 & 0 & 1 \\
 0 & 0 & 1 & 0 \\
\endmatrix
&
\M_{11} =
\beginmatrix
 0 & 1 & 0 & 0 \\
 1 & 0 & 0 & 0 \\
 0 & 0 & 0 & \minus1 \\
 0 & 0 & \minus1 & 0 \\
\endmatrix
 &
\M_{12} =
\beginmatrix
 0 & 0 & \minus1 & 0 \\
 0 & 0 & 0 & 1 \\
 1 & 0 & 0 & 1 \\
 0 & \minus1 & 0 & 0 \\
\endmatrix
&
\M_{13} =
\beginmatrix
 0 & 0 & 0 & \minus1 \\
 0 & 0 & 1 & 0 \\
 0 & \minus1 & 0 & 0 \\
 1 & 0 & 0 & 0 \\
\endmatrix
\\
\M_{20} =
\beginmatrix
 0 & 0 & 1 & 0 \\
 0 & 0 & 0 & 1 \\
 1 & 0 & 0 & 0 \\
 0 & 1 & 0 & 0 \\
\endmatrix
&
\M_{21} =
\beginmatrix
 0 & 0 & \minus1 & 0 \\
 0 & 0 & 0 & \minus1 \\
 1 & 0 & 0 & 1 \\
 0 & 1 & 0 & 0 \\
\endmatrix
&
\M_{22} =
\beginmatrix
 0 & 0 & 0 & 1 \\
 0 & 0 & \minus1 & 0 \\
 0 & \minus1 & 0 & 0 \\
 1 & 0 & 0 & 0 \\
\endmatrix
&
\M_{23} =
\beginmatrix
 0 & \minus1 & 0 & 0 \\
 1 & 0 & 0 & 0 \\
 0 & 0 & 0 & \minus1 \\
 0 & 0 & 1 & 0 \\
\endmatrix
\\
\M_{30} =
\beginmatrix
 0 & 0 & 0 & 1 \\
 0 & 0 & 1 & 0 \\
 0 & 1 & 0 & 0 \\
 1 & 0 & 0 & 0 \\
\endmatrix
&
\M_{31} =
\beginmatrix
 0 & 0 & 0 & \minus1 \\
 0 & 0 & \minus1 & 0 \\
 0 & 1 & 0 & 0 \\
 1 & 0 & 0 & 0 \\
\endmatrix
&
\M_{32} =
\beginmatrix
 0 & \minus1 & 0 & 0 \\
 1 & 0 & 0 & 0 \\
 0 & 0 & 0 & 1 \\
 0 & 0 & \minus1 & 0 \\
\endmatrix
&
\M_{33} =
\beginmatrix
 0 & 0 & 1 & 0 \\
 0 & 0 & 0 & \minus1 \\
 1 & 0 & 0 & 1 \\
 0 & \minus1 & 0 & 0 \\
\endmatrix
\end{array}
}
\]
The partitions are:
\begin{align*}
C_*:=&\{ \M_{10},\M_{20},\M_{30}\}
\\ C_0:=&\{\M_{01},\M_{02},\M_{03}\}
\\ C_1:=&\{\M_{11},\M_{12},\M_{13}\}
\\ C_2:=&\{\M_{21},\M_{22},\M_{23}\}
\\ C_3:=&\{\M_{31},\M_{32},\M_{33}\}
\end{align*}
Thus since $\M_{00}=\I_4$, we have:
\begin{equation*}
U_{FF}=\{\I_4\}\sqcup C_* \sqcup C_0 \sqcup C_1 \sqcup C_2 \sqcup C_3
\end{equation*}
It can easily be  verified that this is a partition into maximal commuting sub-families and that $U_{FF}$ is a UEB. 
\end{example} 
\section{conclusion}
We have introduced a tensor diagrammatic characterization of maximal families of MUBs, partitioned unitary error bases, Hadamards and controlled Hadamards. As an application of these tensor diagrammatic characterizations we have introduced a new construction for a $\UEBM$ from a maximal family of MUBs extending work by Bandyopadhyay~\cite{Bandyopadhyay} ,which makes clear the exact nature of the correspondence between $\UEBM$s and maximal families of MUBs. Each $\UEBM$ gives rise to a unique maximal family of MUBs. Each maximal family of MUBs gives rise to an infinite family of possibly inequivalent $\UEBM$s each $\UEBM$ corresponding to a choice of controlled Hadamard. Further work in this direction is to investigate whether the property of monomiality of UEBs, introduced by Wocjan et al~\cite{boykin}, is invariant under the choice of controlled Hadamard in our construction to ensure the property is well defined.

We have also  introduced a tensor diagrammatic characterization of finite fields as algebraic structures defined over Hilbert spaces, extending existing characterizations of abelian groups~\cite{stefwill}. As an application of this and a further application of the tensor diagrammatic characterizations of $\UEBM$s we introduced a new construction of  $\UEBM$s and thus maximal families of MUBs from a finite field. This is different from the construction due to Bandyopadhyay et al~\cite{Bandyopadhyay}, with the partition being easier to calculate. Further work is necessary to investigate whether this construction could be adapted to one requiring less structure than that of a finite field.
      
\bibliographystyle{eptcs}
\bibliography{FiniteFields}
\appendix
\section{Minor lemmas}
In this section we present a number of minor diagrammatic lemmas that are essential to the proofs of the main Theorems. We assume throughout that all wires are $d$-dimensional Hilbert spaces and the operators are as defined in Section~\ref{section:gff}.
\begin{lemma}\label{lem:zerohad}Given a controlled Hadamard $H^i$ and $\dfrob$ $\tinymult$, the following equation holds.
\begin{equation}\label{eq:zerohad}
\begin{aligned}\begin{tikzpicture}[scale=0.5]
           \node[whitedot,inner sep=1pt,scale=0.5] (h) at (0,1) {$H$};
           \node[whitedot,inner sep=0.25pt,scale=0.5] at (0,1.6){$\boldsymbol{*}$};
           \node[blackdot,proofdiagram](b1) at (0,-0.25){};
           \node[blackdot,proofdiagram](b2) at (-1,0){};
           \draw[string] (0,2) to (b1.center);
           \draw[string](-1,2) to (-1,-1);
           \draw[string] (b2.center) to (h.center);
\end{tikzpicture}\end{aligned}
= 0
\end{equation}
\end{lemma}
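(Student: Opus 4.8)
The plan is to unpack the definitions on the left-hand side and let the ``ones in the first row and first column'' axioms~\eqref{eq:had1} collapse the diagram. First I would recall that $\tinymult[whitedot]$ here is shorthand for the projector $\projs$ defined in~\eqref{eq:proj}, so the subexpression $\tinymult[whitedot]$ composed with the output of $H$ means: take $H$, feed its output through $\projs$. Since $\projs = \id - \ketbra{0}{0}$ on the black basis, I would split $\projs \circ H$ into two terms: $H$ itself, minus $\ketbra{0}{0}\circ H$. The first term, after composing with the black comultiplication $\tinycomult[blackdot]$ below $H$'s input leg and the black effect $\bra 0$ (or whatever black-dot structure sits underneath in the controlled-Hadamard form), is governed by the left axiom of~\eqref{eq:had1}: $H$ applied to the $\ket 0$ input leg copies out a black dot. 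Concretely, $\bra 0 H$ — the first row of $H$ — is the all-ones covector, which in the graphical calculus is exactly the black counit/comultiplication pattern appearing on the right-hand side of the first equation of~\eqref{eq:had1}.

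The key step is then to observe that the two resulting terms are \emph{equal} and carry opposite signs, so they cancel. For the $\ketbra{0}{0}$ term: $\bra 0 H \ket 0$ is the $(0,0)$ entry of the controlled Hadamard, which by the normalisation~\eqref{eq:had1} (ones in the first row and column) equals $1$ times the appropriate black-basis bookkeeping, and the surviving diagram is precisely $\ketbra{0}{0}$ glued to the same black-dot spider as before. For the plain-$H$ term: applying the first equation of~\eqref{eq:had1} rewrites $H$ with its bottom leg plugged by the black-dot structure into a pure black-dot diagram, and then Corollary~\ref{eq:sm} (any connected black-dot diagram with the same number of legs is equal) identifies it with the very same $\ketbra{0}{0}$-type diagram. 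Hence $\text{(plain } H\text{)} - \text{(} \ketbra{0}{0} \text{ term)} = X - X = 0$.

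The main obstacle I expect is purely notational: matching up exactly which black dots, counits, and $\ket 0$ states appear in the compressed picture of equation~\eqref{eq:zerohad}, since the diagram as drawn has a $\projs$ on the output, a black dot $b_1$ feeding the output wire, and a black dot $b_2$ feeding $H$'s (only) input, with a through-wire on the left. I would first redraw~\eqref{eq:zerohad} in fully expanded form — replacing $\projs$ by $\id - \ketbra 0 0$ and $H$ by its controlled form if needed — so that the axioms of~\eqref{eq:had1} apply verbatim; once the picture is in that shape the cancellation is immediate. A secondary subtlety is making sure the ``through wire on the left'' (the red/black control wire that just passes through) plays no role: it is untouched by all the rewrites, so it can be carried along as a spectator throughout, and I would note this explicitly to avoid clutter. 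With those bookkeeping points handled, the proof is a two-line computation: expand $\projs$, apply~\eqref{eq:had1} to each summand, apply~\eqref{eq:sm} to see the summands coincide, and conclude $0$.
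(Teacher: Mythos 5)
There is a genuine gap in how you evaluate the ``plain $H$'' summand. In the lemma's diagram the main input of $H$ is not $\ket 0$ but the black \emph{unit}, i.e.\ the uniform state $\sum_j\ket j$ supplied by the dot $b_1$; the dot $b_2$ is a copy dot on the control wire feeding $H$'s control leg. The equations \eqref{eq:had1} only govern $H$ composed with $\ket 0$ or $\bra 0$ (first column and first row), so they do not ``apply verbatim'' to $H$ fed with the uniform state, and spider fusion \eqref{eq:sm} cannot bridge that gap: a matrix with all ones in its first row and column that is not unitary-up-to-scale satisfies \eqref{eq:had1} yet fails $H\bigl(\sum_j\ket j\bigr)=d\ket 0$. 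The fact you need, $H_i\bigl(\sum_j\ket j\bigr)=d\ket 0$, follows only when the unitarity axiom \eqref{eq:ch} (equivalently \eqref{eq:had}) is combined with \eqref{eq:had1} --- for instance by rewriting the black unit as $H_i^\dag\ket 0$ (the adjoint form of \eqref{eq:had1}) and cancelling $H_iH_i^\dag=d\,\mathbb I$ --- and your proposal never invokes unitarity, so the claimed equality of your two summands does not follow from the rules you cite. Relatedly, your evaluation of the $\ketbra{0}{0}$ term reads as $\bra 0 H\ket 0=1$ rather than $\bra 0 H$ applied to the uniform state, which equals $d$; this again indicates the input of $H$ has been misidentified as $\ket 0$.

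Once unitarity is brought in, your splitting of $\projs$ is unnecessary: the paper argues directly by inserting a copy-dot/counit pair with \eqref{eq:un}, using \eqref{eq:had1} backwards to replace the black unit by a second (adjoint) controlled Hadamard applied to $\ket 0$ and controlled from the new copy dot, cancelling the two controlled Hadamards with \eqref{eq:ch} to leave $d\,\projs\ket 0$, which vanishes by \eqref{eq:projzero}. Note also that the control wire is not a spectator: the copy dots on it are precisely what \eqref{eq:had1} and \eqref{eq:ch} act on, so they must be tracked through the rewrites rather than carried along untouched.
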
 
\begin{proof}We take the LHS of equation~\eqref{eq:zerohad}:
\begin{equation}
\begin{aligned}\begin{tikzpicture}[scale=0.5]
           \node[whitedot,inner sep=1pt,scale=0.5] (h) at (0,1.25) {$H$};
           \node[whitedot,inner sep=0.25pt,scale=0.5] at (0,1.8){$\boldsymbol{*}$};
           \node[blackdot,proofdiagram](b1) at (0,-0.25){};
           \node[blackdot,proofdiagram](b2) at (-1,0.75){};
           \draw[string] (0,2.25) to (b1.center);
           \draw[string](-1,2.25) to (-1,-1.25);
           \draw[string] (b2.center) to (h.center);
\end{tikzpicture}\end{aligned}
\super{\eqref{eq:un}}=
\begin{aligned}\begin{tikzpicture}[scale=0.5]
           \node[whitedot,inner sep=1pt,scale=0.5] (h) at (0.1,1.25) {$H$};
           \node[whitedot,inner sep=0.25pt,scale=0.5] at (0.1,1.8){$\boldsymbol{*}$};
           \node[blackdot,proofdiagram](b1) at (0.1,0){};
           \node[blackdot,proofdiagram](b2) at (-1,0.75){};
           \node[blackdot,proofdiagram](b3) at (-1,0.25){};  
           \node[blackdot,proofdiagram](b4) at (-0.25,-0.25){};                          \draw[string] (0.1,2.25) to (b1.center);
           \draw[string] (b4.center) to (b3.center);
           \draw[string](-1,2.25) to (-1,-1.25);
           \draw[string] (b2.center) to (h.center);
\end{tikzpicture}\end{aligned}
\super{\eqref{eq:had1}}=
\begin{aligned}\begin{tikzpicture}[scale=0.5]
           \node[whitedot,inner sep=1pt,scale=0.5] (h) at (0,1.25) {$H$};
           \node[whitedot,inner sep=0.25pt,scale=0.5] at (0,1.8){$\boldsymbol{*}$};
           \node[whitedot,inner sep=1pt,scale=0.5] (h1) at (0,-0.25) {$H$};
           \node[black,state,scale=0.35,label={[label distance=0.05cm]330:\tiny{$0$}}](b1) at (0,-0.75){};
           \node[blackdot,proofdiagram](b2) at (-1,0.75){};           
           \node[blackdot,proofdiagram](b3) at (-1,0.25){};
           \draw[string] (0,2.25) to (b1.center);
           \draw[string](-1,2.25) to (-1,-1.25);
           \draw[string] (b2.center) to (h.center);
           \draw[string] (b3.center) to (h1.center);
\end{tikzpicture}\end{aligned}
\super{\eqref{eq:ch}}=d
\begin{aligned}\begin{tikzpicture}[scale=0.5]
           \node[black,state,scale=0.35,label={[label distance=0.05cm]330:\tiny{$0$}}](b1) at (0,-0.25){};
           \node[whitedot,inner sep=0.25pt,scale=0.5] at (0,1.6){$\boldsymbol{*}$};
           \draw[string] (0,2) to (b1.center);
           \draw[string](-1,2) to (-1,-1.5);
\end{tikzpicture}\end{aligned}\super{\eqref{eq:projzero}}=0
\end{equation}
\end{proof}
\begin{lemma}
The following equation holds.
\begin{align}\label{eq:proj2}
 \begin{aligned}\begin{tikzpicture}[scale=0.65]
           \node(3)[whitedot,projdot] at (0,0){};
           \node (1) at (0,-1) {}; 
           \node[reddot] (2) at (0,1) {}; 
           \draw[string] (2.center) to (3.center);
           \draw[green][string,line width=1pt] (3.center) to (1.center);           \end{tikzpicture}
\end{aligned} 
\quad = \quad
0 
\end{align}
\end{lemma}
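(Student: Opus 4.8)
The plan is to read the left-hand side as the composite $\tinycounit[reddot]\circ\iota$ of the embedding $\iota\colon\mathcal H'\to\mathcal H$ of \eqref{eq:projj} (green wire in at the bottom) with the red counit $\tinycounit[reddot]$ capping the resulting black wire, and then to show it vanishes by reducing everything to facts already in hand: the section–retraction identity $p\circ\iota=\mathrm{id}_{\mathcal H'}$, which is the first equation of \eqref{eq:projprop}, and the identity $\iota\circ p=\projs=\mathrm{id}_{\mathcal H}-\tinyunit[reddot]\circ\tinycounit[reddot]$, which is Lemma~\ref{lem:a} together with the definition \eqref{eq:proj} of $\projs$.

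Concretely, first I would use that $\iota$ is split on the left by $p$ to write $\iota=(\iota\circ p)\circ\iota$. Then I would substitute $\iota\circ p=\mathrm{id}_{\mathcal H}-\tinyunit[reddot]\circ\tinycounit[reddot]$ from Lemma~\ref{lem:a}, obtaining $\iota=\iota-\tinyunit[reddot]\circ(\tinycounit[reddot]\circ\iota)$, hence $\tinyunit[reddot]\circ(\tinycounit[reddot]\circ\iota)=0$. Finally I would cancel the red unit on the left: by the discussion following Definition~\ref{def:bialg} the red unit $\tinyunit[reddot]$ is a black basis state, hence a \emph{unit} vector, so $\tinycounit[reddot]\circ\tinyunit[reddot]=\mathrm{id}_I$ and $\tinyunit[reddot]$ is split monic with retraction $\tinycounit[reddot]$; composing $\tinyunit[reddot]\circ(\tinycounit[reddot]\circ\iota)=0$ on the left with $\tinycounit[reddot]$ then gives $\tinycounit[reddot]\circ\iota=0$, which is the claim. (A variant: the same cancellation applied to $\iota\circ p\circ\tinyunit[reddot]=0$—the latter holding since $\projs\circ\tinyunit[reddot]=0$ by \eqref{eq:projzero}—yields $p\circ\tinyunit[reddot]=0$, whose $\dagger$ is the statement; but I prefer the first route since it needs no hypothesis relating $p$ to $\iota^{\dagger}$.)

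I do not expect a genuine obstacle: all of the computational content is already packaged in Lemma~\ref{lem:a}, and what remains is a two-line formal manipulation in $\cat{FHilb}$. The only points deserving a moment's care are getting the orientation of the diagram right (it is $\tinycounit[reddot]\circ\iota$, not $p\circ\iota$ or a comultiplication), and using that the red unit here is the \emph{normalised} black basis state $\tinystate$—so that it is monic—rather than the unnormalised Frobenius unit, for which the analogous cancellation would fail.
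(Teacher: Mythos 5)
Your proof is correct and is essentially the paper's argument: both rest on Lemma~\ref{lem:a} together with the retraction identity $p\circ\iota=\mathrm{id}$ from \eqref{eq:projprop} and the normalization $\tinycounit[reddot]\circ\,\tinyunit[reddot]=1$ of the red unit (the black state $0$). The paper merely composes Lemma~\ref{lem:a} with the red counit above and $\iota$ below in a single step and reads off $X=X-X=0$, whereas you precompose with $\iota$ first and then cancel the split-monic red unit — a cosmetic reordering, and your explicit flagging of the normalization (and avoidance of any $p=\iota^{\dagger}$ assumption) is sound.
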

\begin{proof}
By Lemma~\ref{lem:a}:
\begin{align*}\begin{aligned}\begin{tikzpicture}[scale=0.65]
           \node(4) at (0,-2){}; 
           \node(3)[whitedot,projdot] at (0,0){};
           \node (1)[whitedot,projdot] at (0,-1) {}; 
           \node (2) at (0,1) {}; 
           \draw[string] (2.center) to (3.center);
           \draw[string] (4.center) to (1.center); 
           \draw[green][string,line width=1pt] (3.center) to (1.center);           
\end{tikzpicture}
\end{aligned} 
\quad = \quad
\begin{aligned}\begin{tikzpicture}[scale=0.65]
           \node(4) at (0,-2){};           
           \node (2) at (0,1) {}; 
           \draw[string] (2.center) to (4.center);
               \end{tikzpicture}
\end{aligned} 
\quad - \quad
 \begin{aligned}\begin{tikzpicture}[scale=0.65]
           \node(4) at (0,-2){}; 
           \node[reddot](3) at (0,0){};
           \node[reddot] (1) at (0,-1) {}; 
           \node (2) at (0,1) {}; 
           \draw[string] (2.center) to (3.center);
           \draw[string] (4.center) to (1.center);
\end{tikzpicture}
\end{aligned}
\quad \Rightarrow 
\begin{aligned}\begin{tikzpicture}[scale=0.65]
           \node(4) at (0,-2){}; 
           \node(3)[whitedot,projdot] at (0,0.25){};
           \node (1)[whitedot,projdot] at (0,-0.75) {}; 
           \node[reddot] (2) at (0,1) {}; 
           \node[whitedot,projdot](8) at (0,-1.5) {};
           \draw[string] (2.center) to (3.center);
           \draw[string] (4.center) to (1.center); 
           \draw[green][string,line width=1pt] (3.center) to (1.center);           
           \draw[green][string,line width=1pt] (4.center) to (0,-1.5);
\end{tikzpicture}
\end{aligned} 
\quad = \quad
\begin{aligned}\begin{tikzpicture}[scale=0.65]
           \node(4) at (0,-2){};           
           \node[reddot] (2) at (0,1) {}; 
           \draw[string] (2.center) to (4.center);
                     \draw[green][string,line width=1pt] (4.center) to (0,-1.5);
               \end{tikzpicture}
\end{aligned} 
\quad - \quad
 \begin{aligned}\begin{tikzpicture}[scale=0.65]
           \node(4) at (0,-2){}; 
           \node[reddot](3) at (0,0){};
           \node[reddot] (1) at (0,-1) {}; 
           \node[reddot] (2) at (0,1) {}; 
           \node[whitedot,projdot](8) at (0,-1.5) {};
           \draw[string] (2.center) to (3.center);
           \draw[string] (4.center) to (1.center);
           \draw[green][string,line width=1pt] (4.center) to (8.center);
\end{tikzpicture}\end{aligned}\quad \super{\eqref{eq:projprop}}\Rightarrow
 \begin{aligned}\begin{tikzpicture}[scale=0.65]
           \node(4) at (0,-2){}; 
           \node(3) at (0,0.25){};
           \node (1) at (0,-0.75) {}; 
           \node[reddot] (2) at (0,1) {}; 
           \node[whitedot,projdot](8) at (0,-0.5) {};
           \draw[string] (2.center) to (3.center);
           \draw[string] (4.center) to (1.center); 
           \draw[string] (3.center) to (1.center);           
           \draw[green][string,line width=1pt] (4.center) to (0,-0.5);
\end{tikzpicture}
\end{aligned} 
\quad = \quad
\begin{aligned}\begin{tikzpicture}[scale=0.65]
           \node(4) at (0,-2){}; 
           \node(3) at (0,0.25){};
           \node (1) at (0,-0.75) {}; 
           \node[reddot] (2) at (0,1) {}; 
           \node[whitedot,projdot](8) at (0,-0.5) {};
           \draw[string] (2.center) to (3.center);
           \draw[string] (4.center) to (1.center); 
           \draw[string] (3.center) to (1.center);           
           \draw[green][string,line width=1pt] (4.center) to (0,-0.5);
               \end{tikzpicture}
\end{aligned} 
\quad - \quad
 \begin{aligned}\begin{tikzpicture}[scale=0.65]
           \node(4) at (0,-2){}; 
           \node(3) at (0,0.25){};
           \node (1) at (0,-0.75) {}; 
           \node[reddot] (2) at (0,1) {}; 
           \node[whitedot,projdot](8) at (0,-0.5) {};
           \draw[string] (2.center) to (3.center);
           \draw[string] (4.center) to (1.center); 
           \draw[string] (3.center) to (1.center);           
           \draw[green][string,line width=1pt] (4.center) to (0,-0.5);
\end{tikzpicture}\end{aligned}=0
\end{align*}

\end{proof} 
\begin{lemma}
The following equation holds.
\begin{equation}\label{eq:lem2}\begin{aligned}\begin{tikzpicture}[xscale=0.65,yscale=-0.5]
\node(b)[blackdot] at (0,0){};
\node(1)[whitedot,projdot] at (0,0.5){};
\node(2) at (-0.5,-0.5){};
\node(3) at (0.5,-0.5){};
\node(i1) at (-0.5,-1.5){};
\node(i2) at (0.5,-1.5){};
\node(o)[whitedot,projdot] at (0,1.5){};
\draw[string,out=90, in=left] (2.center) to (b.center);
\draw[string,out=90, in=right] (3.center) to (b.center);
\draw[string,out=270, in=90] (1.center) to (b.center);
\draw[green][string,line width=1pt,out=90,in=270] (1.center) to (o.center);
\draw[string,out=90,in=270] (0,2.5) to (o.center);
\draw[string,in=90,out=270] (2.center) to (i1.center);
\draw[string,in=90,out=270] (3.center) to (i2.center);
\draw[string,in=90,out=270] (-0.5,-2.5) to (i1.center);
\draw[string,in=90,out=270] (0.5,-2.5) to (i2.center);
\end{tikzpicture}\end{aligned}
\quad = \quad
\begin{aligned}\begin{tikzpicture}[xscale=0.65,yscale=-0.5]
\node(b)[blackdot] at (0,0){};
\node(1) at (0,0.5){};
\node(2)[whitedot,projdot] at (-0.5,-0.5){};
\node(3)[whitedot,projdot] at (0.5,-0.5){};
\node(i1)[whitedot,projdot] at (-0.5,-1.5){};
\node(i2)[whitedot,projdot] at (0.5,-1.5){};
\node(o) at (0,1.5){};
\draw[string,out=90, in=left] (2.center) to (b.center);
\draw[string,out=90, in=right] (3.center) to (b.center);
\draw[string,out=270, in=90] (1.center) to (b.center);
\draw[string,out=90,in=270] (1.center) to (o.center);
\draw[green][string,line width=1pt,in=90,out=270] (2.center) to (i1.center);
\draw[green][string,line width=1pt,in=90,out=270] (3.center) to (i2.center);
\draw[string,in=90,out=270] (-0.5,-2.5) to (i1.center);
\draw[string,in=90,out=270] (0.5,-2.5) to (i2.center);
\draw[string,out=90,in=270] (0,2.5) to (o.center);
\end{tikzpicture}\end{aligned}
\quad = \quad
\begin{aligned}\begin{tikzpicture}[xscale=0.65,yscale=-0.5]
\node(b)[blackdot] at (0,0){};
\node(1)[whitedot,projdot] at (0,0.5){};
\node(2)[whitedot,projdot] at (-0.5,-0.5){};
\node(3)[whitedot,projdot] at (0.5,-0.5){};
\node(i1)[whitedot,projdot] at (-0.5,-1.5){};
\node(i2)[whitedot,projdot] at (0.5,-1.5){};
\node(o)[whitedot,projdot] at (0,1.5){};
\draw[string,out=90, in=left] (2.center) to (b.center);
\draw[string,out=90, in=right] (3.center) to (b.center);
\draw[string,out=270, in=90] (1.center) to (b.center);
\draw[green][string,line width=1pt,out=90,in=270] (1.center) to (o.center);
\draw[green][string,line width=1pt,in=90,out=270] (2.center) to (i1.center);
\draw[green][string,line width=1pt,in=90,out=270] (3.center) to (i2.center);
\draw[string,in=90,out=270] (-0.5,-2.5) to (i1.center);
\draw[string,in=90,out=270] (0.5,-2.5) to (i2.center);
\draw[string,out=90,in=270] (0,2.5) to (o.center);
\end{tikzpicture}\end{aligned}\end{equation}
\end{lemma}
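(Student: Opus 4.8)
The plan is to show that all three diagrams in~\eqref{eq:lem2} reduce to one and the same normal form, namely the connected black-dot diagram on the three wire-ends with the rank-one ``all-zeros'' tensor subtracted off. Write $S$ for the black spider carrying these three legs, and write $P_0$ for the diagram obtained from $S$ by forcing every leg to the black basis state $\ket 0$ (a black effect on each output leg and a black state on each input leg); by the spider fusion law~\eqref{eq:sm} both $S$ and $P_0$ are well defined, and $P_0$ is fully symmetric in its legs. The only structural input I would need beyond spider calculus is Lemma~\ref{lem:a}, which identifies the composite $\iota\circ p$ appearing on each marked wire with the black-basis projector, together with the fact (noted after Definition~\ref{def:bialg}) that the red unit $\tinyunit[reddot]$ is the black state $\ket 0$, so that on a black wire $*=\id-\ket 0\bra 0$.

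First I would expand the left-hand diagram. Writing the single occurrence of $*$ on the marked leg as $\id$ minus the black state/effect at $0$ and using bilinearity, the diagram splits into $S$ minus the diagram with $\ket 0$ (and matching $\bra 0$) plugged into that leg. By the copying property~\eqref{eq:spider}, plugging a black basis vector into any one leg of a connected black-dot diagram forces all legs to that value, so the subtracted term is exactly $P_0$. Hence the left-hand diagram equals $S-P_0$.

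Next I would expand the middle diagram, where $*$ occurs on two legs. Bilinearity yields four terms with signs $+,-,-,+$: the bare spider $S$, the spiders with $0$ plugged into the first marked leg, into the second, and into both. Each of the last three equals $P_0$ by~\eqref{eq:spider} (once any leg is fixed to $0$, all are), so the middle diagram equals $S-P_0-P_0+P_0=S-P_0$. For the right-hand diagram, where $*$ occurs on all three legs, the expansion produces $S$ together with an inclusion--exclusion sum over the nonempty subsets of legs forced to $0$, each such term being $P_0$; the total coefficient on $P_0$ is $1-3+3-1=-1$, so this diagram is likewise $S-P_0$. The three diagrams are therefore equal.

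The main obstacle is bookkeeping rather than mathematics: one must verify, through the vertical flip in the pictures, that the composite appearing on each marked leg really is the projector $*=\iota\circ p$ of Lemma~\ref{lem:a} (and not $p\circ\iota=\id_{\mathcal H'}$), that the remaining external legs carry bare black identity wires, and that~\eqref{eq:spider} is invoked in the correct orientation (state versus effect) on each leg. Once these are pinned down, the rest is routine spider calculus and inclusion--exclusion.
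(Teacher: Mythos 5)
Your proposal is correct and is essentially the paper's own argument: the paper's (very terse) proof likewise invokes Lemma~\ref{lem:a} to rewrite each marked leg as the identity minus the rank-one term at the red unit $\ket{0}$, and the bialgebra law of Definition~\ref{def:bialg} (which makes $\ket{0}$ a copyable black basis state) to collapse every subtracted term onto the same all-zeros tensor, so that each of the three diagrams reduces to the black spider minus that single rank-one correction. Your explicit inclusion--exclusion bookkeeping just spells out what the paper leaves implicit in its one-line diagram chain, so there is no substantive difference in route.
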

\begin{proof}
By Lemma~\ref{lem:a} and Definition~\ref{def:bialg}:
\begin{equation*}\label{eq:lem2}\begin{aligned}\begin{tikzpicture}[xscale=0.65,yscale=-0.5]
\node(b)[blackdot] at (0,0){};
\node(1)[whitedot,projdot] at (0,0.5){};
\node(2) at (-0.5,-0.5){};
\node(3) at (0.5,-0.5){};
\node(i1) at (-0.5,-1.5){};
\node(i2) at (0.5,-1.5){};
\node(o)[whitedot,projdot] at (0,1.5){};
\draw[string,out=90, in=left] (2.center) to (b.center);
\draw[string,out=90, in=right] (3.center) to (b.center);
\draw[string,out=270, in=90] (1.center) to (b.center);
\draw[green][string,line width=1pt,out=90,in=270] (1.center) to (o.center);
\draw[string,out=90,in=270] (0,2.5) to (o.center);
\draw[string,in=90,out=270] (2.center) to (i1.center);
\draw[string,in=90,out=270] (3.center) to (i2.center);
\draw[string,in=90,out=270] (-0.5,-2.5) to (i1.center);
\draw[string,in=90,out=270] (0.5,-2.5) to (i2.center);
\end{tikzpicture}\end{aligned}
\quad = \quad
\begin{aligned}\begin{tikzpicture}[xscale=0.65,yscale=-0.5]
\node(b)[blackdot] at (0,0){};
\node(1) at (0,0.5){};
\node(2)[whitedot,projdot] at (-0.5,-0.5){};
\node(3)[whitedot,projdot] at (0.5,-0.5){};
\node(i1)[whitedot,projdot] at (-0.5,-1.5){};
\node(i2)[whitedot,projdot] at (0.5,-1.5){};
\node(o) at (0,1.5){};
\draw[string,out=90, in=left] (2.center) to (b.center);
\draw[string,out=90, in=right] (3.center) to (b.center);
\draw[string,out=270, in=90] (1.center) to (b.center);
\draw[string,out=90,in=270] (1.center) to (o.center);
\draw[green][string,line width=1pt,in=90,out=270] (2.center) to (i1.center);
\draw[green][string,line width=1pt,in=90,out=270] (3.center) to (i2.center);
\draw[string,in=90,out=270] (-0.5,-2.5) to (i1.center);
\draw[string,in=90,out=270] (0.5,-2.5) to (i2.center);
\draw[string,out=90,in=270] (0,2.5) to (o.center);
\end{tikzpicture}\end{aligned}
\quad = \quad
\begin{aligned}\begin{tikzpicture}[xscale=0.65,yscale=-0.5]
\node(b)[blackdot] at (0,0){};
\node(1)[whitedot,projdot] at (0,0.5){};
\node(2)[whitedot,projdot] at (-0.5,-0.5){};
\node(3)[whitedot,projdot] at (0.5,-0.5){};
\node(i1)[whitedot,projdot] at (-0.5,-1.5){};
\node(i2)[whitedot,projdot] at (0.5,-1.5){};
\node(o)[whitedot,projdot] at (0,1.5){};
\draw[string,out=90, in=left] (2.center) to (b.center);
\draw[string,out=90, in=right] (3.center) to (b.center);
\draw[string,out=270, in=90] (1.center) to (b.center);
\draw[green][string,line width=1pt,out=90,in=270] (1.center) to (o.center);
\draw[green][string,line width=1pt,in=90,out=270] (2.center) to (i1.center);
\draw[green][string,line width=1pt,in=90,out=270] (3.center) to (i2.center);
\draw[string,in=90,out=270] (-0.5,-2.5) to (i1.center);
\draw[string,in=90,out=270] (0.5,-2.5) to (i2.center);
\draw[string,out=90,in=270] (0,2.5) to (o.center);
\end{tikzpicture}\end{aligned}
\quad =\quad
\begin{aligned}\begin{tikzpicture}[xscale=0.86,yscale=-0.86]
\node(b)[reddot] at (0,0.25){};
\node(1) at (0,0.5){};
\node[reddot](2) at (-0.5,-0.25){};
\node[reddot](3) at (0.5,-0.25){};
\node(i1) at (-0.5,-1.5){};
\node(i2) at (0.5,-1.5){};
\node(o) at (0,1.5){};
\draw[string,out=270, in=90] (1.center) to (b.center);
\draw[string,out=90,in=270] (1.center) to (o.center);
\draw[string,in=90,out=270] (2.center) to (i1.center);
\draw[string,in=90,out=270] (3.center) to (i2.center);
\end{tikzpicture}\end{aligned}
\end{equation*}
 \end{proof}
\end{document}